\newif\ifdemon
\newtheorem{theorem}{Theorem}[section]
\newtheorem{proposition}[theorem]{Proposition}
\newtheorem{corollary}[theorem]{Corollary}
\newtheorem{lemma}[theorem]{Lemma}
\newenvironment{proof}{\emph{Proof.}}{\relax}
\newtheorem{definition}[theorem]{Definition}
\newcommand\qed{\hfill$\Box$\vskip0.2em}
\newcommand{\sqsupsetsim}{\mathrel{{\raisebox{0.5ex}{$\mathop{\kern0pt \sqsupset}\limits_{\textstyle\sim}$}}}}
\newcommand{\sqsubsetsim}{\mathrel{{\raisebox{0.5ex}{$\mathop{\kern0pt \sqsubset}\limits_{\textstyle\sim}$}}}}
\newcommand{\pow}{\mathbb{P}}
\newcommand\eq{\mathbin{==}}
\newcommand\fade\relax
\newcommand\proba\relax
\newcommand\angelic\relax
\newcommand\demonic\relax
\newcommand\nat{\mathbb{N}}
\newcommand\rat{\mathbb{Q}}
\newcommand\Demon{{\mathtt{D}}}
\newcommand\Nature{{\mathtt{P}}}
\newcommand\dG{{\mathsf{d}}}
\newcommand\extd[1]{{#1}^*} 
\newcommand\lfp{\mathop{\mathrm{lfp}}}
\newcommand\kwfont[1]{\pmb{\mathtt{#1}}}
\newcommand\rec{\mathop{\kwfont{rec}}}
\newcommand\produce{\mathop{\kwfont{produce}}}
\newcommand\pto[3]{{#1} \mathrel{\kwfont{to}} {#2}
  \mathrel{\kwfont{in}}{#3}}
\newcommand\thunk{\mathop{\kwfont{thunk}}}
\newcommand\force{\mathop{\kwfont{force}}}
\newcommand\retkw{\mathop{\kwfont{ret}}\nolimits}
\newcommand\dokw{\mathop{\kwfont{do}}\nolimits}
\newcommand\abort{\mathop{\kwfont{abort}}\nolimits}
\newcommand\anytype{{\overline{\sigma}}}
\newcommand\anyothertype{{\overline{\tau}}}
\newcommand\tagc[3]{[{#1}\colon {#2}]} 
\newcommand\Prob{\mathop{\text{Pr}}}
\newcommand\upc{\mathop{\uparrow}\nolimits}
\newcommand\uuarrow{\rlap{$\uparrow$}\raise.5ex\hbox{$\uparrow$}}
\newcommand\smalluuarrow{\rlap{$\scriptstyle\uparrow$}\raise.3ex\hbox{$\scriptstyle\uparrow$}}
\newcommand\ddarrow{\rlap{$\downarrow$}\raise.5ex\hbox{$\downarrow$}}
\newcommand\arrepixy{\ar@{->>}}
\newcommand\arrmonoxy{\ar@{>->}}
\newcommand\diff{\smallsetminus}
\newcommand\Eval[1]{\left\llbracket{#1}\right\rrbracket}
\newcommand{\real}{\mathbb{R}}
\newcommand\Rp{\real_+}
\newcommand{\creal}{\overline\real_+}
\newcommand{\interior}[1]{\text{int} ({#1})} 
\newcommand\Smyth{\mathcal Q}
\newcommand\Open{\mathcal O}
\newcommand\Sierp{\mathbb{S}}
\newcommand\Z{\mathbb{Z}}
\newcommand\Val{\mathbf V}
\newcommand\intT{\kwfont{int}}
\newcommand\unitT{\kwfont{unit}}
\newcommand\suc{\mathop{\kwfont{succ}}}
\newcommand\pred{\mathop{\kwfont{pred}}}
\newcommand\dom{\mathop{\mathrm{dom}}}
\newcommand\ifzkw{\kwfont{ifz}}
\newcommand\ifz[3]{\ifzkw\;#1\;#2\;#3}
\newcommand\pifkw{\kwfont{pif}}
\newcommand\pif[3]{\pifkw\;#1\;#2\;#3}
\newcommand\pifzkw{\kwfont{pifz}}
\newcommand\pifz[3]{\pifzkw\;#1\;#2\;#3}
\newcommand\pcasekw{\kwfont{pcase}}
\newcommand\pswitchkw{\kwfont{pswitch}}
\newcommand\vp{\mathbin{\downarrow}}
\newcommand\F{\kwfont{F}}
\newcommand\Vt{\kwfont{V}}
\newcommand\U{\kwfont{U}}
\newcommand\R{\mathrel{R}}
\newcommand\tr{{\triangleright}}
\newcommand\RR{\mathrel{S}}
\newcommand\CLatt{\mathbf{CLatt}}
\newcommand\Cont{\mathbf{Cont}}
\newcommand\pCCont{\mathbf{PCCont}}
\title{A Probabilistic and Non-Deterministic Call-by-Push-Value
  Language\thanks{This research was partially supported by Labex
    DigiCosme (project ANR-11-LABEX-0045-DIGICOSME) operated by ANR as
    part of the program ``Investissement d'Avenir'' Idex Paris-Saclay
    (ANR-11-IDEX-0003-02).}}
\author{Jean
  Goubault-Larrecq\thanks{LSV, ENS Paris-Saclay, CNRS, Universit\'e
    Paris-Saclay.  Address:  ENS Paris-Saclay, 61 avenue du président Wilson,
    94230 Cachan, France.  Email: \texttt{goubault@lsv.fr}}}
\begin{document}






\maketitle

\begin{abstract}
  There is no known way of giving a domain-theoretic semantics to
  higher-order probabilistic languages, in such a way that the
  involved domains are continuous or quasi-continuous---the latter is
  required to do any serious mathematics.  We argue that the problem
  naturally disappears for languages with two kinds of types, where
  one kind is interpreted in a Cartesian-closed category of continuous
  dcpos, and the other is interpreted in a category that is closed
  under the probabilistic powerdomain functor.  Such a setting is
  provided by Paul B. Levy's call-by-push-value paradigm.  Following
  this insight, we define a call-by-push-value language, with
  probabilistic choice sitting inside the value types, and where
  conversion from a value type to a computation type involves demonic
  non-determinism.  We give both a domain-theoretic semantics and an
  operational semantics for the resulting language, and we show that
  they are sound and adequate.  With the addition of statistical
  termination testers and parallel if, we show that the language is
  even fully abstract---and those two primitives are required for
  that.

  Keywords: domain theory; PCF; call-by-push-value; probabilistic
  choice; non-deterministic choice; full abstraction.
\end{abstract}





\section{Introduction}
\label{sec:intro}

A central problem of domain theory is the following: is there any full
Cartesian-closed subcategory of the category $\Cont$ of continuous
dcpos that is closed under the probabilistic powerdomain functor
$\Val_{\leq 1}$ \cite{JT:troublesome}?  Solving the question in the
positive would allow for a simple semantics of probabilistic
higher-order languages, where types are interpreted as certain
continuous dcpos.

However, we have a conundrum here.  The category $\Cont$ itself is
closed under $\Val_{\leq 1}$ \cite{Jones:proba}, but is not
Cartesian-closed \cite[Exercise~3.3.12(11)]{AJ:domains}.  Among the
Cartesian-closed categories of continuous domains, none is known to be
closed under $\Val_{\leq 1}$, and most, such as the category of
bc-domains or the category $\CLatt$ of continuous complete lattices,
definitely are not \cite{JT:troublesome}.

Instead of solving this problem, one may wonder whether there are
other kinds of domain-theoretic semantics that would be free of the
issue.  Typically, can we imagine having \emph{two} classes of types?
One would be interpreted in a category of continuous dcpos that is
closed under $\Val_{\leq 1}$---$\Cont$ for example, although we will
prefer the category $\pCCont$ of pointed coherent continuous dcpos
(see below).  The other would be interpreted in a Cartesian-closed
category of continuous dcpos, and we will use $\CLatt$.  Such a
division in two classes of types is already present in Paul B. Levy's
\emph{call-by-push-value} \cite{Levy:CBPV} (a.k.a.\ \emph{CBPV}), and
although the division is justified there as to be between \emph{value}
types and \emph{computation} types, the formal structure will be
entirely similar.

\paragraph{Outline}

We briefly review some related work in Section~\ref{sec:related-work},
and give a few basic working definitions in
Section~\ref{sec:preliminaries}.  We define our probabilistic
call-by-push-value languages in Section~\ref{sec:language}, explaining
the design decisions we had to make in the process---notably the extra
need for demonic non-determinism.  We give domain-theoretic and
operational semantics there, too.  We establish soundness in
Section~\ref{sec:soundness} and adequacy in
Section~\ref{sec:adequacy}, to the effect that for every ground term
$M$ of the specific type $\F\Vt\unitT$, the probability $\Prob (M \vp)$
that $M$ must terminate, as defined from the operational semantics,
coincides with a similar notion of probability defined from the
denotational semantics.  In Section~\ref{sec:cons-adeq}, we review a
few useful consequences of adequacy, among which the coincidence
between the applicative preorder $\precsim^{app}_\tau$ and the
contextual preorder $\precsim_\tau$ (both will be defined there), a
fact sometimes called Milner's Context Lemma in the context of PCF
(see \cite[Theorem~8.1]{Streicher:pcf}).  We show that, among the
languages we have defined, CBPV$(\Demon, \Nature)$ is not
(inequationally) fully abstract in Section~\ref{sec:fail-full-abstr},
and that adding a parallel if operator $\pifzkw$ does not make it
fully abstract, but that adding both $\pifzkw$ and a statistical
termination tester operator $\bigcirc_{> b}$ (as in \cite{jgl-jlap14})
results in an (inequationally) fully abstract language.  The latter is
proved in Section~\ref{sec:full-abstraction}.  We conclude and list a
few remaining open questions in Section~\ref{sec:concl-open-probl}.

\paragraph{Acknowledgments}

I wish to thank Zhenchao Lyu and Xiaodong Jia, who participated in
many discussions on the theme of this paper; Ohad Kammar, who
kindly pointed me to \cite{VKS:stat:dom}; and the anonymous referees
of the LICS'19 conference.

\section{Related Work}
\label{sec:related-work}

Call-by-push-value (CBPV) is the creation of Paul B. Levy \cite{Levy:CBPV}
(see also the book \cite{Levy:CBPV:book}), and is a typed higher-order
pure functional language.  It was originally meant as a subsuming
paradigm, embodying both call-by-value and call-by-name disciplines.

The first probabilistic extension of CBPV was proposed recently by
Ehrhard and Tasson \cite{ET:CBPV:prob}, and its denotational semantics
rests on probabilistic coherence spaces.  Their typing discipline is
inspired by linear logic, and they also include a treatment of general
recursive types, which we will not.  In contrast, our extension of
CBPV will have first-class types of subprobability distributions $\Vt{}
\sigma$, and will also include a type former for demonic
non-determinism (a.k.a., must-non-determinism).

Statistical probabilistic programming has attracted quite some
attention recently, and quasi-Borel spaces and predomains have
recently been used to give adequate semantics to typed and untyped
probabilistic programming languages, see \cite{VKS:stat:dom}.  The
latter describes another way of circumventing the problem we stated in
the introduction.  One important point that V\'ak\'ar, Kammar and
Staton achieve is the commutativity of the probabilistic choice monad,
at all, even higher-order, types.  In standard domain theory, the
$\Val_{\leq 1}$ monad is known to be commutative in full subcategories
of $\Cont$ only.  That would be enough motivation to attempt to solve
the problem stated in the introduction, of finding a Cartesian-closed
category, closed under $\Val_{\leq 1}$ \cite{JT:troublesome}.  We
also implement a commutative $\Val_{\leq 1}$ monad in a higher-order
setting; our way of circumventing the problem is merely different.

There is a large body of literature concerned with the question of
full abstraction for PCF-like languages.  The first paper on the
subject is due to G. Plotkin \cite{Plotkin:PCF}, who defined the
language PCF, asked all the important questions (soundness, adequacy,
full abstraction, definability), and answered all of them, except for
the question of finding a fully abstract denotational model of PCF
\emph{without} parallel if, a question that was solved later, through
game semantics notably \cite{HO:games,AJM:games}.  Th. Streicher's book
\cite{Streicher:pcf} is an excellent reference on the subject.

Probabilistic coherence spaces provide a fully abstract semantics for
a version of PCF with probabilistic choice, as shown by Ehrhard,
Tasson, and Pagani \cite{ETP:probcoh:fa}.  The already cited paper of
Ehrhard and Tasson \cite{ET:CBPV:prob} gives an analogous result for
their probabilistic version of CBPV.  Our work is concerned with
languages with domain-theoretic semantics instead, and our former work
\cite{jgl-jlap14} gives soundness, adequacy and full abstraction
results for PCF plus angelic non-determinism, and for PCF plus
probabilistic choice and angelic non-determinism plus so-called
statistical termination testers.  We will see that CBPV naturally
calls for a form of demonic, rather than angelic, non-determinism.

\section{Preliminaries}
\label{sec:preliminaries}

We refer to \cite{GHKLMS:contlatt,AJ:domains,JGL-topology} for
material on domain theory and topology.  A dcpo is \emph{pointed} if
and only if it has a least element $\bot$.  Dcpos are always equipped
with their Scott topology.  $\creal = \real \cup \{\infty\}$ and
$[0, 1]$ are dcpos, with the usual ordering.  The \emph{way-below}
relation is written $\ll$: $x \ll y$ if and only if for every directed
family ${(x_i)}_{i \in I}$ such that $y \leq \sup_{i \in I} x_i$,
there is an $i \in I$ such that $x \leq x_i$.  A dcpo $X$ is
\emph{continuous} if and only if every element is the supremum of a
directed family of elements way-below it.  In that case, the sets
$\uuarrow x = \{y \in X \mid x \ll y\}$ form a base of open sets of
the Scott topology.  We recall that a \emph{base} of a topology is a
family $\mathcal B$ of open sets such that every open set is a union
of sets from $\mathcal B$.  A \emph{subbase} is a family $\mathcal S$
such that the finite intersections of elements of $\mathcal S$ form a
base.

A \emph{basis} $B$ of a dcpo $X$ (not to be confused with a base) is a
set of elements of $X$ such that, for every $x \in X$,
$\{b \in B \mid b \ll x\}$ is directed and has $x$ as supremum.  A
dcpo is continuous if and only if it has a basis.  Then the sets
$\uuarrow b$, $b \in B$, also form a base of the Scott topology.

We write $\leq$ for the specialization ordering of a $T_0$ topological
space.  For a dcpo $X$, that is the original ordering on $X$.  A
subset of a topological space is \emph{saturated} if and only if it is
upwards-closed in $\leq$, if and only if it is the intersection of its
open neighborhoods.  A topological space $X$ is locally compact if and
only if for every $x \in X$, for every open neighborhood $U$ of $x$,
there is a compact saturated set $Q$ such that
$x \in \interior Q \subseteq Q \subseteq U$.  ($\interior Q$ denotes
the interior of $Q$.)  In that case, for every compact saturated
subset $Q$ and every open neighborhood $U$ of $Q$, there is a compact
saturated set $Q'$ such that
$Q \subseteq \interior {Q'} \subseteq Q' \subseteq U$.  A topological
space is \emph{coherent} if and only if the intersection of any two
compact saturated subsets is compact.  It is \emph{well-filtered} if
and only if for every filtered family of compact saturated sets
${(Q_i)}_{i \in I}$ (\emph{filtered} meaning directed for reverse
inclusion), every open neighborhood $U$ of $\bigcap_{i \in I} Q_i$
already contains some $Q_i$.  In a well-filtered space, the
intersection $\bigcap_{i \in I} Q_i$ of such a filtered family is
compact saturated.  A \emph{stably compact} space is a $T_0$,
well-filtered, locally compact, coherent and compact space $X$.  Then
the complements of compact saturated sets form another topology on
$X$, the \emph{cocompact topology}, and $X$ with the cocompact
topology is the \emph{de Groot dual} $X^\dG$ of $X$.  For every stably
compact space, $X^{\dG\dG} = X$.  Every pointed, coherent, continuous
dcpo is stably compact.

Given two dcpos $X$ and $Y$, $[X \to Y]$ denotes the dcpo of all
Scott-continuous maps from $X$ to $Y$, ordered pointwise.  Directed
suprema are also pointwise, namely
$(\sup_{i \in I} f_i) (x) = \sup_{i \in I} (f_i (x))$ for every
directed family ${(f_i)}_{i \in I}$ in $[X \to Y]$.

\section{The Languages CBPV$(\Demon,\Nature)$ and CBPV$(\Demon,\Nature)+\pifzkw+\bigcirc$}
\label{sec:language}

The first language we introduce is called CBPV$(\Demon,\Nature)$: it
is a call-by-push-value language with $\Demon$emonic non-determinism
and $\Nature$robabilistic choice.  We will explain below why we do not
consider just probabilistic choice, but also demonic non-determinism.

\subsection{Types and their Semantics}
\label{sec:types-their-semant}

We consider the following grammar of types:
\begin{align*}
  \sigma, \tau, \ldots & ::= \U\underline\tau \mid \unitT \mid \intT \mid \sigma
                         \times \tau \mid \Vt{} \tau \\
  \underline\sigma, \underline\tau, \ldots & ::= \F{} \tau \mid \sigma
                                             \to \underline\tau.
\end{align*}
The types $\sigma$, $\tau$, \ldots, are the \emph{value} types, and
the types $\underline\sigma$, $\underline\tau$, \ldots, are the
\emph{computation} types, following Levy \cite{Levy:CBPV}.  Our types
differ from Levy's: we do not have countable sums in value types or
countable products in computation types, we write $\unitT$ instead of
$1$, and we have a primitive type $\intT$ of integers; the main
difference is the $\Vt{} \tau$ construction, denoting the type of
subprobability valuations on the space of elements of type
$\tau$.

We write $\anytype$, $\anyothertype$ for types when it is not
important whether they are value types or computation types.

We have already said in the introduction that computation types will
be interpreted in the category $\CLatt$ of continuous complete
lattices.  Value types $\tau$ will give rise to pointed, coherent,
continuous dcpos $\Eval \tau$:
\begin{itemize}
\item for every computation type $\underline\tau$, we will define
  $\Eval {\U\underline\tau}$ as $\Eval {\underline\tau}$: being a
  continuous complete lattice, it is in particular pointed, coherent,
  and a continuous dcpo;
\item $\Eval {\unitT}$ will be \emph{Sierpi\'nski space} $\Sierp = \{\bot, \top\}$
  with $\bot < \top$;
\item $\Eval {\intT}$ will be $\Z_\bot = \Z \cup \{\bot\}$, with the
  ordering that makes $\bot$ least and all integers be pairwise
  incomparable;
\item $\Eval {\Vt\tau}$ will be $\Val_{\leq 1} (\Eval \tau)$, where
  $\Val_{\leq 1} X$ denotes the dcpo of all subprobability valuations
  on the space $X$.
\end{itemize}
A \emph{subprobability valuation} on $X$ is a map $\nu$ from the
lattice $\Open X$ of open subsets of $X$ to $[0, 1]$ which is strict
($\nu (\emptyset)=0$), Scott-continuous, and modular
($\nu (U \cup V) + \nu (U \cap V) = \nu (U) + \nu (V)$).
When $X$ is a continuous dcpo, so is $\Val_{\leq 1} X$
\cite[Corollary~5.4]{Jones:proba}.  It is pointed, since the zero
valuation is least in $\Val_{\leq 1} X$.  If $X$ is also coherent,
then $\Val_{\leq 1} X$ is stably compact, see below.  Hence
$\Eval {\Vt\tau}$ is indeed a pointed, coherent continuous dcpo.
  
The fact that $\Val_{\leq 1} X$ is stably compact for every coherent
continuous dcpo $X$ is folklore.  We argue as follows.  The lift
$X_\bot$ of $X$, obtained by adding a fresh bottom element $\bot$ to
$X$, is stably compact.  Then the space $\Val_1 X_\bot$ of all
probability valuations $\nu$, i.e., such that $\nu (X_\bot)=1$, is
stably compact in the weak upwards topology
\cite[Theorem~39]{AMJK:scs:prob}.  The latter has a subbase of open
sets of the form $[U > r] = \{\nu \mid \nu (U) > r\}$, for every open
subset $U$ of $X$ and $r \in \Rp \diff \{0\}$.  The restriction map
$\nu \mapsto \nu_{|\Open X}$ is a homeomorphism from $\Val_1 X_\bot$
onto $\Val_{\leq 1} X$, both with their weak upwards topology, with
inverse $\nu \mapsto \nu + (1-\nu (X))\delta_\bot$.  Hence
$\Val_{\leq 1} X$ is stably compact in its weak upwards topology.
Since $X$ is continuous, the latter coincides with the Scott topology,
as shown by \cite[Satz~8.6]{Kirch:bewertung}, see also
\cite[Satz~4.10]{Tix:bewertung}.

It might seem curious that probabilistic non-determinism arises, as
$\Vt{} \sigma$, among the \emph{value} types.  I have no philosophical
backing for that, but this is somehow forced upon us by the
mathematics.

Similarly, computation types $\underline\tau$ will give rise to
continuous complete lattices $\Eval {\underline\tau}$---notably
$\Eval {\sigma \to \underline\tau}$ will be the continuous complete
lattice $[\Eval \sigma \to \Eval {\underline\tau}]$ of all
Scott-continuous maps $[\Eval \sigma \to \Eval {\underline\tau}]$ from
$\Eval \sigma$ to $\Eval {\underline\tau}$---, but we have to decide
on an interpretation of types of the form $\F{}\tau$.

If we had decided to interpret computation types as bc-domains instead
of continuous complete lattices, then a natural choice would be to
define $\Eval {\F{}\tau}$ as Ershov's \emph{bc-hull} of $\Eval\tau$
\cite{Ershov:bchull}.  (Bc-domains are, roughly speaking, continuous
complete lattices that may lack a top element.)  As Ershov notices,
``the construction of a bc-hull in the general case is highly
nonconstructive (using a Zorn's lemma)'' (ibid., page~13).  Fortunately,
the bc-hull of a space $X$ is a natural subspace of the \emph{Smyth
  powerdomain} $\Smyth (X)$ of $X$, at least when $X$ is a coherent
algebraic dcpo (ibid., Corollary~B), and $\Smyth (X)$ is easier to
work with.  Explicitly, $\Smyth (X)$ is the poset of all non-empty
compact saturated subsets of $X$, ordered by reverse inclusion, and is
used to interpret demonic non-determinism in denotational semantics.
When $X$ is well-filtered and locally compact, $\Smyth (X)$ is also a
continuous dcpo, and it is a bc-domain provided $X$ is also compact
and coherent.  We shall see below that $\Smyth^\top (X)$, the poset of
all (possibly empty) compact saturated subsets of $X$---alternatively,
$\Smyth (X)$ plus an additional top element $\top = \emptyset$---, is
a continuous complete lattice whenever $X$ is a stably compact space,
and that would make $\Smyth^\top (\Eval \tau)$ a good candidate for
$\Eval {\F{} \tau}$.

For technical reasons related to adequacy, we will need a certain map
$\extd f$ below to be strict, i.e., to map $\bot$ to $\bot$.
(Technically, this is needed so that the denotational semantics of the
construction $\pto M {x_\sigma} N$, to be introduced below, be strict
in that of $M$, in order to validate the fact that
$\pto M {x_\sigma} N$ loops forever if $M$ does.)
This will be obtained by defining $\Eval {\F{}\tau}$
as $\Smyth^\top_\bot (\Eval \tau)$ instead, where
$\Smyth^\top_\bot (X)$ is the \emph{lift} of $\Smyth^\top (X)$,
obtained by adding a fresh element $\bot$ below all others.

We recapitulate:
\begin{itemize}
\item $\Eval {\sigma \to \underline\tau} = [\Eval \sigma \to \Eval
  {\underline\tau}]$;
\item $\Eval {\F{} \sigma} = \Smyth^\top_\bot (\Eval \sigma)$.
\end{itemize}

Let us check that $\Smyth^\top_\bot (\Eval \sigma)$ has the required
property of being a continuous complete lattice, and let us prove some
additional properties that we will need later.  We start with the
similar properties of $\Smyth^\top (\Eval \sigma)$.  We let
$\eta^\Smyth \colon X \to \Smyth^\top (X)$ map every $x$ to $\upc x$.
\begin{proposition}
  \label{prop:Qtop:1}
  Let $X$ be a stably compact space.  Then:
  \begin{enumerate}
  \item $\Smyth^\top (X)$ is a continuous complete lattice, and $Q$ is
    way-below $Q'$ if and only if $Q' \subseteq \interior Q$;
  \item For every continuous complete lattice $L$, for every
    continuous map $f \colon X \to L$, there is a Scott-continuous map
    $\extd f \colon \Smyth^\top (X) \to L$ such that
    $\extd f \circ \eta^\Smyth = f$, and it is defined by
    $\extd f (Q) = \bigwedge_{x \in Q} f (x)$.
  \item $\extd f (\emptyset) = \top$, $\extd f (Q_1 \cup Q_2) = \extd
    f (Q_1) \wedge \extd f (Q_2)$.
  \end{enumerate}
\end{proposition}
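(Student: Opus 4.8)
The plan is to treat the three items in order, reducing everything to the standard properties of stably compact spaces recalled in the preliminaries. Throughout, recall that $\Smyth^\top (X)$ carries the reverse-inclusion order, so that $\emptyset$ is the top element and $X$ (which is compact saturated because $X$ is compact) is the bottom, and that the supremum of a family in this order is its intersection whenever that intersection is again compact saturated.

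For item~1 I would first establish that $\Smyth^\top (X)$ is a complete lattice by exhibiting all suprema. Given an arbitrary family ${(Q_j)}_{j \in J}$ of compact saturated sets, the finite sub-intersections $\bigcap_{j \in F} Q_j$ ($F \subseteq J$ finite) are compact saturated by coherence and form a filtered family; by well-filteredness their intersection $\bigcap_{j \in J} Q_j$ is again compact saturated, and it is the supremum of ${(Q_j)}_j$ in reverse inclusion. A poset with all suprema is a complete lattice. For the way-below characterization I would prove both implications topologically. If $Q' \subseteq \interior Q$ and ${(R_i)}_i$ is directed (i.e.\ filtered for inclusion) with $Q' \sqsubseteq \sup_i R_i = \bigcap_i R_i$, then $\bigcap_i R_i \subseteq Q' \subseteq \interior Q$, so $\interior Q$ is an open neighbourhood of the filtered intersection; well-filteredness yields some $R_i \subseteq \interior Q \subseteq Q$, i.e.\ $Q \sqsubseteq R_i$, proving $Q \ll Q'$. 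Conversely, using local compactness I would show that the family $\mathcal F = \{R \mid Q' \subseteq \interior R\}$ is filtered, consists of sets $R$ with $R \ll Q'$, and has intersection exactly $Q'$ (any point outside $Q'$ is separated from $Q'$ by an open set, inside which local compactness inserts a compact saturated neighbourhood of $Q'$). Thus $\sup \mathcal F = Q'$, so from $Q \ll Q'$ we obtain some $R \in \mathcal F$ with $R \subseteq Q$, whence $Q' \subseteq \interior R \subseteq \interior Q$. The same family $\mathcal F$ is a directed set of elements way-below $Q'$ with supremum $Q'$, which proves continuity; together with the lattice structure this gives that $\Smyth^\top (X)$ is a continuous complete lattice.

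For item~2 I set $\extd f (Q) = \bigwedge_{x \in Q} f (x)$, which is well-defined since $L$ is a complete lattice. The equation $\extd f (\upc x) = f (x)$ follows from monotonicity of $f$, as the meet over $\upc x$ is attained at $x$; and $\extd f$ is monotone because a meet over a larger set is smaller. The only real work is Scott-continuity, and this is the step I expect to be the main obstacle. Monotonicity already gives $\sup_i \extd f (Q_i) \leq \extd f (\bigcap_i Q_i)$ for a directed family ${(Q_i)}_i$ with supremum $Q = \bigcap_i Q_i$, so the crux is the reverse inequality $\extd f (Q) \leq \sup_i \extd f (Q_i) =: a$. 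Here I would exploit the continuity of $L$: it suffices to show $c \leq a$ for every $c \ll \extd f (Q)$, since $\extd f (Q)$ is the directed supremum of such $c$. Fix one; since $c \ll \extd f (Q) \leq f (x)$ for every $x \in Q$, we get $c \ll f (x)$, i.e.\ $Q \subseteq f^{-1} (\uuarrow c)$, and $f^{-1} (\uuarrow c)$ is open because $f$ is continuous and $\uuarrow c$ is Scott-open in the continuous lattice $L$. As $Q = \bigcap_i Q_i$ is a filtered intersection, well-filteredness provides an index $i$ with $Q_i \subseteq f^{-1} (\uuarrow c)$; then $c \leq f (x)$ for all $x \in Q_i$, so $c \leq \extd f (Q_i) \leq a$, as required.

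Finally, item~3 is immediate from the formula: $\extd f (\emptyset)$ is the empty meet $\top$, and $\extd f (Q_1 \cup Q_2) = \bigwedge_{x \in Q_1 \cup Q_2} f (x) = \extd f (Q_1) \wedge \extd f (Q_2)$ because a meet over a union splits as the meet of the two meets (and $Q_1 \cup Q_2$ is again compact saturated, hence lies in $\Smyth^\top (X)$). The delicate points throughout are the repeated appeals to well-filteredness, used to pass from an open neighbourhood of a filtered intersection to containment of a single $Q_i$, and the use of the continuity of $L$ in the Scott-continuity argument; everything else is bookkeeping about the reverse-inclusion order.
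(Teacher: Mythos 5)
Your proof is correct and follows essentially the same route as the paper's: the way-below characterization via the filtered family of compact saturated neighbourhoods of $Q'$ (local compactness for one direction, well-filteredness for the other), and Scott-continuity of $\extd f$ by taking $c \ll \extd f(Q)$, passing to the open set $f^{-1}(\uuarrow c)$, and invoking well-filteredness. The only cosmetic difference is that you verify completeness of $\Smyth^\top(X)$ by hand (coherence plus well-filteredness give arbitrary intersections of compact saturated sets), where the paper simply observes that these are the closed sets of the de Groot dual; both arguments are fine.
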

\begin{proof}
    1. This is well-known, but here is a brief argument.  The elements
  of $\Smyth^\top (X)$ are exactly the closed subsets in the de Groot
  dual of $X$, and the closed sets of any topological space always
  form a complete lattice.  Note that the supremum of an arbitrary
  family ${(Q_i)}_{i \in I}$ in $\Smyth^\top (X)$ is
  $\bigcap_{i \in I} Q_i$.

  Given any compact saturated subset $Q'$ of $X$, the family $N (Q')$
  of compact saturated neighborhoods $Q''$ of $Q'$ is filtered, and
  has $Q'$ as intersection.  Indeed, since $Q'$ is saturated, it is
  the intersection of its open neighborhoods; for every open
  neighborhood $U$ of $Q'$, local compactness implies that there is a
  compact saturated set $Q''$ such that
  $Q' \subseteq \interior {Q''} \subseteq Q'' \subseteq U$; applying
  this to $U=X$ shows that $N (Q')$ is non-empty, and given
  $Q_1, Q_2 \in N (Q')$, applying it to
  $U = \interior {Q_1} \cap \interior {Q_2}$, shows that $N (Q')$
  contains an element included in both $Q_1$ and $Q_2$.

  It follows that, if $Q \ll Q'$, then $Q$ contains an element of
  $N (Q')$, hence in particular an open neighborhood of $Q'$.
  Conversely, if $Q \supseteq U \supseteq Q'$ where $U$ is open, then
  for every directed family ${(Q_i)}_{i \in I}$ in $\Smyth^\top (X)$
  such that $Q' \supseteq \bigcap_{i \in I} Q_i$, $U$ contains some
  $Q_i$ by well-filteredness, hence $Q \supseteq Q_i$.  Therefore
  $Q \ll Q'$.

  Finally, since every $Q'$ in $\Smyth^\top (X)$ is the filtered
  intersection of the elements of $N (Q')$, it is the supremum of the
  directed family $N (Q')$, and we have just argued that every element
  of $N (Q')$ is way-below $Q'$, showing that $\Smyth^\top (X)$ is
  continuous.

  2. We define $\extd f (Q)$ as $\bigwedge_{x \in Q} f (x)$.  This
  satisfies $\extd f \circ \eta^\Smyth = f$, and is monotonic.
  Note that this is defined even when $Q$ is empty, in which case
  $\extd f (Q)$ is the top element of $L$.  In order to show that
  $\extd f$ is Scott-continuous, let ${(Q_i)}_{i \in I}$ be a
  directed family in $\Smyth^\top (X)$, and
  $Q = \bigcap_{i \in I} Q_i$.  We wish to show that
  $\extd f (Q) \leq \sup_{i \in I} \extd f (Q_i)$; the converse
  inclusion is by monotonicity.  To this end, we let $y$ be an element
  of $L$ way-below $\extd f (Q)$.  Since $y \ll f (x)$ for every
  $x \in Q$, every element of $Q$ is in the open set
  $f^{-1} (\uuarrow y)$.  Then $Q = \bigcap_{i \in I} Q_i$ is included
  in $f^{-1} (\uuarrow y)$, so by well-filteredness some $Q_i$ is also
  included in $f^{-1} (\uuarrow y)$.  Then $y \ll f (x)$ for every
  $x \in Q_i$, so
  $y \leq \bigwedge_{x \in Q_i} f (x) = \extd f (Q_i)$.  Since that
  holds for every $y \ll \extd f (Q)$, the desired inequality
  follows.

  3. Easy check.
%
  \qed
\end{proof}
Note that item~2 does not state that $\extd f$ is \emph{unique}; we
have just chosen the largest one.  A similar construction is
well-known for $\Smyth (X)$.
Proposition~\ref{prop:Qtop:1} establishes the essential properties
needed to show that $\Smyth^\top$ defines a monad on the category of
stably compact spaces, and that is not only well-known, but we will
not require as much.

We turn to $\Smyth^\top_\bot (\Eval \tau)$.  We again write
$\eta^\Smyth$ for the function that maps $x$ to $\upc x$, this time
from $X$ to $\Smyth^\top_\bot (\Eval X)$.  Below, we again write
$\extd f$ for the extension of $f$ to $\Smyth^\top_\bot (X)$.  This
should not cause any confusion with the map $\extd f$ of
Proposition~\ref{prop:Qtop:1}, since the two maps coincide on
$\Smyth^\top (X)$.  Note that $\extd f$ is now strict.
\begin{proposition}
  \label{prop:Qtop}
  Let $X$ be a stably compact space.  Then:
  \begin{enumerate}
  \item $\Smyth^\top_\bot (X)$ is a continuous complete lattice, and
    $Q$ is way-below $Q'$ if and only if $Q=\bot$, or $Q, Q' \neq
    \bot$ and $Q' \subseteq \interior Q$;
  \item For every continuous complete lattice $L$, for every
    continuous map $f \colon X \to L$, there is a strict
    Scott-continuous map $\extd f \colon \Smyth^\top_\bot (X) \to L$
    such that $\extd f \circ \eta^\Smyth = f$.  This is defined by
    $\extd f (\bot) = \bot$, and for every $Q \neq \bot$,
    $\extd f (Q) = \bigwedge_{x \in Q} f (x)$.
  \item $\extd f (\emptyset) = \top$, $\extd f (Q_1 \wedge Q_2) = \extd
    f (Q_1) \wedge \extd f (Q_2)$.
  \item For every stably compact space $Y$, for every Scott-continuous
    map $f \colon X \to \Smyth^\top_\bot (Y)$, and for every
    Scott-continuous map $g$ from $Y$ to a continuous complete lattice
    $L$, $\extd g \circ \extd f = \extd {(\extd g \circ f)}$.
  \end{enumerate}
\end{proposition}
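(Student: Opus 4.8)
The plan is to get items~1--3 almost for free from Proposition~\ref{prop:Qtop:1} by carefully tracking the freshly added bottom, and to concentrate the real work on the Kleisli associativity law of item~4. For item~1, I would note that adjoining a fresh least element to the continuous complete lattice $\Smyth^\top (X)$ (such by part~1 of Proposition~\ref{prop:Qtop:1}) again gives a complete lattice, and that lifting preserves continuity: the new $\bot$ is way-below everything, while for $Q, Q' \neq \bot$ any directed family witnessing $Q \ll Q'$ in $\Smyth^\top_\bot (X)$ can be taken cofinally inside $\Smyth^\top (X)$, so the relation reduces to the one computed in Proposition~\ref{prop:Qtop:1}, namely $Q' \subseteq \interior Q$. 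Item~2 is the same extension construction, agreeing with the old $\extd f$ on $\Smyth^\top (X)$ and set to $\bot$ at $\bot$; strictness holds by fiat and Scott-continuity across $\bot$ holds because any directed family other than $\{\bot\}$ is cofinally in $\Smyth^\top (X)$. Item~3 splits on whether an argument is $\bot$ (both sides $\bot$) and otherwise uses $Q_1 \wedge Q_2 = Q_1 \cup Q_2$ in $\Smyth^\top (X)$ (a finite union of compact saturated sets is compact saturated) together with the formula of item~2.

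For item~4 I would evaluate both sides at an arbitrary $P \in \Smyth^\top_\bot (X)$. The degenerate cases are quick: if $P = \bot$ both sides are $\bot$ by strictness; if $P = \emptyset$ both are the top of $L$ (an empty meet); and if some $f (x) = \bot$ for $x \in P$, then $\extd f (P) = \bot$, so the left side is $\bot$, while the term $\extd g (f (x)) = \extd g (\bot) = \bot$ forces the right side $\bigwedge_{x \in P} \extd g (f (x))$ to be $\bot$ too. This leaves the main case: $P \neq \bot, \emptyset$ and $f (x) \neq \bot$ for all $x \in P$, so each $f (x)$ is a compact saturated subset of $Y$. Writing $S = \bigcup_{x \in P} f (x)$, the right side rearranges purely in the complete lattice $L$: $\extd {(\extd g \circ f)} (P) = \bigwedge_{x \in P} \extd g (f (x)) = \bigwedge_{x \in P} \bigwedge_{y \in f (x)} g (y) = \bigwedge_{y \in S} g (y)$.

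For the left side, item~2 gives $\extd f (P) = \bigwedge_{x \in P} f (x)$, the meet taken in $\Smyth^\top (Y)$; by part~1 of Proposition~\ref{prop:Qtop:1} this meet is the smallest compact saturated set containing $S$, that is, the closure of $S$ in the de Groot dual $Y^\dG$. Hence $\extd g (\extd f (P)) = \bigwedge_{z \in \mathrm{cl}_{Y^\dG} (S)} g (z)$, and it remains to identify this with $\bigwedge_{y \in S} g (y)$.

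The crux, and the step I expect to be the main obstacle, is that this last identity is \emph{false} for a general saturated $S$ and a merely Scott-continuous $g$: a de Groot limit point $z$ of $S$ may satisfy $g (z) < \bigwedge_{y \in S} g (y)$, so $\extd g$ does not preserve arbitrary infima. What rescues the argument is that $S$ is not arbitrary: since $P$ is compact and $f$ is Scott-continuous, $S = \bigcup_{x \in P} f (x)$ is \emph{already} compact saturated, so $\mathrm{cl}_{Y^\dG} (S) = S$ and the two meets range over literally the same set. To prove $S$ compact I would take any open cover $\{V_j\}$ of $S$; each compact $f (x)$ lies in some finite subunion $W_x$ of the cover, and the sets $\{Q \mid Q \subseteq W_x\}$ are Scott-open in $\Smyth^\top_\bot (Y)$ (upward closure is clear, and stability under directed suprema is exactly well-filteredness of $Y$, using that suprema in $\Smyth^\top (Y)$ are intersections). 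Thus the preimages $f^{-1} (\{Q \mid Q \subseteq W_x\})$ form an open cover of the compact set $P$; a finite subcover yields $x_1, \dots, x_n$ with $S = \bigcup_{x \in P} f (x) \subseteq \bigcup_{i=1}^n W_{x_i}$, a finite subunion of the original cover. Since $S$ is plainly saturated, it is compact saturated, $\mathrm{cl}_{Y^\dG} (S) = S$, and the two sides of item~4 coincide.
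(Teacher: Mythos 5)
Your proof is correct and follows essentially the same route as the paper's: items 1--3 by lifting the corresponding facts from Proposition~\ref{prop:Qtop:1}, and item~4 by isolating the key point that $\bigcup_{x \in Q} f(x)$ is itself compact saturated, so that it coincides with the infimum $\extd f(Q)$ and the two iterated meets range over the same set. The only (cosmetic) differences are that you prove compactness of the union via arbitrary open covers and the Scott-openness of $\Box W$, where the paper uses directed families of opens directly, and that you phrase the infimum as a de~Groot closure before observing it collapses.
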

\begin{proof}
  1. The lift of a continuous complete lattice is a continuous
  complete lattice, and $\bot$ is always way-below every element.

  2. Easy.

  3. We check the second inequality.  That follows from
  Proposition~\ref{prop:Qtop:1}, item~3 if $Q_1, Q_2 \neq \bot$.  If
  $Q_1 = \bot$, then $\extd f (Q_1 \wedge Q_2) = \extd f (\bot) =
  \bot$ and $\extd f (Q_1) \wedge \extd f (Q_2) = \bot \wedge \extd f
  (Q_2) = \bot$.  Similarly if $Q_2=\bot$.

  4. Fix $Q \in \Smyth^\top_\bot (X)$.  If $Q = \bot$, then $\extd g
  (\extd f (Q)) = \bot = \extd {(\extd g \circ f)} (Q)$ by strictness.
  Henceforth, we assume that $Q \neq \bot$.

  If $f (x) = \bot$ for some $x \in Q$, then $\extd f (Q) = \bot$, so
  $\extd g (\extd f (Q)) = \bot$, and
  $\extd {(\extd g \circ f)} (Q) = \bigwedge_{x \in Q} \extd g (f (x))
  \leq \extd g (\bot) = \bot$, since $\extd g$ is strict.  Henceforth,
  we assume that $f (x) \neq \bot$ for every $x \in Q$.
  
  We claim that $\bigcup_{x \in Q} f (x)$ is compact.  Let
  ${(V_i)}_{i \in I}$ be a directed family of open subsets of $Y$
  whose union contains $\bigcup_{x \in Q} f (x)$.  For every
  $x \in Q$, $f (x)$ is compact and included in
  $\bigcup_{i \in I} V_i$, so $f (x) \subseteq V_i$ for some
  $i \in I$.  Hence $Q \subseteq \bigcup_{i \in I} {f}^{-1} (V_i)$.
  Since $Q$ is compact, $Q \subseteq {f}^{-1} (V_i)$ for some
  $i \in I$, whence $\bigcup_{x \in Q} f (x) \subseteq V_i$.

  $\bigcup_{x \in Q} f (x)$ is also saturated in $Y$, hence an element
  of $\Smyth^\top (Y)$, and therefore also of $\Smyth^\top_\bot (Y)$.
  It follows that this is the infimum of the elements $f (x)$,
  $x \in Q$, hence is equal to $\extd f (Q)$.  Therefore
  $\extd g (\extd f (Q)) = \bigwedge_{y \in \extd f (Q)} g (y) =
  \bigwedge_{x \in Q, y \in f (x)} g (y) = \bigwedge_{x \in Q} \extd g
  (f (x)) = \extd {(\extd g \circ f)} (Q)$.
  \qed
\end{proof}
Item~4 above is part of the properties needed to check that
$\Smyth^\top_\bot$ defines a monad on $\CLatt$.  We will not expand on
that.

\subsection{Syntax}
\label{sec:syntax}

We define the syntax of our language CBPV$(\Demon,\Nature)$ together
with its typing discipline, inductively, as in
Figure~\ref{fig:syntax}, using the notation $M \colon \overline\tau$
to say ``$M$ is a term of type $\overline\tau$''.  There are countably
infinitely many variables $x_\tau$, $y_\tau$, \ldots of each value
type $\tau$.
\begin{figure}
  \centering\footnotesize
  \[
    \begin{array}{cc}
      \multicolumn{2}{c}{
      \begin{prooftree}
        \justifies
        x_\tau \colon \tau
      \end{prooftree}
      \quad
      \begin{prooftree}
        \justifies
        \underline* \colon \unitT
      \end{prooftree}
      \quad
      \begin{prooftree}
        \justifies
        \underline n \colon \intT
        \using (n \in \Z)
      \end{prooftree}
      \quad
      \begin{prooftree}
        \justifies
        \abort_{\F{} \tau} \colon \F{} \tau
      \end{prooftree}
      }
      \\ \\
      \multicolumn{2}{c}{
      \begin{prooftree}
        M \colon \underline\tau \justifies \lambda x_\sigma . M \colon
        \sigma \to \underline \tau
      \end{prooftree}
      \quad
      \begin{prooftree}
        M \colon \sigma \to \underline\tau \quad N \colon \sigma
        \justifies MN \colon \underline\tau
      \end{prooftree}
      \quad
      \begin{prooftree}
        M \colon \sigma \justifies {} \rec x_\sigma . M \colon
        \sigma
      \end{prooftree}
      }
      \\ \\
      \begin{prooftree}
        M \colon \intT \justifies \suc M \colon \intT
      \end{prooftree}
      \quad
      \begin{prooftree}
        M \colon \intT \justifies \pred M \colon \intT
      \end{prooftree}
      &
        \begin{prooftree}
        M \colon \underline\tau \justifies \thunk M \colon \U{}
        \underline\tau
      \end{prooftree}
        \quad
        \begin{prooftree}
          M \colon \U\underline\tau \justifies \force M \colon
          \underline\tau
        \end{prooftree}
      \\ \\
      \begin{prooftree}
        M \colon \unitT \quad N \colon \anytype \justifies M; N
        \colon \anytype
      \end{prooftree}
      &
        \begin{prooftree}
          M \colon \intT \quad N \colon \anytype \quad P \colon \anytype
          \justifies \ifz M N P \colon \anytype
        \end{prooftree}
      \\ \\
      \begin{prooftree}
        M \colon \sigma \times \tau \justifies \pi_1 M \colon \sigma
      \end{prooftree}
      \quad
      \begin{prooftree}
        M \colon \sigma \times \tau \justifies \pi_2 M \colon \tau
      \end{prooftree}
      &
        \begin{prooftree}
          M \colon \sigma \quad N \colon \tau \justifies \langle M, N
          \rangle \colon \sigma \times \tau
        \end{prooftree}
      \\ \\
      \begin{prooftree}
        M \colon \Vt{} \tau \quad N \colon \Vt{} \tau \justifies M \oplus N
        \colon \Vt{} \tau
      \end{prooftree}
      \quad
      \begin{prooftree}
        M \colon \tau \justifies \retkw M \colon \Vt{} \tau
      \end{prooftree}
      &
        \begin{prooftree}
          M \colon \Vt{} \sigma \quad N \colon \Vt{} \tau \justifies \dokw
          {x_\sigma \leftarrow M}; N \colon \Vt{} \tau
        \end{prooftree}
      \\ \\
      \begin{prooftree}
        M \colon \F{} \tau \quad N \colon \F{} \tau \justifies M \owedge N
        \colon \F{} \tau
      \end{prooftree}
      \quad
      \begin{prooftree}
        M \colon \sigma \justifies \produce M \colon \F{} \sigma
      \end{prooftree}
      &
        \begin{prooftree}
          M \colon \F{} \sigma \quad N \colon \F{} \tau \justifies \pto M
          {x_\sigma} N \colon \F{} \tau
        \end{prooftree}
    \end{array}
  \]
  \caption{The syntax of CBPV$(\Demon,\Nature)$}
  \label{fig:syntax}
\end{figure}

We extend the notation $\pto M {x_\sigma} N$ to the case where $N$ has
an arbitrary computation type by: for every
$N \colon \lambda \to \underline\tau$,
$\pto M {x_\sigma} N = \lambda y_\lambda . \pto M {x_\sigma}
(Ny_\lambda)$, where $y_\lambda$ is fresh.  Similarly, we extend
$\abort_{\F{} \tau}$ to all computation types by letting
$\abort_{\lambda \to \underline\tau} = \lambda x_\lambda
. \abort_{\underline\tau}$.

The variable $x_\sigma$ is binding in $\lambda x_\sigma . M$, in
$\pto N {x_\sigma} M$, and in $\rec x_\sigma . M$, and its scope is
$M$ in all three cases.  We omit the standard definition of
$\alpha$-renaming and of capture-avoiding substitution.


Using recursion at value types may seem strange, but this allows us to
define some interesting values.  For example, we can define the
uniform distribution on $\{0, 1, 2\}$ by the term
$\rec x_{\Vt\intT} . (\retkw \underline 0 \oplus \retkw \underline 1)
\oplus (\retkw \underline 2 \oplus x_{\Vt\intT})$, which operates via a
form of rejection sampling.

We will also consider an extension of CBPV$(\Demon,\Nature)$ called
CBPV$(\Demon,\Nature)+\pifzkw+\bigcirc$, obtained by admitting the
following additional clauses:
\[
  \begin{prooftree}
    M \colon \F\Vt\unitT
    \justifies
    \bigcirc_{> b} M \colon \unitT
    \using (b \in \rat \cap (0, 1))
  \end{prooftree}
  \quad
  \begin{prooftree}
    M \colon \intT \quad
    N \colon \F{} \tau \quad
    P \colon \F{} \tau
    \justifies
    \pifz M N P \colon \F{} \tau
  \end{prooftree}
\]
$\bigcirc_{> b}$ is the \emph{statistical termination tester}, and
$\pifzkw$ is \emph{parallel if}.  We extend the notation $\pifz M N P$
to the case where $N$ and $P$ have an arbitrary computation type
$\underline\tau$ by letting $\pifz M N P$ denote
$\lambda x_\sigma . \pifz M {(N x_\sigma)} {(P x_\sigma)}$ when $N$,
$P$ have type $\sigma \to \underline\tau$, where $x_\sigma$ is a fresh
variable.

The language CBPV$(\Demon,\Nature)+\pifzkw$ is obtained by admitting
only the second one as extra clause, while
CBPV$(\Demon,\Nature)+\bigcirc$ only admits the first one as extra
clause.

\begin{figure}
  \centering
  \begin{align*}
    \Eval {x_\sigma} \rho & = \rho (x_\sigma) \\
    \Eval {\lambda x_\sigma . M} \rho
                          & = V \in \Eval \sigma \mapsto \Eval M (\rho [x_\sigma \mapsto V])
                            \qquad \Eval {MN} \rho
                            = \Eval M \rho (\Eval N \rho) \\
    \Eval {\produce M} \rho
                          & = \eta^\Smyth (\Eval M \rho) \\
    \Eval {\pto M {x_\sigma} N} \rho
                          & = \extd {(V \in \Eval \sigma \mapsto \Eval N
                            \rho [x_\sigma \mapsto V])} (\Eval M \rho) \\
    \Eval {\thunk M} \rho
                          & = \Eval M \rho
                          \qquad\qquad \Eval {\force M} \rho
                            = \Eval M \rho \\
    \Eval {\underline *} \rho
                          & = \top
                          \qquad\qquad \Eval {\underline n} \rho
                          = n \\
    \Eval {\suc M} \rho
                          & = \left\{
                            \begin{array}{ll}
                              n+1 & \text{if }n = \Eval M \rho \neq \bot \\
                              \bot & \text{otherwise}
                            \end{array}
                                     \right. \\
    \Eval {\pred M} \rho
                          & = \left\{
                            \begin{array}{ll}
                              n-1 & \text{if }n = \Eval M \rho \neq \bot \\
                              \bot & \text{otherwise}
                            \end{array}
                                     \right. \\
    \Eval {\ifz M N P} \rho
                          & = \left\{
                            \begin{array}{ll}
                              \Eval N \rho & \text{if} \Eval M \rho=0 \\
                              \Eval P \rho & \text{if} \Eval M \rho \neq
                                             0, \bot \\
                              \bot & \text{if} \Eval M \rho = \bot
                            \end{array}
                                     \right. \\
    \Eval {M; N} \rho
                          & = \left\{
                            \begin{array}{ll}
                              \Eval N \rho & \text{if} \Eval M
                                             \rho=\top \\
                              \bot & \text{otherwise}
                            \end{array}
                                     \right. \\
    \Eval {\pi_1 M} \rho
                          & = m,
    \Eval {\pi_2 M} \rho
                            = n \text{ where }\Eval M \rho = (m, n) \\
    \Eval {\langle M, N \rangle} \rho
                          & = (\Eval M \rho, \Eval N \rho) \\
    \Eval {\retkw M} \rho
                          & = \delta_{\Eval M \rho} \\
    \Eval {\dokw {x_\sigma \leftarrow M}; N} \rho
                          & = {(V \in \Eval \sigma \mapsto \Eval N
                            \rho [x_\sigma \mapsto V])}^\dagger (\Eval M \rho) \\
    \Eval {M \oplus N} \rho
                          & = \frac 1 2 (\Eval M \rho + \Eval N \rho) \\
    \Eval {M \owedge N} \rho
                          & = \Eval M \rho \wedge \Eval N \rho
                          \qquad\qquad \Eval {\abort_{\F{} \tau}} \rho
                            = \emptyset \\
    \Eval {\rec x_\sigma . M} \rho & = \lfp (V \in \Eval \sigma
                                     \mapsto \Eval M \rho [x_\sigma
                                     \mapsto V]) \\
    \\
    \Eval {\pifz M N P} \rho
                          & = \left\{
                            \begin{array}{ll}
                              \Eval N \rho & \text{if} \Eval M \rho=0 \\
                              \Eval P \rho & \text{if} \Eval M \rho \neq
                                             0, \bot \\
                              \Eval N \rho \wedge \Eval P \rho & \text{if} \Eval M \rho = \bot
                            \end{array}
                                                                 \right. \\
    \Eval {\bigcirc_{> b} M} \rho
                          & = \left\{
                            \begin{array}{ll}
                              \top & \text{if }\Eval M \rho \neq
                                     \bot\text{ and }\\
                              & b \ll \nu (\{\top\})\text{ for every }\nu \in \Eval M
                                     \rho \\
                              \bot & \text{otherwise}
                            \end{array}
                            \right. \\
  \end{align*}
  \caption{Denotational semantics}
  \label{fig:sem}
\end{figure}

\subsection{Denotational Semantics}
\label{sec:denot-semant}

Let $Env$, the dcpo of \emph{environments}, be the product of the
dcpos $\Eval \sigma$ over all variables $x_\sigma$.  Its elements are
maps $\rho$ from variables $x_\sigma$ to values $\rho (x_\sigma)$.
The denotational semantics is given by a family of Scott-continuous
maps $\Eval M$, one for each $M \colon \overline\tau$, from $Env$ to
$\Eval {\overline\tau}$: see Figure~\ref{fig:sem}, where the bottom
two clauses are specific to CBPV$(\Demon,\Nature)+\pifzkw$, resp.\ to
CBPV$(\Demon,\Nature)+\bigcirc$, and the two of them together are
specific to CBPV$(\Demon,\Nature)+\pifzkw+\bigcirc$.
We use the notation $V \in X \mapsto f (V)$ to
denote the function that maps each $V \in X$ to $f (V)$.  For every
$\rho \in Env$, and every $V \in \Eval \sigma$, we write
$\rho [x_\sigma \mapsto V]$ for the environment that maps $x_\sigma$
to $V$ and every variable $y \neq x_\sigma$ to $\rho (y)$.  The
operator $\lfp \colon [X \to X] \to X$ maps every Scott-continuous map
$f$ from a pointed dcpo to itself, to its least fixed point
$\lfp f = \sup_{n \in \nat} f^n (\bot)$.  The \emph{Dirac mass}
$\delta_x$ at $x$ is the probability valuation such that
$\delta_x (U) = 1$ if $x \in U$, $0$ otherwise.  For every continuous
map $f \colon X \to \Val_{\leq 1} Y$, $f^\dagger$ is the continuous
map from $\Val_{\leq 1} X$ to $\Val_{\leq 1} Y$ defined by
$f^\dagger (\nu) (V) = \int_{x \in X} f (x) (V) d\nu$ for every open
subset $V$ of $Y$.  For future reference, we note that
$f^\dagger (\delta_a) = f (a)$, and that, for every continuous map
$h \colon Y \to \creal$,
\begin{equation}
  \label{eq:dagger}
  \int_{y \in Y} h (y) df^\dagger (\nu) = \int_{x \in X} \left(
    \int_{y \in Y} h (y) d f (x)
    \right) d\nu.
\end{equation}
Implicit here is the fact that the map
$x \in X \mapsto \int_{y \in Y} h (y) d f (x)$ is itself continuous.
Also, integration is linear in both the integrated function $h$ and
the continuous valuation $\nu$, and Scott-continuous in each.  These
facts can be found in Jones' PhD thesis \cite{Jones:proba}.

The fact that the semantics $\Eval M \rho$ is well-defined and
continuous in $\rho$ is standard.  Note the use of binary infimum
($\wedge$) in the semantics of $\owedge$ and of $\pifzkw$, for which
we use the following lemma.
\begin{lemma}
  \label{lemma:sound:aux}
  Let $L$ be a continuous complete lattice.
  \begin{enumerate}
  \item The infimum map $\wedge \colon L \times L \to L$ is
    Scott-continuous.
  \item For any two continuous maps $f, g \colon X \to L$, where $X$
    is a any topological space, the infimum $f \wedge g$ is computed
    pointwise: $(f \wedge g) (x) = f (x) \wedge g (x)$.
  \end{enumerate}
\end{lemma}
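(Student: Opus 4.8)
The plan is to prove item~1 first, and then derive item~2 as an easy consequence of it. For item~1, monotonicity of $\wedge$ is immediate, so by the standard characterisation of Scott-continuity it remains to show that $\wedge$ preserves directed suprema of $L \times L$. Since suprema in a product are computed componentwise, any directed family ${(a_i, b_i)}_{i \in I}$ in $L \times L$ satisfies $\sup_{i \in I} (a_i, b_i) = (\sup_{i \in I} a_i, \sup_{i \in I} b_i)$, so the goal reduces to the distributivity law $(\sup_{i \in I} a_i) \wedge (\sup_{i \in I} b_i) = \sup_{i \in I} (a_i \wedge b_i)$.

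The heart of the argument, and the step I expect to be the main obstacle, is that binary meet distributes over directed suprema in a \emph{continuous} complete lattice---a property that fails in general complete lattices, where meet need only preserve finite suprema. I would first establish the one-variable version: for a fixed $b \in L$ and a directed family ${(a_i)}_{i \in I}$, one has $b \wedge \sup_{i \in I} a_i = \sup_{i \in I} (b \wedge a_i)$. The inequality $\geq$ is monotonicity; for $\leq$, I would use continuity of $L$ to write the left-hand side as the supremum of the elements $c$ way-below it, and fix one such $c \ll b \wedge \sup_{i \in I} a_i$. Since $b \wedge \sup_{i \in I} a_i \leq \sup_{i \in I} a_i$ and $\ll$ is upward-closed in its second argument, we get $c \ll \sup_{i \in I} a_i$, so the definition of $\ll$ applied to the directed family ${(a_i)}_{i \in I}$ yields some $i$ with $c \leq a_i$; together with $c \leq b$ this gives $c \leq b \wedge a_i \leq \sup_{i \in I}(b \wedge a_i)$. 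As $c$ ranges over all elements way-below $b \wedge \sup_{i \in I} a_i$, the inequality follows.

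From separate Scott-continuity I then recover the joint law by a routine reindexing: applying the one-variable version twice gives $(\sup_{i \in I} a_i) \wedge (\sup_{j \in I} b_j) = \sup_{i \in I} \sup_{j \in I} (a_i \wedge b_j)$, and directedness of ${(a_i, b_i)}_{i \in I}$ lets me collapse the double supremum to the diagonal $\sup_{i \in I}(a_i \wedge b_i)$: each $a_i \wedge b_j$ lies below $a_k \wedge b_k$ for a common upper bound $k$ of $i$ and $j$, while the diagonal terms already occur among the $a_i \wedge b_j$. This proves item~1.

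For item~2, I would observe that the pointwise map $h \colon x \mapsto f(x) \wedge g(x)$ factors as $h = {\wedge} \circ \langle f, g\rangle$, where $\langle f, g\rangle \colon X \to L \times L$ is continuous. Since $L$ is a continuous dcpo, the Scott topology on $L \times L$ coincides with the product of the Scott topologies, so the Scott-continuous map $\wedge$ of item~1 is continuous for the product topology; hence $h$ is a continuous map $X \to L$. (Alternatively, one checks directly that $h^{-1}(\uuarrow c)$ is open, using interpolation in $L$ together with the directed presentation $f(x) \wedge g(x) = \sup_{a \ll f(x),\, b \ll g(x)} (a \wedge b)$ supplied by item~1, and the fact that $f^{-1}(\uuarrow a)$ and $g^{-1}(\uuarrow b)$ are open.) Finally, $h \leq f$ and $h \leq g$ pointwise, and any continuous $k$ with $k \leq f$ and $k \leq g$ satisfies $k(x) \leq f(x) \wedge g(x) = h(x)$ for all $x$; so $h$ is the greatest lower bound of $f$ and $g$ in the pointwise order, which is exactly the assertion that $f \wedge g$ is computed pointwise.
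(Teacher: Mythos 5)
Your proof is correct and follows essentially the same route as the paper: the key step in both is the way-below argument showing $y \wedge \sup_{i} x_i \leq \sup_{i}(y \wedge x_i)$ in a continuous lattice, and item~2 is obtained in both by composing $\wedge$ with $x \mapsto (f(x),g(x))$ and checking it is the greatest lower bound. You simply spell out two details the paper leaves implicit, namely the passage from separate to joint Scott-continuity via the diagonal, and the coincidence of the Scott and product topologies on $L \times L$; both are correct and welcome.
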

\begin{proof}
  Item~1 is well-known.  Explicitly, one must show that for every
  directed family ${(x_i)}_{i \in I}$ with supremum $x$ in $M$, for
  every $y \in L$, $y \wedge \sup_{i \in I} x_i \leq \sup_{i \in I} (y
  \wedge x_i)$: for every $z \ll y \wedge \sup_{i \in I} x_i$, $z$ is
  below $y$ and below some $x_i$, hence below $y \wedge x_i$ for some
  $i \in I$.

  As for item~2, the composition of $\wedge$ with $x \mapsto (f (x), g
  (x))$ is continuous by item~1, is below $f$ and $g$, and is clearly
  above any lower bound of $f$ and $g$.  \qed
\end{proof}

\subsection{Operational Semantics}
\label{sec:oper-semant}

We choose an operational semantics in the style of \cite{jgl-jlap14}.
It operates on configurations, which are pairs $C \cdot M$ of an
evaluation context $C$ and a term $M$.  The deterministic part of the
calculus will be defined by rewrite rules $C \cdot M \to C' \cdot M'$
between configurations.  For the probabilistic and non-deterministic
part of the calculus, we will rely on judgments $C \cdot M \vp a$,
which state, roughly, that the probability that computation
terminates, starting from $C \cdot M$, is larger than $a$.

The \emph{elementary contexts}, together with their types
$\anytype \vdash \anyothertype$ (where $\anytype$, $\anyothertype$ are
value or computation types) are defined by:
\begin{itemize}
\item $[\_ N] \colon (\sigma \to \underline\tau) \vdash \underline\tau$,
  for every $N \colon \sigma$ and every computation type $\underline
  \tau$;
\item $[\pto {\_} {x_\sigma} N] \colon \F{} \sigma \vdash \F{}\tau$
  for every $N \colon \F{}\tau$;
\item $[\force \_] \colon \U\underline\tau \vdash \underline\tau$, for
  every computation type $\underline\tau$;
\item $[\suc \_], [\pred \_] \colon \intT \vdash \intT$;
\item $[\ifz \_ N P] \colon \intT \vdash \anytype$ for all
  $N, P \colon \anytype$; 
\item $[\_; N] \colon \unitT \vdash \anytype$ for every $N \colon \anytype$;
\item $[\pi_1 \_] \colon \sigma \times \tau \vdash \sigma$ and
  $[\pi_2 \_] \colon \sigma \times \tau \vdash \tau$, for all value
  types $\sigma$ and $\tau$;
\item
  $[\dokw {x_\sigma \leftarrow \_}; N] \colon \Vt{} \sigma \vdash \Vt{} \tau$,
  for every $N \colon \sigma \to \Vt{} \tau$.
\end{itemize}
The \emph{initial contexts} are $[\_] \colon \anytype \vdash \anytype$,
$[\produce \_] \colon \sigma \vdash \F{} \sigma$ and
$[\produce \retkw \_] \colon \sigma\vdash \F\Vt{} \sigma$.  For every
elementary or initial context $E \colon \anytype \vdash \anyothertype$ and every
$M \colon \anytype$, we write $E [M]$ for the result of replacing the unique
occurrence of the hole $\_$ in $E$ (after removing the outer square
brackets) by $M$.  E.g,
$[\suc \_] [\underline 3] = \suc \underline 3$.

A \emph{context} (of type $\anytype \vdash \anyothertype$) is a finite list
$E_0 E_1 E_2 \cdots E_n$ ($n \in \nat$) where $E_0$ is an initial
context, $E_1$, \ldots, $E_n$ are elementary contexts, and
$E_i \colon \anytype_{i+1} \vdash \anytype_i$, $\anytype_{n+1}=\anytype$, and $\anytype_0=\anyothertype$.  We then
write $C [M]$ for $E_0 [E_1 [E_2 [ \cdots E_n [M] \cdots]]]$.

Note that the contexts are defined in exactly the same way for
CBPV$(\Demon,\Nature)$ and for CBPV$(\Demon,\Nature)+\pifzkw$,
CBPV$(\Demon,\Nature)+\bigcirc$, and
CBPV$(\Demon,\Nature)+\pifzkw+\bigcirc$.

The \emph{configurations} of the operational semantics are pairs
$C \cdot M$ where $C \colon \anytype \vdash \F\Vt\unitT$ and
$M \colon \anytype$.  The rules of the operational semantics are given
in Figure~\ref{fig:opsem}.  The last row is specific to
CBPV$(\Demon,\Nature)+\pifzkw$, CBPV$(\Demon,\Nature)+\bigcirc$, or to
CBPV$(\Demon,\Nature)+\pifzkw+\bigcirc$.  The first rewrite rule---the
\emph{redex discovery rule} $C \cdot E [M] \to C E \cdot M$---applies
provided $E$ is an elementary context.  The notation
$N [x_\sigma := M]$ denotes capture-avoiding substitution of $M$ for
$x_\sigma$ in $N$.

\begin{figure}
  \centering\footnotesize
  \begin{align*}
    C \cdot E [M]
    & \to C E \cdot M
    & C [\_ N] \cdot \lambda x_\sigma . M
    & \to C \cdot M [x_\sigma := N] \\
    C [\pto {\_} {x_\sigma} N] \cdot \produce M
    & \to C \cdot N [x_\sigma := M]
    & C [\force \_] \cdot \thunk M
    & \to C \cdot M \\
    [\_] \cdot \produce M
    & \to{} [\produce \_] \cdot M \\
    C [\pred \_] \cdot \underline n
    & \to C \cdot \underline{n-1}
    & C [\suc \_] \cdot \underline n
    & \to C \cdot \underline{n+1} \\
    C [\ifz \_ N P] \cdot \underline 0
    & \to C \cdot N
    & C [\ifz \_ N P] \cdot \underline n
    & \to C \cdot P \quad (n\neq 0) \\
    C [\_; N] \cdot \underline *
    & \to C \cdot N \\
    C [\pi_1 \_] \cdot \langle M, N \rangle
    & \to C \cdot M
    & C [\pi_2 \_] \cdot \langle M, N \rangle
    & \to C \cdot N \\
    C [\dokw {x_\sigma \leftarrow \_}; N] \cdot \retkw M
    & \to C \cdot N [x_\sigma := M]
    & {} [\produce \_] \cdot \retkw M
    & \to{} [\produce \retkw \_] \cdot M \\
    C \cdot \rec x_\sigma . M
    & \to C \cdot M [x_\sigma := \rec x_\sigma . M]\mskip-200mu \\
  \end{align*}
  \[
    \begin{array}{c}
      \begin{prooftree}
        \strut
        \justifies
        [\produce \retkw \_] \cdot \underline* \vp a
        \using (a \in \rat \cap [0, 1))
      \end{prooftree}
      \qquad
      \begin{prooftree}
        \strut
        \justifies
        C \cdot M \vp 0
      \end{prooftree}
      \qquad
      \begin{prooftree}
        \strut
        \justifies
        C \cdot \abort_{\F{} \tau} \vp a
        \using (a \in \rat \cap [0, 1))
      \end{prooftree}
      \\
      \\
      \begin{prooftree}
        C' \cdot M' \vp a
        \justifies
        C \cdot M \vp a
        \using (\text{if }C \cdot M \to C' \cdot M')
      \end{prooftree}
      \qquad
      \begin{prooftree}
        C \cdot M \vp a \quad C \cdot N \vp b
        \justifies
        C \cdot M \oplus N \vp (a+b)/2
      \end{prooftree}
      \qquad
      \begin{prooftree}
        C \cdot M \vp a \quad C \cdot N \vp a
        \justifies
        C \cdot M \owedge N \vp a
      \end{prooftree}
      \\
      \\
      \begin{prooftree}
        [\_] \cdot M \vp b \quad
        C \cdot \underline* \vp a
        \justifies
        C \cdot \bigcirc_{> b} M \vp a
      \end{prooftree}
      \qquad
      \begin{prooftree}
        C \cdot \ifz M N P \vp a
        \justifies
        C \cdot \pifz M N P \vp a
      \end{prooftree}
      \qquad
      \begin{prooftree}
        C \cdot N \vp a \quad C \cdot P \vp a
        \justifies
        C \cdot \pifz M N P \vp a
      \end{prooftree}
    \end{array}
  \]
  \caption{Operational semantics}
  \label{fig:opsem}
\end{figure}

The judgments $C \cdot M \vp a$ are defined for all terms
$M \colon \anytype$, contexts $C \colon \anytype \vdash \F\Vt{} \unitT$,
and $a \in \rat \cap [0, 1)$, and mean that $a$ is way-below the
probability of termination of $C \cdot M$ (i.e., either $a=0$ or $a$
is strictly less than the probability that $C \cdot M$ terminates).
Since $\owedge$ induces non-deterministic choice, we really mean the
probability of \emph{must-}termination, namely that, in whichever way
the non-determinism involved in the use of the $\owedge$ operator is
resolved (evaluating left, or right), the final probability is larger
than $a$.

We write $\Prob (C \cdot M \vp)$ for
$\sup \{a \in \rat \in [0, 1) \mid C \cdot M \vp a \text{ is
  derivable}\}$, where sups are taken in $[0, 1]$.  This leads to the
following central notion, which we only state for ground terms.  A
term is \emph{ground} if and only if it has no free variable.  (We
define \emph{ground} contexts similarly.)  The case of non-ground
terms can be dealt with using appropriate quantifications over
substitutions, but will not be needed.
\begin{definition}
  \label{defn:context:preorder}
  The \emph{contextual preorder} $\precsim_\anytype$ between ground
  CBPV$(\Demon,\Nature)$ terms of type $\anytype$ is defined by
  $M \precsim_\anytype N$ if and only if for every ground evaluation context
  $C \colon \anytype \vdash \F\Vt\unitT$,
  $\Prob (C \cdot M \vp) \leq \Prob (C \cdot N \vp)$.
%
\end{definition}
We will freely reuse the notations $\precsim_\anytype$, for the
similarly defined notions on the related languages
CBPV$(\Demon,\Nature)+\pifzkw$, CBPV$(\Demon,\Nature)+\bigcirc$, and
CBPV$(\Demon,\Nature)+\pifzkw+\bigcirc$.  If there is any need to make
the language precise, we will mention it explicitly.

We end this section with a few elementary lemmata, which will come in
handy later on, and which should help the reader train with the way
the operational semantics works.
\begin{lemma}
  \label{lemma:prob:leq}
  If $C \cdot M \vp a$ is derivable and $b \in \rat$ is such that
  $0 \leq b \leq a$, then $C \cdot M \vp b$ is also derivable, whether
  in CBPV$(\Demon,\Nature)$, CBPV$(\Demon,\Nature)+\pifzkw$,
  CBPV$(\Demon,\Nature)+\bigcirc$, or
  CBPV$(\Demon,\Nature)+\pifzkw+\bigcirc$.
\end{lemma}
\begin{proof}
  Easy induction on the rules of Figure~\ref{fig:opsem}.  In the case
  of a derivation of the form $C \cdot M \oplus N \vp a$, where
  $a = (a_1+a_2)/2$, from $C \cdot M \vp a_1$ and $C \cdot N \vp a_2$,
  we write $b$ as $(b_1+b_2)/2$ where $b_1$ and $b_2$ are rational and
  between $0$ and $a_1$, resp.\ $a_2$.  (E.g., we let
  $b_1 = \min (a_1, 2b)$ and $b_2 = 2b-b_1 = \max (2b - a_1, 0)$.)
  By induction hypothesis we can derive $C \cdot M \vp b_1$ and
  $C \cdot N \vp b_2$, so we can derive
  $C \cdot M \oplus N \vp (b_1+b_2)/2 = b$.  \qed
\end{proof}

\begin{lemma}
  \label{lemma:red}
  If $C \cdot M \to C' \cdot M'$, then $\Prob (C \cdot M\vp) \geq
  \Prob (C' \cdot M'\vp)$.
\end{lemma}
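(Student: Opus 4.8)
The plan is to read the inequality off directly from the reduction rule of Figure~\ref{fig:opsem}. Recall that $\Prob (C \cdot M \vp)$ is by definition the supremum, taken in $[0,1]$, of all thresholds $a \in \rat \cap [0,1)$ for which the judgment $C \cdot M \vp a$ is derivable. So the cleanest route is to compare the two sets of derivable thresholds rather than to reason about probabilities directly: I would show that every $a$ witnessing $C' \cdot M' \vp$ also witnesses $C \cdot M \vp$, and then conclude by monotonicity of $\sup$.

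Concretely, I would fix an arbitrary $a \in \rat \cap [0,1)$ such that $C' \cdot M' \vp a$ is derivable. Since the hypothesis of the lemma supplies exactly the step $C \cdot M \to C' \cdot M'$, the side condition of the reduction rule---the rule inferring $C \cdot M \vp a$ from $C' \cdot M' \vp a$ whenever $C \cdot M \to C' \cdot M'$---is satisfied. Appending this single inference to the given derivation of $C' \cdot M' \vp a$ therefore produces a derivation of $C \cdot M \vp a$. This establishes the inclusion $\{a \in \rat \cap [0,1) \mid C' \cdot M' \vp a \text{ derivable}\} \subseteq \{a \in \rat \cap [0,1) \mid C \cdot M \vp a \text{ derivable}\}$, and taking suprema in $[0,1]$ yields $\Prob (C' \cdot M' \vp) \leq \Prob (C \cdot M \vp)$, which is precisely the claim.

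There is essentially no obstacle here: the result is immediate from the shape of the rules, and the only points worth making explicit are that no induction on derivations is needed (unlike in Lemma~\ref{lemma:prob:leq}, where the averaging rule for $\oplus$ forced a genuine inductive step), and that I do \emph{not} rely on the reduction relation $\to$ being deterministic. The argument uses only the particular transition $C \cdot M \to C' \cdot M'$ handed to us by the hypothesis; even if several reductions were available from $C \cdot M$, each of them would yield the same one-line comparison. Thus the single application of the reduction rule does all the work.
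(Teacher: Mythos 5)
Your proof is correct and is exactly the paper's argument: apply the reduction rule of Figure~\ref{fig:opsem} once to turn any derivation of $C' \cdot M' \vp a$ into one of $C \cdot M \vp a$, then take suprema. The extra remarks about not needing induction or determinism are accurate but inessential.
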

\begin{proof}
  Whenever we can derive $C' \cdot M' \vp a$, we can derive $C \cdot M
  \vp a$ by the leftmost rule of the next-to-last row of
  Figure~\ref{fig:opsem}.  \qed
\end{proof}



\begin{lemma}
  \label{lemma:prob:disc}
  Let $C' = E_1 \cdots E_n$ be a sequence of elementary contexts, of
  type $\anytype \vdash \anyothertype$.  For every context
  $C \colon \anyothertype \vdash \F\Vt\unitT$, for every term
  $N \colon \anytype$,
  $\Prob (C \cdot C' [N] \vp) = \Prob (C C' \cdot N \vp)$.
\end{lemma}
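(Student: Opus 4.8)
The plan is to prove this by induction on the length $n$ of the sequence $C' = E_1 \cdots E_n$ of elementary contexts. The statement claims that pushing a whole block of elementary contexts from the term side into the context side preserves the termination probability. The base case $n = 0$ is trivial, since then $C' [N] = N$ and $C C' = C$, so both sides are literally equal. For the inductive step, it suffices to handle the case $n = 1$ of a single elementary context $E$, and then compose: if the result holds for single elementary contexts and for sequences of length $n-1$, then
\[
  \Prob (C \cdot E_1 (E_2 \cdots E_n) [N] \vp)
  = \Prob (C E_1 \cdot (E_2 \cdots E_n) [N] \vp)
  = \Prob (C E_1 E_2 \cdots E_n \cdot N \vp),
\]
where the first equality is the single-step case applied to the elementary context $E_1$ with hole filled by $(E_2 \cdots E_n)[N]$, and the second is the induction hypothesis applied to the shorter sequence $E_2 \cdots E_n$ with context $C E_1$.

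So the crux is the single-step claim: for an elementary context $E \colon \anytype \vdash \anyothertype$ and any $M \colon \anytype$, $\Prob (C \cdot E [M] \vp) = \Prob (C E \cdot M \vp)$. First I would establish the inequality $\Prob (C \cdot E[M] \vp) \geq \Prob (C E \cdot M \vp)$: this is immediate from Lemma~\ref{lemma:red} applied to the \emph{redex discovery rule} $C \cdot E[M] \to C E \cdot M$, which fires precisely because $E$ is an elementary context. For the reverse inequality $\Prob (C \cdot E[M] \vp) \leq \Prob (C E \cdot M \vp)$, I would argue that every derivation of $C \cdot E[M] \vp a$ yields a derivation of $C E \cdot M \vp a$. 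The key observation is that the term $E[M]$ is not itself a value of the shape on which the choice rules ($\oplus$, $\owedge$, $\abort$, the termination-judgment axioms) act directly---$E[M]$ has the syntactic form dictated by the elementary context $E$ (an application, a projection, a $\dokw$-binding, a $\force$, etc.)---so the only rule that can be applied at the root of a derivation of $C \cdot E[M] \vp a$ with $a$ nonzero is the reduction rule, and the unique applicable reduction is exactly the redex discovery step $C \cdot E[M] \to C E \cdot M$. Hence any such derivation must have $C E \cdot M \vp a$ as its immediate premise, giving the claim; the $a = 0$ case is covered by the axiom $C \cdot M \vp 0$.

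The main obstacle is making the phrase ``the only applicable rule is redex discovery'' precise, which amounts to a careful case analysis over the syntactic form of $E[M]$ against the left-hand sides of all the rules in Figure~\ref{fig:opsem}. One must check that for each elementary context $E$, the term $E[M]$ does not match the head of any reduction rule \emph{other} than redex discovery (e.g.\ $[\_\,N][M] = MN$ is an application, and the application-firing rule $C[\_\,N] \cdot \lambda x_\sigma . M' \to \cdots$ requires a context ending in $[\_\,N]$ with a $\lambda$-abstraction on the term side, which is a \emph{different} configuration from $C \cdot MN$), and that it matches none of the value-directed judgment axioms nor the $\oplus/\owedge$ rules, since those require the term to be $M_1 \oplus M_2$, $M_1 \owedge M_2$, $\abort$, or $\underline*$ in an initial context. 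This case split is routine but must be done uniformly across all four language variants; since the contexts and the reduction relation are defined identically for CBPV$(\Demon,\Nature)$, CBPV$(\Demon,\Nature)+\pifzkw$, CBPV$(\Demon,\Nature)+\bigcirc$, and CBPV$(\Demon,\Nature)+\pifzkw+\bigcirc$, the argument transfers verbatim, and the extra rules for $\pifzkw$ and $\bigcirc_{>b}$ only add value-directed cases that likewise do not interfere with elementary-context terms.
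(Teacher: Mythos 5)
Your proposal is correct and follows essentially the same route as the paper: the $\geq$ direction via the redex discovery rule and Lemma~\ref{lemma:red}, and the $\leq$ direction by observing that for a term of the form $E[M]$ with $E$ elementary, the only rules that can conclude $C \cdot E[M] \vp a$ are the $\vp 0$ axiom and the reduction rule instantiated with redex discovery. The paper merely packages the induction as a single induction on $n$ over the derivation rather than factoring out the single-step case, which is an inessential difference.
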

\begin{proof}
  By the redex discovery rule, $C \cdot C' [N] \to^* C C' \cdot N$, so
  $\Prob (C \cdot C' [N] \vp) \geq \Prob (C C' \cdot N \vp)$ by
  Lemma~\ref{lemma:red}.  Conversely, if $C \cdot C' [N] \vp a$ is
  derivable, then we show that $C C' \cdot N \vp a$ is derivable by
  induction on $n$.  If $n=0$, this is clear.  Otherwise, there are
  only two rules that allow us to derive $C \cdot C' [N]\vp a$.  In
  the case of the first of these rules (the middle rule of the first
  of the three rows of rules), $a=0$, and we can derive
  $C C' \cdot N \vp a$ by the same rule.  In the case of the other
  rule, $C \cdot C' [N] \vp a$ was derived from a shorter derivation
  of $C E_1 \cdot C'' [N]\vp a$, where $C'' = E_2 \cdots E_n$, using
  the redex discovery rule
  $C \cdot C' [N] = C \cdot E_1 [C'' [N]] \to C E_1 \cdot C'' [N]$.
  By induction hypothesis, $C E_1 C'' \cdot N$ is derivable, namely
  $C C' \cdot N$ is derivable.  \qed
\end{proof}

\begin{lemma}
  \label{lemma:prob:reduce}
  Let $C' = E_1 \cdots E_n$ be a sequence of elementary contexts.  If
  $C \cdot M \to^* C C' \cdot N$ then
  $\Prob (C \cdot M\vp) \geq \Prob (C \cdot C' [N]\vp)$.
\end{lemma}
\begin{proof}
  $\Prob (C \cdot M \vp) \geq \Prob (C C' \cdot N \vp) = \Prob (C
  \cdot C' [N] \vp)$, by Lemma~\ref{lemma:red} and
  Lemma~\ref{lemma:prob:disc}.  \qed
\end{proof}

For short, let us write $\Prob (M \vp)$ for
$\Prob ([\_] \cdot M \vp)$.
\begin{lemma}
  \label{lemma:prob:disc:_}
  Let $C$ be any context of type $\anytype \vdash \F\Vt\unitT$.  For every term
  $M \colon \anytype$, $\Prob (C [M] \vp) = \Prob (C \cdot M \vp)$.
\end{lemma}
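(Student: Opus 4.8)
The plan is to split the context $C$ into its initial part and its elementary part, use Lemma~\ref{lemma:prob:disc} to absorb the elementary contexts, and then handle the single remaining initial context by hand. Concretely, write $C = E_0 C'$, where $E_0$ is the initial context and $C' = E_1 \cdots E_n$ is the (possibly empty) list of elementary contexts, so that $C[M] = E_0[C'[M]]$. Applying Lemma~\ref{lemma:prob:disc}, with the ambient context there taken to be the single initial context $E_0$ (a context with $n=0$ elementary contexts) and the elementary sequence taken to be $C'$, gives $\Prob(E_0 \cdot C'[M] \vp) = \Prob(E_0 C' \cdot M \vp) = \Prob(C \cdot M \vp)$. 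Hence it suffices to prove the special case where the context is a single initial context: writing $P$ for $C'[M]$, I must show $\Prob([\_] \cdot E_0[P] \vp) = \Prob(E_0 \cdot P \vp)$ for each of the three initial contexts $E_0$.

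When $E_0 = [\_]$ this is immediate, since $E_0[P] = P$ and $E_0 \cdot P = [\_] \cdot P$. When $E_0 = [\produce \_]$ we have $E_0[P] = \produce P$, and there is exactly one deterministic reduction out of $[\_] \cdot \produce P$, namely $[\_] \cdot \produce P \to [\produce \_] \cdot P$; when $E_0 = [\produce \retkw \_]$ we have $E_0[P] = \produce (\retkw P)$ together with the two reductions $[\_] \cdot \produce (\retkw P) \to [\produce \_] \cdot \retkw P \to [\produce \retkw \_] \cdot P$. In each case the inequality $\Prob([\_] \cdot E_0[P] \vp) \geq \Prob(E_0 \cdot P \vp)$ follows directly from Lemma~\ref{lemma:red}, applied once, resp.\ twice, along these reductions.

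For the reverse inequality I would show that these reductions are the only derivation steps producing a positive value, by inspecting the last rule of a derivation of $[\_] \cdot \produce P \vp a$ (resp.\ of the intermediate configuration $[\produce \_] \cdot \retkw P \vp a$). The axioms all yield $a = 0$, or require a term shape ($\abort$, $\underline *$) or a context ($[\produce \retkw \_]$) that does not match; the $\oplus$, $\owedge$, $\pifzkw$ and $\bigcirc$ rules require the term to have the corresponding top-level constructor, which $\produce P$ and $\retkw P$ do not; and the redex-discovery rule cannot fire because neither $\produce$ nor $\retkw$ is the head of any elementary context. Hence, for $a > 0$, the only available rule is the reduction rule along the unique reduction above, so $E_0 \cdot P \vp a$ (resp.\ $[\produce \retkw \_] \cdot P \vp a$) is derivable; for $a = 0$ it is derivable by the axiom $C \cdot M \vp 0$. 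Taking suprema over $a$ yields $\Prob([\_] \cdot E_0[P] \vp) \leq \Prob(E_0 \cdot P \vp)$, and combining with the previous paragraph gives the desired equality. The step I expect to be the main obstacle is precisely this exhaustive case analysis confirming that the deterministic reduction through $\produce$ and $\retkw$ is the unique applicable non-trivial rule.
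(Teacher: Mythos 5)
Your proof is correct and follows essentially the same route as the paper's: decompose $C$ as $E_0 C'$, dispatch the elementary part via Lemma~\ref{lemma:prob:disc}, and handle the initial context by the reductions $[\_] \cdot \produce P \to [\produce \_] \cdot P$ and $[\produce \_] \cdot \retkw P \to [\produce \retkw \_] \cdot P$ together with an inspection showing these are the only rules applicable for $a > 0$. The paper merely performs the two halves in the opposite order and states the rule-inversion step more tersely; your explicit case analysis is the same argument spelled out.
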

\begin{proof}
  Let us write $C$ as $E_0 C'$ where
  $E_0$ is an initial context and $C' = E_1 E_2 \cdots E_n$ is a
  sequence of elementary contexts.  We first show that
  $\Prob ([\_] \cdot C [M] \vp) = \Prob (E_0 \cdot C' [M] \vp)$.  Once
  this is done, Lemma~\ref{lemma:prob:disc} states that
  $\Prob (E_0 \cdot C' [M] \vp) = \Prob (E_0 C' \cdot M \vp) = \Prob
  (C \cdot M \vp)$, and that will finish the proof.

  We assume $E_0 \neq [\_]$, otherwise the claim is trivial.  Then
  $[\_] \cdot C [M] \to^* E_0 \cdot C' [M]$.  Indeed,
  $[\_] \cdot C [M] \to [\produce \_] \cdot C' [M]$ if
  $E_0 = [\produce \_]$, and
  $[\_] \cdot C [M] \to [\produce \_] \cdot \retkw C' [M] \to
  [\produce \retkw \_] \cdot C' [M]$ if $E_0 = [\produce \retkw \_]$.
  By Lemma~\ref{lemma:red},
  $\Prob ([\_] \cdot C [M] \vp) \geq \Prob (E_0 \cdot C' [M] \vp)$.

  In the converse direction, assume that $[\_] \cdot C [M] \vp a$ is
  derivable.  If $a=0$, then $E_0 \cdot C' [M] \vp a$ is also
  derivable.  Otherwise, if $E_0 = [\produce \_]$, then the only
  remaining possible derivation is obtained from a smaller derivation
  of $[\produce \_] \cdot C' [M] \vp a$, so $E_0 \cdot C' [M] \vp a$
  is again derivable.  If $a \neq 0$ and $E_0 = [\produce \retkw \_]$,
  then we can only have derived $[\_] \cdot C [M] \vp a$ from a
  smaller derivation of $[\produce \_] \cdot \retkw C' [M] \vp a$, and
  then from another derivation of
  $[\produce \retkw \_] \cdot C' [M] \vp a$, namely
  $E_0 \cdot C' [M] \vp a$.  Since that holds for every $a$,
  $\Prob ([\_] \cdot C [M] \vp) \leq \Prob (E_0 \cdot C' [M] \vp)$.
  \qed
\end{proof}

\section{Soundness}
\label{sec:soundness}

We let the \emph{rank} of a type be $0$ for a value type that is not
of the form $\Vt{} \sigma$, $1/2$ for types of the form $\Vt{} \sigma$, and
$1$ for computation types.  This will play a key role in our soundness
proof, for the following reason: for every elementary or initial
context $E \colon \anytype \vdash \anyothertype$, the rank of
$\anytype$ is less than or equal to the rank of $\anyothertype$.
Hence if $C = E_0 E_1 E_2 \cdots E_n$ is of type
$\anytype \vdash \anyothertype$, and $E_i$ is of type
$\anytype_{i+1} \vdash \anytype_i$, then every $\anytype_i$ has rank
between those of $\anytype$ and $\anyothertype$.

Beyond its role as a technical aide, the concept of rank is profitably
interpreted from the point of view of the type and effect discipline
\cite{TJ:effects}.  While the separation between value types and
computation types exhibits two kinds of effects, ranks refine this
further by distinguishing between rank $0$ value types, where the only
effect is recursion, from rank $1/2$ value types, where probabilistic
choice is also allowed.  Rank $1$ types further allow for
non-deterministic choice effects.  With that viewpoint, one might be
puzzled by the fact that the rank $0$ types $\U\underline \tau$ are
able to encapsulate arbitrary rank $1$ types.  However, the typical
inhabitants of types $\U\underline \tau$ are \emph{thunks} $\thunk M$,
which do \emph{not} execute, hence do not produce any side effect,
unless being forced to, using the $\force$ operation, yielding again a
value of the rank $1$ type $\underline \tau$.

We will also need to define the semantics of contexts
$C \colon \anytype \vdash \F\Vt\unitT$ so that
$\Eval {C [M]} \rho = \Eval C \rho (\Eval M \rho)$ for every
$M \colon \anytype$ and for every environment $\rho$.
$\Eval {E_0 E_1 E_2 \cdots E_n} \rho$ is the composition of
$\Eval {E_0} \rho$, $\Eval {E_1} \rho$, $\Eval {E_2} \rho$, \ldots,
$\Eval {E_n} \rho$, where:
\begin{itemize}
\item $\Eval {[\_ N]} \rho$ maps $f$ to $f (\Eval N \rho)$,
\item
  $\Eval {[\pto {\_} {x_\sigma} N]} \rho = \extd {(V \in \Eval \sigma
    \mapsto \Eval N \rho [x_\sigma \mapsto V])}$,
\item $\Eval {[\force \_]} \rho$ is the identity map,
\item $\Eval {[\suc \_]} \rho$ maps $\bot$ to $\bot$ and otherwise
  adds one,
\item $\Eval {[\pred \_]} \rho$ maps $\bot$ to $\bot$ and otherwise
  subtracts one,
\item $\Eval {[\ifz \_ N P]} \rho$ maps $0$ to $\Eval N \rho$, every
  non-zero number to $\Eval P \rho$ and $\bot$ to $\bot$,
\item $\Eval {[\_; N]} \rho$ maps $\top$ to $\Eval N \rho$, and $\bot$
  to $\bot$,
\item $\Eval {[\pi_1 \_]} \rho$ is first projection,
\item $\Eval {[\pi_2 \_]} \rho$ is second projection,
\item
  $\Eval {[\dokw {x_\sigma \leftarrow \_}; N]} \rho = {(V \in \Eval
    \sigma \mapsto \Eval N \rho [x_\sigma \mapsto V])}^\dagger$,
\item $\Eval {[\produce \_]} \rho = \eta^\Smyth$, and
\item $\Eval {[\produce \retkw \_]} \rho$ maps $V$ to
  $\eta^\Smyth (\delta_V)$.
\end{itemize}

\begin{proposition}[Soundness]
  \label{prop:sound}
  Let $C \colon \anytype \vdash \F\Vt\unitT$, $M \colon \anytype$, where $\anytype$ is a value
  or computation type, and let $\rho \in Env$.  In
  CBPV$(\Demon,\Nature)$, in CBPV$(\Demon,\Nature)+\pifzkw$, in
  CBPV$(\Demon,\Nature)+\bigcirc$, and in
  CBPV$(\Demon,\Nature)+\pifzkw+\bigcirc$:
  \begin{enumerate}
  \item For every $a \in \rat \cap [0, 1)$, if $C \cdot M \vp a$ is
    derivable, then either $\Eval {C [M]} \rho=\bot$ and $a=0$, or
    $\Eval {C [M]} \rho\neq \bot$ and for every
    $\nu \in \Eval {C [M]} \rho$, $a \ll \nu (\{\top\})$.
  \item If $\Eval {C [M]}\rho=\bot$ then $\Prob (C \cdot M \vp)=0$,
    otherwise for every $\nu \in \Eval {C [M]} \rho$,
    $\nu (\{\top\}) \geq \Prob (C \cdot M \vp)$.
  \end{enumerate}
\end{proposition}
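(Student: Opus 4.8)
The plan is to derive item~2 from item~1 and to prove item~1 by induction on the derivation of $C \cdot M \vp a$. For item~2, recall that $\Prob (C \cdot M \vp)$ is the supremum of the rationals $a$ with $C \cdot M \vp a$ derivable: if $\Eval {C[M]} \rho = \bot$, item~1 forces every such $a$ to equal $0$, so the supremum is $0$; otherwise item~1 gives $a \leq \nu (\{\top\})$ for every derivable $a$ and every $\nu \in \Eval {C[M]} \rho$, and taking the supremum over $a$ yields $\nu (\{\top\}) \geq \Prob (C \cdot M \vp)$. Everything therefore rests on item~1. Throughout I would use compositionality, $\Eval {C[M]} \rho = \Eval C \rho (\Eval M \rho)$, and a preliminary \emph{reduction-preserves-denotation} lemma stating that $C \cdot M \to C' \cdot M'$ implies $\Eval {C[M]} \rho = \Eval {C'[M']} \rho$. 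The latter is the routine subject-reduction computation: it rests on the substitution lemma $\Eval {N[x_\sigma := P]} \rho = \Eval N \rho [x_\sigma \mapsto \Eval P \rho]$, on $\lfp f = f (\lfp f)$ for the $\rec$ rule, and on $\extd f \circ \eta^\Smyth = f$ and $f^\dagger (\delta_a) = f (a)$ for the $\produce$ and $\dokw$ rules.

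Before the induction I would record two \emph{structural lemmas}, extracted from the rank discipline, which constrain the shape of any context whose codomain is $\F\Vt\unitT$. If $C \colon \F\tau \vdash \F\Vt\unitT$, then all intermediate types have rank~$1$, so $C$ is $[\_]$ followed only by contexts of the form $[\_ N]$ or $[\pto {\_} {x_\sigma} N]$; each of these is strict and preserves binary infima and the top element (by Lemma~\ref{lemma:sound:aux}, item~2, for the former, and Proposition~\ref{prop:Qtop}, items~2 and~3, for the latter), hence so is $\Eval C \rho$. If instead $C \colon \Vt\tau \vdash \F\Vt\unitT$, then, because with codomain $\F\Vt\unitT$ the rank can climb from $1/2$ to $1$ only through the (necessarily outermost) initial context $[\produce {\_}]$, $C$ must be $[\produce {\_}]$ preceded only by $[\dokw {x_\sigma \leftarrow \_}; N]$ contexts; thus $\Eval C \rho = \eta^\Smyth \circ g$, where $g \colon \Val_{\leq 1} \Eval \tau \to \Val_{\leq 1} \Sierp$ is a composite of $\dagger$-maps and hence linear, $g (\tfrac12 (\mu_1 + \mu_2)) = \tfrac12 (g (\mu_1) + g (\mu_2))$. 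In particular $\Eval {C[M]} \rho$ is never $\bot$ in this second case.

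The induction then goes rule by rule. The three axioms are immediate: for $[\produce \retkw {\_}] \cdot \underline{*} \vp a$ the denotation is $\eta^\Smyth (\delta_\top)$ and every $\nu \geq \delta_\top$ satisfies $\nu (\{\top\}) = 1 > a$; for $C \cdot M \vp 0$ one uses $0 \ll x$ for all $x$; and for $C \cdot \abort_{\F\tau} \vp a$ the first structural lemma gives $\Eval {C[\abort_{\F\tau}]} \rho = \Eval C \rho (\emptyset) = \emptyset \neq \bot$, so ``$a \ll \nu (\{\top\})$ for all $\nu$'' holds vacuously. The reduction rule transfers the conclusion unchanged through the reduction-preserves-denotation lemma. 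For $\owedge$ (and, in the subcase $\Eval M \rho = \bot$, for the second $\pifzkw$ rule, whose other subcases transfer a single premise directly), the first structural lemma gives $\Eval {C[M \owedge N]} \rho = \Eval {C[M]} \rho \wedge \Eval {C[N]} \rho$, where the meet in $\Smyth^\top_\bot (\Val_{\leq 1} \Sierp)$ is set union, or $\bot$ as soon as one factor is $\bot$; a short split on the two cases of the hypotheses for $C \cdot M \vp a$ and $C \cdot N \vp a$ finishes it, a $\bot$-with-$a{=}0$ factor forcing the infimum to $\bot$ with $a = 0$, and two non-$\bot$ factors giving a union over which $a \ll \nu (\{\top\})$ holds. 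The remaining $\pifzkw$ rule reduces to the $\ifz$ case, since $\Eval {\pifz M N P} \rho$ agrees with $\Eval {\ifz M N P} \rho$ unless $\Eval M \rho = \bot$, where strictness of $\Eval C \rho$ and $a = 0$ rescue the argument; and the $\bigcirc_{> b}$ rule reduces to the $\underline{*}$ case, because $b \in \rat \cap (0, 1)$ forces $\Eval M \rho \neq \bot$, whence $\Eval {\bigcirc_{> b} M} \rho = \top = \Eval {\underline{*}} \rho$.

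The crux is the $\oplus$ rule. By the second structural lemma, with $\mu_M = \Eval M \rho$ and $\mu_N = \Eval N \rho$, $\Eval {C[M \oplus N]} \rho = \eta^\Smyth (g (\tfrac12 (\mu_M + \mu_N))) = \eta^\Smyth (\tfrac12 (g (\mu_M) + g (\mu_N)))$. The induction hypotheses, read at the points $g (\mu_M) \in \eta^\Smyth (g (\mu_M))$ and $g (\mu_N) \in \eta^\Smyth (g (\mu_N))$, give $a \ll g (\mu_M) (\{\top\})$ and $b \ll g (\mu_N) (\{\top\})$; since any $\nu$ in the relevant up-set has $\nu (\{\top\}) \geq \tfrac12 (g (\mu_M) (\{\top\}) + g (\mu_N) (\{\top\}))$, a case split on whether $(a+b)/2 = 0$ upgrades these two way-below relations to $(a+b)/2 \ll \nu (\{\top\})$. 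I expect this case, together with proving the second structural lemma cleanly from the rank bookkeeping and scrupulously distinguishing $\leq$ from $\ll$ in $[0,1]$, to be the main obstacle; once the two structural lemmas are available, the other cases are essentially bookkeeping.
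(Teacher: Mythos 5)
Your proposal is correct and follows essentially the same route as the paper's proof: item~2 from item~1, item~1 by induction on the derivation, with the rank discipline used to pin down the shape of contexts of hole type $\F{}\tau$ (yielding strictness and preservation of $\top$ and of binary infima, which is what the $\abort$ and $\owedge$ cases need) and of hole type $\Vt{}\tau$ (yielding $\eta^\Smyth$ composed with a linear map, which is what the $\oplus$ case needs). The only cosmetic differences are that you package the rank observations as standalone structural lemmas and handle the $\pifzkw$ rules by direct case analysis on $\Eval M \rho$ rather than via the paper's identity $\Eval {C [\pifz M N P]} \rho = \max (\Eval {C [\ifz M N P]} \rho, \Eval {C [N \owedge P]} \rho)$; the content is the same.
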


\begin{proof}
  Item~2 is an easy consequence of item~1, which we prove by induction
  on the derivation.

  In the case of the first rule
  ($[\produce \retkw \_] \cdot \underline* \vp a$),
  $C [M] = \produce \retkw \underline*$, and
  $\Eval {C [M]} \neq \bot$.  For every
  $\nu \in \Eval {C [M]} \rho = \eta^\Smyth (\delta_\top)$, we have
  $\nu \geq \delta_\top$, so $\nu (\{\top\}) \geq 1$, and certainly
  $a \ll 1$ for every $a \in \rat \cap [0, 1)$.

  The case of the second rule $C \cdot M \vp 0$ is obvious.

  The case of the leftmost rule of the next row follows from the
  observation that if $C \cdot M \to C' \cdot M'$, then
  $\Eval {C [M]} \rho = \Eval {C' [M']} \rho$.  We use the standard
  substitution lemma
  $\Eval M (\rho [x_\sigma \mapsto \Eval N \rho]) = \Eval {M [x_\sigma
    := N]} \rho$ in the case of $\beta$-reduction
  ($C [\_ N] \cdot \lambda x_\sigma . M \to C \cdot M [x_\sigma :=
  N]$): the value of the left-hand side is
  $\Eval C \rho (\Eval M (\rho [x_\sigma \mapsto \Eval N \rho]))$, and
  the value of the right-hand side is
  $\Eval C \rho (\Eval {M [x_\sigma:=N]} \rho)$).  In the case of
  $C [\pto {\_} {x_\sigma} N] \cdot \produce M \to C \cdot N [x_\sigma
  := M]$, we also use the fact that
  $\extd {(V \in \Eval \sigma \mapsto \Eval N \rho [x_\sigma := V])}
  (\eta^\Smyth (\Eval M \rho)) = \Eval N \rho [x_\sigma \mapsto \Eval
  M \rho]$ (Proposition~\ref{prop:Qtop}, item~2).  In the case of
  $C [\dokw {x_\sigma \leftarrow \_}; N] \cdot \retkw M \to C \cdot N
  [x_\sigma := M]$, we use the equality $f^\dagger (\delta_x) = f (x)$
  and the substitution lemma.

  By our observation on ranks, if
  $C \colon \F{} \sigma \vdash \F\Vt{} \unitT$, where
  $C = E_0 E_1 E_2 \cdots \allowbreak E_n$ and
  $E_i \colon \anytype_{i+1} \vdash \anytype_i$ for each $i$, then all the types
  $\anytype_i$ are computation types (rank $1$).  In that case, $E_i$ can
  only be of one of the two forms $[\_ N]$,
  $[\pto {\_} {x_\sigma} N]$.  (Further inspection would reveal that
  the first case is impossible, but we will not need that yet.)  We
  now observe that in each case, $\Eval {E_i} \rho$ maps top to top:
  in the case of $[\pto {\_} {x_\sigma} N]$, this is by
  Proposition~\ref{prop:Qtop}, item~3.  It follows that $\Eval C \rho$
  also maps top to top, whence
  $\Eval {C [\abort_{\F{}\sigma}]} \rho = \Eval C \rho (\Eval
  {\abort_{\F{}\sigma}} \rho) = \Eval C \rho (\emptyset) = \emptyset$.
  As a consequence, $\Eval {C [\abort_{\F{}\sigma}]} \rho \neq \bot$, and
  the claim that for every
  $\nu \in \Eval {C [\abort_{\F{}\sigma}]} \rho$,
  $\nu (\{\top\}) \geq \Prob (C \cdot \abort \vp)$ is vacuously true:
  the rule that derives $C \cdot \abort_{\F{}\sigma} \vp a$ for every
  $a \in \rat \cap [0, 1)$ is sound.

  Similarly, and still assuming
  $C \colon \F{} \sigma \vdash \F\Vt\unitT$, for each $i$,
  $\Eval {E_i} \rho$ preserves binary infima.  When
  $E_i = [\pto {\_} {x_\sigma} N]$, this is because the function
  $\extd {(V \in \Eval \sigma \mapsto \Eval N \rho [x_\sigma \mapsto
    V])}$ maps binary infima to binary infima by
  Proposition~\ref{prop:Qtop}, item~3.  When $E_i = [\_ N]$,
  $\Eval {[\_ N]} \rho$ maps every $f$ to $f (\Eval N \rho)$, and
  therefore preserves binary infima by Lemma~\ref{lemma:sound:aux},
  item~2.  It follows that $\Eval C \rho$ preserves binary infima.  We
  apply this to the rightmost rule of the middle row (if
  $C \cdot M \vp a$ and $C \cdot N \vp a$ then
  $C \cdot M \owedge N \vp a$).  We have
  $\Eval {C [M \owedge N]} \rho = \Eval C \rho (\Eval M \rho \wedge
  \Eval N \rho) = \Eval C \rho (\Eval M \rho) \wedge \Eval C \rho
  (\Eval N \rho) = \Eval {C [M]} \rho \wedge \Eval {C [N]} \rho$.

  In particular, if $\Eval {C [M \owedge N]} \rho = \bot$, and since
  $a \wedge b = \bot$ implies $a=\bot$ or $b=\bot$ in any space of the
  form $\Smyth^\top_\bot (X)$, then $\Eval {C [M]} \rho$ or
  $\Eval {C [N]} \rho$ is equal to $\bot$.  By symmetry, let us assume
  that $\Eval {C [M]} \rho = \bot$.  By induction hypothesis, the only
  value of $a$ such that $C \cdot M \vp a$ is derivable is $a=0$.
  There are only two rules that can end a derivation of
  $C \cdot M \owedge N \vp a$, and they both require $a=0$.

  If $\Eval {C [M \owedge N]} \rho \neq \bot$, then
  $\Eval {C [M]} \rho \neq \bot$ and $\Eval {C [N]} \rho \neq \bot$,
  so by induction hypothesis, for every $\nu$ in $\Eval {C [M]} \rho$,
  and for every $\nu$ in $\Eval {C [N]} \rho$, $a \ll \nu (\{\top\})$.
  Hence this holds for every
  $\nu \in \Eval {C [M \owedge N]} \rho = \Eval {C [M]} \rho \wedge
  \Eval {C [N]} \rho = \Eval {C [M]} \rho \cup \Eval {C [N]} \rho$.

  Let us deal with the last of the CBPV$(\Demon,\Nature)$ rules
  (middle rule, middle row of Figure~\ref{fig:opsem}): we have deduced
  $C \cdot M \oplus N \vp (a+b)/2$ from $C \cdot M \vp a$ and
  $C \cdot N \vp b$, hence by induction hypothesis: $(a)$ either
  $\Eval {C [M]} \rho = \bot$ and $a=0$, or for every
  $\nu \in \Eval {C [M]} \rho$, $a \ll \nu (\{\top\})$; and $(b)$
  either $\Eval {C [N]} \rho = \bot$ and $b=0$, or for every
  $\nu \in \Eval {C [N]} \rho$, $b \ll \nu (\{\top\})$.  In that case
  $C = E_0 E_1 E_2 \cdots E_n$ has type $\Vt{} \sigma \vdash \F\Vt\unitT$ for
  some value type $\sigma$, and every intermediate type $\anytype_i$ must
  therefore have rank $1/2$ or $1$.  The only eligible elementary
  contexts $E_i \colon \anytype_{i+1} \vdash \anytype_i$ ($1\leq i \leq n$) are of
  the form $[\_ N]$, $[\pto {\_} {x_\sigma} N]$, or
  $[\dokw {x_\sigma \leftarrow \_}; N]$.  In each case, the rank of
  $\anytype_i$ is equal to that of $\anytype_{i+1}$.  Since $\anytype_{n+1} = \F\Vt\unitT$ has
  rank $1$ and $\anytype_0 = \Vt{} \sigma$ has rank $1/2$, $E_0$ cannot be
  $[\_] \colon \F\Vt{} \unitT \vdash \F\Vt{} \unitT$.  It cannot be
  $[\produce \retkw \_] \colon \unitT \vdash \F\Vt{} \unitT$ either since
  $\unitT$ has rank $0$.  Hence $E_0$ is equal to
  $[\produce \_] \colon \Vt\unitT \vdash \F\Vt{} \unitT$, and every $E_i$
  ($1\leq i\leq n$) is of the form
  $[\dokw {x_\sigma \leftarrow \_}; N]$.  We note that
  $\Eval {[\dokw {x_\sigma \leftarrow \_}; N]} \rho = {(V \in \Eval
    \sigma \mapsto \Eval N \rho [x_\sigma \mapsto V])}^\dagger$ is a
  linear map, i.e., preserves sums and scalar multiplication.  Indeed
  the formula $f^\dagger (\nu) (V) = \int_{x \in X} f (x) (V) d\nu$ is
  linear in $\nu$.  It follows that $\Eval {E_1 E_2 \cdots E_n} \rho$
  is also linear, so
  $\Eval {E_1 E_2 \cdots E_n [M \oplus N]} \rho = \Eval {E_1 E_2
    \cdots E_n} \rho (\frac 1 2 (\Eval M \rho + \Eval N \rho)) = \frac
  1 2 (\nu_1 + \nu_2)$, where
  $\nu_1 = \Eval {E_1 E_2 \cdots E_n [M]} \rho$ and
  $\nu_2 = \Eval {E_1 E_2 \cdots E_n [N]} \rho$.  Note that
  $\Eval {C [M]} \rho = \eta^\Smyth (\nu_1) = \upc \nu_1$, and
  similarly $\Eval {C [N]} \rho = \upc \nu_2$, and that those values
  are different from $\bot$.  Similarly,
  $\Eval {C [M \oplus N]} \rho = \upc (\frac 1 2 (\nu_1+\nu_2))$ is
  different from $\bot$.  Since $\nu_1 \in \Eval {C [M]} \rho$, we
  obtain that $a \ll \nu_1 (\{\top\})$ by $(a)$.  Similarly,
  $b \ll \nu_2 (\{\top\})$.  Using the fact that, for all
  $s, t \in [0, 1]$, $s \ll t$ if and only if $s=0$ or $s<t$,
  $(a+b)/2 \ll \frac 1 2 (\nu_1 (\{\top\}) + \nu_2 (\{\top\}))$.  For
  every
  $\nu \in \Eval {C [M \oplus N]} \rho = \upc (\frac 1 2 (\nu_1 +
  \nu_2))$, and we therefore obtain that $(a+b)/2 \ll \nu (\{\top\})$.

  We turn to the rules of the bottom row, which are specific to the
  extensions of CBPV$(\Demon,\Nature)$ with $\pifzkw$, or $\bigcirc$,
  or both.  For the first one, by induction hypothesis either
  $\Eval {M} \rho = \bot$ and then $b=0$, or else
  $b \ll \mu (\{\top\})$ for every $\mu \in \Eval M \rho$.  The first
  case is impossible since it is a requirement of the syntax of
  $\bigcirc_{> b} M$ that $b$ be non-zero.  This implies that
  $\Eval {\bigcirc_{> b} M} \rho = \top$.  It follows that
  $\Eval {C [\bigcirc_{> b} M]} \rho = \Eval C \rho (\Eval
  {\bigcirc_{> b} M} \rho) = \Eval C \rho (\top) = \Eval {C
    [\underline *]} \rho$.  By induction hypothesis again, either
  $\Eval {C [\underline *]} \rho = \bot$ and $a=0$, or
  $a \ll \nu (\{\top\})$ for every $\nu$ in
  $\Eval {C [\underline*]} \rho$.  Hence either
  $\Eval {C [\bigcirc_{> b} M]} \rho = \bot$ and $a=0$, or
  $a \ll \nu (\{\top\})$ for every $\nu$ in
  $\Eval {C [\bigcirc_{> b} M]} \rho$.

  For the last two, we note that, in all three cases on $\Eval M \rho$
  (equal to $\bot$, to $0$, or other), $\Eval {C [\pifz M N P]} \rho$
  is equal to one of the terms $\Eval {C [\ifz M N P]} \rho$ or
  $\Eval {C [N \owedge P]} \rho$, and is larger than or equal to the
  other one.  In other words,
  $\Eval {C [\pifz M N P]} \rho = \max (\Eval {C [\ifz M N P]} \rho,
  \Eval {C [N \owedge P]} \rho)$.  If that is equal to $\bot$, then
  both terms $\Eval {C [\ifz M N P]} \rho$ and
  $\Eval {C [N \owedge P]} \rho$ are equal to $\bot$, so by induction
  hypothesis the only derivations of $C \cdot \ifz M N P \vp a$ and
  $C \cdot N \owedge P \vp a$ are such that $a=0$.  Hence the only
  derivations of $C \cdot \pifz M N P \vp a$ are such that $a=0$,
  using any of the three possible rules.  If
  $\Eval {C [\pifz M N P]} \rho \neq \bot$, then let us assume that
  $C \cdot \pifz M N P \vp a$ by any of the last two rules.  If $a=0$,
  then certainly $a \ll \nu (\{\top\})$ for every
  $\nu \in \Eval {C [\pifz M N P]} \rho$.  Otherwise, by induction
  hypothesis we have $\Eval {C [\ifz M N P]} \rho \neq \bot$ and
  $a \ll \nu (\{\top\})$ for every
  $\nu \in \Eval {C [\ifz M N P]} \rho$, or
  $\Eval {C [N \owedge P]} \rho \neq \bot$ and $a \ll \nu (\{\top\})$
  for every $\nu \in \Eval {C [N \owedge P]} \rho$.  Since
  $\Eval {C [\pifz M N P]} \rho = \max (\Eval {C [\ifz M N P]} \rho,
  \Eval {C [N \owedge P]} \rho)$, and $\max$ means smallest with
  respect to inclusion (for non-bottom elements), in particular
  $a \ll \nu (\{\top\})$ for every
  $\nu \in \Eval {C [\pifz M N P]} \rho$.  \qed
\end{proof}

\section{Adequacy}
\label{sec:adequacy}

Adequacy is proved through the use of a suitable logical relation
${(\R_\anytype)}_{\anytype \text{ type}}$, where $\R_\anytype$ relates
ground terms of type $\anytype$ with elements of $\Eval \anytype$.  (A
term is \emph{ground} if and only if it has no free variable.  We
define \emph{ground} contexts similarly.)  Again we work in
CBPV$(\Demon,\Nature)$ or any of its extensions with $\pifzkw$ or
$\bigcirc$ or both, without further mention.

Defining $\R_\anytype$ necessitates that we also define auxiliary
relations $\R_\sigma^\perp$ between ground contexts
$C \colon \Vt\sigma \vdash \F\Vt\unitT$ (resp., $\R_\sigma^*$ between
ground contexts $C \colon \F{}\sigma \vdash \F\Vt\unitT$) and continuous
maps $h \colon \Eval \sigma \to [0, 1]$.  This pattern is similar to
the technique of $\top\top$-lifting, and particularly to Katsumata's
$\top\top$-logical predicates \cite{Katsumata:TTlifting}.  We write
``for all $C \R_\sigma^\perp h$'' instead of ``for every ground
context $C \colon \Vt\sigma \vdash \F\Vt\unitT$ and for every continuous
map $h \colon \Eval \sigma \to [0, 1]$ such that
$C \R_\sigma^\perp h$'', $C \R_\sigma^* h$ instead of ``for every
ground context $C \colon \F{}\sigma \vdash \F\Vt\unitT$ and for every
continuous map $h \colon \Eval \sigma \to [0, 1]$ such that
$C \R_\sigma^* h$'', and $M \R_\sigma a$ instead of ``for every ground
term $M \colon \sigma$ and for every $a \in \Eval \sigma$ such that
$M \R_\sigma a$''.  We define:
\begin{itemize}
\item $M \R_{\U\underline\tau} h$ iff $\force M \R_{\underline\tau} h$;
\item $M \R_{\unitT} \top$ if $\underline * \precsim_{\unitT} M$, and
  $M \R_{\unitT} \bot$ always;
\item $M \R_{\intT} n$ if $\underline n \precsim_{\intT} M$, and
  $M \R_{\intT} \bot$ always;
\item $M \R_{\sigma \times \tau} (V_1, V_2)$ if and only if $\pi_1 M
  \R_\sigma V_1$ and $\pi_2 M \R_\tau V_2$;
\item $M \R_{\Vt{} \sigma} \nu$ if and only if $\Prob (C \cdot M\vp) \geq
  \int_{x \in \Eval \sigma} h (x) d\nu$ for all $C \R_\sigma^\perp h$;
\item $C \R_\sigma^\perp h$ if and only if
  $\Prob (C \cdot \retkw M\vp) \geq h (V)$ for all $M \R_\sigma V$;
\item $M \R_{\F{} \sigma} Q$ if and only if for all $C \R_\sigma^* h$,
  $\Prob (C \cdot M\vp) \geq \extd h (Q)$; here $h$ is any continuous
  map from $\Eval \sigma$ to the continuous complete lattice $[0, 1]$,
  so $\extd h$ makes sense: $\extd h (\bot) = 0$, and if
  $Q \neq \bot$, then
  $\extd h (Q) = \bigwedge_{a \in Q} h (a)$---which is equal to $1$ if
  $Q=\emptyset$;
\item $C \R_\sigma^* h$ if and only if $\Prob (C \cdot \produce M\vp)
  \geq h (V)$ for all $M \R_\sigma V$;
\item $M \R_{\sigma \to \underline\tau} f$ if and only if $MN
  \R_{\underline\tau} f (V)$ for all $N \R_\sigma V$.
\end{itemize}

\begin{lemma}
  \label{lemma:R:reduc}
  For all ground terms $M, N \colon \anytype$, if
  $M \precsim_\anytype N$ 
  and $M \R_\anytype V$ then $N \R_\anytype V$.
\end{lemma}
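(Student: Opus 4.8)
The plan is to prove the statement by induction on the structure of the type $\anytype$, exploiting the fact that the clauses defining $\R_\anytype$ fall into two families. For the types $\Vt\sigma$ and $\F\sigma$ the relation is phrased directly in terms of $\Prob (C \cdot M\vp)$, ranging over ground contexts $C \colon \anytype \vdash \F\Vt\unitT$ --- exactly the contexts over which $\precsim_\anytype$ quantifies in Definition~\ref{defn:context:preorder}. For these two cases the conclusion is immediate and needs no induction hypothesis: given any $C \R_\sigma^\perp h$ (resp.\ $C \R_\sigma^* h$), the hypothesis $M \precsim_{\Vt\sigma} N$ (resp.\ $M \precsim_{\F\sigma} N$) yields $\Prob (C \cdot N\vp) \geq \Prob (C \cdot M\vp)$, and the latter is already $\geq \int_{x} h (x) \, d\nu$ (resp.\ $\geq \extd h (Q)$) by the assumption $M \R_\anytype V$.

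For the remaining cases I would first isolate a compatibility sub-claim: for every \emph{ground} elementary context $E \colon \anytype \vdash \anyothertype$, if $M \precsim_\anytype N$ then $E [M] \precsim_\anyothertype E [N]$. This is where Lemma~\ref{lemma:prob:disc} does the work: for any $C \colon \anyothertype \vdash \F\Vt\unitT$ the appended context $C E$ has type $\anytype \vdash \F\Vt\unitT$ and satisfies $(C E) [M] = C [E [M]]$, so $\Prob (C \cdot E [M]\vp) = \Prob (C E \cdot M\vp) \leq \Prob (C E \cdot N\vp) = \Prob (C \cdot E [N]\vp)$, giving $E [M] \precsim_\anyothertype E [N]$. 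The base cases $\unitT$ and $\intT$ then reduce to transitivity of the preorder $\precsim$: if $V = \bot$ the relation holds unconditionally, while if $V = \top$ (resp.\ $V = n$) we chain $\underline * \precsim_{\unitT} M \precsim_{\unitT} N$ (resp.\ $\underline n \precsim_{\intT} M \precsim_{\intT} N$) to obtain $N \R_\anytype V$.

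The structural cases $\U\underline\tau$, $\sigma \times \tau$ and $\sigma \to \underline\tau$ all follow the same pattern: unfold the definition of $\R_\anytype$, apply the compatibility sub-claim with the relevant elementary context --- $[\force \_]$, the projections $[\pi_1 \_]$ and $[\pi_2 \_]$, or $[\_\, P]$ for each ground $P \R_\sigma V$ --- to promote $M \precsim_\anytype N$ to $\force M \precsim_{\underline\tau} \force N$, $\pi_i M \precsim \pi_i N$, or $M P \precsim_{\underline\tau} N P$, and then invoke the induction hypothesis at the strictly smaller type. I do not anticipate a genuine obstacle here; the design of the logical relation (a $\top\top$-style lifting keyed to contexts into $\F\Vt\unitT$) is precisely what makes monotonicity along $\precsim$ fall out. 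The only points demanding care are the bookkeeping of context composition in the compatibility sub-claim (checking $(C E) [M] = C [E [M]]$ and that the intermediate types line up) and verifying that each elementary context invoked is ground, so that $E [M]$ and $E [N]$ remain closed terms to which $\precsim$ applies.
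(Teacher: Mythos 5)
Your proposal is correct and follows essentially the same route as the paper's proof: induction on the type, with the $\Vt\sigma$ and $\F{}\sigma$ cases handled directly from the definition, the base types by transitivity of $\precsim$, and the structural cases by promoting $M \precsim_\anytype N$ through the elementary contexts $[\force \_]$, $[\pi_i \_]$, $[\_\, P]$ via Lemma~\ref{lemma:prob:disc} before invoking the induction hypothesis. The only difference is cosmetic: you factor the context-compatibility step out as a named sub-claim, whereas the paper repeats the same short computation inline in each case.
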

\begin{proof}
  By induction on $\anytype$.  If $\anytype = \U\underline\tau$, then
  $M \R_\anytype V$ means that $\force M \R_{\underline\tau} V$.  For
  every ground context $C \colon \underline\tau \vdash \F\Vt\unitT$,
  $\Prob (C \cdot \force M \vp) = \Prob (C [\force \_] \cdot M \vp)$
  and
  $\Prob (C \cdot \force N \vp) = \Prob (C [\force \_] \cdot N \vp)$
  by Lemma~\ref{lemma:prob:disc}.  Since
  $M \precsim_{\U\underline\tau} N$,
  $\Prob (C [\force \_] \cdot M \vp) \leq \Prob (C [\force \_] \cdot N
  \vp)$, so
  $\Prob (C \cdot \force M \vp) \leq \Prob (C \cdot \force N \vp)$.
  It follows that $\force M \precsim_{\underline\tau} \force N$.  By
  induction hypothesis, $\force N \R_{\underline\tau} V$, whence
  $N \R_{\U\underline\tau} V$.
  
  If $\anytype = \unitT$, then $M \R_\anytype V$ means that $V=\bot$,
  or that $V=\top$ and $V \precsim_{\unitT} M$.  In the first case,
  $N \R_\anytype V$ holds vacuously.  In the second case,
  $V \precsim_{\unitT} M \precsim_{\unitT} N$, so $N \R_\anytype V$
  again.  The case $\anytype = \intT$ is dealt with similarly---in the
  second case, $V = n \in \nat$ and $\underline *$ has to be replaced
  by $\underline n$.

  If $\anytype = \sigma \times \tau$, then $M \R_\anytype V$ means
  that $\pi_1 M \R_\sigma V_1$ and $\pi_2 M \R_\tau V_2$, where
  $V = (V_1, V_2)$.  We note that for every ground context
  $C \colon \sigma \to \F\Vt\unitT$,
  $\Prob (C \cdot \pi_1 M \vp) = \Prob (C [\pi_1 \_] \cdot M \vp)$ by
  Lemma~\ref{lemma:prob:disc}.  In turn,
  $\Prob (C [\pi_1 \_] \cdot M \vp) \leq \Prob (C [\pi_1 \_] \cdot N
  \vp) = \Prob (C \cdot \pi_1 N \vp)$ since
  $M \precsim_{\sigma \times \tau}$ and using
  Lemma~\ref{lemma:prob:disc} again.  Therefore
  $\pi_1 M \precsim_\sigma \pi_1 N$.  By induction hypothesis,
  $\pi_1 N \R_\sigma V_1$.  Similarly, $\pi_2 N \R_\tau V_2$, so
  $N \R_{\sigma \times \tau} (V_1, V_2) = V$.
  
  If $\anytype = \Vt{} \sigma$, then $M \R_\anytype V$ means that
  $V = \nu$ for some $\nu \in \Val_{\leq 1} (\Eval \sigma)$, and that
  $\Prob (C \cdot M\vp) \geq \int_{x \in \Eval \sigma} h (x)d\nu$ for
  all $C \R_\sigma^\perp h$.  For all such $C$ and $h$,
  $\Prob (C \cdot N \vp)$ is even larger, so $N \R_\anytype V$.

  If $\anytype = \F{}\sigma$, then $M \R_\anytype V$ means that for all
  $C \R_\sigma^* h$, $\Prob (C \cdot M \vp) \geq \extd h (V)$.  Then
  $\Prob (C \cdot N \vp)$ is even larger, so $N \R_\anytype V$.

  If $\anytype = \sigma \to \underline\tau$, then $M \R_\anytype V$
  means that $V$ is some function
  $f \in [\Eval \sigma \to \Eval {\underline\tau}]$, and that for all
  $P \R_\sigma V'$, $M P \R_{\underline\tau} f (V')$.  For every
  ground context $C \colon \underline\tau \vdash \F\Vt\unitT$,
  $\Prob (C \cdot NP \vp) = \Prob (C [\_ P] \cdot N \vp) \geq \Prob (C
  [\_ P] \cdot M \vp) = \Prob (C \cdot MP \vp)$, by
  Lemma~\ref{lemma:prob:disc}, the assumption
  $M \precsim_{\sigma \to \underline\tau} N$, and
  Lemma~\ref{lemma:prob:disc} again.  Hence
  $MP \precsim_{\underline\tau} NP$.  By induction hypothesis,
  $N P \R_{\underline\tau} f (V')$.  It follows that
  $N \R_{\sigma \to \underline\tau} f=V$.  \qed
\end{proof}

For each type $\anytype$, and every ground term $M \colon \anytype$,
let us write $M \R_\anytype$ for the set of values
$a \in \Eval \anytype$ such that $M \R_\anytype a$.
\begin{lemma}
  \label{lemma:MRA}
  For every type $\anytype$, for every ground term
  $M \colon \anytype$, $M \R_\anytype$ is Scott-closed and contains
  $\bot$.
\end{lemma}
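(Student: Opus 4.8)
The plan is to proceed by structural induction on the type $\anytype$, and to reduce every inductive case to one uniform closure principle: an arbitrary intersection of Scott-closed subsets is Scott-closed, and the preimage $g^{-1}(S)$ of a Scott-closed set $S$ under a Scott-continuous map $g$ is Scott-closed. Both follow at once from the dual statements for Scott-open sets (arbitrary unions, and preimages under continuous maps). In each case I would exhibit $M\R_\anytype$ as an intersection of such preimages (or as a product) of sets already known to be Scott-closed, and then separately verify $\bot\in M\R_\anytype$; it is in this last verification that the ``contains $\bot$'' part of the induction hypothesis is used.

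The base and simply-typed cases are direct. For $\anytype=\unitT$ the set $M\R_\unitT$ is either $\{\bot\}$ or all of $\Sierp$, and for $\anytype=\intT$ it is $\{\bot\}\cup\{n\in\Z\mid\underline n\precsim_\intT M\}$; all of these are Scott-closed in $\Sierp$, resp.\ in $\Z_\bot$ (downward-closure is immediate, and every directed subset of these flat dcpos already contains its supremum), and each contains $\bot$ by the ``always'' clause of the definition. For $\anytype=\U\underline\tau$ we have $\Eval{\U\underline\tau}=\Eval{\underline\tau}$ and $M\R_{\U\underline\tau}a$ iff $\force M\R_{\underline\tau}a$, so $M\R_{\U\underline\tau}$ is literally the set $\force M\R_{\underline\tau}$, which is Scott-closed and contains $\bot$ by the induction hypothesis. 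For $\anytype=\sigma\times\tau$ the set $M\R_{\sigma\times\tau}$ is the product $(\pi_1 M\R_\sigma)\times(\pi_2 M\R_\tau)$; a product of two Scott-closed sets is Scott-closed in the product dcpo, and it contains $(\bot,\bot)=\bot$ by induction hypothesis.

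The three structural cases each identify the relevant Scott-continuous map. For $\anytype=\Vt\sigma$, for each pair $C\R_\sigma^\perp h$ the condition reads $\int_{x\in\Eval\sigma}h(x)\,d\nu\le\Prob(C\cdot M\vp)$; since $\nu\mapsto\int h\,d\nu$ is Scott-continuous and $[0,\Prob(C\cdot M\vp)]$ is Scott-closed in $[0,1]$, this is a preimage of a Scott-closed set, and $M\R_{\Vt\sigma}$ is the intersection of these over all $C\R_\sigma^\perp h$. For $\anytype=\F\sigma$, the same argument applies with $\nu\mapsto\int h\,d\nu$ replaced by $Q\mapsto\extd h(Q)$, which is Scott-continuous by Proposition~\ref{prop:Qtop}, item~2. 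In both cases the zero valuation, resp.\ $\bot$, yields value $0$ on the left, so $\bot\in M\R_\anytype$; note that these two cases need no induction hypothesis, the quantified family $C\R_\sigma^\perp h$ (resp.\ $C\R_\sigma^* h$) being merely a fixed index set. Finally, for $\anytype=\sigma\to\underline\tau$ I would write $M\R_{\sigma\to\underline\tau}$ as the intersection over all $N\R_\sigma V$ of the preimages $\mathrm{ev}_V^{-1}(MN\R_{\underline\tau})$, where $\mathrm{ev}_V\colon f\mapsto f(V)$ is Scott-continuous because directed suprema in $[\Eval\sigma\to\Eval{\underline\tau}]$ are computed pointwise; each $MN\R_{\underline\tau}$ is Scott-closed and contains $\bot$ by induction hypothesis, so the intersection is Scott-closed and the constant map $\bot$ lies in it.

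There is no serious obstacle here; the lemma is a routine structural induction once the closure principle is isolated. The only points requiring a little care are recognising the correct Scott-continuous maps in the $\Vt$, $\F$ and function cases (integration in the valuation, the extension $\extd h$, and evaluation at a point), and being explicit that Scott-continuity already delivers downward-closure, so monotonicity need not be argued separately. The function type is the case that genuinely relies on the full strength of the induction hypothesis, through the pointwise continuity of evaluation together with the transfer of $\bot$-membership from $MN\R_{\underline\tau}$ to the constant bottom function.
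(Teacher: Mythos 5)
Your proof is correct and follows essentially the same route as the paper's: a structural induction on types in which only the cases $\Vt{}\sigma$ and $\F{}\sigma$ require a real argument, and those are handled there exactly as you handle them, via Scott-continuity of integration in the valuation and of $\extd h$ (with strictness of $\extd h$, resp.\ the zero valuation, giving the $\bot$ claim). Your uniform packaging of each case as an intersection of preimages of Scott-closed sets under Scott-continuous maps, and your spelling out of the ``easy'' cases the paper leaves implicit, are only a more explicit rendering of the same argument.
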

\begin{proof}
  This is an easy induction on types.  Only the cases
  $\anytype=\Vt\sigma$ and $\anytype=\F{}\sigma$ need some care.  In
  the case $\anytype=\Vt\sigma$, $M \R_{\Vt\sigma}$ is Scott-closed
  because integration is Scott-continuous in the valuation.
  Explicitly, it is upwards-closed, and for every directed family
  ${(\nu_i)}_{i \in I}$ in $M \R_{\Vt\sigma}$, with supremum $\nu$,
  for all $C \R_\sigma^\perp h$,
  $\Prob (C \cdot M \vp) \geq \int_{x \in \Eval \sigma} h (x) d\nu_i$
  for every $i \in I$, so
  $\Prob (C \cdot M \vp) \geq \sup_{i \in I} \int_{x \in \Eval \sigma}
  h (x) d\nu_i = \int_{x \in \Eval \sigma} h (x) d\nu$.  In order to
  show that $M \R_{\Vt\sigma}$ contains $\bot$ (the zero valuation),
  we must show that
  $\Prob (C \cdot M\vp) \geq \int_{x \in \Eval \sigma} h (x) d0 = 0$
  for all $C \R_\sigma^\perp h$, and that is trivial.
  
  In the case $\anytype=\F{}\sigma$,
  let us fix $C$ and $h$ so that $C \R_\sigma^* h$.  Since $\extd h$
  is continuous by Proposition~\ref{prop:Qtop}, item~2,
  $\{Q \in \Smyth^\top_\bot (\Eval \sigma) \mid \extd h (Q) > r\} =
  {(\extd h)}^{-1} ((r, \infty])$ is open for every $r \in \Rp$. By
  taking complements, the set
  $F_{C, h} = \{Q \in \Smyth^\top_\bot (\Eval \sigma) \mid \extd h (Q)
  \leq \Prob (C \cdot M\vp)\}$ is closed.  Hence
  $M \R_{\F{} \sigma} = \bigcap_{C \R_\sigma^* h} F_{C, h}$ is closed in
  $\Smyth^\top_\bot (\Eval \sigma)$.  Finally, $M \R_{\F{} \sigma}$
  contains $\bot$ because $\extd h$ is strict, hence
  $\extd h (\bot) = 0$.  \qed
\end{proof}

Let us say that a term $M$ \emph{has $x_\sigma$ as sole free variable}
if and only if the set of free variables of $M$ is included in
$\{x_\sigma\}$, namely if $M$ is ground or if the only free variable
of $M$ is $x_\sigma$.  In that case, for every ground term $N$,
$M [x_\sigma := N]$ is ground.
\begin{corollary}
  \label{corl:R:rec}
  Let $M \colon \sigma$ have $x_\sigma$ as sole free variable, let
  $f$ be a Scott-continuous map from $\Eval \sigma$ to $\Eval \sigma$,
  and assume that for all $N \R_\sigma V$,
  $M [x_\sigma := N] \R_\sigma f (V)$.  Then
  $\rec x_\sigma . M \R_\sigma \lfp f$.
\end{corollary}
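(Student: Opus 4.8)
The plan is a standard logical-relations argument for recursion, exploiting the Kleene approximation $\lfp f = \sup_{n \in \nat} f^n (\bot)$ together with the Scott-closedness of the relation. Write $R = \rec x_\sigma . M$; this is a ground term of type $\sigma$ because $M$ has $x_\sigma$ as sole free variable. Since $\lfp f$ is the supremum of the directed chain ${(f^n (\bot))}_{n \in \nat}$, and since the set $R \R_\sigma$ of values related to $R$ is Scott-closed and contains $\bot$ by Lemma~\ref{lemma:MRA}, it suffices to prove that $R \R_\sigma f^n (\bot)$ for every $n \in \nat$. The conclusion $R \R_\sigma \lfp f$ then follows by taking the directed supremum inside the Scott-closed set $R \R_\sigma$.

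I would establish $R \R_\sigma f^n (\bot)$ by induction on $n$. The base case $n = 0$ is immediate, since $f^0 (\bot) = \bot$ and $\bot \in R \R_\sigma$ by Lemma~\ref{lemma:MRA}. For the inductive step, assume $R \R_\sigma f^n (\bot)$. Instantiating the hypothesis of the corollary with $N := R$ and $V := f^n (\bot)$ (whose premise $N \R_\sigma V$ is exactly the induction hypothesis) yields $M [x_\sigma := R] \R_\sigma f (f^n (\bot)) = f^{n+1} (\bot)$.

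It remains to transfer this from the reduct $M [x_\sigma := R]$ back to $R$ itself. The operational rule for recursion gives $C \cdot R \to C \cdot M [x_\sigma := R]$ for every context $C$, so Lemma~\ref{lemma:red} yields $\Prob (C \cdot M [x_\sigma := R] \vp) \leq \Prob (C \cdot R \vp)$ for all ground $C \colon \sigma \vdash \F\Vt\unitT$; by definition this is precisely $M [x_\sigma := R] \precsim_\sigma R$. Applying Lemma~\ref{lemma:R:reduc} (with the roles $M := M [x_\sigma := R]$, $N := R$, $V := f^{n+1} (\bot)$) then gives $R \R_\sigma f^{n+1} (\bot)$, closing the induction.

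The argument is essentially routine once the supporting lemmata are available; the single point demanding care is the orientation of the contextual preorder. One must remember that reduction can only \emph{decrease} the must-termination probability, so the reduct $M [x_\sigma := R]$ sits below the unfolded term $R$ in $\precsim_\sigma$. This is exactly the direction Lemma~\ref{lemma:R:reduc} requires in order to push the relation upward from $M [x_\sigma := R]$ to $R$; reversing the orientation would make the inductive step fail.
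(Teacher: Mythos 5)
Your proof is correct and follows essentially the same route as the paper's: induction on $n$ to establish $\rec x_\sigma . M \R_\sigma f^n(\bot)$, using the unfolding rule with Lemma~\ref{lemma:red} and Lemma~\ref{lemma:R:reduc} for the inductive step, and Scott-closedness of $\rec x_\sigma . M \R_\sigma$ (Lemma~\ref{lemma:MRA}) to pass to the supremum $\lfp f$. Your closing remark about the orientation of $\precsim_\sigma$ correctly identifies the one delicate point.
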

\begin{proof}
  We show that $\rec x_\sigma . M \R_\sigma f^n (\bot)$ for every
  $n \in \nat$.  Since $\rec x_\sigma . M \R_\sigma$ contains $\bot$
  by Lemma~\ref{lemma:MRA}, this is true when $n=0$.  If
  $\rec x_\sigma . M \R_\sigma f^n (\bot)$, then
  $M [x_\sigma := \rec x_\sigma . M] \R_\sigma f^{n+1} (\bot)$ by
  assumption.  We now use the fact that for every ground context
  $C \colon \sigma \to \F\Vt\unitT$,
  $C \cdot \rec x_\sigma . M \to C \cdot M [x_\sigma := \rec x_\sigma
  . M]$, hence
  $\Prob (C \cdot \rec x_\sigma . M \vp) \geq \Prob (C \cdot M
  [x_\sigma := \rec x_\sigma . M] \vp)$, by Lemma~\ref{lemma:red}.
  Using Lemma~\ref{lemma:R:reduc}, we obtain that
  $\rec x_\sigma . M \R_\sigma f^{n+1} (\bot)$.

  Since $\rec x_\sigma . M \R_\sigma f^n (\bot)$ for every
  $n \in \nat$ and since $\rec x_\sigma .  M \R_\sigma$ is
  Scott-closed (Lemma~\ref{lemma:MRA}),
  $\rec x_\sigma . M \R_\sigma \lfp f$.  \qed
\end{proof}

\begin{lemma}
  \label{lemma:R:ret}
  Let $\sigma$ be a value type.  For all $M \R_\sigma V$,
  $\retkw M \R_{\Vt{} \sigma} \delta_V$.
\end{lemma}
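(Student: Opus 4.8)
The plan is to unfold the definition of $\R_{\Vt{}\sigma}$ and reduce the claim to a direct instance of the definition of the $\perp$-relation $\R_\sigma^\perp$, so that the lemma holds essentially by definition-chasing.

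First I would fix a ground term $M \colon \sigma$ and a value $V \in \Eval \sigma$ with $M \R_\sigma V$, and unfold what is to be proved. By the definition of $\R_{\Vt{}\sigma}$, establishing $\retkw M \R_{\Vt{}\sigma} \delta_V$ amounts to showing that $\Prob (C \cdot \retkw M \vp) \geq \int_{x \in \Eval \sigma} h (x) \, d\delta_V$ for all $C \R_\sigma^\perp h$. Next I would use the fact that integration against a Dirac mass collapses to evaluation, namely $\int_{x \in \Eval \sigma} h (x) \, d\delta_V = h (V)$; this is immediate from the definition of $\delta_V$ together with the definition of the integral, and it is really the only computational content of the argument. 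So the goal becomes $\Prob (C \cdot \retkw M \vp) \geq h (V)$ for all $C \R_\sigma^\perp h$.

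Finally I would appeal to the definition of $\R_\sigma^\perp$ directly. By definition, $C \R_\sigma^\perp h$ means precisely that $\Prob (C \cdot \retkw M' \vp) \geq h (V')$ holds for all $M' \R_\sigma V'$. Instantiating this with $M' = M$ and $V' = V$, which is legitimate exactly because our hypothesis is $M \R_\sigma V$, yields $\Prob (C \cdot \retkw M \vp) \geq h (V)$, which is the desired inequality. Since $C$ and $h$ were arbitrary with $C \R_\sigma^\perp h$, this completes the proof.

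I do not expect a genuine obstacle here: the statement is engineered so that the $\perp$-closure $\R_\sigma^\perp$ is precisely the set of tests dual to $\retkw$ acting on Dirac masses, whence the result is a pure definitional unfolding. The only point that merits a moment's care is the Dirac-integration identity $\int_{x \in \Eval \sigma} h (x) \, d\delta_V = h (V)$, which is routine and can be cited from the integration facts recalled after equation~\eqref{eq:dagger}.
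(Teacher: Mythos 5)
Your proof is correct and takes essentially the same route as the paper's: both unfold $\R_{\Vt{}\sigma}$, instantiate the definition of $C \R_\sigma^\perp h$ at $M \R_\sigma V$ to get $\Prob (C \cdot \retkw M \vp) \geq h (V)$, and conclude via the Dirac identity $\int_{x \in \Eval \sigma} h (x)\, d\delta_V = h (V)$.
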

\begin{proof}
  Let $C$ be a ground context of type $\Vt{} \sigma \vdash \F\Vt\unitT$, $h$
  be a continuous map from $\Eval \sigma$ to $[0, 1]$, and assume that
  $C \R_\sigma^\perp h$.  By definition of $\R_\sigma^\perp$, and
  since $M \R_\sigma V$, $\Prob (C \cdot \retkw M \vp) \geq h (V)$,
  and $h (V) = \int_{x \in \Eval \sigma} h (x) d\delta_V$.
  \qed
\end{proof}

\begin{lemma}
  \label{lemma:R:bind}
  Let $\sigma$ and $\tau$ be two value types.  Let $N \colon \Vt{} \tau$
  be a term with $x_\sigma$ as sole free variable,
  $f \in [\Eval \sigma \to \Eval {\Vt\tau}]$, and assume that for all
  $P \R_\sigma V$, $N [x_\sigma := P] \R_{\Vt\tau} f (V)$.  For all
  $M \R_{\Vt\sigma} \nu$,
  $\dokw {x_\sigma \leftarrow M}; N \R_{\Vt\tau} f^\dagger (\nu)$.
\end{lemma}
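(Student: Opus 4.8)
The plan is to unfold the definition of $\R_{\Vt\tau}$ and then reduce the whole statement to a single use of the hypothesis $M \R_{\Vt\sigma} \nu$, after building a suitable test function on $\Eval \sigma$. So fix an arbitrary pair $C \R_\tau^\perp h$; by the definition of $\R_{\Vt\tau}$ it suffices to prove
\[
  \Prob (C \cdot (\dokw {x_\sigma \leftarrow M}; N) \vp) \geq \int_{y \in \Eval \tau} h (y) \, df^\dagger (\nu).
\]
Since $\dokw {x_\sigma \leftarrow M}; N = [\dokw {x_\sigma \leftarrow \_}; N][M]$, Lemma~\ref{lemma:prob:disc} lets me rewrite the left-hand side as $\Prob (D \cdot M \vp)$, where $D = C [\dokw {x_\sigma \leftarrow \_}; N]$ is a ground context of type $\Vt\sigma \vdash \F\Vt\unitT$.

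The crux is to exhibit the right test function for $D$ and $M$. I would set $g \colon \Eval \sigma \to [0,1]$, $g (x) = \int_{y \in \Eval \tau} h (y) \, df (x)$; this is continuous by the remark following \eqref{eq:dagger}, and is $[0,1]$-valued because $h$ is and each $f (x)$ is a subprobability valuation. To see that $D \R_\sigma^\perp g$, I must check $\Prob (D \cdot \retkw P \vp) \geq g (V)$ for every $P \R_\sigma V$. The reduction $C [\dokw {x_\sigma \leftarrow \_}; N] \cdot \retkw P \to C \cdot N [x_\sigma := P]$ together with Lemma~\ref{lemma:red} gives $\Prob (D \cdot \retkw P \vp) \geq \Prob (C \cdot N [x_\sigma := P] \vp)$; the lemma's hypothesis yields $N [x_\sigma := P] \R_{\Vt\tau} f (V)$, and feeding the given pair $C \R_\tau^\perp h$ into the definition of $\R_{\Vt\tau}$ gives $\Prob (C \cdot N [x_\sigma := P] \vp) \geq \int_{y \in \Eval \tau} h (y) \, df (V) = g (V)$, as wanted.

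It then remains to assemble the pieces: applying $M \R_{\Vt\sigma} \nu$ to the pair $D \R_\sigma^\perp g$ produces
\[
  \Prob (D \cdot M \vp) \geq \int_{x \in \Eval \sigma} g (x) \, d\nu
  = \int_{x \in \Eval \sigma} \Bigl( \int_{y \in \Eval \tau} h (y) \, df (x) \Bigr) d\nu
  = \int_{y \in \Eval \tau} h (y) \, df^\dagger (\nu),
\]
the last equality being exactly the integration identity \eqref{eq:dagger} with $X = \Eval \sigma$ and $Y = \Eval \tau$. Combined with the first-step rewriting this is the required inequality, and since $C \R_\tau^\perp h$ was arbitrary we conclude $\dokw {x_\sigma \leftarrow M}; N \R_{\Vt\tau} f^\dagger (\nu)$. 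I expect the only genuinely delicate point to be the construction and verification of $g$: this is the $\top\top$-lifting step, where the operational reduction of a $\dokw$ acting on a returned value must be matched against the denotational action of $f^\dagger$ encoded in \eqref{eq:dagger}; everything else is routine bookkeeping.
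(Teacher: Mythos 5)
Your proposal is correct and follows essentially the same route as the paper's proof: the same test function $g(x) = \int_{y \in \Eval\tau} h(y)\,df(x)$, the same verification that $C[\dokw{x_\sigma \leftarrow \_}; N] \R_\sigma^\perp g$ via the $\retkw$-reduction and Lemma~\ref{lemma:red}, and the same final appeal to $M \R_{\Vt\sigma} \nu$ together with the identity~(\ref{eq:dagger}). The only cosmetic difference is that you invoke Lemma~\ref{lemma:prob:disc} (an equality) for the initial context-discovery step where the paper uses Lemma~\ref{lemma:red} (an inequality, which suffices).
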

\begin{proof}
  Let $C \colon \Vt{} \tau \to \F\Vt\unitT$ be a ground context, and
  $h$ be a Scott-continuous map from $\Eval \tau$ to $[0, 1]$ such
  that $C \R_\tau^\perp h$.  We wish to show that
  $\Prob (C \cdot \dokw {x_\sigma \leftarrow M}; N \vp) \geq \int_{y
    \in \Eval \tau} h (y) df^\dagger (\nu)$, namely that
  $\Prob (C \cdot \dokw {x_\sigma \leftarrow M}; N \vp) \geq \int_{x
    \in \Eval \sigma} \allowbreak (\int_{y \in \Eval \tau} h (y) df
  (x)) d\nu$, using (\ref{eq:dagger}).

  We first show that
  $C [\dokw {x_\sigma \leftarrow \_}; N] \R_\sigma^\perp g$, where
  $g (x) = \int_{y \in \Eval \tau} h (y) d f (x)$ for every
  $x \in \Eval \sigma$.  That reduces to showing that
  $\Prob (C [\dokw {x_\sigma \leftarrow \_}; N] \cdot \retkw P\vp)
  \geq g (x)$ for all $P \R_\sigma x$.  Now
  $C [\dokw {x_\sigma \leftarrow \_}; N] \cdot \retkw P \to C \cdot N
  [x_\sigma := P]$, so
  $\Prob (C [\dokw {x_\sigma \leftarrow \_}; N] \cdot \retkw P \vp)
  \geq \Prob (C \cdot N [x_\sigma := P] \vp)$, by
  Lemma~\ref{lemma:red}, and
  $\Prob (C \cdot N [x_\sigma := P] \vp) \geq g (x) = \int_{y \in
    \Eval \tau} h (y) d f (x)$ because $C \R_\tau^\perp h$ and
  $N [x_\sigma := P] \R_{\Vt{} \sigma} f (x)$ for all $P \R_\sigma x$.

  Using this together with the fact that $M \R_{\Vt\sigma} \nu$,
  $\Prob (C [\dokw {x_\sigma \leftarrow \_}; N] \cdot M \vp) \geq
  \int_{x \in \Eval \sigma} g (x) d\nu$.  Since
  $C \cdot (\dokw {x_\sigma \leftarrow M}; N) \to C [\dokw {x_\sigma
    \leftarrow \_}; N] \cdot M$, by Lemma~\ref{lemma:red},
  $\Prob (C \cdot \dokw {x_\sigma \leftarrow M}; N \vp) \geq \int_{x
    \in \Eval \sigma} g (x) d\nu$.  \qed
\end{proof}

\begin{lemma}
  \label{lemma:R:produce}
  Let $\sigma$ be a value type.  For all $M \R_\sigma V$, $\produce M
  \R_{\F{} \sigma} \eta^\Smyth (V)$.
\end{lemma}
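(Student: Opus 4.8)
The plan is to unfold the two definitions involved and observe that they match up directly. Fix a ground term $M \colon \sigma$ and a value $V \in \Eval\sigma$ with $M \R_\sigma V$. By the definition of $\R_{\F{}\sigma}$, proving $\produce M \R_{\F{}\sigma} \eta^\Smyth(V)$ amounts to showing that $\Prob(C \cdot \produce M \vp) \geq \extd h(\eta^\Smyth(V))$ for every ground context $C \colon \F{}\sigma \vdash \F\Vt\unitT$ and every continuous $h \colon \Eval\sigma \to [0,1]$ with $C \R_\sigma^* h$.

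First I would simplify the right-hand side. Since $\eta^\Smyth(V) = \upc V$ is a genuine (non-$\bot$) element of $\Smyth^\top_\bot(\Eval\sigma)$, Proposition~\ref{prop:Qtop}, item~2 gives $\extd h \circ \eta^\Smyth = h$, hence $\extd h(\eta^\Smyth(V)) = h(V)$. (Concretely, $\extd h(\upc V) = \bigwedge_{a \in \upc V} h(a) = h(V)$, the infimum being attained at the least element $V$ by monotonicity of $h$.) The goal therefore reduces to $\Prob(C \cdot \produce M \vp) \geq h(V)$.

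This last inequality is exactly the defining clause of the hypothesis $C \R_\sigma^* h$: by definition, $C \R_\sigma^* h$ holds precisely when $\Prob(C \cdot \produce M' \vp) \geq h(V')$ for all $M' \R_\sigma V'$. Instantiating this with the pair $M \R_\sigma V$ we started from immediately delivers $\Prob(C \cdot \produce M \vp) \geq h(V)$, which completes the argument.

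I do not expect any real obstacle here: the auxiliary relation $\R_\sigma^*$ was manifestly set up so that $\produce$ behaves as the unit of the $\Smyth^\top_\bot$-lifting (the usual pattern in $\top\top$-style logical relations), so the statement is essentially a definitional identity once $\extd h(\eta^\Smyth(V)) = h(V)$ is recorded. The only point deserving a line of care is matching the universal quantifier over $(C, h)$ in the definition of $\R_{\F{}\sigma}$ against the universal quantifier over $(M', V')$ in the definition of $\R_\sigma^*$, so that the two fit together with no leftover side condition.
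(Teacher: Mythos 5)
Your proof is correct and follows essentially the same route as the paper: both reduce the goal to $\Prob(C \cdot \produce M \vp) \geq h(V)$ via the definition of $\R_\sigma^*$ and then compare $h(V)$ with $\extd h(\eta^\Smyth(V))$. The only cosmetic difference is that you invoke the equality $\extd h \circ \eta^\Smyth = h$ from Proposition~\ref{prop:Qtop}, item~2, whereas the paper just uses the one-sided inequality $h(V) \geq \bigwedge_{x \in \upc V} h(x)$ coming from $V \in \upc V$; both are fine.
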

\begin{proof}
  Let $Q = \eta^\Smyth (V)$.  Let $C$ be a ground context of type
  $\F{} \sigma \vdash \F\Vt\unitT$, $h$ be a continuous map from
  $\Eval \sigma$ to $[0, 1]$, and assume that $C \R_\sigma^* h$.  By
  definition of $\R_\sigma^*$,
  $\Prob (C \cdot \produce M \vp) \geq h (V)$.  Since $V \in Q$,
  $h (V) \geq \bigwedge_{x \in Q} h (x) = \extd h (Q)$, so
  $\produce M \R_{\F{} \sigma} Q$.  \qed
\end{proof}

\begin{lemma}
  \label{lemma:R:pto:F}
  Let $\sigma$, $\tau$ be two value types.  Let $N \colon \F{} \tau$ be a
  term with $x_\sigma$ as sole free variable,
  $f \in [\Eval \sigma \to \Eval {\F{} \tau}]$, and assume that for all
  $P \R_\sigma V$, $N [x_\sigma := P] \R_{\F{}\tau} f (V)$.  For all
  $M \R_{\F{}\sigma} Q$, $\pto M {x_\sigma} N \R_{\F{}\tau} \extd f (Q)$.
\end{lemma}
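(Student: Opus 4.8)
The plan is to unfold the definition of $\R_{\F{}\tau}$ and reduce the goal to the hypothesis $M \R_{\F{}\sigma} Q$, following the same pattern as the proof of Lemma~\ref{lemma:R:bind} for $\Vt$. So I would fix an arbitrary ground context $C \colon \F{}\tau \vdash \F\Vt\unitT$ and an arbitrary continuous $h \colon \Eval\tau \to [0,1]$ with $C \R_\tau^* h$; by definition of $\R_{\F{}\tau}$ it then suffices to prove
\[
  \Prob (C \cdot \pto M {x_\sigma} N \vp) \geq \extd h (\extd f (Q)).
\]
My first step would be to collapse the nested extension on the right. Proposition~\ref{prop:Qtop}, item~4, applied with $L = [0,1]$, gives $\extd h \circ \extd f = \extd {(\extd h \circ f)}$, so $\extd h (\extd f (Q)) = \extd g (Q)$ where $g := \extd h \circ f$. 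This $g$ is a continuous map $\Eval\sigma \to [0,1]$, because $f$ is Scott-continuous by assumption and $\extd h$ is Scott-continuous by Proposition~\ref{prop:Qtop}, item~2.

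Next I would peel off the elementary context $[\pto {\_} {x_\sigma} N] \colon \F{}\sigma \vdash \F{}\tau$, which is ground because the only free variable $x_\sigma$ of $N$ is bound by it. As $\pto M {x_\sigma} N = [\pto {\_} {x_\sigma} N][M]$, Lemma~\ref{lemma:prob:disc} yields $\Prob (C \cdot \pto M {x_\sigma} N \vp) = \Prob (C[\pto {\_} {x_\sigma} N] \cdot M \vp)$. The whole argument then hinges on the claim that $C[\pto {\_} {x_\sigma} N] \R_\sigma^* g$: granting it, the hypothesis $M \R_{\F{}\sigma} Q$ instantiated at this context and this map gives $\Prob (C[\pto {\_} {x_\sigma} N] \cdot M \vp) \geq \extd g (Q)$, which combined with the two equalities above is exactly the inequality I want.

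To establish the claim, I would use the definition of $\R_\sigma^*$: take an arbitrary $P \R_\sigma V$ and aim for $\Prob (C[\pto {\_} {x_\sigma} N] \cdot \produce P \vp) \geq g (V) = \extd h (f (V))$. The operational rule $C[\pto {\_} {x_\sigma} N] \cdot \produce P \to C \cdot N[x_\sigma := P]$ with Lemma~\ref{lemma:red} reduces this to showing $\Prob (C \cdot N[x_\sigma := P] \vp) \geq \extd h (f (V))$. Since $P \R_\sigma V$, the main hypothesis supplies $N[x_\sigma := P] \R_{\F{}\tau} f (V)$; instantiating the definition of $\R_{\F{}\tau}$ at the very pair $(C, h)$ I started from then delivers precisely this bound, finishing the claim and hence the lemma.

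The only genuinely non-routine step, I expect, is recognising that the nested extension $\extd h (\extd f (Q))$ has to be rewritten as a single extension $\extd g (Q)$ through the Kleisli-composition identity of Proposition~\ref{prop:Qtop}, item~4, so that it lines up with the shape demanded by $M \R_{\F{}\sigma} Q$; this identity plays the same role here that the Fubini-type equation~(\ref{eq:dagger}) plays in the $\Vt$-case of Lemma~\ref{lemma:R:bind}. Everything else is routine manipulation of the operational lemmas.
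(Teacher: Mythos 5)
Your proposal is correct and follows essentially the same route as the paper's proof: fix $C \R_\tau^* h$, rewrite $\extd h(\extd f(Q))$ as $\extd{(\extd h \circ f)}(Q)$ via Proposition~\ref{prop:Qtop}, item~4, show $C[\pto{\_}{x_\sigma}N] \R_\sigma^* \extd h \circ f$ using the operational rule for $\produce$, and conclude from $M \R_{\F{}\sigma} Q$. The only cosmetic difference is that you invoke Lemma~\ref{lemma:prob:disc} (an equality) to peel off the elementary context where the paper uses Lemma~\ref{lemma:red} (the inequality, which suffices); both are fine.
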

\begin{proof}
  Let $C \colon \F{}\tau \vdash \F\Vt\unitT$ be a ground context, and $h$ be
  a Scott-continuous map from $\Eval \tau$ to $[0, 1]$ such that
  $C \R_\tau^* h$.  We wish to show that
  $\Prob (C \cdot \pto M {x_\sigma} N \vp) \geq \extd h (\extd f
  (Q))$.  Using Proposition~\ref{prop:Qtop}, we will show the
  equivalent claim that
  $\Prob (C \cdot \pto M {x_\sigma} N \vp) \geq \extd {(\extd h \circ
    f)} (Q)$.

  We first show that
  $C [\pto {\_} {x_\sigma} N] \R_\sigma^* \extd h \circ f$.  For all
  $P \R_\sigma V$, we aim to show that
  $\Prob (C [\pto {\_} {x_\sigma} N] \cdot \produce P \vp) \geq \extd
  h (f (V))$.  Since $N [x_\sigma := P] \R_{\F{}\tau} f (V)$ and
  $C \R_\tau^* h$,
  $\Prob (C \cdot N [x_\sigma := P] \vp) \geq \extd h (f (V))$.  Since
  $C [\pto {\_} {x_\sigma} N] \cdot \produce P \to C \cdot N [x_\sigma
  := P]$, by Lemma~\ref{lemma:red}
  $\Prob (C [\pto {\_} {x_\sigma} N] \cdot \produce P \vp) \geq \extd
  h (f (V))$, as desired.

  Knowing that
  $C [\pto {\_} {x_\sigma} N] \R_\sigma^* \extd h \circ f$, and using
  $M \R_{\F{}\sigma} Q$, we obtain that
  $\Prob (C [\pto {\_} {x_\sigma} N] \cdot M \vp) \geq \extd {(\extd h
    \circ f)} (Q)$.  Since
  $C \cdot \pto M {x_\sigma} N \to C [\pto {\_} {x_\sigma} N] \cdot
  M$, by Lemma~\ref{lemma:red}
  $\Prob (C \cdot \pto M {x_\sigma} N \vp) \geq \extd {(\extd h \circ
    f)} (Q)$.  \qed
\end{proof}

We write $\chi_U \colon X \to \Sierp$ for the characteristic map of an
open subset $U$ of a space $X$.
\begin{lemma}
  \label{lemma:R:FVunit}
  \begin{enumerate}
  \item $[\produce \_] \R_{\unitT}^\perp \chi_{\{\top\}}$;
  \item
    $[\_] \R_{\Vt{} \unitT}^* (\nu \in \Val_{\leq 1} \Sierp \mapsto \nu
    (\{\top\}))$.
  \end{enumerate}
\end{lemma}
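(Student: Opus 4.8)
The plan is to unfold the definitions of $\R_{\unitT}^\perp$ and $\R_{\Vt\unitT}^*$ and reduce each part to a short chain combining operational reductions (via Lemma~\ref{lemma:red}), the contextual preorder (Definition~\ref{defn:context:preorder}), and the axiom rules for $\vp$. The engine behind both parts is the rule $[\produce \retkw \_] \cdot \underline * \vp a$, valid for every $a \in \rat \cap [0, 1)$, which forces $\Prob ([\produce \retkw \_] \cdot \underline * \vp) = 1$.

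For part~1, by definition of $\R_{\unitT}^\perp$ I must show $\Prob ([\produce \_] \cdot \retkw M \vp) \geq \chi_{\{\top\}} (V)$ for all $M \R_{\unitT} V$. The case $V = \bot$ is vacuous since the right-hand side is $0$, so the work is in the case $V = \top$, where $M \R_{\unitT} \top$ unfolds to $\underline * \precsim_{\unitT} M$. I would then chain three facts: the reduction $[\produce \_] \cdot \retkw M \to [\produce \retkw \_] \cdot M$ with Lemma~\ref{lemma:red} gives $\Prob ([\produce \_] \cdot \retkw M \vp) \geq \Prob ([\produce \retkw \_] \cdot M \vp)$; applying $\underline * \precsim_{\unitT} M$ to the ground context $[\produce \retkw \_] \colon \unitT \vdash \F\Vt\unitT$ gives $\Prob ([\produce \retkw \_] \cdot M \vp) \geq \Prob ([\produce \retkw \_] \cdot \underline * \vp)$; and the axiom rule gives $\Prob ([\produce \retkw \_] \cdot \underline * \vp) = 1$. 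Composing them yields $\Prob ([\produce \_] \cdot \retkw M \vp) \geq 1 = \chi_{\{\top\}} (\top)$.

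For part~2, the definition of $\R_{\Vt\unitT}^*$ requires $\Prob ([\_] \cdot \produce M \vp) \geq \nu (\{\top\})$ for all $M \R_{\Vt\unitT} \nu$. The reduction $[\_] \cdot \produce M \to [\produce \_] \cdot M$ with Lemma~\ref{lemma:red} reduces this to bounding $\Prob ([\produce \_] \cdot M \vp)$ from below by $\nu (\{\top\})$. This is where part~1 is used: unfolding $M \R_{\Vt\unitT} \nu$ says $\Prob (C \cdot M \vp) \geq \int_{x \in \Sierp} h (x) d\nu$ for every pair $C \R_{\unitT}^\perp h$, and I would instantiate it with the pair $[\produce \_] \R_{\unitT}^\perp \chi_{\{\top\}}$ just obtained in part~1. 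Since $\int_{x \in \Sierp} \chi_{\{\top\}} (x) d\nu = \nu (\{\top\})$ (integrating the indicator of an open set recovers its valuation), this gives exactly $\Prob ([\produce \_] \cdot M \vp) \geq \nu (\{\top\})$.

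I do not expect a genuine obstacle here; the steps are short once the right instantiations are chosen. The only point that needs care is type bookkeeping: in part~1 the context $[\produce \_]$ must be read as the initial context with $\sigma = \Vt\unitT$, so that $[\produce \_] \cdot \retkw M$ typechecks with $\retkw M \colon \Vt\unitT$, while $[\produce \retkw \_]$ is the initial context with $\sigma = \unitT$; and in part~2 one reuses precisely this $[\produce \_] \colon \Vt\unitT \vdash \F\Vt\unitT$ when invoking the definition of $M \R_{\Vt\unitT} \nu$, with $[\_]$ read as the initial context of type $\F\Vt\unitT \vdash \F\Vt\unitT$. Keeping these instances straight is the only thing to watch.
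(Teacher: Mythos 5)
Your proof is correct and follows essentially the same route as the paper's: the same case split on $V$ and the same three-step chain (redex discovery plus Lemma~\ref{lemma:red}, the contextual preorder applied to $[\produce \retkw \_]$, and the axiom rule for $[\produce \retkw \_] \cdot \underline * \vp a$) in part~1, and the same instantiation of $M \R_{\Vt\unitT} \nu$ with the pair from part~1 together with $\int_{x \in \Sierp} \chi_{\{\top\}}(x)\, d\nu = \nu(\{\top\})$ in part~2. The type bookkeeping you flag is also handled exactly as in the paper.
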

\begin{proof}
  1. Let $M \R_{\unitT} V$.  It suffices to show that
  $\Prob ([\produce \_] \cdot \retkw M \vp) \geq \chi_{\{\top\}} (V)$.
  If $V=\bot$, then the right-hand side is $0$, and the inequality is
  clear.  Otherwise, we claim that the left-hand side is (greater than
  or) equal to $1$.  We have
  $[\produce \_] \cdot \retkw M \to [\produce \retkw \_] \cdot M$, so
  $\Prob ([\produce \_] \cdot \retkw M \vp) \geq \Prob ([\produce
  \retkw \_] \cdot M \vp)$ by Lemma~\ref{lemma:red}.  Since
  $M \R_{\unitT} \top$,
  $\Prob ([\produce \retkw \_] \cdot M \vp) \geq \Prob ([\produce
  \retkw \_] \cdot \underline * \vp)$, and
  $\Prob ([\produce \retkw \_] \cdot \underline * \vp) = 1$ since we
  can deduce $[\produce \retkw \_] \cdot \underline * \vp a$ for every
  $a \in \rat \cap [0, 1)$.

  2.  Let $M \R_{\Vt{} \unitT} \nu$.  It suffices to show that
  $\Prob ([\_] \cdot \produce M \vp) \geq \nu (\{\top\})$.  Since
  $[\produce \_] \R_{\unitT}^\perp \chi_{\{\top\}}$ by item~1,
  $\Prob ([\produce \_] \cdot M \vp) \geq \int_{x \in \Eval {\unitT}}
  \chi_{\{\top\}} (x) \allowbreak d\nu = \nu (\{\top\})$.  We use
  Lemma~\ref{lemma:red} together with
  $[\_] \cdot \produce M \to [\produce \_] \cdot M$ and we obtain the
  desired inequality.  \qed
\end{proof}

A \emph{substitution} $\theta = [x_1:=N_1, \cdots, x_n:=N_n]$ is a map
of finite domain $\dom \theta = \{x_1, \cdots, x_n\}$ from pairwise
distinct variables $x_i$ to ground terms $N_i$ of the same type as
$x_i$.  We omit the definition of (parallel) substitution application
$M \theta$.  The case $M [x_\sigma:=N]$ is the special case $n=1$.  We
note that
$M [x_1:=N_1, \cdots, x_n:=N_n] [x:=N] = M [x_1:=N_1, \cdots,
x_n:=N_n, x:=N]$ when $x$ is distinct from $x_1$, \ldots, $x_n$, and
not free in $N_1$, \ldots, $N_n$.  Also, if $\dom \theta$ contains all
the free variables of $M$, then $M \theta$ is ground.

We define the relation $\R^\bullet$ between substitutions and
environments by $\theta \R^\bullet \rho$ if and only if for every
$x_\sigma \in \dom \theta$, $x_\sigma \theta \R_\sigma \rho
(x_\sigma)$.
\begin{proposition}
  \label{prop:R}
  For every type $\anytype$, for every term $M \colon \anytype$ of
  CBPV$(\Demon,\Nature)$ or any of its extensions with $\pifzkw$ or
  $\bigcirc$ or both, for every substitution $\theta$ whose domain
  contains all the free variables of $M$, for every environment
  $\rho$, if $\theta \R^\bullet \rho$ then
  $M \theta \R_\anytype \Eval M \rho$.
\end{proposition}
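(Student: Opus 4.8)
The plan is to prove the statement by structural induction on the term $M$, keeping $\theta\R^\bullet\rho$ as a standing hypothesis. Whenever $M$ binds a variable $x_\sigma$ (in $\lambda x_\sigma.M'$, $\rec x_\sigma.M'$, $\pto {M'} {x_\sigma} N$, or $\dokw {x_\sigma \leftarrow M'}; N$), I first $\alpha$-rename so that $x_\sigma\notin\dom\theta$ and $x_\sigma$ does not occur free in the range of $\theta$; then $(M'\theta)[x_\sigma:=N]=M'(\theta[x_\sigma:=N])$ and $\theta[x_\sigma:=N]\R^\bullet\rho[x_\sigma\mapsto V]$ whenever $N\R_\sigma V$, so the induction hypothesis for $M'$ applies with the extended substitution and environment. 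For $\retkw M'$, $\dokw {x_\sigma \leftarrow M'}; N$, $\produce M'$, $\pto {M'} {x_\sigma} N$ and $\rec x_\sigma.M'$ I feed the resulting sub-statements into Lemma~\ref{lemma:R:ret}, Lemma~\ref{lemma:R:bind}, Lemma~\ref{lemma:R:produce}, Lemma~\ref{lemma:R:pto:F} and Corollary~\ref{corl:R:rec}, whose conclusions reproduce the matching clauses of Figure~\ref{fig:sem} (the Dirac mass, the extension $f^\dagger$, $\eta^\Smyth$, $\extd f$, and $\lfp f$ respectively). The variable case is immediate from $\theta\R^\bullet\rho$; the constants $\underline *$ and $\underline n$ are related to $\top$ and $n$ by reflexivity of $\precsim$; and $\abort_{\F\tau}\R_{\F\tau}\emptyset$ holds because $\extd h(\emptyset)=1=\Prob(C\cdot\abort_{\F\tau}\vp)$ for every $C\R_\tau^* h$, matching $\Eval{\abort_{\F\tau}}\rho=\emptyset$.

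A single uniform pattern handles every remaining construct that is neither a probabilistic nor a non-deterministic primitive. Application $MN$ and the eliminators $\force M$, $\pi_1 M$, $\pi_2 M$ are immediate by unfolding the defining clauses of $\R_{\sigma\to\underline\tau}$, $\R_{\U\underline\tau}$ and $\R_{\sigma\times\tau}$ on the induction hypotheses for their subterms. For $\suc M$, $\pred M$, the conditional $\ifzkw$, sequencing $M;N$, the pair $\langle M,N\rangle$, $\thunk M$, and the abstraction $\lambda x_\sigma.M'$, I use the scheme: the relevant redex rewrites, under every context, to its reduct, so by Lemma~\ref{lemma:red} the reduct lies below the redex in $\precsim$; combining this with precongruence of $\precsim$ through the enclosing elementary context (Lemma~\ref{lemma:prob:disc}) and the induction hypothesis for the scrutinee (e.g.\ $\underline n\precsim_\intT M\theta$, and for $\lambda$ the extended-substitution hypothesis giving relatedness of $(M'\theta)[x_\sigma:=N]$), Lemma~\ref{lemma:R:reduc} transfers relatedness from the reduct back to the redex. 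When the scrutinee denotes $\bot$ the target value is $\bot$, and $\bot$ belongs to $(M\theta)\R$ by Lemma~\ref{lemma:MRA}, so there is nothing to prove.

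For the two unconditional effectful combinators I argue directly on the operational rules. For $M\oplus N$ of type $\Vt\tau$, write $\nu_M=\Eval M\rho$, $\nu_N=\Eval N\rho$ and fix $C\R_\tau^\perp h$; the induction hypotheses give $\Prob(C\cdot M\theta\vp)\geq\int h\,d\nu_M$ and $\Prob(C\cdot N\theta\vp)\geq\int h\,d\nu_N$, and from $C\cdot M\theta\vp a$, $C\cdot N\theta\vp b$ the $\oplus$-rule derives $C\cdot M\theta\oplus N\theta\vp(a+b)/2$; letting $a,b$ climb to the two suprema gives $\Prob(C\cdot M\theta\oplus N\theta\vp)\geq\tfrac12(\int h\,d\nu_M+\int h\,d\nu_N)=\int h\,d(\tfrac12(\nu_M+\nu_N))$, as required for $\Eval{M\oplus N}\rho$. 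For $M\owedge N$ of type $\F\tau$, set $Q_M=\Eval M\rho$, $Q_N=\Eval N\rho$ and fix $C\R_\tau^* h$; Proposition~\ref{prop:Qtop}, item~3 gives $\extd h(Q_M\wedge Q_N)=\extd h(Q_M)\wedge\extd h(Q_N)$, so for every rational $a$ below this common value both $C\cdot M\theta\vp a$ and $C\cdot N\theta\vp a$ are derivable, and the $\owedge$-rule derives $C\cdot M\theta\owedge N\theta\vp a$. The $\pifz M N P$ case (type $\F\tau$) reduces to these: in the branch $\Eval M\rho=\bot$ the target is $\Eval N\rho\wedge\Eval P\rho$ and the argument repeats that for $\owedge$ via the rule from $C\cdot N\vp a$, $C\cdot P\vp a$; in the branches $\Eval M\rho=0$ and $\Eval M\rho\notin\{0,\bot\}$ I first establish $\ifz{M\theta}{N\theta}{P\theta}\R_{\F\tau}\Eval N\rho$ (resp.\ $\Eval P\rho$) exactly as in the deterministic conditional, using $\underline 0\precsim_\intT M\theta$ and precongruence, and then lift to $\pifz$ through the rule deriving $C\cdot\pifz M N P\vp a$ from $C\cdot\ifz M N P\vp a$.

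I expect the statistical tester $\bigcirc_{> b}M$ to be the main obstacle. Here $M\colon\F\Vt\unitT$ and the only non-trivial case is $\Eval{\bigcirc_{> b}M}\rho=\top$, that is $Q:=\Eval M\rho\neq\bot$ with $b\ll\nu(\{\top\})$ for every $\nu\in Q$; I must show $\underline *\precsim_\unitT\bigcirc_{> b}(M\theta)$. Instantiating the induction hypothesis $M\theta\R_{\F\Vt\unitT}Q$ with the witness $[\_]\R_{\Vt\unitT}^* g$ of item~2 of Lemma~\ref{lemma:R:FVunit}, where $g=(\nu\mapsto\nu(\{\top\}))$, gives $\Prob([\_]\cdot M\theta\vp)\geq\extd g(Q)=\bigwedge_{\nu\in Q}\nu(\{\top\})$. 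To fire the $\bigcirc$-rule I need $[\_]\cdot M\theta\vp b$, i.e.\ $b$ \emph{strictly} below that infimum (recall $b\ll t$ iff $b<t$ here, since $b\neq0$); yet the hypothesis only furnishes the pointwise bounds $b<\nu(\{\top\})$, which need not survive an infimum. This is precisely where compactness of $Q$ is used: since $g$ is Scott-continuous, the opens $g^{-1}((b',1])$ for rational $b'>b$ form a directed family whose union $g^{-1}((b,1])$ contains $Q$, so $Q\subseteq g^{-1}((b',1])$ for some $b'>b$, whence $\bigwedge_{\nu\in Q}\nu(\{\top\})\geq b'>b$ and $[\_]\cdot M\theta\vp b$ is derivable. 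Then, for every context $C\colon\unitT\vdash\F\Vt\unitT$ and every derivable $C\cdot\underline *\vp a$, the $\bigcirc$-rule derives $C\cdot\bigcirc_{> b}(M\theta)\vp a$, so $\Prob(C\cdot\underline *\vp)\leq\Prob(C\cdot\bigcirc_{> b}(M\theta)\vp)$; this is $\underline *\precsim_\unitT\bigcirc_{> b}(M\theta)$, hence $\bigcirc_{> b}(M\theta)\R_\unitT\top$, completing the case.
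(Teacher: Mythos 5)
Your proof is correct and follows essentially the same route as the paper's: structural induction on $M$, delegating the monadic constructors to Lemmas~\ref{lemma:R:ret}, \ref{lemma:R:bind}, \ref{lemma:R:produce}, \ref{lemma:R:pto:F} and Corollary~\ref{corl:R:rec}, using Lemmas~\ref{lemma:red} and~\ref{lemma:R:reduc} (plus Lemma~\ref{lemma:MRA} for the $\bot$ branches) for the reduction-based cases, and arguing directly on the operational rules for $\oplus$, $\owedge$, $\pifzkw$ and $\bigcirc_{>b}$. The only cosmetic difference is in the $\bigcirc_{>b}$ case, where you make the compactness of $\Eval M\rho$ explicit via a directed open cover, whereas the paper obtains the same strict inequality by observing that the infimum $\min_{\nu\in\Eval M\rho}\nu(\{\top\})$ is attained.
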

\begin{proof}
  By induction on $M$.  This is the assumption $\theta \R^\bullet
  \rho$ when $M$ is a variable.

  In the case of $\lambda$-abstractions
  $\lambda x_\sigma . M \colon \sigma \to \underline\tau$, let us
  write $\theta$ as $[x_1:=N_1, \cdots, x_n:=N_n]$, and assume by
  $\alpha$-renaming that $x_\sigma$ is different from every $x_i$ and
  free in no $N_i$.  For all $N \R_\sigma V$, we define $\theta'$ as
  $[x_1:=N_1, \cdots, x_n:=N_n, x_\sigma:=N]$ and we observe that
  $\theta' \R^\bullet \rho [x_\sigma \mapsto V]$, so by induction
  hypothesis
  $M \theta' \R_{\underline\tau} \Eval M \rho [x_\sigma \mapsto V]$.
  Hence
  $M \theta [x_\sigma:=N] \R_{\underline\tau} \Eval M \rho [x_\sigma
  \mapsto V]$.  By Lemma~\ref{lemma:red} and
  Lemma~\ref{lemma:R:reduc}, using the fact that
  $C \cdot (\lambda x_\sigma . M \theta) N \to C [\_ N] \cdot \lambda
  x_\sigma . M \theta \to C \cdot M \theta [x_\sigma:=N]$ for all
  ground contexts $C \colon \underline\tau \vdash \F\Vt\unitT$, we obtain
  that
  $(\lambda x_\sigma . M \theta) N \R_{\underline\tau} \Eval M \rho
  [x_\sigma \mapsto V]$.  Since that holds for all $N \R_\sigma V$,
  $(\lambda x_\sigma . M) \theta = \lambda x_\sigma . M \theta
  \R_{\sigma \to \underline\tau} \Eval {\lambda x_\sigma . M} \rho$.
  
  The case of applications is by definition of $\R_{\sigma \to
    \underline\tau}$.

  For terms of the form $\produce M \colon \F{} \sigma$, by assumption $M
  \theta \R_\sigma \Eval M \rho$.  By Lemma~\ref{lemma:R:produce},
  $\produce M \theta \R_{\F{} \sigma} \eta^\Smyth (\Eval M \rho) = \Eval
  {\produce M} \rho$.

  For terms of the form $\pto M {x_\sigma} N$ where
  $M \colon \F{} \sigma$ and $N \colon \F{} \tau$, as for
  $\lambda$-abstractions we write $\theta$ as
  $[x_1:=N_1, \cdots, x_n:=N_n]$, and we assume by that $x_\sigma$ is
  different from every $x_i$ and free in no $N_i$.  By induction
  hypothesis $M \theta \R_{\F{} \sigma} \Eval M \rho$, and for all
  $P \R_\sigma V$, $N \theta' \R_{\F{} \tau} \Eval N \rho [x_\sigma:=V]$
  where $\theta' = [x_1:=N_1, \cdots, x_n:=N_n, x_\sigma:=P]$.  As for
  $\lambda$-abstractions, the latter means that
  $N \theta [x_\sigma:=P] \R_{\F{} \tau} \Eval N \rho [x_\sigma:=V]$.
  Letting $f$ be the map
  $V \in \Eval \sigma \mapsto \Eval N \rho [x_\sigma:=V]$, therefore,
  $N \theta [x_\sigma:=P] \R_{\F{} \tau} f (V)$ for all $P \R_\sigma V$.
  By Lemma~\ref{lemma:R:pto:F},
  $(\pto M {x_\sigma} N) \theta = \pto {M \theta} {x_\sigma} {N
    \theta} \R_{\F{} \tau} \extd f (\Eval M \rho) = \Eval {\pto M
    {x_\sigma} N} \rho$.

  For terms $\thunk M \colon \U\underline\tau$, by induction
  hypothesis $M \theta \R_{\underline\tau} \Eval M \rho$.  For every
  ground context $C \colon \underline\tau \vdash \F\Vt\unitT$,
  $C \cdot \force \thunk M \theta \to C [\force \_] \cdot \thunk M
  \theta \to C \cdot M \theta$, so by Lemma~\ref{lemma:red} and
  Lemma~\ref{lemma:R:reduc},
  $\force \thunk M \R_{\underline\tau} \Eval M \rho$.  By definition
  of $\R_{\U\underline\tau}$,
  $\thunk M \theta \R_{\U\underline\tau} \Eval M \rho = \Eval {\thunk
    M} \rho$.

  The case of terms of the form $\force M \colon \underline\tau$ is by
  definition of $\R_{\U\underline\tau}$.

  In the case of $\underline *$, we have $\underline * \R_{\unitT}
  \top$ by definition.  Similarly, $\underline n \R_{\intT} n$.

  For terms $\suc M$ with $M \colon \intT$, by induction hypothesis
  $M \theta \R_{\intT} \Eval M \rho$.  If $\Eval M \rho = \bot$, then
  $\suc M \theta \R_{\intT} \bot = \Eval {\suc M} \rho$.  Otherwise,
  let $n = \Eval M \rho \in \Z$.  By definition,
  $\Prob (C \cdot M \theta \vp) \geq \Prob (C \cdot \underline n \vp)$
  for every ground context $C \colon \intT \vdash \F\Vt\unitT$.
  Replacing $C$ by $C [\suc \_]$,
  $\Prob (C [\suc \_] \cdot M \theta \vp) \geq \Prob (C [\suc \_]
  \cdot \underline n \vp)$.  Since
  $C \cdot \suc M \theta \to C [\suc \_] \cdot M$ and since
  $C [\suc \_] \cdot \underline n \to C \cdot \underline {n+1}$, using
  Lemma~\ref{lemma:red} we obtain
  $\Prob (C \cdot \suc M \theta \vp) \geq \Prob (C [\suc \_] \cdot M
  \vp) \geq \Prob (C [\suc \_] \cdot \underline n \vp) \geq \Prob (C
  \cdot \underline {n+1} \vp)$.  This shows that
  $\suc M \theta \R_{\intT} \underline {n+1} = \Eval {\suc M} \rho$.
  The case of terms $\pred M$ is similar.

  For terms $\ifz M N P \colon \anytype$, by induction hypothesis
  $M \theta \R_{\intT} \Eval M \rho$, $N \theta \R_\anytype \Eval N \rho$,
  and $P \theta \R_\anytype \Eval P \rho$.  If $\Eval M \rho = \bot$, then
  $(\ifz M N P) \theta \R_\anytype \bot = \Eval {\ifz M N P} \rho$, by
  Lemma~\ref{lemma:MRA}.  Otherwise, let $n = \Eval M \rho \in \Z$.
  Since $M \theta \R_{\intT} \Eval M \rho$,
  $\Prob (C \cdot M \theta \vp) \geq \Prob (C \cdot \underline n \vp)$
  for every ground context $C \colon \intT \vdash \F\Vt\unitT$.  In
  particular, for every ground context $C \colon \anytype \vdash \F\Vt\unitT$,
  $\Prob (C [\ifz \_ {N \theta} {P \theta}] \cdot M \theta \vp) \geq
  \Prob (C [\ifz \_ {N \theta} {P \theta}] \cdot \underline n \vp)$.
  Using Lemma~\ref{lemma:prob:disc}, it follows that
  $\Prob (C \cdot \ifz {M \theta} {N \theta} {P \theta} \vp) \geq
  \Prob (C [\ifz \_ {N \theta} {P \theta}] \cdot \underline n \vp)$.
  Since $C [\ifz \_ {N \theta} {P \theta}] \cdot \underline n$ reduces
  to $C \cdot N \theta$ if $n=0$, and to $C \cdot P \theta$ if
  $n \neq 0$, by Lemma~\ref{lemma:red},
  $\Prob (C \cdot \ifz {M \theta} {N \theta} {P \theta} \vp)$ is
  larger than or equal to $\Prob (C \cdot N \theta \vp)$ if $n=0$, and
  to $\Prob (C \cdot P \theta \vp)$ otherwise.  By
  Lemma~\ref{lemma:R:reduc},
  $\ifz {M \theta} {N \theta} {P \theta} \R_\anytype \Eval N \rho$ if $n=0$,
  and $\ifz {M \theta} {N \theta} {P \theta} \R_\anytype \Eval P \rho$ if
  $n \neq 0$.  In any case,
  $(\ifz M N P) \theta = \ifz {M \theta} {N \theta} {P \theta} \R_\anytype
  \Eval {\ifz M N P} \rho$.

  The case of terms $M; N \colon \anytype$ is similar.  By induction
  hypothesis, $M \theta \R_{\unitT} \Eval M \rho$ and
  $N \theta \R_\anytype \Eval N \rho$.  If $\Eval M \rho=\bot$, then
  $\Eval {M; N} \rho = \bot$, so
  $(M; N) \theta \R_\anytype \Eval {M; N} \rho$ by Lemma~\ref{lemma:MRA}.
  Otherwise, $\Eval M \rho=\top$, so $M \theta \R_{\unitT} \top$,
  meaning that
  $\Prob (C \cdot M \theta \vp) \geq \Prob (C \cdot \underline * \vp)$
  for every ground context $C \colon \unitT \vdash \F\Vt\unitT$.  In
  particular, for every ground context $C \colon \anytype \vdash \F\Vt\unitT$,
  $\Prob (C [\_; N \theta] \cdot M \theta \vp) \geq \Prob (C [\_; N
  \theta] \cdot \underline * \vp)$.  By Lemma~\ref{lemma:prob:disc},
  $\Prob (C \cdot M \theta;N \theta \vp) = \Prob (C [\_; N \theta]
  \cdot M \theta \vp)$, and by Lemma~\ref{lemma:red},
  $\Prob (C [\_; N \theta] \cdot \underline * \vp) \geq \Prob (C \cdot
  N \theta \vp)$, using the rule
  $C [\_; N \theta] \cdot \underline * \to C \cdot N \theta$.  Hence
  $\Prob (C \cdot M \theta;N \theta \vp) \geq \Prob (C \cdot N \theta
  \vp)$.  By Lemma~\ref{lemma:R:reduc},
  $(M; N) \theta = M \theta; N \theta \R_\anytype \Eval N \rho = \Eval {M; N}
  \rho$.

  The case of terms $\pi_1 M$ and $\pi_2 M$ follows from the
  definition of $\R_{\sigma \times \tau}$.

  For terms $\langle M, N \rangle \colon \sigma \times \tau$, by
  induction hypothesis $M \theta \R_\sigma \Eval M \rho$ and
  $N \theta \R_\tau \Eval N \rho$.  For every ground context
  $C \colon \sigma \to \F\Vt\unitT$,
  $C \cdot \pi_1 \langle M \theta, N \theta \rangle \to C [\pi_1 \_]
  \cdot \langle M \theta, N \theta \rangle \to C \cdot M \theta$, so
  by Lemma~\ref{lemma:red} and Lemma~\ref{lemma:R:reduc},
  $\pi_1 \langle M, N \rangle \theta = \pi_1 \langle M \theta,
  \allowbreak N \theta \rangle \R_\sigma \Eval M \rho$.  Similarly,
  $\pi_2 \langle M, N \rangle \theta \R_\tau \Eval N \rho$.  By
  definition of $\R_{\sigma \times \tau}$, it follows that
  $\langle M, N \rangle \theta \R_{\sigma \times \tau} \Eval {\langle
    M, N \rangle} \rho$.

  For terms $\retkw M \colon \Vt{} \tau$, by induction hypothesis
  $M \theta \R_\tau \Eval M \rho$, so
  $\retkw M \theta \R_{\Vt{} \tau} \delta_{\Eval M \rho} = \Eval {\retkw
    M} \rho$ by Lemma~\ref{lemma:R:ret}.

  For terms $\dokw {x_\sigma \leftarrow M}; N$ where
  $M \colon \Vt{} \sigma$ and $N \colon \Vt{} \tau$, as for
  $\lambda$-abstractions we write $\theta$ as
  $[x_1:=N_1, \cdots, x_n:=N_n]$, and we assume that $x_\sigma$ is
  different from every $x_i$ and free in no $N_i$.  By induction
  hypothesis $M \theta \R_{\Vt{} \sigma} \Eval M \rho$, and for all
  $P \R_\sigma V$, $N \theta' \R_{\Vt{} \tau} \Eval N \rho [x_\sigma:=V]$
  where $\theta' = [x_1:=N_1, \cdots, x_n:=N_n, x_\sigma:=P]$.  As for
  $\lambda$-abstractions, the latter means that
  $N \theta [x_\sigma:=P] \R_{\Vt{} \tau} \Eval N \rho [x_\sigma:=V]$.
  Letting $f$ be the map
  $V \in \Eval \sigma \mapsto \Eval N \rho [x_\sigma:=V]$, therefore,
  $N \theta [x_\sigma:=P] \R_{\Vt{} \tau} f (V)$ for all $P \R_\sigma V$.
  By Lemma~\ref{lemma:R:bind},
  $(\dokw {x_\sigma \leftarrow M}; N) \theta = \dokw {x_\sigma
    \leftarrow M \theta}; (N \theta) \R_{\Vt{} \tau} f^\dagger (\Eval M
  \rho) = \Eval {\dokw {x_\sigma \leftarrow M}; N} \rho$.

  For terms $M \oplus N \colon \Vt{} \tau$, by induction hypothesis
  $M \theta \R_\tau \Eval M \rho$ and $N \theta \R_\tau \Eval N \rho$.
  For all $C \R_\tau^\perp h$,
  $\Prob (C \cdot M \theta \vp) \geq \int_{x \in \Eval \tau} h (x) d
  \Eval M \rho$, and
  $\Prob (C \cdot N \theta \vp) \geq \int_{x \in \Eval \tau} h (x) d
  \Eval N \rho$.  For all $a$ and $b$, if we can deduce
  $C \cdot M \theta \vp a$ and $C \cdot N \theta \vp b$, then we can
  deduce $C \cdot (M \oplus N) \theta \vp (a+b)/2$.  Therefore
  $\Prob (C \cdot (M \oplus N) \theta \vp) \geq \frac 1 2 (\Prob (C
  \cdot M \theta \vp) + \Prob (C \cdot N \theta \vp)) \geq \frac 1 2
  (\int_{x \in \Eval \tau} h (x) d \Eval M \rho + \int_{x \in \Eval
    \tau} h (x) d \Eval N \rho) = \int_{x \in \Eval \tau} h (x) d
  \Eval {M \oplus N} \rho$.  Hence
  $(M \oplus N) \theta \R_\tau \Eval {M \oplus N} \rho$.

  The case of terms $M \owedge N \colon \F{} \tau$ is similar, using the
  fact that
  $\Prob (C \cdot (M \owedge N) \theta \vp) \geq \min (\Prob (C \cdot M
  \theta \vp), \Prob (C \cdot N \theta \vp))$ instead.  The latter
  follows from the fact that if we can deduce both
  $C \cdot M \theta \vp a$ and $C \cdot N \theta \vp a$, then we can
  deduce $C \cdot (M \owedge N) \theta \vp a$.  By induction hypothesis,
  $M \theta \R_{\F{} \tau} \Eval M \rho$ and
  $N \theta \R_{\F{} \tau} \Eval N \rho$.  For all $C \R_\tau^* h$,
  $\Prob (C \cdot M \theta \vp) \geq \extd h (\Eval M \rho)$ and
  $\Prob (C \cdot N \theta \vp) \geq \extd h (\Eval N \rho)$, so
  $\Prob (C \cdot (M \owedge N) \theta \vp) \geq \min (\extd h (\Eval M
  \rho), \extd h (\Eval N \rho)) = \extd h (\Eval M \rho) \wedge \extd
  h (\Eval N \rho) = \extd h (\Eval M \rho \wedge \Eval N \rho)$
  (because $\extd h$ preserves binary infima, see
  Proposition~\ref{prop:Qtop}, item~3)
  $= \extd h (\Eval {M \owedge N} \rho)$.  Hence
  $(M \owedge N) \theta \R_{\F{} \tau} \Eval {M \owedge N} \rho$.

  For $\abort_{\F{} \tau} \colon \F{} \tau$, we show that
  $\abort_{\F{} \tau} \R_{\F{} \tau} \Eval {\abort_{\F{} \tau}} \rho =
  \emptyset$ ($\neq \bot$) by showing that for all $C \R_\tau^* h$,
  $\Prob (C \cdot \abort_{\F{} \tau} \vp) \geq \extd h (\emptyset)$.
  Indeed, by the rule $C \cdot \abort_{\F{} \tau} \vp a$
  ($a \in \rat \cap [0, 1)$),
  $\Prob (C \cdot \abort_{\F{} \tau} \vp) = 1$.
  
  For $\rec x_\sigma . M$ where $M \colon \sigma$, as for
  $\lambda$-abstractions, let us write $\theta$ as
  $[x_1:=N_1, \cdots, x_n:=N_n]$, and assume by $\alpha$-renaming that
  $x_\sigma$ is different from every $x_i$ and free in no $N_i$.  For
  all $N \R_\sigma V$, we define $\theta'$ as
  $[x_1:=N_1, \cdots, x_n:=N_n, x_\sigma:=N]$ and we observe that
  $\theta' \R^\bullet \rho [x_\sigma \mapsto V]$, so by induction
  hypothesis $M \theta' \R_\sigma \Eval M \rho [x_\sigma \mapsto V]$.
  Let $f (V) = \Eval M \rho [x_\sigma \mapsto V]$.  We have just shown
  that $M \theta [x_\sigma := N] \R_\sigma f (V)$ for all
  $N \R_\sigma V$.  By Corollary~\ref{corl:R:rec},
  $(\rec x_\sigma . M) \theta = \rec x_\sigma . (M \theta) \R_\sigma
  \lfp (f) = \Eval {\rec x_\sigma . M} \rho$.

  We finish with the constructions involving $\bigcirc$ or $\pifzkw$.
  For $\bigcirc_{> b} M$ where $M \colon \F\Vt\unitT$, by induction
  hypothesis $M \theta \R_{\F\Vt\unitT} \Eval M \rho$.  Using
  Lemma~\ref{lemma:R:FVunit}, item~2, we obtain that
  $\Prob ([\_] \cdot M \theta \vp) \geq \extd {(\nu \in \Val_{\leq 1}
    \Sierp \mapsto \nu (\{\top\}))} (\Eval M \rho)$.  If
  $\Eval M \rho = \bot$, then $\Eval {\bigcirc_{> b} M} \rho = \bot$,
  so
  $\bigcirc_{> b} M \theta \R_{\unitT} \Eval {\bigcirc_{> b} M} \rho$,
  trivially.  Otherwise, $\Eval M \rho$ is a compact saturated subset
  of $\Val_{\leq 1} \Sierp$.  If $b \not\ll \nu (\{\top\})$ for some
  $\nu \in \Eval M \rho$, then again
  $\Eval {\bigcirc_{> b} M} \rho = \bot$, so
  $\bigcirc_{> b} M \theta \R_{\unitT} \Eval {\bigcirc_{> b} M} \rho$
  is again trivial.  Finally, if $b \ll \nu (\{\top\})$ for every
  $\nu \in \Eval M \rho$, then we verify that
  $b \ll \extd {(\nu \in \Val_{\leq 1} \Sierp \mapsto \nu (\{\top\}))}
  (\Eval M \rho)$: if $\Eval M \rho \neq \emptyset$,
  $\extd {(\nu \in \Val_{\leq 1} \Sierp \mapsto \nu (\{\top\}))}
  (\Eval M \rho) = \min_{\nu \in \Eval M \rho} \nu (\{\top\})$, so $b$
  is way-below that value; while if $\Eval M \rho = \emptyset$, then
  $\extd {(\nu \in \Val_{\leq 1} \Sierp \mapsto \nu (\{\top\}))}
  (\Eval M \rho) = 1$, and $b \ll 1$ because the $\bigcirc_{> b}$
  operator requires $b < 1$.  It follows that
  $b \ll \Prob ([\_] \cdot M \theta \vp)$, so there is a number $a$
  such that $b \leq a$ and $[\_] \cdot M \vp a$ is derivable.  By
  Lemma~\ref{lemma:prob:leq}, $[\_] \cdot M \vp b$ is derivable.  For
  every ground context $C \colon \unitT \vdash \F\Vt\unitT$, for every
  $a$ such that $C \cdot \underline * \vp a$ is derivable, the
  leftmost rule of the bottom row of Figure~\ref{fig:opsem} allows us
  to derive $C \cdot \bigcirc_{> b} M \vp a$, so
  $\Prob (C \cdot \bigcirc_{> b} M \vp) \geq \Prob (C \cdot \underline
  * \vp)$.  It follows that
  $\bigcirc_{> b} M \R_{\unitT} \top = \Eval {\bigcirc_{> b} M} \rho$.

  Finally, for terms of the form $\pifz M N P$, where $M \colon \intT$
  and $N, P \colon \F{} \tau$, we wish to show that
  $(\pifz M N P) \theta \R_{\F{} \tau} \Eval {\pifz M N P} \rho$.  This
  means showing that, for all $C \R_\tau^* h$,
  $\Prob (C \cdot \pifz {M \theta} {N \theta} {P \theta} \vp) \geq
  \extd h (\Eval {\pifz M N P} \rho)$.

  If $\Eval M \rho = \bot$, then
  $\Eval {\pifz M N P} \rho = \Eval N \rho \wedge \Eval P \rho$.  In
  that case, we note that
  $\Prob (C \cdot \pifz {M \theta} {N \theta} {P \theta} \vp)$ is
  larger than or equal to
  $\min (\Prob (C \cdot N \theta \vp), \Prob (C \cdot P \theta \vp))$:
  for every $a \in \rat \cap [0, 1)$ way-below
  $\min (\Prob (C \cdot N \theta \vp), \Prob (C \cdot P \theta \vp))$,
  we can derive $C \cdot N \theta \vp b$ for some $b \geq a$, and
  $C \cdot P \theta \vp c$ for some $c \geq a$; then, by
  Lemma~\ref{lemma:prob:leq}, we can derive $C \cdot N \theta \vp a$
  and $C \cdot P \theta \vp a$, hence
  $C \cdot \pifz {M \theta} {N \theta} {P \theta} \vp a$.  By
  induction hypothesis, $N \theta \R_{\F{} \tau} \Eval N \rho$, so
  $\Prob (C \cdot N \theta \vp) \geq \extd h (\Eval N \rho)$, and
  similarly
  $\Prob (C \cdot P \theta \vp) \geq \extd h (\Eval P \rho)$.
  Therefore
  $\Prob (C \cdot \pifz {M \theta} {N \theta} {P \theta} \vp) \geq
  \min (\extd h (\Eval N \rho), \extd h (\Eval P \rho)) = \extd h
  (\Eval N \rho \wedge \Eval P \rho) = \extd h (\Eval {\pifz M N P}
  \rho)$, since $\extd h$ preserves binary infima
  (Proposition~\ref{prop:Qtop}, item~3).

  If $\Eval M \rho \neq \bot$, then
  $\Eval {\pifz M N P} \rho = \Eval {\ifz M N P} \rho$.  We have
  already seen that
  $\ifz {M \theta} {N \theta} {P \theta} \R_{\F{} \tau} \Eval {\ifz M N
    P} \rho$, so
  $\Prob (C \cdot \ifz {M \theta} {N \theta} {P \theta} \vp) \geq
  \extd h (\Eval {\ifz M N P} \rho) = \extd h (\Eval {\pifz M N P}
  \rho)$.  For every $a \in \rat \cap [0, 1)$, if we can derive
  $C \cdot \ifz {M \theta} {N \theta} {P \theta} \vp a$, then we can
  also derive $C \cdot \pifz {M \theta} {N \theta} {P \theta} \vp a$,
  so
  $\Prob (C \cdot \pifz {M \theta} {N \theta} {P \theta} \vp) \geq
  \Prob (C \cdot \ifz {M \theta} {N \theta} {P \theta} \vp)$, and that
  is larger than or equal to $\extd h (\Eval {\pifz M N P} \rho)$.  \qed
\end{proof}

Given a ground term (or context) $M$, $\Eval M \rho$ does not depend
on $\rho$, and we will simply write $\Eval M$ in this case.
\begin{proposition}[Adequacy]
  \label{prop:adeq}
  In any of the languages CBPV$(\Demon,\Nature)$,
  CBPV$(\Demon, \allowbreak \Nature)+\pifzkw$,
  CBPV$(\Demon,\Nature)+\bigcirc$, and
  CBPV$(\Demon,\Nature)+\pifzkw+\bigcirc$, for every ground term
  $M \colon \F\Vt\unitT$,
  \[
    \Prob (M \vp) = \extd h (\Eval M),
  \]
  where $h$ is the map
  $\nu \in \Val_{\leq 1} \Sierp \mapsto \nu (\{\top\})$.

  Explicitly: either $\Eval M = \bot$ and $\Prob (M \vp) = 0$, or
  $\Eval M = \emptyset$ and $\Prob (M \vp)=1$, or
  $\Eval M \neq \bot, \emptyset$ and
  $\Prob (M \vp) = \min_{\nu \in \Eval M} \nu (\{\top\})$.
\end{proposition}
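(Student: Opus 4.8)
The plan is to establish the two inequalities $\Prob (M \vp) \leq \extd h (\Eval M)$ and $\Prob (M \vp) \geq \extd h (\Eval M)$ separately, and then read off the three-case reformulation by unpacking $\extd h$. Throughout, $h$ is the map $\nu \in \Val_{\leq 1} \Sierp \mapsto \nu (\{\top\})$, which is continuous because $\{\top\}$ is open in $\Sierp$ and $\nu \mapsto \nu (U)$ is Scott-continuous for fixed open $U$; hence its strict extension $\extd h \colon \Smyth^\top_\bot (\Val_{\leq 1} \Sierp) \to [0, 1]$ is well-defined by Proposition~\ref{prop:Qtop}, with $\extd h (\bot) = 0$, $\extd h (\emptyset) = 1$ (the empty infimum in $[0, 1]$), and $\extd h (Q) = \min_{\nu \in Q} \nu (\{\top\})$ for nonempty compact saturated $Q$. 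Note that $\Eval M$ indeed lives in $\Smyth^\top_\bot (\Val_{\leq 1} \Sierp) = \Eval {\F\Vt\unitT}$.

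The inequality $\Prob (M \vp) \leq \extd h (\Eval M)$ is exactly soundness. I would apply Proposition~\ref{prop:sound}, item~2, with the empty context $C = [\_]$, so that $C [M] = M$ and $\Prob (M \vp) = \Prob ([\_] \cdot M \vp)$. If $\Eval M = \bot$, then $\Prob (M \vp) = 0 = \extd h (\bot)$; if $\Eval M = \emptyset$, then $\Prob (M \vp) \leq 1 = \extd h (\emptyset)$ holds trivially; and otherwise $\nu (\{\top\}) \geq \Prob (M \vp)$ for every $\nu \in \Eval M$, so that $\extd h (\Eval M) = \inf_{\nu \in \Eval M} \nu (\{\top\}) \geq \Prob (M \vp)$.

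For the converse inequality $\Prob (M \vp) \geq \extd h (\Eval M)$, I would invoke the fundamental lemma of the logical relation, Proposition~\ref{prop:R}, applied to $M$ with the empty substitution $\theta$, whose domain vacuously contains all free variables of the ground term $M$. This yields $M \R_{\F\Vt\unitT} \Eval M$. Unfolding the definition of $\R_{\F{} \sigma}$ with $\sigma = \Vt\unitT$, this says precisely that $\Prob (C \cdot M \vp) \geq \extd {h'} (\Eval M)$ for every ground context $C$ and every continuous map $h'$ with $C \R_{\Vt\unitT}^* h'$. It then suffices to instantiate this with $C = [\_]$ and $h' = h$: Lemma~\ref{lemma:R:FVunit}, item~2, asserts exactly that $[\_] \R_{\Vt\unitT}^* (\nu \in \Val_{\leq 1} \Sierp \mapsto \nu (\{\top\}))$, so we obtain $\Prob (M \vp) = \Prob ([\_] \cdot M \vp) \geq \extd h (\Eval M)$.

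Combining the two inequalities gives $\Prob (M \vp) = \extd h (\Eval M)$, and the explicit three-case statement follows directly from the values of $\extd h$ at $\bot$, at $\emptyset$, and at proper nonempty $Q$ recorded in the first paragraph. Since all the difficult work has already been discharged in Proposition~\ref{prop:sound} and Proposition~\ref{prop:R}, there is no real obstacle remaining here; the only points demanding attention are the correct identification of $\extd h$ at the two boundary values $\bot$ and $\emptyset$, and verifying that the specific pair $([\_], h)$ is admissible for $\R_{\Vt\unitT}^*$, which is the sole purpose of Lemma~\ref{lemma:R:FVunit}.
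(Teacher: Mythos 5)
Your proposal is correct and follows exactly the paper's own argument: one direction from Proposition~\ref{prop:sound} (item~2) with $C = [\_]$, the other from Proposition~\ref{prop:R} with the empty substitution combined with Lemma~\ref{lemma:R:FVunit} (item~2), and the three-case reformulation by unpacking $\extd h$ at $\bot$, $\emptyset$, and nonempty $Q$. No differences worth noting.
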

\begin{proof}
  By Proposition~\ref{prop:R} applied to $\theta = []$,
  $M \R_{\F\Vt\unitT} \Eval M$.  By Lemma~\ref{lemma:R:FVunit}, item~2,
  $[\_] \R_{\Vt{} \unitT}^* h$, so
  $\Prob ([\_] \cdot M \vp) \geq \extd h (\Eval M)$.  The converse
  inequality is by soundness (Proposition~\ref{prop:sound}, item~2).
  \qed
\end{proof}

\section{Consequences of Adequacy}
\label{sec:cons-adeq}

\begin{definition}
  \label{defn:app:preorder}
  The \emph{applicative preorder} $\precsim^{app}_\tau$ between ground
  CBPV$(\Demon, \Nature)$ terms of value type $\tau$ is defined by
  $M \precsim^{app}_\tau N$ if and only if for every ground term
  $Q \colon \tau \to \F\Vt\unitT$, $\Prob (QM \vp) \leq \Prob (QN \vp)$.

\end{definition}
While the applicative preorder is only defined at \emph{value} types,
one can extend it fairly trivially to computation types by letting
$M \precsim^{app}_{\underline\tau} N$ if and only if $\thunk M
\precsim^{app}_{\U\underline\tau} \thunk N$.

As for $\precsim_\anytype$ (Definition~\ref{defn:context:preorder}),
we will freely reuse the notations $\precsim^{app}_\tau$ for all the
variants of CBPV$(\Demon,\Nature)$ considered in this paper, with or
without $\bigcirc$ and $\pifzkw$.  Any result that does not mention
the language considered holds for all four: this will notably be the
case in the current section.

\begin{lemma}
  \label{lemma:precsim:app}
  For all ground terms $M, N \colon \sigma \to \underline\tau$ such
  that $M \precsim_{\sigma \to \underline\tau} N$, for every ground
  term $P \colon \sigma$, $MP \precsim_{\underline\tau} NP$.
\end{lemma}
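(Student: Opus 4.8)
The plan is to unfold the definition of the contextual preorder at the computation type $\underline\tau$ and to reduce the goal directly to the hypothesis $M \precsim_{\sigma \to \underline\tau} N$, by wrapping the ambient context around the elementary application context $[\_ P]$. Concretely, by Definition~\ref{defn:context:preorder} it suffices to fix an arbitrary ground evaluation context $C \colon \underline\tau \vdash \F\Vt\unitT$ and show $\Prob (C \cdot MP \vp) \leq \Prob (C \cdot NP \vp)$. The key observation is that $MP = [\_ P][M]$, where $[\_ P] \colon (\sigma \to \underline\tau) \vdash \underline\tau$ is a genuine elementary context (the typing clause for $[\_ N]$ requires the result type to be a computation type, which $\underline\tau$ is), and that, since $P$ is ground, $C[\_ P]$ is a ground context of type $(\sigma \to \underline\tau) \vdash \F\Vt\unitT$.

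The main step is then a double application of Lemma~\ref{lemma:prob:disc} with the one-element sequence $C' = [\_ P]$ (i.e. $n=1$): applied to $M$ it yields $\Prob (C \cdot MP \vp) = \Prob (C \cdot [\_ P][M] \vp) = \Prob (C[\_ P] \cdot M \vp)$, and applied to $N$ it yields $\Prob (C \cdot NP \vp) = \Prob (C[\_ P] \cdot N \vp)$. Now I invoke the hypothesis: because $M \precsim_{\sigma \to \underline\tau} N$ and $C[\_ P]$ is a ground context of type $(\sigma \to \underline\tau) \vdash \F\Vt\unitT$, the defining inequality of $\precsim_{\sigma \to \underline\tau}$ gives $\Prob (C[\_ P] \cdot M \vp) \leq \Prob (C[\_ P] \cdot N \vp)$. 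Chaining the three relations produces $\Prob (C \cdot MP \vp) \leq \Prob (C \cdot NP \vp)$, and since $C$ was arbitrary this is exactly $MP \precsim_{\underline\tau} NP$.

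I do not expect a real obstacle here; the content is essentially bookkeeping. The only points to verify with care are that $[\_ P]$ is admissible as an elementary context for the specific computation type $\underline\tau$ at hand, that $C[\_ P]$ stays ground and keeps the target type $\F\Vt\unitT$, and that Lemma~\ref{lemma:prob:disc} (whose proof rests on the redex discovery rule $C \cdot [\_ P][M] \to C[\_ P] \cdot M$) is applied to both $M$ and $N$ with the same wrapped context $C[\_ P]$. All of these are immediate from the definitions of elementary and initial contexts and from the typing rule for application.
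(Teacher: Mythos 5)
Your proof is correct and follows exactly the paper's own argument: unfold the contextual preorder at $\underline\tau$, use Lemma~\ref{lemma:prob:disc} with the one-element context $[\_ P]$ to rewrite $\Prob(C \cdot MP \vp)$ as $\Prob(C[\_ P] \cdot M \vp)$ (and likewise for $N$), then apply the hypothesis to the ground context $C[\_ P]$. Nothing to add.
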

\begin{proof}
  We must show that for every ground evaluation context
  $C \colon \underline\tau \vdash \F\Vt\unitT$,
  $\Prob (C \cdot MP \vp) \leq \Prob (C \cdot NP \vp)$.  By
  Lemma~\ref{lemma:prob:disc},
  $\Prob (C \cdot MP \vp) = \Prob (C [\_ P] \cdot M \vp)$.  Similarly,
  $\Prob (C \cdot NP \vp) = \Prob (C [\_ P] \cdot N \vp)$.  Since
  $M \precsim_{\sigma \to \underline\tau} N$,
  $\Prob (C [\_ P] \cdot M \vp) \leq \Prob (C [\_ P] \cdot N \vp)$,
  and we conclude.  \qed
\end{proof}

We reuse the logical relation of Section~\ref{sec:adequacy}.

The following is sometimes called \emph{Milner's Context Lemma} in the
setting of PCF, and we will prove it by using a variant of an argument
due to A. Jung \cite[Theorem~8.1]{Streicher:pcf}.
\begin{theorem}[Contextual=applicative]
  \label{thm:context=app}
  For every value type $\tau$, the contextual preorder $\precsim_\tau$
  and the applicative preorder $\precsim^{app}_\tau$ on ground
  CBPV$(\Demon,\Nature)$ terms of type $\tau$ are the same relation.
\end{theorem}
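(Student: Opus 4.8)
The plan is to establish the two inclusions $\precsim^{app}_\tau \subseteq \precsim_\tau$ and $\precsim_\tau \subseteq \precsim^{app}_\tau$ separately, using adequacy (Proposition~\ref{prop:adeq}) to translate between the operational quantities $\Prob (\cdot \vp)$ and the corresponding denotations; throughout, $h$ denotes the map $\nu \in \Val_{\leq 1}\Sierp \mapsto \nu (\{\top\})$. For the first inclusion I would turn each ground evaluation context $C \colon \tau \vdash \F\Vt\unitT$ into the test term $Q := \lambda x_\tau . C[x_\tau] \colon \tau \to \F\Vt\unitT$, which is ground since $C$ is. Using the substitution lemma for the denotational semantics, $\Eval{QM} = \Eval Q (\Eval M) = \Eval{C[M]}$ for every ground $M \colon \tau$, and adequacy then gives $\Prob (QM \vp) = \extd h (\Eval{C[M]}) = \Prob (C[M] \vp) = \Prob (C \cdot M \vp)$, the last equality by Lemma~\ref{lemma:prob:disc:_}; likewise $\Prob (QN \vp) = \Prob (C \cdot N \vp)$. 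Hence $M \precsim^{app}_\tau N$ applied to $Q$ yields $\Prob (C \cdot M \vp) \le \Prob (C \cdot N \vp)$ for all such $C$, that is, $M \precsim_\tau N$.

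The converse inclusion $\precsim_\tau \subseteq \precsim^{app}_\tau$ is the substantial one, and this is where the variant of Jung's argument enters: I would run it through the logical relation of Section~\ref{sec:adequacy}, which is upward closed under $\precsim$ by Lemma~\ref{lemma:R:reduc}. Assume $M \precsim_\tau N$ and fix an arbitrary ground $Q \colon \tau \to \F\Vt\unitT$. The fundamental lemma (Proposition~\ref{prop:R} with $\theta = []$) gives $M \R_\tau \Eval M$, whence $N \R_\tau \Eval M$ by Lemma~\ref{lemma:R:reduc}. It also gives $Q \R_{\tau \to \F\Vt\unitT} \Eval Q$, so unfolding the definition of $\R$ at the arrow type against $N \R_\tau \Eval M$ produces $QN \R_{\F\Vt\unitT} \Eval Q (\Eval M)$. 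Unfolding $\R$ at $\F\Vt\unitT$ and instantiating with $[\_] \R_{\Vt\unitT}^* h$ (Lemma~\ref{lemma:R:FVunit}, item~2) yields $\Prob (QN \vp) \ge \extd h (\Eval Q (\Eval M))$. Finally, since $\Eval{QM} = \Eval Q (\Eval M)$, adequacy gives $\extd h (\Eval Q (\Eval M)) = \Prob (QM \vp)$, so $\Prob (QM \vp) \le \Prob (QN \vp)$; as $Q$ was arbitrary, $M \precsim^{app}_\tau N$.

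The main obstacle is this second inclusion, and the point to get right is what one may and may not deduce from $M \precsim_\tau N$. One must resist concluding $\Eval M \le \Eval N$---that would be full abstraction, which fails for CBPV$(\Demon,\Nature)$ (Section~\ref{sec:fail-full-abstr})---and instead use only the weaker consequence that $N$ is logically related to $M$'s denotation $\Eval M$. It is precisely this that allows an arbitrary test $Q$ to be absorbed through the logical relation at the arrow type and then observed at $\F\Vt\unitT$, converting the contextual comparison at $\tau$ into the applicative one without ever ordering the two denotations. The remaining points---well-typedness of $Q = \lambda x_\tau . C[x_\tau]$, the substitution computation, and the two instantiations of $\R$---are routine.
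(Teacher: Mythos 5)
Your proposal is correct and follows essentially the same route as the paper's proof: the easy direction via the test term $Q = \lambda x_\tau . C[x_\tau]$ together with Lemma~\ref{lemma:prob:disc:_} and adequacy, and the converse via the fundamental lemma (Proposition~\ref{prop:R}), the closure of the logical relation under $\precsim_\tau$ (Lemma~\ref{lemma:R:reduc}), and the observation $[\_] \R_{\Vt\unitT}^* h$ from Lemma~\ref{lemma:R:FVunit}. Your remark that one must use only $N \R_\tau \Eval M$ rather than $\Eval M \leq \Eval N$ correctly identifies the crux of Jung's argument as the paper deploys it.
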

\begin{proof}
  Let $M$, $N$ be two ground terms of type $\tau$.  If
  $M \precsim^{app}_\tau N$, then consider a ground evaluation context
  $C \colon \tau \vdash \F\Vt\unitT$.  By Lemma~\ref{lemma:prob:disc:_},
  $\Prob (C [M] \vp)$, which is equal to
  $\Prob ([\_] \cdot C [M] \vp)$ by definition, is equal to
  $\Prob (C \cdot M \vp)$.  By adequacy (Proposition~\ref{prop:adeq}),
  $\Prob (C [M] \vp) = \extd h (\Eval {C [M]})$ where $h$ is the map
  $\nu \in \Val_{\leq 1} \Sierp \mapsto \nu (\{\top\})$.  Let
  $Q = \lambda x_\tau . C [x_\tau]$, where $x_\tau$ is a fresh
  variable of type $\tau$.  Then $\Eval {C [M]} = \Eval {QM}$.  By
  adequacy again, $\Prob (QM \vp) = \extd h (\Eval {QM})$, so
  $\Prob (C [M] \vp) = \Prob (QM \vp)$.  Similarly,
  $\Prob (C [N] \vp) = \Prob (QN \vp)$.  Since
  $M \precsim^{app}_\tau N$, the former is less than or equal to the
  latter, so $M \precsim_\tau N$.

  Conversely, let us assume $M \precsim_\tau N$.  Consider a ground
  term $Q \colon \tau \to \F\Vt\unitT$.  By Proposition~\ref{prop:R} with
  $\theta = []$, $M \R_\tau \Eval M$.  By Lemma~\ref{lemma:R:reduc},
  since $M \precsim_\tau N$, we also have $N \R_\tau \Eval M$.  By
  Proposition~\ref{prop:R} again, $Q \R_{\tau \to \F\Vt\unitT} \Eval Q$.
  Hence $QN \R_{\F\Vt\unitT} \Eval {QM}$.  By Lemma~\ref{lemma:R:FVunit},
  $[\_] \R_{\Vt{} \unitT}^* h$, where $h$ is as above.  Using the
  definition of $\R_{\F\Vt\unitT}$,
  $\Prob (QN \vp) = \Prob ([\_] \cdot QN \vp) \geq \extd h (\Eval
  {QM})$.  The latter is equal to $\Prob (QM \vp)$ by adequacy
  (Proposition~\ref{prop:adeq}).  We have shown
  $\Prob (QM \vp) \leq \Prob (QN \vp)$, where $Q$ is arbitrary, hence
  $M \precsim_\tau^{app} N$.  \qed
\end{proof}

\begin{corollary}
  \label{corl:context=app:comp}
  For every computation type $\underline\tau$, the contextual preorder
  $\precsim_{\underline\tau}$ and the applicative preorder
  $\precsim^{app}_{\underline\tau}$ on ground CBPV$(\Demon,\Nature)$
  terms of type $\underline\tau$ are the same relation.
\end{corollary}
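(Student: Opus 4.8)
The plan is to reduce the statement to the value-type case, Theorem~\ref{thm:context=app}, which is already available for the value type $\U\underline\tau$. Recall that the applicative preorder at a computation type was \emph{defined} by $M \precsim^{app}_{\underline\tau} N$ iff $\thunk M \precsim^{app}_{\U\underline\tau} \thunk N$, and that $\U\underline\tau$ is a value type, so Theorem~\ref{thm:context=app} gives $\thunk M \precsim^{app}_{\U\underline\tau} \thunk N$ iff $\thunk M \precsim_{\U\underline\tau} \thunk N$. Thus the corollary will follow once I establish the remaining equivalence
\[
  M \precsim_{\underline\tau} N \iff \thunk M \precsim_{\U\underline\tau} \thunk N,
\]
which relates the two \emph{contextual} preorders.

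The key step is a structural analysis of ground contexts $C \colon \U\underline\tau \vdash \F\Vt\unitT$. I claim that every such context factors as $C'[\force \_]$ for some ground context $C' \colon \underline\tau \vdash \F\Vt\unitT$. Writing $C = E_0 E_1 \cdots E_n$, the innermost context $E_n$ has domain $\U\underline\tau$, and inspection of the list of elementary contexts shows that the only elementary context whose domain has the form $\U\underline\tau$ is $[\force \_]$. The degenerate case $n=0$, where $C$ would be a lone initial context, is impossible, since none of $[\_]$, $[\produce \_]$, $[\produce \retkw \_]$ has both domain $\U\underline\tau$ and codomain $\F\Vt\unitT$ (a type mismatch, $\U\underline\tau$ being a value type). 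Hence $E_n = [\force \_]$ and $C' = E_0 \cdots E_{n-1} \colon \underline\tau \vdash \F\Vt\unitT$ does the job. Conversely, for any ground $C' \colon \underline\tau \vdash \F\Vt\unitT$ the appended context $C'[\force \_]$ has type $\U\underline\tau \vdash \F\Vt\unitT$, so this factoring sets up a correspondence between the contexts testing $\thunk M$ and those testing $M$.

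It then remains to check that $\Prob (C'[\force \_] \cdot \thunk M \vp) = \Prob (C' \cdot M \vp)$ for every ground $C'$ and every ground $M \colon \underline\tau$. By Lemma~\ref{lemma:prob:disc} the left-hand side equals $\Prob (C' \cdot \force \thunk M \vp)$, and I would finish by observing $\Eval {\force \thunk M} = \Eval {\thunk M} = \Eval M$, hence $\Eval {C'[\force \thunk M]} = \Eval {C'[M]}$; adequacy (Proposition~\ref{prop:adeq}), together with Lemma~\ref{lemma:prob:disc:_}, then converts this denotational equality into $\Prob (C' \cdot \force \thunk M \vp) = \Prob (C' \cdot M \vp)$. (Alternatively one can argue purely operationally, since $C' \cdot \force \thunk M$ rewrites deterministically to $C' \cdot M$.) Feeding this equality into the factoring above yields both directions of the displayed equivalence: given $M \precsim_{\underline\tau} N$, every test context for the thunks has the shape $C'[\force \_]$ and its probabilities collapse to those of the context $C'$ applied to $M$ and $N$; and conversely, every test context $C'$ for $M, N$ arises by appending $[\force \_]$. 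The main obstacle is the structural factoring claim --- everything else is bookkeeping --- and the one point requiring care there is to rule out the $n=0$ case and to confirm that no elementary context other than $[\force \_]$ accepts an input of type $\U\underline\tau$.
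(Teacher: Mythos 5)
Your proposal is correct and follows essentially the same route as the paper: reduce to the value type $\U\underline\tau$ via Theorem~\ref{thm:context=app}, show $M \precsim_{\underline\tau} N$ iff $\thunk M \precsim_{\U\underline\tau} \thunk N$ by factoring every ground context of type $\U\underline\tau \vdash \F\Vt\unitT$ as $C'[\force\_]$ (ruling out $n=0$), and use adequacy plus the fact that $\force$ and $\thunk$ denote identities to equate the termination probabilities. The only cosmetic difference is which of Lemmas~\ref{lemma:prob:disc} and~\ref{lemma:prob:disc:_} you invoke for the bookkeeping, and your aside that a purely operational argument would also work.
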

\begin{proof}
  We claim that $M \precsim_{\underline\tau} N$ if and only if
  $\thunk M \precsim_{\U\underline\tau} \thunk N$.  The result will
  then follow from Theorem~\ref{thm:context=app}, since $\thunk M
  \precsim_{\U\underline\tau} \thunk N$ is equivalent to $\thunk M
  \precsim^{app}_{\U\underline\tau} \thunk N$, hence to $M
  \precsim^{app}_{\underline\tau} N$, by definition.

  If $M \precsim_{\underline\tau} N$, let $C$ be any ground evaluation
  context of type $\U\underline\tau \vdash \F\Vt\unitT$.  Let us write
  $C$ as $E_0 E_1 E_2 \cdots E_n$, where
  $E_i \colon \anytype_{i+1} \vdash \anytype_i$, $\anytype_{n+1}=\U\underline\tau$ and
  $\anytype_0=\F\Vt\unitT$.  Since $\U\underline\tau$ is not $\unitT$,
  $\Vt\unitT$, or $\F\Vt\unitT$, $n$ must be at least $1$.  The only
  elementary context $E_n$ of type $\U\underline\tau \vdash \anytype_n$ is
  $[\force \_]$.  Let $C' = E_0 E_1 E_2 \cdots E_{n-1}$.  Then
  $\Prob (C \cdot \thunk M \vp) = \Prob (C' [\force \_] \cdot \thunk M
  \vp) = \Prob (C' [\force \thunk M] \vp)$ (by
  Lemma~\ref{lemma:prob:disc:_})
  $= \extd h (\Eval {C' [\force \thunk M]})$ (by adequacy, where $h$
  is given in Proposition~\ref{prop:adeq})
  $= \extd h (\Eval {C' [M]})$ (because $\force$ and $\thunk$ are both
  interpreted as identity maps)
  $= \Prob (C' [M] \vp) = \Prob (C' \cdot M \vp)$.  Similarly,
  $\Prob (C \cdot \thunk N \vp) = \Prob (C' \cdot N \vp)$.  Since
  $M \precsim_{\underline\tau} N$, the former is less than or equal to
  the latter.  This allows us to conclude that
  $\thunk M \precsim_{\U\underline\tau} \thunk N$.

  Conversely, we assume that
  $\thunk M \precsim_{\U\underline\tau} \thunk N$, and we consider an
  arbitrary ground evaluation context
  $C \colon \underline\tau \vdash \F\Vt\unitT$.  Then $C [\force \_]$ is
  a ground evaluation context of type
  $\U\underline\tau \vdash \F\Vt\unitT$, so
  $\Prob (C [\force \_] \cdot \thunk M \vp) \leq \Prob (C [\force \_]
  \cdot \thunk N \vp)$.  As above, we have
  $\Prob (C [\force \_] \cdot \thunk M \vp) = \extd h (\Eval {C
    [\force \thunk M]}) = \extd h (\Eval {C [M]}) = \Prob (C \cdot M
  \vp)$, and
  $\Prob (C [\force \_] \cdot \thunk N \vp) = \Prob (C \cdot N \vp)$,
  and the former is less than or equal to the latter.  \qed
\end{proof}

The following proposition is a form of extensionality: two
abstractions are related by $\precsim_{\sigma \to \underline\tau}$ if
and only if applying them to the same ground terms yield related
results.
\begin{proposition}
  \label{prop:precsim:lambda}
  Let $M, N \colon \underline\tau$ be two terms with $x_\sigma$ as
  sole free variable.  Then
  $\lambda x_\sigma . M \precsim_{\sigma \to \underline\tau} \lambda
  x_\sigma . N$ if and only if for every ground term
  $P \colon \sigma$,
  $M [x_\sigma:=P] \precsim_{\underline\tau} N [x_\sigma:=P]$.
\end{proposition}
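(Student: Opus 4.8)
The plan is to prove both directions of the biconditional, the harder direction being the "only if" part, for which I will need to turn a context around an abstraction into a context around its body. Throughout I will freely use that contextual and applicative preorders coincide (Theorem~\ref{thm:context=app} and Corollary~\ref{corl:context=app:comp}), adequacy (Proposition~\ref{prop:adeq}), and the discharge lemmata (Lemma~\ref{lemma:prob:disc}, Lemma~\ref{lemma:prob:disc:_}).

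First I would handle the "if" direction. Assume that $M[x_\sigma:=P] \precsim_{\underline\tau} N[x_\sigma:=P]$ for every ground $P \colon \sigma$. To show $\lambda x_\sigma.M \precsim_{\sigma \to \underline\tau} \lambda x_\sigma.N$, I take an arbitrary ground context $C \colon (\sigma \to \underline\tau) \vdash \F\Vt\unitT$ and must compare $\Prob(C \cdot \lambda x_\sigma.M \vp)$ with $\Prob(C \cdot \lambda x_\sigma.N \vp)$. The key observation is that a function-typed value placed in such a context is only ever consumed through an application elementary context $[\_\,P]$; intuitively, the only way $C$ can make progress on a $\lambda$-abstraction is by supplying an argument. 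So I expect to argue, by examining the shape of $C = E_0 E_1 \cdots E_n$ and the typing ranks, that the innermost elementary context applied to the abstraction must be of the form $[\_\,P]$ for some ground $P \colon \sigma$, after which $\beta$-reduction produces $M[x_\sigma:=P]$ (resp.\ $N[x_\sigma:=P]$) in the residual context. Using Lemma~\ref{lemma:prob:disc} to slide the argument into the context and the reduction $C[\_\,P] \cdot \lambda x_\sigma.M \to C \cdot M[x_\sigma:=P]$ together with Lemma~\ref{lemma:red}, the two probabilities reduce to those of the substituted bodies, and the hypothesis closes the gap. The cleanest way to organize this is probably to reduce directly to applicative order via Theorem~\ref{thm:context=app}: it suffices to show $\lambda x_\sigma.M \precsim^{app}_{\sigma\to\underline\tau} \lambda x_\sigma.N$, i.e.\ $\Prob(Q(\lambda x_\sigma.M)\vp) \le \Prob(Q(\lambda x_\sigma.N)\vp)$ for every ground $Q$.

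For the "only if" direction, assume $\lambda x_\sigma.M \precsim_{\sigma \to \underline\tau} \lambda x_\sigma.N$ and fix a ground $P \colon \sigma$. I must show $M[x_\sigma:=P] \precsim_{\underline\tau} N[x_\sigma:=P]$, so I take an arbitrary ground context $C \colon \underline\tau \vdash \F\Vt\unitT$ and compare the two termination probabilities. The trick, due essentially to the reading through adequacy, is to build from $C$ and $P$ a new context that applies the abstraction to $P$ first. Concretely, consider the context $C' = C\,[\_\,P]$ of type $(\sigma \to \underline\tau) \vdash \F\Vt\unitT$; then by the $\beta$-reduction $C\,[\_\,P] \cdot \lambda x_\sigma.M \to C \cdot M[x_\sigma:=P]$ and Lemma~\ref{lemma:prob:disc}/Lemma~\ref{lemma:prob:disc:_}, we get $\Prob(C' \cdot \lambda x_\sigma.M \vp) = \Prob(C \cdot M[x_\sigma:=P]\vp)$, and symmetrically for $N$. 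Since $\lambda x_\sigma.M \precsim_{\sigma\to\underline\tau} \lambda x_\sigma.N$ gives $\Prob(C' \cdot \lambda x_\sigma.M\vp) \le \Prob(C' \cdot \lambda x_\sigma.N\vp)$, the desired inequality $\Prob(C \cdot M[x_\sigma:=P]\vp) \le \Prob(C \cdot N[x_\sigma:=P]\vp)$ follows immediately.

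The main obstacle I anticipate is making the equation $\Prob(C\,[\_\,P] \cdot \lambda x_\sigma.M\vp) = \Prob(C \cdot M[x_\sigma:=P]\vp)$ fully rigorous: it requires knowing both that we can derive $C\,[\_\,P] \cdot \lambda x_\sigma.M \to C \cdot M[x_\sigma:=P]$ (giving $\ge$ via Lemma~\ref{lemma:red}) and a converse $\le$, which relies on the determinacy of how $\lambda$-abstractions are consumed. For the converse I would use adequacy to bypass the fine operational analysis entirely: $\Prob(C\,[\_\,P] \cdot \lambda x_\sigma.M\vp) = \extd h(\Eval{C[(\lambda x_\sigma.M)P]})$ and $\Eval{(\lambda x_\sigma.M)P} = \Eval{M[x_\sigma:=P]}$ by the substitution lemma (the denotational $\beta$-law used already in Proposition~\ref{prop:sound}), so both sides equal $\extd h(\Eval{C[M[x_\sigma:=P]]}) = \Prob(C \cdot M[x_\sigma:=P]\vp)$. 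Routing the equality through the denotational semantics in this way sidesteps the need to enumerate operational derivations and makes both directions uniform, so I expect that once adequacy is invoked the remaining steps are mechanical.
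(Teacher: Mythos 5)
Your proposal is correct and follows essentially the same route as the paper: both directions hinge on the equality $\Prob (C [(\lambda x_\sigma . M) P] \vp) = \Prob (C \cdot M [x_\sigma := P] \vp)$ obtained by routing through adequacy and the denotational $\beta$-law, and your "if" direction uses the same inspection of the context $C = E_0 E_1 \cdots E_n$ to conclude that $E_n$ must be an application context $[\_\, P]$. The only cosmetic difference is that in the "only if" direction you build the context $C\,[\_\, P]$ by hand where the paper invokes Lemma~\ref{lemma:precsim:app}, which is the same content.
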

\begin{proof}
  If
  $\lambda x_\sigma . M \precsim_{\sigma \to \underline\tau} \lambda
  x_\sigma . N$, then
  $(\lambda x_\sigma . M) P \precsim_{\underline\tau} (\lambda
  x_\sigma . N) P$ for every ground term $P \colon \sigma$, by
  Lemma~\ref{lemma:precsim:app}.  Hence for every ground evaluation
  context $C \colon \tau \vdash \F\Vt\unitT$,
  $\Prob (C \cdot (\lambda x_\sigma . M) P \vp) \leq \Prob (C \cdot
  (\lambda x_\sigma . N) P \vp)$.  Using
  Lemma~\ref{lemma:prob:disc:_}, we obtain
  $\Prob (C [(\lambda x_\sigma . M) P] \vp) \leq \Prob (C [(\lambda
  x_\sigma . N) P] \vp)$.  By adequacy (Proposition~\ref{prop:adeq}),
  $\Prob (C [(\lambda x_\sigma . M) P] \vp) = \extd h (\Eval {C
    [(\lambda x_\sigma . M) P]})$, where $h (\nu) = \nu (\{\top\})$.
  That is equal to $\extd h (\Eval {C [M [x_\sigma := P]]})$, hence to
  $\Prob (C [M [x_\sigma := P]] \vp) = \Prob (C \cdot M [x_\sigma :=
  P] \vp)$.  Similarly,
  $\Prob (C [(\lambda x_\sigma . N) P] \vp) = \Prob (C \cdot N
  [x_\sigma := P] \vp)$, so
  $\Prob (C \cdot M [x_\sigma := P] \vp) \leq \Prob (C \cdot N
  [x_\sigma := P] \vp)$.  Since $C$ is arbitrary,
  $M [x_\sigma := P] \precsim_{\underline\tau} N [x_\sigma := P]$.

  Conversely, assume that
  $M [x_\sigma:=P] \precsim_\tau N [x_\sigma:=P]$ for every ground
  term $P \colon \sigma$.  We wish to show that for every ground
  evaluation context
  $C \colon (\sigma \to \underline\tau) \vdash \F\Vt\unitT$,
  $\Prob (C \cdot \lambda x_\sigma . M \vp) \leq \Prob (C \cdot
  \lambda x_\sigma . N \vp)$.  Let us write $C$ as
  $E_0 E_1 E_2 \cdots E_n$, where each $E_i$ is of type
  $\anytype_{i+1} \vdash \anytype_i$, $\anytype_0 = \F\Vt\unitT$ and
  $\anytype_{n+1} = \sigma \to \underline\tau$.
  We cannot have $n=0$, since $\sigma \to \underline\tau$ is none of
  the types $\unitT$, $\Vt\unitT$, $\F\Vt\unitT$.  By inspection of the
  possible shape of the elementary context $E_n$, we see that it must
  be of the form $[\_ P]$ for some (ground) term $P \colon \sigma$.
  Let
  $C' = E_0 E_1 E_2 \cdots E_{n-1} \colon \underline\tau \to
  \F\Vt\unitT$.  Using Lemma~\ref{lemma:prob:disc:_} and adequacy as
  above,
  $\Prob (C \cdot \lambda x_\sigma . M \vp) = \Prob (C' \cdot M
  [x_\sigma:=P] \vp)$, and similarly with $N$ instead of $M$.  We have
  $\Prob (C' \cdot M [x_\sigma:=P] \vp) \leq \Prob (C' \cdot N
  [x_\sigma:=P] \vp)$ since
  $M [x_\sigma:=P] \precsim_{\underline\tau} N [x_\sigma:=P]$, so
  $\Prob (C \cdot \lambda x_\sigma . M \vp) \leq \Prob (C \cdot
  \lambda x_\sigma . N \vp)$.  \qed
\end{proof}

A final, expected, consequence of adequacy is the following.
\begin{proposition}
  \label{prop:fa:easy}
  For every value type $\tau$, for every two ground terms
  $M, N \colon \tau$, if $\Eval M \leq \Eval N$ then
  $M \precsim_\tau N$.
\end{proposition}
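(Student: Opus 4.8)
The plan is to reduce everything to a monotonicity argument routed through adequacy. Unfolding Definition~\ref{defn:context:preorder}, to establish $M \precsim_\tau N$ I must show that for every ground evaluation context $C \colon \tau \vdash \F\Vt\unitT$ we have $\Prob (C \cdot M \vp) \leq \Prob (C \cdot N \vp)$. First I would rewrite both sides in terms of the plugged-in terms: by Lemma~\ref{lemma:prob:disc:_}, $\Prob (C \cdot M \vp) = \Prob (C [M] \vp)$ and $\Prob (C \cdot N \vp) = \Prob (C [N] \vp)$, so it suffices to compare $\Prob (C [M] \vp)$ with $\Prob (C [N] \vp)$.

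Next I would invoke adequacy (Proposition~\ref{prop:adeq}). Since $C[M]$ and $C[N]$ are ground terms of type $\F\Vt\unitT$, adequacy gives $\Prob (C [M] \vp) = \extd h (\Eval {C [M]})$ and $\Prob (C [N] \vp) = \extd h (\Eval {C [N]})$, where $h$ is the map $\nu \in \Val_{\leq 1} \Sierp \mapsto \nu (\{\top\})$. Thus the entire operational comparison is transferred to the denotational side, and it remains to show $\extd h (\Eval {C [M]}) \leq \extd h (\Eval {C [N]})$.

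The final step is a pure monotonicity computation. Using the compositionality of the semantics of contexts established in Section~\ref{sec:soundness}, namely $\Eval {C [M]} = \Eval C (\Eval M)$ (and likewise for $N$, the environment being irrelevant as everything is ground), and recalling that $\Eval C$ is a composition of Scott-continuous maps and hence monotone, the hypothesis $\Eval M \leq \Eval N$ yields $\Eval {C [M]} = \Eval C (\Eval M) \leq \Eval C (\Eval N) = \Eval {C [N]}$. Since $\extd h$ is Scott-continuous, hence monotone, by Proposition~\ref{prop:Qtop}, item~2, we conclude $\extd h (\Eval {C [M]}) \leq \extd h (\Eval {C [N]})$. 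Chaining the equalities back through adequacy and Lemma~\ref{lemma:prob:disc:_} gives $\Prob (C \cdot M \vp) \leq \Prob (C \cdot N \vp)$ for the arbitrary context $C$, whence $M \precsim_\tau N$.

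There is no real obstacle here: the statement is essentially the assertion that the denotational semantics is monotone and that adequacy lets denotational order control the contextual preorder. The only point requiring mild care is that I am applying adequacy to the composite ground terms $C[M]$ and $C[N]$ rather than to $M$ and $N$ themselves, and that the semantics of a ground context is genuinely a monotone map of its hole; both facts are already in hand from Section~\ref{sec:soundness} and Proposition~\ref{prop:adeq}.
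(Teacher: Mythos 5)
Your proof is correct and follows essentially the same route as the paper's: transfer the operational comparison to the denotational side via adequacy, then conclude by monotonicity of the Scott-continuous semantics and of $\extd h$. The only cosmetic difference is that you quantify over contexts $C$ directly (via Lemma~\ref{lemma:prob:disc:_} and the context semantics of Section~\ref{sec:soundness}), whereas the paper quantifies over ground terms $Q \colon \tau \to \F\Vt\unitT$, establishing the applicative preorder and relying on its coincidence with the contextual one (Theorem~\ref{thm:context=app}).
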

\begin{proof}
  For every ground term $Q \colon \tau \to \F\Vt\unitT$,
  $\Eval {QM} = \Eval Q (\Eval M) \leq \Eval Q (\Eval N)$, hence
  $\extd h (\Eval {QM}) \leq \extd h (\Eval {QN})$ for every
  continuous map $h \colon \Val_{\leq 1} \Sierp \to [0, 1]$.  By
  adequacy (Proposition~\ref{prop:adeq}),
  $\Prob (QM \vp) = \extd h (\Eval {QM})$, and
  $\Prob (QN \vp) = \extd h (\Eval {QN})$ where $h$ is the map
  $\nu \mapsto \nu (\{\top\})$.  Therefore
  $\Prob (QM \vp) \leq \Prob (QN \vp)$.  \qed
\end{proof}
The converse implication, if it holds, is full abstraction.

\section{The Failure of Full Abstraction}
\label{sec:fail-full-abstr}

We will show that CBPV$(\Demon, \Nature)$ is not fully abstract, for
two reasons.  One is the expected lack of a parallel if operator, just
as in PCF \cite{Plotkin:PCF}.  The other is the lack of a statistical
termination tester, as in \cite{jgl-jlap14}.

Our main tool is a variant on our previous logical relations
${(\RR_\anytype)}_{\anytype \text{ type}}$.  This time, $\RR_\anytype$ will be an $I$-ary
relation, for some non-empty set $I$, between semantical
values---namely, $\RR_\anytype \subseteq \Eval \anytype^I$.  The construction is
parameterized by a finite family $\mathcal J$ of subsets of $I$, and
two $I$-ary relations $\tr, \underline\tr \subseteq [0, 1]^I$.  Again
we will also define auxiliary relations $\RR_\sigma^\perp$, and
$\RR_\sigma^*$, which are certain sets of $I$-tuples of
Scott-continuous maps from $\Eval \sigma$ to $[0, 1]$.  We write
$\vec a$ for ${(a_i)}_{i \in I}$, and similarly with $\vec \nu$,
$\vec Q$, etc. For every $\vec a \in [0, 1]^I$ and every subset $J$ of
$I$, we write $\vec a_{|J}$ for the vector obtained from $\vec a$ by
replacing every element $a_i$, $i \in J$, by $0$; namely, $a_{|J\;i}$
is equal to $0$ if $i \in J$, to $a_i$ otherwise.  We require the
following:
\begin{itemize}
\item $I \in \mathcal J$, $\mathcal J$ is closed under binary unions,
  and is well-founded: every filtered family ${(J_k)}_{k \in K}$ in
  $\mathcal J$ has a least element $J_{k_1}$, $k_1 \in K$.
\item $\tr$ is non-empty, closed under directed suprema, convex
  (notably, if ${(a_i)}_{i \in I}$ and ${(b_i)}_{i \in I}$ are in
  $\tr$ then so is ${((a_i+b_i)/2)}_{i \in I}$), and is
  \emph{$\mathcal J$-lower}, meaning that for every $\vec a \in \tr$,
  for every $J \in \mathcal J$, $\vec a_{|J}$ is in $\tr$;
\item $\underline\tr$ is closed under directed suprema, under pairwise
  minima (if ${(a_i)}_{i \in I}$ and ${(b_i)}_{i \in I}$ are in
  $\underline\tr$ then so is ${(\min (a_i+b_i))}_{i \in I}$), contains
  the all one vector $\vec 1$, and is $\mathcal J$-lower.
\end{itemize}
We define the following.
\begin{itemize}
\item $\vec a \in \RR_{\U\underline\tau}$ iff $\vec a \in \RR_\tau$;
\item $\vec a \in \RR_{\unitT}$ (resp., $\RR_{\intT}$) iff: the set
  $J = \{i \in I \mid a_i = \bot\}$ is in $\mathcal J$ and the
  components $a_i$, $i \in I \diff J$, are all equal;
\item $\vec a \in \RR_{\sigma \times \tau}$, where $a_i = (b_i, c_i)$
  for every $i \in I$, iff $\vec b \in \RR_\sigma$ and
  $\vec c \in \RR_\tau$;
\item $\vec \nu \in \RR_{\Vt{} \sigma}$ iff for all
  $\vec h \in \RR_\sigma^\perp$,
  ${(\int_{x \in \Eval \sigma} h_i (x) d\nu_i)}_{i \in I} \in \tr$;
\item $\vec h \in \RR_\sigma^\perp$ iff 
  for all $\vec a \in \RR_\sigma$, ${(h_i (a_i))}_{i \in I} \in \tr$;
  %
\item $\vec Q \in \RR_{\F{} \sigma}$ iff for all $\vec h \in
  \RR_\sigma^*$, ${(\extd {h_i} (Q_i))}_{i \in I} \in \underline\tr$;
\item $\vec h \in \RR_\sigma^*$ iff 
  for all $\vec a \in \RR_\sigma$,
  ${(h_i (a_i))}_{i \in I} \in \underline\tr$;
\item $\vec f \in \RR_{\sigma \to \underline\tau}$ iff for all $\vec a
  \in \RR_\sigma$, ${(f_i (a_i))}_{i \in I} \in \RR_{\underline\tau}$.
\end{itemize}
For every $I$-indexed tuple $\vec \rho$ of environments, finally,
$\vec \rho \in \RR_*$ if and only if for every variable $x_\sigma$,
${(\rho_i (x_\sigma))}_{i \in I} \in \RR_\sigma$.

\begin{lemma}
  \label{lemma:tr:basic}
  \begin{enumerate}
  \item For every $\vec \rho \in \RR_*$, for every
    CBPV$(\Demon, \Nature)$ term $M \colon \tau$,
    ${(\Eval M \rho_i)}_{i \in I}$ is in $\RR_\tau$.
  \item The same remains true for all CBPV$(\Demon, \Nature)+\pifzkw$
    terms if $\mathcal J \subseteq \{\emptyset, I\}$.
  \end{enumerate}
\end{lemma}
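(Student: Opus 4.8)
The plan is to prove both statements by structural induction on the term $M$, in exact parallel with the proof of Proposition~\ref{prop:R}, except that everything now takes place purely on the denotational side: there is no operational semantics to invoke, only the closure hypotheses imposed on $\tr$, $\underline\tr$ and $\mathcal J$. Throughout, given $\vec\rho\in\RR_*$ and $\vec a\in\RR_\sigma$, I write $\vec{\rho'}$ for the tuple ${(\rho_i[x_\sigma\mapsto a_i])}_{i\in I}$, which lies in $\RR_*$ again by the defining clause of $\RR_*$. This is what lets the induction pass through every binding construct: I instantiate the inductive hypothesis for the body at the extended tuple $\vec{\rho'}$.

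Before the main induction I would record, by a preliminary induction on types, three closure properties of each relation $\RR_\anytype$: (i) it contains the all-$\bot$ tuple; (ii) it is closed under componentwise directed suprema; and (iii) it is $\mathcal J$-lower, i.e.\ closed under replacing the components indexed by any $J\in\mathcal J$ with $\bot$. Each is inherited from the corresponding assumption on $\tr$ and $\underline\tr$ (closure under directed suprema, $\mathcal J$-lowerness, with $\vec 1\in\underline\tr$), together with $I\in\mathcal J$ and closure of $\mathcal J$ under binary unions; for $\RR_{\Vt\sigma}$ and $\RR_{\F\sigma}$ one additionally uses that integration is Scott-continuous in the valuation, that $\extd{h_i}$ is Scott-continuous and strict (Proposition~\ref{prop:Qtop}), and that $\extd{h_i}(\bot)=0$. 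These are the analogues of Lemma~\ref{lemma:MRA} and are what make the $\rec$, $\ifz$ and $;$ cases work.

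The bulk of the inductive cases are then immediate transcriptions of the helper arguments of Section~\ref{sec:adequacy}. Variables use $\vec\rho\in\RR_*$; abstraction, application, pairing and projections use the defining clauses of $\RR_{\sigma\to\underline\tau}$ and $\RR_{\sigma\times\tau}$; the numerals and $\underline*$ give constant tuples whose bottom-index set is empty, so the clauses for $\RR_{\intT}$, $\RR_{\unitT}$ hold, and $\suc$, $\pred$ preserve both the bottom-index set and the common value off it. For $\retkw M$ one computes $\int h_i\,d\delta_{a_i}=h_i(a_i)$ and invokes $\vec h\in\RR_\tau^\perp$; for $M\oplus N$ one uses linearity of integration and convexity of $\tr$; for $\produce M$ one uses $\extd{h_i}\circ\eta^\Smyth=h_i$; for $M\owedge N$ and $\abort$ one uses that $\extd{h_i}$ preserves binary infima and sends $\emptyset$ to $1$ (Proposition~\ref{prop:Qtop}, item~3), with closure of $\underline\tr$ under pairwise minima and $\vec 1\in\underline\tr$. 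The two binds $\pto M{x_\sigma}N$ and $\dokw {x_\sigma\leftarrow M};N$ are the only cases needing real work: exactly as in Lemmas~\ref{lemma:R:pto:F} and~\ref{lemma:R:bind}, I first show that ${(\extd{h_i}\circ f_i)}_i\in\RR_\sigma^*$ (resp.\ that ${(x\mapsto\int h_i\,d f_i(x))}_i\in\RR_\sigma^\perp$) using the hypothesis for the body at $\vec{\rho'}$, then feed this into the defining clause of $\RR_{\F\tau}$ (resp.\ $\RR_{\Vt\tau}$) for $\vec Q={(\Eval M\rho_i)}_i$; the identities $\extd{h_i}\circ\extd{f_i}=\extd{(\extd{h_i}\circ f_i)}$ (Proposition~\ref{prop:Qtop}, item~4) and~(\ref{eq:dagger}) make the two ends match. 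For $\ifz M N P$ and $M;N$, writing $J$ for the bottom-index set of the scrutinee tuple, the result is the $\mathcal J$-lowering by $J$ of ${(\Eval N\rho_i)}_i$ or ${(\Eval P\rho_i)}_i$ (the branch taken being uniform off $J$, since the non-bottom components agree), so property~(iii) applies. Finally, for $\rec x_\sigma.M$ I show by induction on $n$ that ${(f_i^n(\bot))}_i\in\RR_\sigma$, where $f_i=V\mapsto\Eval M\rho_i[x_\sigma\mapsto V]$, using (i) at $n=0$ and the hypothesis at $\vec{\rho'}={(\rho_i[x_\sigma\mapsto f_i^n(\bot)])}_i$ for the step, then pass to $\lfp f_i$ by (ii); this is the transcription of Corollary~\ref{corl:R:rec}.

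For part~2 the only new construct is $\pifz M N P\colon\F\tau$, and here the restriction $\mathcal J\subseteq\{\emptyset,I\}$ is not merely convenient but necessary, which I expect to be the main conceptual point. With $\vec a={(\Eval M\rho_i)}_i\in\RR_{\intT}$ and $J=\{i\mid a_i=\bot\}$, the restriction forces $J=\emptyset$ or $J=I$. If $J=\emptyset$, all $a_i$ share one value $n$, so the result tuple equals ${(\Eval N\rho_i)}_i$ (when $n=0$) or ${(\Eval P\rho_i)}_i$, both in $\RR_{\F\tau}$ by induction. If $J=I$, the result is ${(\Eval N\rho_i\wedge\Eval P\rho_i)}_i$, handled exactly as the $\owedge$ case. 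A general $\mathcal J$ breaks down precisely because for a proper nonempty $J\in\mathcal J$ the parallel conditional produces a \emph{mixed} tuple---taking the $\owedge$-branch on $J$ and a sequential branch off $J$---which the clause defining $\RR_{\F\tau}$ has no way to certify; this is the semantic signature of the fact that $\pifzkw$ is genuinely parallel, and it is exactly what will later separate CBPV$(\Demon,\Nature)$ from its $\pifzkw$-extension.
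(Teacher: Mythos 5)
Your proposal is correct and follows essentially the same route as the paper: the same preliminary closure properties of the relations $\RR_{\anytype}$ (containment of $\vec\bot$, closure under directed suprema, $\mathcal J$-lowering, proved by induction on types), followed by the same structural induction on terms, with the two bind constructs handled via $\vec h\mapsto{(\extd{h_i}\circ f_i)}_i$ together with Proposition~\ref{prop:Qtop}, item~4, and~(\ref{eq:dagger}), and the $\pifzkw$ case collapsing to one of the three sequential tuples under $\mathcal J\subseteq\{\emptyset,I\}$. The only cosmetic difference is that the paper isolates the substitution properties for $f_i^\dagger$ and $\extd{f_i}$ as standalone claims before the main induction, whereas you inline them into the $\kwfont{do}$ and $\kwfont{to}$ cases.
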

\begin{proof}
  We first show: $(a)$ $\RR_\anytype$ is closed under directed suprema taken
  in $\Eval \anytype^I$, and contains $\vec \bot = {(\bot)}_{i \in I}$.  This
  is by induction on the type $\anytype$.  Most cases are trivial.  We deal
  with the remaining ones:
  \begin{itemize}
  \item When $\anytype = \unitT$ or $\anytype = \intT$, $\vec\bot$ is in $\RR_\anytype$,
    because $I \in \mathcal J$.  We must show that the supremum
    $\vec a$ of every directed family ${(\vec a_k)}_{k \in K}$ in
    $\RR_\anytype$ is in $\RR_\anytype$.  Let $J_k = \{i \in I \mid a_{ki} = \bot\}$
    for each $k \in K$, and $J = \{i \in I \mid a_i = \bot\}$.  Define
    $k \sqsubseteq k'$ if and only if $\vec a_k \leq \vec a_{k'}$.
    Then $k \sqsubseteq k'$ implies $J_k \supseteq J_{k'}$, so
    ${(J_k)}_{k \in K}$ is a filtered family.  Since $\mathcal J$ is
    well-founded, there is an index $k_1 \in K$ such that
    $J_k=J_{k_1}$ for every $k \sqsupseteq k_1$.  Then, for every
    $i \in I$, $a_i = \sup_{k \sqsupseteq k_1} a_{ki}$ is equal to
    $\bot$ if $i \in J$, and is different from $\bot$ otherwise.  In
    particular, $J = J_{k_1}$.  Letting $b_k$ be the common value of
    the terms $a_{ki}$, $i \in I \diff J_k = I \diff J$, for each
    $k \sqsupseteq k_1$, we have $a_i = \sup_{k \sqsupseteq k_1} b_k$
    for every $i \in I \diff J$, and all these values are equal.
    Therefore $\vec a$ is in $\RR_\anytype$.
  \item When $\anytype = \Vt{} \sigma$, $\vec \bot$ is the tuple consisting of
    zero valuations only, and for every $\vec h \in \RR_\sigma^\perp$,
    ${(\int_{x \in \Eval \sigma} h_i (x) d0)}_{i \in I} = \vec 0$ is
    in $\tr$ (since $\tr$ is $\mathcal J$-lower and
    $I \in \mathcal J$), so $\vec \bot \in \RR_\anytype$.
    In order to show closure under directed suprema, let
    ${(\vec \nu_j)}_{j \in J}$ be a directed family in $\RR_\anytype$, with
    $\vec \nu_j = {(\nu_{ji})}_{i \in I}$.  Its supremum is
    $\vec \nu = {(\nu_i)}_{i \in I}$ where
    $\nu_i = \sup_{j \in J} \nu_{ji}$.
    For every $\vec h \in \RR_\sigma^\perp$,
    ${(\int_{x \in \Eval \sigma} h_i (x) d\nu_i)}_{i \in I} = {(\sup_j
      \int_{x \in \Eval \sigma} h_i (x) d\nu_{ji})}_{i \in I}$, since
    integration is Scott-continuous in the
    valuation.  
    That is a directed supremum of values in $\tr$, hence is in $\tr$.
    It follows that $\vec \nu$ is in $\RR_{\Vt{} \sigma}$.
  \item When $\anytype = \F{} \sigma$, $\vec \bot$ is in $\RR_{\F{} \sigma}$
    because, for every $\vec h \in \RR_\sigma^*$,
    ${(\extd {h_i} (\bot))}_{i \in I}$ is equal to $\vec 0$ (since
    $\extd {h_i}$ is strict), and that is in $\underline\tr$:
    $\vec 0 = \vec 1_{|I}$, which is in $\underline\tr$ because
    $\underline\tr$ is $\mathcal J$-lower and $I \in \mathcal J$.  As
    far as closure under directed suprema is concerned, let
    ${(\vec Q_j)}_{j \in J}$ be a directed family in $\RR_\anytype$, where
    $\vec Q_j = {(Q_{ji})}_{i \in I}$.  Let us write its supremum as
    $\vec Q = {(Q_i)}_{i \in I}$. For every $h \in \RR_\sigma^*$,
    ${(\extd {h_i} (Q_i))}_{i \in I}$ is the supremum of the family of
    tuples ${(\extd {h_i} (Q_{ji}))}_{i \in I}$, $j \in J$, since
    $\extd {h_i}$ is Scott-continuous (Proposition~\ref{prop:Qtop},
    item~2), and all those tuples are in $\underline\tr$.  Since the
    latter is closed under directed suprema,
    ${(\extd {h_i} (Q_i))}_{i \in I}$ is in $\underline\tr$, so
    $\vec Q$ is in $\RR_{\F{} \sigma}$.
  \end{itemize}

  Next, we claim that: $(b)$ for every $\vec \nu \in \RR_{\Vt{} \sigma}$
  and for every $\vec f \in \RR_{\sigma \to \Vt{} \tau}$,
  ${(f_i^\dagger (\nu_i))}_{i \in I}$ is in $\RR_{\Vt{} \tau}$.  To this
  end, let $\vec h \in \RR_\tau^\perp$.  Our goal is to show that
  ${(\int_{y \in \Eval \tau} h_i (y) df_i^\dagger (\nu_i))}_{i \in I}$
  is in $\tr$.  Using (\ref{eq:dagger}), this boils down to showing
  that ${(\int_{x \in \Eval \sigma} h'_i (x) d\nu_i)}_{i \in I}$ is in
  $\tr$, where $h'_i$ is the map
  $x \mapsto \int_{y \in \Eval \tau} h_i (y) df_i (x)$.  We note that
  $\vec h' = {(h'_i)}_{i \in I}$ is in $\RR_\sigma^\perp$: for every
  $\vec a \in \RR_\sigma$, ${(f_i (a_i))}_{i \in I}$ is in
  $\RR_{\Vt{} \tau}$ (by definition of $\RR_{\sigma \to \Vt{} \tau}$); by the
  definition of $\RR_{\Vt{} \tau}$, and using the fact that
  $\vec h \in \RR_\tau^\perp$,
  ${(\int_{y \in \Eval \tau} h_i (y) df_i (a_i))}_{i \in I}$ is in
  $\tr$, in other words ${(h'_i (a_i))}_{i \in I}$ is in $\tr$.  Since
  ${(\vec h'_i)}_{i \in I} \in \RR_\sigma^\perp$ and
  $\vec \nu \in \RR_{\Vt{} \sigma}$, the claim follows.

  We also claim that: $(c)$ for every $\vec Q \in \RR_{\F{} \sigma}$ and
  for every $\vec f \in \RR_{\sigma \to \F{} \tau}$,
  ${(\extd {f_i} (Q_i))}_{i \in I}$ is in $\RR_{\F{} \tau}$.  Let
  $\vec h \in \RR_\tau^*$.  We wish to show that
  ${(\extd {h_i} (\extd {f_i} (Q_i)))}_{i \in I}$ is in
  $\underline\tr$.  Using Proposition~\ref{prop:Qtop}, item~4, this
  amounts to showing that
  ${(\extd {(\extd {h_i} \circ f_i)} (Q_i))}_{i \in I}$ is in
  $\underline\tr$.  We note that ${(\extd {h_i} \circ f_i)}_{i \in I}$
  is in $\RR_\sigma^\perp$: for every $\vec a \in \RR_\sigma$,
  ${(f_i (a_i))}_{i \in I}$ is in $\RR_{\F{} \tau}$, and $\vec h$ is in
  $\RR_\tau^*$, so ${(\extd {h_i} (f_i (a_i)))}_{i \in I}$ is in
  $\underline\tr$.  Since $\vec Q \in \RR_{\F{} \sigma}$, and using the
  definition of $\RR_{\F{} \sigma}$,
  ${(\extd {(\extd {h_i} \circ f_i)} (Q_i))}_{i \in I}$ is in
  $\underline\tr$.

  Finally, for every vector $\vec a$ in $\Eval \anytype^I$, and for every
  subset $J$ of $I$, we define $\vec a_{|J}$ as the vector obtained
  from $\vec a$ by replacing each component $a_i$ with $i \in J$ by
  $\bot$.  We claim that: $(d)$ for every $\vec a \in \RR_\anytype$, for
  every $J \in \mathcal J$, $\vec a_{|J}$ is in $\RR_\anytype$.  This is by
  induction on $\anytype$.  For types $\anytype$ of the form $\sigma \times \tau$,
  $\U\underline\tau$, and $\sigma \to \underline\tau$, we simply call
  the induction hypothesis.  When $\anytype$ is $\unitT$ or $\intT$, let
  $J' = \{i \in I \mid a_i=\bot\}$, and
  $J'' = \{i \in I \mid a_{|J\;i}=\bot\}$.  We have $J''=J' \cup J$,
  and $J' \in \mathcal J$ by induction hypothesis.  Since $\mathcal J$
  is closed under binary unions, $J''$ is in $\mathcal J$.  Moreover,
  all the components $a_{|J\;i}$ with $i \in I \diff J''$ are equal to
  $a_i$, and they are all equal.  Therefore $\vec a_{|J}$ is in
  $\RR_\anytype$.  When $\anytype = \Vt\sigma$, let $\vec \nu \in \RR_{\Vt{} \sigma}$.
  For every $\vec h \in \RR_\sigma^\perp$,
  $\vec b = {(\int_{x \in \Eval \sigma} h_i (x) d\nu_i)}_{i \in I}$ is
  in $\tr$.  The vector
  ${(\int_{x \in \Eval \sigma} h_i (x) d\nu_{|J\;i})}_{i \in I}$ is
  equal to $\vec b_{|J}$, hence is in $\tr$ as well, since $\tr$ is
  $\mathcal J$-lower.  When $\anytype = \F{} \sigma$, let
  $\vec Q \in \RR_{\F{} \sigma}$.  For every $\vec h \in \RR_\sigma^*$,
  ${(\extd {h_i} (Q_i))}_{i \in I}$ is in $\underline\tr$.  Since each
  function $\extd {h_i}$ is strict, for every $i \in J$,
  $\extd {h_i} (Q_{|J\;i}) = \bot$.  For every $i \in I \diff J$,
  $\extd {h_i} (Q_{|J\;i}) = \extd {h_i} (Q_i)$.  Therefore
  ${(\extd {h_i} (Q_{|J\;i}))}_{i \in I}$ is equal to $\vec a_{|J}$
  where $\vec a = {(\extd {h_i} (Q_i))}_{i \in I}$, and that is in
  $\underline\tr$ by assumption and the fact that $\underline\tr$ is
  $\mathcal J$-lower.  Hence $\vec Q_{|J}$ is in $\RR_{\F{} \sigma}$.

  1. We now prove the lemma by induction on $M$.  For variables, this
  follows from the assumption that $\vec \rho \in \RR_*$.  The case of
  constants $\underline *$ and $\underline n$ is clear.  The case of
  $\lambda$-abstractions and of applications is immediate from the
  definition of $\RR_{\sigma \to \underline\tau}$.  Similarly, the
  case of terms $\pi_1 M$, $\pi_2 M$ and $\langle M, N \rangle$ are
  immediate from the definition of $\RR_{\sigma \times \tau}$.  For
  terms of the form $\thunk M$, with $M \colon \underline\tau$, or
  terms of the form $\force M$, with $M \colon \U\underline\tau$, the
  claim is trivial.

  For terms of the form $\produce M$, with $M \colon \sigma$, by
  induction hypothesis ${(\Eval M \rho_i)}_{i \in I}$ is in
  $\RR_\sigma$.  In order to show that ${(\Eval {\produce M}
    \rho_i)}_{i \in I} = {(\eta^\Smyth (\Eval M \rho_i))}_{i \in I}$
  is in $\RR_{\F{} \sigma}$, we fix $\vec h \in \RR_\sigma^*$, and we
  check that ${(\extd {h_i} (\eta^\Smyth (\Eval M \rho_i)))}_{i \in
    I}$ is in $\underline\tr$.  Since $\extd {h_i} (\eta^\Smyth (\Eval
  M \rho_i)) = h_i (\Eval M \rho_i)$ (Proposition~\ref{prop:Qtop},
  item~2), this follows from the definition of $\RR_\sigma^*$.

  For terms of the form $\pto M {x_\sigma} N$, with
  $M \colon \F{} \sigma$ and $N \colon \F{} \tau$, we must show that
  ${(\Eval {\pto M {x_\sigma} N} \rho_i)}_{i \in I}$ is in
  $\Eval {\F{} \tau}$.  Since
  $\Eval {\pto M {x_\sigma} N} \rho_i = \extd {f_i} (\Eval M \rho_i)$,
  where $f_i (V) = \Eval N \rho_i [x_\sigma \mapsto V]$, we will
  obtain this as a consequence of $(c)$ if we can show that
  ${(f_i)}_{i \in I}$ is in $\RR_{\sigma \to \F{} \tau}$.  For every
  $\vec a \in \RR_\sigma$,
  ${(\rho_i [x_\sigma \mapsto a_i])}_{i \in I}$ is in $\RR_*$, so
  ${(f_i (a_i))}_{i \in I} = {(\Eval N \rho_i [x_\sigma \mapsto
    a_i])}_{i \in I}$ is indeed in $\RR_{\sigma \to \F{} \tau}$.
  
  For terms of the form $\retkw M$, where $M \colon \tau$, we must
  show that for every $\vec h \in \RR_\tau^\perp$,
  ${(\int_{x \in \Eval \tau} h_i (x) d \Eval {\retkw M} \rho_i)}_{i
    \in I}$ is in $\tr$.  Since
  $\int_{x \in \Eval \tau} h_i (x) d \Eval {\retkw M} \rho_i = \int_{x
    \in \Eval \tau} h_i (x) d\delta_{\Eval M \rho_i} = h_i (\Eval M
  \rho_i)$, this follows from the fact that
  $\vec h \in \RR_\tau^\perp$ and the definition of $\RR_\tau^\perp$.

  For terms of the form $\dokw {x_\sigma \leftarrow M}; N$, with $M
  \colon \Vt{} \sigma$ and $N \colon \Vt{} \tau$, we wish to show that
  ${(\Eval {\dokw {x_\sigma \leftarrow M}; N} \rho_i)}_{i \in I}$ is
  in $\RR_{\Vt{} \tau}$, namely that ${(f_i^\dagger (\Eval N \rho_i))}_{i
    \in I}$ is in $\RR_{\Vt{} \tau}$, where $f_i (V) = \Eval N \rho_i
  [x_\sigma \mapsto V]$.  As in the case of $\kwfont{to}$ terms,
  ${(f_i)}_{i \in I}$ is in $\RR_{\sigma \to \Vt{} \tau}$, and ${(\Eval N
    \rho_i)}_{i \in I}$ is in $\RR_{\Vt{} \sigma}$, so the claim is proved
  by applying $(b)$.

  For terms of the form $\suc M$, with $M \colon \intT$, by induction
  hypothesis ${(\Eval M \rho_i)}_{i \in I}$ is in $\RR_{\intT}$.  Let
  $J = \{i \in I \mid \Eval M \rho_i=\bot\}$, and $n \in \Z$ be the
  common value of $\Eval M \rho_i$, $i \in I \diff J$ (or an arbitrary
  element of $\Z$ if $I=J$).  Then $J$ is also equal to
  $\{i \in I \mid \Eval {\suc M} \rho_i = \bot\}$, which is therefore
  in $\mathcal J$.  Moreover, $n+1$ is the common value of
  $\Eval {\suc M} \rho_i$, $i \in I \diff J$.  We reason similarly for
  terms of the form $\pred M$.

  For terms of the form $\ifz M N P$, where $M \colon \intT$ and
  $N, P \colon \anytype$, by hypothesis, in particular,
  ${(\Eval M \rho_i)}_{i \in I}$ is in $\RR_{\intT}$.  Let
  $J = \{i \in I \mid \Eval M \rho_i=\bot\}$, and $n$ be the common
  value of $\Eval M \rho_i$, $i \in I \diff J$ (or any element of $\Z$
  if $J=I$).  ${(\Eval {\ifz M N P} \rho_i)}_{i \in I}$ is equal to
  $\vec a_{|J}$, where $\vec a = {(\Eval N \rho_i)}_{i \in I}$ if
  $n=0$ and $\vec a = {(\Eval P \rho_i)}_{i \in I}$ if $n\neq 0$.  The
  latter is in $\RR_\anytype$ by induction hypothesis, so the former is in
  $\RR_\anytype$, too, by $(d)$.

  For terms of the form $M; N$, with $M \colon \unitT$ and
  $N \colon \anytype$, ${(\Eval M \rho_i)}_{i \in I}$ is in $\RR_{\unitT}$ by
  induction hypothesis. Let
  $J = \{i \in I \mid \Eval M \rho_i=\bot\}$.
  ${(\Eval {M; N})}_{i \in I}$ is equal to $\vec a_{|J}$ where
  $\vec a = {(\Eval N \rho_i)}_{i \in I}$, which is in $\RR_\anytype$ by
  induction hypothesis and $(d)$.

  For terms of the form $M \oplus N$, with $M, N \colon \Vt{} \tau$,
  we wish to show that the tuple
  ${(\Eval {M \oplus N} \rho_i)}_{i \in I}$ is in $\RR_{\Vt{} \tau}$.
  Let $\vec h \in \RR_\tau^\perp$.  By induction hypothesis, the
  tuples
  ${(\int_{x \in \Eval \tau} h_i (x) d \Eval M \rho_i)}_{i \in I}$ and
  ${(\int_{x \in \Eval \tau} h_i (x) d \Eval N \rho_i)}_{i \in I}$ are
  in $\tr$.  Since $\tr$ is convex,
  ${(\frac 1 2 (\int_{x \in \Eval \tau} h_i (x) d \Eval M \rho_i +
    \int_{x \in \Eval \tau} h_i (x) d \Eval N\rho_i ))}_{i \in I}$ is
  also in $\tr$, and that is just
  ${(\int_{x \in \Eval \tau} h_i (x) d \Eval {M \oplus N} \rho_i)}_{i
    \in I}$.

  For terms of the form $M \owedge N$, with $M, N \colon \F{} \tau$,
  we wish to show that the tuple
  ${(\Eval {M \owedge N} \rho_i)}_{i \in I}$ is in $\RR_{\F{} \tau}$.
  Let $\vec h \in \RR_\tau^*$.  By induction hypothesis,
  ${(\extd {h_i} (\Eval M \rho_i))}_{i \in I}$ and
  ${(\extd {h_i} (\Eval N \rho_i))}_{i \in I}$ are in $\underline\tr$.
  Since $\underline\tr$ is closed under pairwise minima, and
  $\extd {h_i}$ commutes with pairwise infima
  (Proposition~\ref{prop:Qtop}, item~3),
  ${(\extd {h_i} (\Eval {M \owedge N} \rho_i))}_{i \in I}$ is also in
  $\underline\tr$, showing the claim.

  For $\abort_{\F{} \tau}$ , we consider an arbitrary vector
  $\vec h \in \RR_\tau^*$, and we must show that
  ${(\extd {h_i} (\emptyset))}_{i \in I}$ is in $\underline\tr$.  By
  Proposition~\ref{prop:Qtop}, item~3, $\extd {h_i} (\emptyset)$ is
  the top element, $\vec 1$, of $[0, 1]$, and the claim follows from
  the fact that $\vec 1 \in \underline\tr$.

  For terms of the form $\rec x_\sigma . M$, let $f_i$ be the map
  defined by $f_i (V) = \Eval M \rho_i [x_\sigma \mapsto V]$.  For
  every $\vec a \in \RR_\sigma$,
  ${(\rho_i [x_\sigma \mapsto a_i])}_{i \in I}$ is in $\RR_*$, hence
  by induction hypothesis ${(f_i (a_i))}_{i \in I}$ is in
  $\RR_\sigma$.  Let us write ${(f_i (a_i))}_{i \in I}$ as
  $\vec f (\vec a)$.  By $(a)$, $\vec \bot = {(\bot)}_{i \in I}$ is in
  $\RR_\sigma$, and therefore $\vec f (\vec \bot)$,
  $\vec f (\vec f (\vec \bot))$, \ldots, are all in $\RR_\sigma$.  By
  the other part of $(a)$, $\sup_{n \in \nat} {(\vec f)}^n (\vec a)$
  in in $\RR_\sigma$ as well.  That tuple is just
  ${(\lfp (f_i))}_{i \in I}$, namely
  ${(\Eval {\rec x_\sigma . M} \rho_i)}_{i \in I}$.

  2. In the case of terms of the form $\pifz M N P$, of type $\anytype$, and
  assuming $\mathcal J \subseteq \{\emptyset, I\}$, the induction
  hypothesis ${(\Eval M \rho_i)}_{i \in I} \in \RR_{\intT}$ implies
  that all the values $\Eval M \rho_i$ are the same: letting
  $J = \{i \in I \mid \Eval M \rho_i=\bot\}$, either $J=I$ and they
  are all equal to $\bot$, or $J = \emptyset$ and they are all equal
  by definition of $\RR_{\intT}$.  Hence
  ${(\Eval {\pifz M N P} \rho_i)}_{i \in I}$ is equal to
  ${(\Eval N \rho_i)}_{i \in I}$, to ${(\Eval P \rho_i)}_{i \in I}$,
  or to ${(\Eval {M \owedge N} \rho_i)}_{i \in I}$, and they are all
  in $\RR_\anytype$.  \qed






  


\end{proof}

\subsection{The Need for Parallel If}
\label{sec:need-parallel-if}

In this section, we let $I = \{1, 2, 3\}$,
$\mathcal J = \{\emptyset, \{1,3\}, \{2,3\}, \{1,2,3\}\}$, $\tr$ be
arbitrary (e.g., the whole of $[0, 1]^3$), and
$\underline\tr = \{(0,0,0), (0,1,0), (1,0,0), (1,1,1)\}$.  The latter
is the smallest possible set that satisfies the constraints required
of $\underline\tr$, and is the graph of the infimum function
$\wedge \colon \{0, 1\}^2 \to \{0, 1\}$.

A triple $(n_1, n_2, n_3)$ in $\Z_\bot^3$ is in $\RR_{\intT}$ if and
only if $\{i \mid n_i=\bot\}$ is empty, equal to $\{1, 3\}$ or
$\{2, 3\}$, or to $\{1, 2, 3\}$, and all the non-bottom components are
equal.  Those are the triples $(n,n,n)$, $(\bot,n,\bot)$,
$(n,\bot,\bot)$ and $(\bot,\bot,\bot)$ (with $n \neq \bot$).

\begin{lemma}
  \label{lemma:RR:h:por}
  The triples $(h_1, h_2, h_3)$ in $\RR_{\intT}^*$ are the triples of
  characteristic maps $(\chi_{U_1}, \chi_{U_2}, \chi_{U_3})$ of open
  subsets $U_1$, $U_2$, $U_3$ of $\Z_\bot$ of one of the following
  forms:
  \begin{enumerate}
  \item $U_1=U_2=U_3= \Z_\bot$;
  \item $U_1=U_2=U_3=\{n\}$ for some $n \in \Z$;
  \item $U_1=U_3=\emptyset$, $U_2$ arbitrary;
  \item $U_2=U_3=\emptyset$, $U_1$ arbitrary.
  \end{enumerate}
\end{lemma}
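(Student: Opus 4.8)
The plan is to first pin down the individual maps $h_i$ as characteristic maps, and then to reduce membership in $\RR_{\intT}^*$ to a small system of Boolean equations that a case analysis on the behaviour at $\bot$ settles. I would begin by observing that $\underline\tr \subseteq \{0,1\}^3$, while every element of $\Z_\bot$ occurs as a coordinate of some tuple of $\RR_{\intT}$ (take $(n,n,n)$ for each $n \in \Z$, and $(\bot,\bot,\bot)$ for $\bot$). Consequently, for any $(h_1,h_2,h_3) \in \RR_{\intT}^*$, every value $h_i(x)$ lies in $\{0,1\}$; since $h_i$ is Scott-continuous, it is the characteristic map $\chi_{U_i}$ of a Scott-open set $U_i \subseteq \Z_\bot$. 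At this point I would recall the shape of the Scott-open subsets of $\Z_\bot$: they are exactly the subsets of $\Z$ together with $\Z_\bot$ itself, so that $\bot \in U_i$ forces $U_i = \Z_\bot$.

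Next I would exploit that $\underline\tr$ is precisely the graph of $\wedge \colon \{0,1\}^2 \to \{0,1\}$. Thus the defining clause of $\RR_{\intT}^*$ says exactly that $\chi_{U_3}(a_3) = \chi_{U_1}(a_1) \wedge \chi_{U_2}(a_2)$ for every $\vec a \in \RR_{\intT}$. Specialising to the four kinds of tuple listed just above the lemma turns this single condition into: from $(n,n,n)$, the set equation $U_3 \cap \Z = U_1 \cap U_2 \cap \Z$; and from $(\bot,n,\bot)$, $(n,\bot,\bot)$ and $(\bot,\bot,\bot)$, the relations $b_3 = b_1 \wedge \chi_{U_2}(n)$, $b_3 = \chi_{U_1}(n) \wedge b_2$ (for all $n$), and $b_3 = b_1 \wedge b_2$, between the truth values $b_i := [\bot \in U_i]$.

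Finally I would dispatch these equations by cases on $(b_1,b_2)$, translating $b_i = 1$ back into $U_i = \Z_\bot$ at each step. When $b_1 = b_2 = 1$ we get $U_1 = U_2 = U_3 = \Z_\bot$ (form~1). When exactly one of $b_1,b_2$ equals $1$, the mixed relations force the partner map to be identically $0$, and then the $(n,n,n)$ equation collapses $U_3$ to $\emptyset$, giving forms~3 and~4. When $b_1 = b_2 = 0$, all three sets lie in $\Z$ and the only surviving constraint is $U_3 = U_1 \cap U_2$, from which form~2 arises. The bulk of the work, and the step most likely to trip one up, is this bookkeeping at $\bot$: the decisive, easily-missed ingredient is that an open set meeting $\{\bot\}$ must be all of $\Z_\bot$, since that is exactly what makes the mixed cases degenerate to the empty-set forms rather than leaving $U_1$ and $U_2$ unconstrained.
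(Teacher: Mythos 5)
Your reduction of membership in $\RR_{\intT}^*$ to a constraint system is sound and follows essentially the paper's route: the $h_i$ are forced to take values in $\{0,1\}$, hence are characteristic maps of Scott-open subsets of $\Z_\bot$; $\underline\tr$ is the graph of $\wedge$ on $\{0,1\}$; and the four families of tuples $(n,n,n)$, $(\bot,n,\bot)$, $(n,\bot,\bot)$, $(\bot,\bot,\bot)$ yield exactly the equations you list. The gap is in your final case. When $b_1=b_2=0$ you correctly observe that the only surviving constraint is $U_3=U_1\cap U_2$ with $U_1,U_2,U_3\subseteq\Z$, but this does \emph{not} ``give form~2'': it is satisfied, for instance, by $U_1=\{1,2\}$, $U_2=\{2,3\}$, $U_3=\{2\}$, and one checks directly that $(\chi_{\{1,2\}},\chi_{\{2,3\}},\chi_{\{2\}})$ passes every test imposed by the four families of tuples of $\RR_{\intT}$, hence lies in $\RR_{\intT}^*$ by your own (correct) reduction while being of none of the four listed forms. (A smaller imprecision: forms~3 and~4 with the ``arbitrary'' set contained in $\Z$ also land in this last case, not in your mixed case, which only produces the instances where the arbitrary set is $\Z_\bot$.)

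The missing ingredient needed to collapse the last case to form~2 is the constraint coming from \emph{distinct} integers $n_1\in U_1$, $n_2\in U_2$: if $(h_1(n_1),h_2(n_2),h_3(\bot))$ were required to lie in the graph of $\wedge$, then $\bot\in U_3$ would follow, forcing $U_3=\Z_\bot$ and hence $U_1=U_2=\Z_\bot$. That is exactly how the paper argues: it replaces quantification over $\RR_{\intT}$ by quantification over all triples with $n_3=n_1\wedge n_2$, asserting that these two collections coincide. But under the stated definition of $\RR_{\intT}$ (the set of $\bot$-positions must lie in $\mathcal J$ and the non-$\bot$ components must all be equal), a triple $(n_1,n_2,\bot)$ with $n_1\neq n_2$ is \emph{not} in $\RR_{\intT}$, so that extra instance is not available to you. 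In other words, your more literal enumeration has surfaced the fact that the completeness half of the characterization does not follow from the constraints that $\RR_{\intT}$ actually supplies; before patching the case analysis you should decide whether the statement is to be weakened (only the easy inclusions --- that the case~1 and case~2 triples do belong to $\RR_{\intT}^*$ --- are used in Lemma~\ref{lemma:P:seq}) or whether $\RR_{\intT}$ was intended to contain the additional triples.
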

\begin{proof}
  A triple of Scott-continuous maps $(h_1, h_2, h_3)$ is in
  $\RR_{\intT}^*$ if and only if for all
  $(n_1, n_2, n_3) \in \RR_{\intT}$,
  $(h_1 (n_1), h_2 (n_2), h_3 (n_3)) \in \underline\tr$.  We claim
  that this is equivalent to: $(*)$ $h_1$, $h_2$, $h_3$ take their
  values in $\{0, 1\}$ and for all $n_1, n_2, n_3 \in \Z_\bot$ such
  that $n_3 = n_1 \wedge n_2$,
  $h_3 (n_3) = h_2 (n_1) \wedge h_2 (n_2)$.  In one direction, if
  $(h_1, h_2, h_3) \in \RR_{\intT}^*$, then since
  $(n, n, n) \in \RR_{\intT}$ for every $n \in \Z$,
  $(h_1 (n), h_2 (n), h_3 (n))$ is in $\underline\tr$ for every
  $n \in \Z$, in particular $h_1$, $h_2$, $h_3$ take their values in
  $\{0, 1\}$.  Also, for all $n_1, n_2, n_3 \in \Z_\bot$ such that
  $n_3 = n_1 \wedge n_2$, $h_3 (n_3) = h_2 (n_1) \wedge h_2 (n_2)$.
  Indeed, the triples $(n_1, n_2, n_3)$ such that
  $n_3 = n_1 \wedge n_2$ are of the form $(n, n, n)$, or
  $(\bot, n, \bot)$, or $(n, \bot, \bot)$, or $(\bot, \bot, \bot)$,
  with $n \in \Z$, hence are exactly the triples in $\RR_{\intT}$.
  Then $(h_1 (n_1), h_2 (n_2), h_3 (n_3))$ is in $\underline\tr$,
  hence $h_3 (n_3) = h_2 (n_1) \wedge h_2 (n_2)$ since $\underline\tr$
  is the graph of $\wedge$ on $\{0, 1\}$.  In the other direction, let
  us assume $(*)$.  For all $(n_1, n_2, n_3) \in \RR_{\intT}$, we have
  just seen that $n_3 = n_1 \wedge n_2$, so
  $(h_1 (n_1), h_2 (n_2), h_3 (n_3))$ is in the graph of $\wedge$ on
  $\{0, 1\}$.  It follows that $(h_1, h_2, h_3)$ is in
  $\RR_{\intT}^*$.

  Equivalently, $(*)$ means that $h_1$, $h_2$, $h_3$ are the
  characteristic maps $\chi_{U_1}$, $\chi_{U_2}$, $\chi_{U_3}$ of open
  subsets $U_1$, $U_2$, $U_3$ of $\Z_\bot$ such that: $(**)$ for all
  $n_1, n_2, n_3 \in \Z_\bot$ such that $n_3 = n_1 \wedge n_2$,
  $n_3 \in U_3$ if and only if $n_1 \in U_1$ and $n_2 \in U_2$.
  Clearly, any of the cases~1--4 implies $(**)$.

  Let us assume that $(**)$ holds.  By taking $n_1=n_2=n_3$, we obtain
  that $U_3 = U_1 \cap U_2$.  If $U_1$ is empty, then $U_3$ is empty
  and we are in case~3.  If $U_2$ is empty, then $U_3$ is empty and we
  are in case~4.  Henceforth, let us assume that $U_1$ and $U_2$ are
  non-empty.  If $\bot \in U_1$, then pick any $n_2 \in U_2$: we can
  then take $n_3=\bot$, so $\bot$ is in $U_3$ by $(**)$; this implies
  that $U_3=\Z_\bot$, hence also $U_1=U_2=\Z_\bot$, since
  $U_3 = U_1 \cap U_2$; hence we are in case~1.  We reason similarly
  if $\bot$ is in $U_2$.  It remains to examine the cases where $U_1$
  and $U_2$ are non-empty subsets of $\Z$.  If there are two distinct
  elements $n_1 \in U_1$ and $n_2 \in U_2$, then
  $n_3 = n_1 \wedge n_2$ is equal to $\bot$ and must be in $U_3$ by
  $(**)$, so $U_3=\Z_\bot$, and again $U_1=U_2=\Z_\bot$, meaning that
  we are in case~1.  Otherwise, $U_1=U_2=\{n\}$ for some $n \in \Z$,
  then $U_3=\{n\}$ as well, and we are in case~2.  \qed
\end{proof}

\begin{lemma}
  \label{lemma:P:seq}
  For every ground CBPV$(\Demon,\Nature)$ term
  $P \colon \intT \to \intT \to \F{}\intT$ such that
  $\Eval P (\bot) (0)=\Eval P (0) (\bot)=\{0\}$, the equality $\Eval P
  (\bot) (\bot) = \{0\}$ holds.
\end{lemma}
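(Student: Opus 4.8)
The plan is to deploy the ternary logical relation fixed in this subsection, with $I = \{1,2,3\}$, $\mathcal J = \{\emptyset,\{1,3\},\{2,3\},\{1,2,3\}\}$, $\tr$ arbitrary, and $\underline\tr = \{(0,0,0),(0,1,0),(1,0,0),(1,1,1)\}$ the graph of binary infimum on $\{0,1\}$. The three coordinates are arranged so that coordinate $3$ tracks the doubly undefined computation $\Eval P(\bot)(\bot)$, while coordinates $1$ and $2$ track the two hypotheses $\Eval P(\bot)(0)$ and $\Eval P(0)(\bot)$.

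First I would apply Lemma~\ref{lemma:tr:basic}(1) to the ground term $P$. Choosing any $\vec\rho \in \RR_*$ (for instance the triple of environments sending every variable to $\bot$, which lies in $\RR_*$ by part $(a)$ of that proof) and using that $\Eval P\rho_i$ is independent of $\rho_i$ since $P$ is ground, we get that the constant tuple $(\Eval P,\Eval P,\Eval P)$ lies in $\RR_{\intT \to \intT \to \F\intT}$. Feeding in $(\bot,0,\bot)$, which is in $\RR_{\intT}$ because $\{i \mid a_i=\bot\}=\{1,3\}\in\mathcal J$ and its sole non-$\bot$ entry is $0$, the definition of $\RR_{\sigma\to\underline\tau}$ gives $(\Eval P(\bot),\Eval P(0),\Eval P(\bot)) \in \RR_{\intT\to\F\intT}$. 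Feeding in $(0,\bot,\bot)\in\RR_{\intT}$ (now $\{i \mid b_i=\bot\}=\{2,3\}\in\mathcal J$) yields coordinatewise $(\Eval P(\bot)(0),\Eval P(0)(\bot),\Eval P(\bot)(\bot)) \in \RR_{\F\intT}$. Writing $Q_3 = \Eval P(\bot)(\bot)$ and invoking the two hypotheses, this reads $(\upc 0,\upc 0,Q_3) \in \RR_{\F\intT}$.

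It then remains to decode this membership. By definition of $\RR_{\F\intT}$, for every $\vec h \in \RR_{\intT}^*$ we must have $(\extd{h_1}(\upc 0),\extd{h_2}(\upc 0),\extd{h_3}(Q_3)) \in \underline\tr$, i.e.\ $\extd{h_3}(Q_3) = \extd{h_1}(\upc 0) \wedge \extd{h_2}(\upc 0)$. Since $0$ is maximal in $\Z_\bot$ we have $\upc 0 = \{0\} = \eta^\Smyth(0)$, so $\extd{h_j}(\upc 0) = h_j(0)$ by Proposition~\ref{prop:Qtop}, item~2. I would now run through the classification of $\RR_{\intT}^*$ given by Lemma~\ref{lemma:RR:h:por}. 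Taking the triple of case~1 ($U_1=U_2=U_3=\Z_\bot$) forces $\extd{h_3}(Q_3)=1$ with $h_3=\chi_{\Z_\bot}$ constantly $1$, which rules out $Q_3=\bot$ (for which $\extd{h_3}$ would be $0$). Taking the case~2 triple with $n=0$ forces $\bigwedge_{a\in Q_3}\chi_{\{0\}}(a)=1$, i.e.\ $Q_3 \subseteq \{0\}$; taking the case~2 triple with $n=1$ forces $\bigwedge_{a\in Q_3}\chi_{\{1\}}(a)=0$, which is impossible for $Q_3=\emptyset$ (the empty infimum is $1$) and hence gives $Q_3 \neq \emptyset$. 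A non-$\bot$ compact saturated set with $Q_3\subseteq\{0\}$ and $Q_3\neq\emptyset$ can only be $\{0\}=\upc 0$, establishing the claim.

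The construction is routine once the parameters are in place; the only real content, and the step I would treat most carefully, is this final case analysis, namely choosing precisely the test families $\vec h$ (the constant-$1$ triple, and the singleton triples at $n=0$ and $n=1$) that successively exclude $Q_3=\bot$, confine $Q_3$ inside $\{0\}$, and exclude $Q_3=\emptyset$. The design of $\underline\tr$ as the graph of $\wedge$ is exactly what converts membership in $\RR_{\F\intT}$ into these sequentiality constraints, and it is this that forbids the parallel behaviour $\Eval P(\bot)(0)=\Eval P(0)(\bot)=\{0\}$ together with $\Eval P(\bot)(\bot)=\bot$.
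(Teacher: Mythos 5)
Your proof is correct and follows essentially the same route as the paper: apply Lemma~\ref{lemma:tr:basic} to the ground term $P$ with the argument triples $(\bot,0,\bot)$ and $(0,\bot,\bot)$ to get $(\{0\},\{0\},\Eval P(\bot)(\bot))\in\RR_{\F{}\intT}$, then use the classification of $\RR_{\intT}^*$ from Lemma~\ref{lemma:RR:h:por} to force $\Eval P(\bot)(\bot)=\{0\}$. Your choice of test triples in the final case analysis (case~1 to exclude $\bot$, case~2 at $n=0$ to confine into $\{0\}$, case~2 at $n=1$ to exclude $\emptyset$) differs in inessential detail from the paper's but is equally valid.
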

\begin{proof}
  Let $Q \in \Eval {\F{} \intT}$ be such that $(\{0\}, \{0\}, Q)$ is
  in $\RR_{\F{}\intT}$.  Consider any triple
  $(\chi_{U_1}, \chi_{U_2}, \chi_{U_3}) \in \RR_{\intT}^*$, as given
  in Lemma~\ref{lemma:RR:h:por}.  By definition, and recalling that
  $\underline\tr$ is the graph of the infimum map,
  $\extd {\chi_{U_3}} (Q) = \extd {\chi_{U_1}} (\{0\}) \wedge \extd
  {\chi_{U_2}} (\{0\}) = \chi_{U_1} (0) \wedge \chi_{U_2} (0)$.  If
  $Q$ were empty, then $\extd {\chi_{U_3}} (Q)$ would be equal to $1$,
  so $\chi_{U_1} (0)$ and $\chi_{U_2} (0)$ would be equal to $1$, and
  that is contradicted by the case~1 triple $U_1=U_2=U_3=\Z_\bot$ for
  example.  By considering the case~2 triple $U_1=U_2=U_3=\{0\}$, we
  obtain that $\extd {\chi_{\{0\}}} (Q)=1$, namely that
  $Q \subseteq \{0\}$.  Therefore the only $Q \in \Eval {\F{} \intT}$
  such that $(\{0\}, \{0\}, Q) \in \RR_{\F{}\intT}$ is $\{0\}$.  (One
  can also check that $(\{0\}, \{0\}, \{0\})$ is indeed in
  $\RR_{\F{}\intT}$, but that will not be needed.)

  By Lemma~\ref{lemma:tr:basic}, item~1, for all
  $(m_1, m_2, m_3) \in \RR_{\intT}$ and
  $(n_1, n_2, n_3) \in \RR_{\intT}$, the triple
  $(\Eval P (m_1) (n_1), \Eval P (m_2) (n_2), \Eval P (m_3) (n_3))$ is
  in $\RR_{\F{}\intT}$.  The triples $(0, \bot, \bot)$ and
  $(\bot, 0, \bot)$ are in $\RR_{\intT}$.  Hence
  $(\Eval P (0) (\bot), \Eval P (\bot) (0), \Eval P (\bot) (\bot))$ is
  in $\RR_{\F{}\intT}$.  Explicitly,
  $(\{0\}, \{0\}, \Eval P (\bot) (\bot))$ is in $\RR_{\F{}\intT}$.  We
  have just seen that this implies $\Eval P (\bot) (\bot) = \{0\}$.
  \qed
\end{proof}

We introduce the following abbreviations.
\begin{itemize}
\item $\Omega_\sigma$ denotes $\rec x_\sigma . x_\sigma$ for
  every value type $\sigma$.  We have $\Eval {\Omega_\sigma} = \bot$.
\item $\Omega_{\underline\tau}$ denotes
  $\force \Omega_{\U\underline\tau}$,
  for every computation type $\underline\tau$.  We have
  $\Eval {\Omega_{\underline\tau}} = \bot$.
\item For all $M \colon \F{}\intT$ and $N \colon \F{} \unitT$,
  $M\eq \underline 0\mathbin{\&} N$ abbreviates
  $\pto M {x_{\intT}} {\ifz {x_{\intT}} N {\Omega_{\F{}\unitT}}}$, where
  $x_{\intT}$ is not free in $N$.
  $\Eval {M \eq \underline 0 \mathbin{\&} N} \rho$ is equal to
  $\Eval N \rho$ if $\Eval M \rho = \{0\}$, to $\emptyset$ if
  $\Eval M \rho = \emptyset$, and to $\bot$ in all other cases.
\item Similarly, $M \eq \underline 1\mathbin{\&} N$ abbreviates
  $\pto M {x_{\intT}} {\ifz {(\pred x_{\intT})} N
    {\Omega_{\F{}\unitT}}}$, so that
  $\Eval {M \eq \underline 1 \mathbin{\&} N} \rho$ is equal to
  $\Eval N \rho$ if $\Eval M \rho = \{1\}$, to $\emptyset$ if
  $\Eval M \rho = \emptyset$, and to $\bot$ in all other cases.
\item Finally, for all $M, N \colon \F{}\unitT$, let $M \mathbin{\&} N$
  abbreviate $\pto M {x_{\intT}} N$, where $x_{\intT}$ is not free in
  $N$, so that $\Eval {M \mathbin{\&} N} \rho$ is equal to
  $\Eval N \rho$ if $\Eval M \rho=\{\top\}$ or if
  $\Eval M \rho = \Sierp$, to $\emptyset$ if $\Eval M \rho=\emptyset$
  and to $\bot$ if $\Eval M \rho=\bot$.
\end{itemize}
We let $\mathbin{\&}$ associate to the right, so $A \mathbin{\&} B
\mathbin{\&} C$ means $A \mathbin{\&} (B \mathbin{\&} C)$.

\begin{proposition}
  \label{prop:RR:por}
  For every term $P \colon \U{} (\intT \to \intT \to \F{}\intT)$, let:
  \begin{align*}
    M(P) & = \force P (\Omega_{\intT}) (\underline 0) \eq \underline 0 \mathbin{\&}
           \force P (\underline 0) (\Omega_{\intT}) \eq \underline 0
           \mathbin{\&} \produce \underline* \\
    N(P) & = M(P) \mathbin{\&} \force P (\Omega_{\intT})
           (\Omega_{\intT}) \eq \underline 0
           \mathbin{\&} \produce \underline*, \\
  \end{align*}
  We also define 
  $M$ as $\lambda g . M (g)$, and $N$ as $\lambda g . N (g)$, where
  $g$ has type $\U{} (\intT \to \intT \to \F{}\intT)$.

  In CBPV$(\Demon, \Nature)$,
  $M \precsim_{\U{} (\intT \to \intT \to \F{}\intT) \to \F{}\unitT} N$, but
  $\Eval M \not\leq \Eval N$.
\end{proposition}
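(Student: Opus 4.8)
The plan is to prove the two assertions separately: the denotational non\nobreakdash-inequality $\Eval M \not\leq \Eval N$ is witnessed by a single ``parallel'' element that fails to be definable, while the contextual inequality $M \precsim N$ follows from the sequentiality encapsulated in Lemma~\ref{lemma:P:seq}.

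For the denotational part, recall that $\Eval M$ and $\Eval N$ are Scott-continuous maps from $\Eval{\U{}(\intT \to \intT \to \F{}\intT)} = [\Z_\bot \to [\Z_\bot \to \Smyth^\top_\bot (\Z_\bot)]]$ to $\Smyth^\top_\bot (\Sierp)$, ordered pointwise, so it suffices to exhibit one argument on which the inequality fails. I would take the ``parallel'' map $\phi$ defined by $\phi (m) (n) = \{0\}$ if $m = 0$ or $n = 0$, and $\phi (m) (n) = \bot$ otherwise; a quick monotonicity check, together with the fact that monotone maps from $\Z_\bot$ to a dcpo are automatically Scott-continuous, shows that $\phi$ is a genuine element of the domain. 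Then $\phi (\bot) (0) = \phi (0) (\bot) = \{0\}$ while $\phi (\bot) (\bot) = \bot$. Unwinding the abbreviations $\eq \underline 0 \mathbin{\&}$ and $\mathbin{\&}$ via the semantics supplied for them, $\Eval M (\phi) = \{\top\}$ because both guards $\phi (\bot) (0)$ and $\phi (0) (\bot)$ equal $\{0\}$; whereas in $\Eval N (\phi)$ the extra conjunct tests $\phi (\bot) (\bot) = \bot$, which is neither $\{0\}$ nor $\emptyset$, forcing the whole expression to $\bot$. Since $\{\top\} \not\leq \bot$ in $\Smyth^\top_\bot (\Sierp)$, we obtain $\Eval M (\phi) \not\leq \Eval N (\phi)$, hence $\Eval M \not\leq \Eval N$.

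For the contextual inequality, I would first invoke Proposition~\ref{prop:precsim:lambda}: since $M = \lambda g . M (g)$ and $N = \lambda g . N (g)$ with $g$ of type $\U{}(\intT \to \intT \to \F{}\intT)$, it is enough to show $M (P) \precsim_{\F{}\unitT} N (P)$ for every ground term $P$ of that type. The key point is that $\force P$ is now \emph{definable}: its denotation $\phi = \Eval{\force P} = \Eval P$ is a Scott-continuous map to which Lemma~\ref{lemma:P:seq} applies. I claim $\Eval{M (P)} = \Eval{N (P)}$. Indeed, a case analysis on the guards shows that $\Eval{M (P)} \in \{\{\top\}, \emptyset, \bot\}$, and since $N (P) = M (P) \mathbin{\&} (\ldots)$ the two denotations already agree when $\Eval{M (P)}$ is $\emptyset$ or $\bot$; the only case needing an argument is $\Eval{M (P)} = \{\top\}$, which forces $\phi (\bot) (0) = \phi (0) (\bot) = \{0\}$, whence Lemma~\ref{lemma:P:seq} yields $\phi (\bot) (\bot) = \{0\}$, so the extra conjunct in $N (P)$ also produces $\{\top\}$ and $\Eval{N (P)} = \{\top\} = \Eval{M (P)}$. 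Having $\Eval{M (P)} = \Eval{N (P)}$, for every ground context $C \colon \F{}\unitT \vdash \F\Vt\unitT$ the compositionality $\Eval{C [M (P)]} = \Eval C (\Eval{M (P)}) = \Eval C (\Eval{N (P)}) = \Eval{C [N (P)]}$, together with adequacy (Proposition~\ref{prop:adeq}) and Lemma~\ref{lemma:prob:disc:_}, gives $\Prob (C \cdot M (P) \vp) = \Prob (C \cdot N (P) \vp)$; in particular $M (P) \precsim_{\F{}\unitT} N (P)$, and Proposition~\ref{prop:precsim:lambda} then yields $M \precsim N$.

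The main obstacle, and the conceptual heart of the statement, is the claim $\Eval{M (P)} = \Eval{N (P)}$ for ground $P$, which rests entirely on Lemma~\ref{lemma:P:seq}. That lemma is exactly where the absence of parallel if is used: a CBPV$(\Demon, \Nature)$-definable map that returns $\{0\}$ whenever one argument is $0$ and the other is $\bot$ must also return $\{0\}$ when both arguments are $\bot$. The witness $\phi$ of the first part violates precisely this implication, which is why $\phi$ is not definable and why $M$ and $N$, though agreeing at every definable argument, are separated by the denotational order. Everything else is bookkeeping: unwinding the $\eq \underline 0 \mathbin{\&}$ and $\mathbin{\&}$ abbreviations through their semantics, and the routine verification that $\phi$ is Scott-continuous.
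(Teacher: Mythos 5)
Your proof is correct and follows essentially the same route as the paper: a denotational counterexample given by a ``parallel'' map (your $\phi$ is precisely the alternative witness the paper itself mentions alongside parallel or), and the contextual inequality obtained by showing $\Eval{M(P)} = \Eval{N(P)}$ for every ground $P$ via Lemma~\ref{lemma:P:seq}, then concluding with Proposition~\ref{prop:precsim:lambda}. The only cosmetic difference is that you pass from $\Eval{M(P)} = \Eval{N(P)}$ to $M(P) \precsim_{\F{}\unitT} N(P)$ directly through compositionality and adequacy rather than by citing Proposition~\ref{prop:fa:easy}, which is if anything slightly cleaner since that proposition is stated for value types.
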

\begin{proof}
  $\Eval M$ applied to any Scott-continuous map
  $G \colon \Z_\bot \to \Z_\bot \to \Smyth^\top_\bot (\Z_\bot)$
  returns:
  \begin{itemize}
  \item $\{\top\}$ if $G (\bot) (0)=G (0) (\bot)=\{0\}$;
  \item $\emptyset$ if $G (\bot) (0)=\emptyset$ or if
    $G (\bot) (0)=\{0\}$ and $G (0) (\bot)=\emptyset$;
  \item and $\bot$ in all other cases.
  \end{itemize}
  Then
  $\Eval N$ applied to $G$ returns:
  \begin{itemize}
  \item $\{\top\}$ if $\Eval M (G)=\{\top\}$ and
    $G (\bot) (\bot) = \{0\}$;
  \item $\emptyset$ if $\Eval M (G)=\{\top\}$ and
    $G (\bot) (\bot)=\emptyset$;
  \item $\emptyset$ if $\Eval M (G)=\emptyset$;
  \item $\bot$ in all other cases.
  \end{itemize}
  In particular, $\Eval M \not\leq \Eval N$: defining $G$ to be the
  parallel or map ($G (0)(n)=G(n)(0)=\{0\}$ for every $n \in \Z_\bot$,
  $G (1)(1)=\{1\}$, $G (m)(n)=\bot$ for all
  $m, n \in \Z_\bot \diff \{0\}$ such that $(m,n)\neq (1,1)$),
  $\Eval M (G) = \{\top\}$, but $\Eval N (G) = \bot$.  Note, by the
  way, that the argument would also work with other choices of map
  $G$, for example $G (0)(n)=G(n)(0)=\{0\}$ for every $n \in \Z_\bot$,
  and $G(m)(n)=\bot$ in all other cases.

  For every ground CBPV$(\Demon,\Nature)$ term
  $P \colon \intT \to \intT \to \F{}\intT$, $\Eval {M(P)}=\{\top\}$ if
  and only if $\Eval P (\bot) (0) = \Eval P (0) (\bot) = \{0\}$, and
  if so, $\Eval P (\bot) (\bot) = \{0\}$ by Lemma~\ref{lemma:P:seq}.
  Taking $G = \Eval P$, it follows that the second case of the
  definition of $\Eval N (G)$ does not occur, so
  $\Eval {N (P)} = \Eval N (G)$ is equal to $\{\top\}$ if
  $\Eval {M(P)}=\{\top\}$ (and then $G (\bot) (\bot) = \{0\}$ is
  automatic), to $\emptyset$ if $\Eval {M(P)} = \emptyset$, and to
  $\bot$ in all other cases.  Hence $\Eval {N(P)} = \Eval {M(P)}$.
  Since in particular $\Eval {M(P)} \leq \Eval {N(P)}$, by
  Proposition~\ref{prop:fa:easy}, $M(P) \precsim_{\F{}\unitT} N(P)$.
  Since $P$ is arbitrary,
  $M \precsim_{\U{} (\intT \to \intT \to \F{}\intT) \to \F{}\unitT} N$ by
  Proposition~\ref{prop:precsim:lambda}.  \qed
\end{proof}

\subsection{The Need for Statistical Termination Testers}
\label{sec:need-stat-term}

We turn to justify the need for a $\bigcirc$ operator, following
similar ideas as in \cite[Proposition~8.5]{jgl-jlap14}.

Here we let $I = \{1, 2\}$, $\mathcal J = \{\emptyset, I\}$,
$\tr = \{(a_1, a_2) \in [0, 1]^2 \mid a_1 + 1 \geq 2a_2\}$, and
$\underline\tr = \{(a_1, a_2) \in [0, 1]^2 \mid a_1 \geq a_2\}$.

In that case, $(h_1, h_2) \in \RR_{\unitT}^\perp$ if and only if for every
$b \in \Sierp$, $h_1 (b) +1 \geq 2h_2 (b)$.  Letting $\alpha_i = h_i
(\bot)$ and $\beta_i = h_i (\top)$, $i \in \{1, 2\}$, $(h_1, h_2) \in
\RR_{\unitT}^\perp$ if and only if:
\begin{align}
  \label{eq:a}
  \alpha_1 + 1 &\geq 2\alpha_2 \\
  \label{eq:b}
  \beta_1 + 1 &\geq 2 \beta_2 \\
  \label{eq:1}
  1 \geq \beta_1 &\geq \alpha_1 \geq 0 \\
  \label{eq:2}
  1 \geq \beta_2 &\geq \alpha_2 \geq 0.
\end{align}




This defines a (bounded) polytope of $\real^4$, and we claim that its
vertices are:
\begin{equation}
  \label{eq:RR:vertices}
  \begin{array}{cccc}
    \alpha_1 & \beta_1 & \alpha_2 & \beta_2 \\
    \hline
    0 & 0 & 0 & 0 \\
    0 & 0 & 0 & 1/2 \\
    0 & 0 & 1/2 & 1/2 \\
    0 & 1 & 0 & 0 \\
    0 & 1 & 0 & 1 \\
    0 & 1 & 1/2 & 1/2 \\
    0 & 1 & 1/2 & 1 \\
    1 & 1 & 0 & 0 \\
    1 & 1 & 0 & 1 \\
    1 & 1 & 1 & 1 \\
  \end{array}
\end{equation}
We check that those points satisfy all the given inequalities.
Conversely, let $(\alpha_1, \beta_1, \alpha_2, \beta_2)$ satisfy
(\ref{eq:a})--(\ref{eq:2}).  Because it satisfies (\ref{eq:1}),
$(\alpha_1, \beta_1)$ is a linear convex combination
$a (0, 0) + b (0, 1) + c (1, 1)$, where $a, b, c \geq 0$ and
$a+b+c=1$, and the remaining inequalities become
$c+1 \geq 2 \alpha_2$, $b+c+1 \geq 2 \beta_2$,
$1 \geq \beta_2 \geq \alpha_2 \geq 0$, whose solutions in
$(\alpha_2, \beta_2)$ are the convex combinations of $(0, 0)$,
$(0, (b+c+1)/2)$, $((c+1)/2, (c+1)/2)$, and $((c+1)/2, (b+c+1)/2)$ (as
a two-dimensional picture will show), say
$a' (0, 0) + b' (0, (b+c+1)/2) + c' ((c+1)/2, (c+1)/2) + d' ((c+1)/2,
(b+c+1)/2)$, with $a', b', c', d' \geq 0$ and $a'+b'+c'+d'=1$.  Since
the latter is affine in $a$, $b$ and $c$, it follows that the
solutions of (\ref{eq:a})--(\ref{eq:2}) are all of the form $a$ times
$(0, 0) . (a' (0, 0) + b' (0, 1/2) + c' (1/2, 1/2) +d' (1/2, 1/2))$,
plus $b$ times
$(0, 1) . (a' (0, 0) + b' (0, 1) + c' (1/2, 1/2) + d' (1/2, 1))$, plus
$c$ times $(1, 1) . (a' (0, 0) + b' (0, 1) + c' (1, 1) + d' (1, 1))$
(where $.$ denotes concatenation of tuples, i.e.,
$(x,y) . (z,t)=(x,y,z,t)$), hence a a convex combination of the 10
tuples of the above table.

  

\begin{lemma}
  \label{lemma:RR:delta}
  For all $a_1, a_2 \in [0, 1]$,
  $(a_1 \delta_\top, a_2 \delta_\top) \in \RR_{\Vt{} \unitT}$ if and only
  if $a_1+1 \geq 2a_2$.
\end{lemma}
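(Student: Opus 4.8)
The plan is to unfold the definition of $\RR_{\Vt{} \unitT}$ and reduce the claim to a finite check against the vertices already tabulated in (\ref{eq:RR:vertices}). By definition, $(a_1\delta_\top, a_2\delta_\top) \in \RR_{\Vt{} \unitT}$ says that for every $(h_1, h_2) \in \RR_{\unitT}^\perp$ the pair ${(\int_{x \in \Sierp} h_i (x)\, d(a_i\delta_\top))}_{i \in I}$ lies in $\tr$. Since integration is linear in the valuation and $\int_{x} h_i (x)\, d\delta_\top = h_i (\top)$, this pair is exactly $(a_1\beta_1, a_2\beta_2)$, where I write $\beta_i = h_i (\top)$ (and $\alpha_i = h_i (\bot)$, as in the running notation). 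Hence membership is equivalent to the requirement that
\[
  a_1\beta_1 + 1 \geq 2a_2\beta_2 \quad\text{for all } (h_1, h_2) \in \RR_{\unitT}^\perp .
\]

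Next I would use the description of $\RR_{\unitT}^\perp$ obtained just above the lemma: it is the closed polytope in $[0,1]^4$ cut out by (\ref{eq:a})--(\ref{eq:2}), and every feasible tuple $(\alpha_1, \beta_1, \alpha_2, \beta_2)$ is a convex combination of the ten vertices listed in (\ref{eq:RR:vertices}). The quantity $a_1\beta_1 + 1 - 2a_2\beta_2$ is an affine function of $(\alpha_1, \beta_1, \alpha_2, \beta_2)$, so its value at any convex combination is the corresponding convex combination of its values at the vertices; consequently it is nonnegative on the whole polytope if and only if it is nonnegative at each vertex. Evaluating at the vertices, the distinct pairs $(\beta_1, \beta_2)$ that occur are $(0,0)$, $(0, \tfrac12)$, $(1,0)$, $(1,\tfrac12)$ and $(1,1)$, yielding the constraints $1 \geq 0$, $1 \geq a_2$, $a_1 + 1 \geq 0$, $a_1 + 1 \geq a_2$ and $a_1 + 1 \geq 2a_2$. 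As $a_1, a_2 \in [0,1]$, the first four hold unconditionally, so the entire family of constraints collapses to the single inequality $a_1 + 1 \geq 2a_2$. This proves both directions at once: membership forces the binding constraint, and conversely the binding constraint dominates all the others.

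To make the converse completely explicit (rather than through the collapse above), I would exhibit one witnessing test pair: take $h_1 = h_2$ to be the Scott-continuous map $\Sierp \to [0,1]$ sending $\bot$ to $0$ and $\top$ to $1$, which is the vertex $(\alpha_1, \beta_1, \alpha_2, \beta_2) = (0,1,0,1)$ and manifestly satisfies (\ref{eq:a})--(\ref{eq:2}), hence lies in $\RR_{\unitT}^\perp$; instantiating the membership condition at this pair gives $a_1 + 1 \geq 2a_2$ directly. The only genuinely delicate point in the whole argument is the affine-minimization reduction: one must be sure that $\RR_{\unitT}^\perp$ coincides with the convex hull of the ten tabulated vertices (which is exactly what was established before the lemma) and that the objective is affine, so that testing the vertices is both legitimate and exhaustive, with $(1,1)$ singled out as the binding one. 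Everything else is routine arithmetic with numbers in $[0,1]$.
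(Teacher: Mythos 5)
Your proposal is correct and follows essentially the same route as the paper's proof: unfold the definition of $\RR_{\Vt{}\unitT}$ to get the condition $a_1\beta_1 + 1 \geq 2a_2\beta_2$ for all $(h_1,h_2)\in\RR_{\unitT}^\perp$, then use affineness of the objective to reduce to the vertices of the polytope (the paper discards the redundant vertex $(\beta_1,\beta_2)=(1,1/2)$ as a convex combination, while you keep it and note its constraint holds trivially). Your extra explicit witness $h_1=h_2=\chi_{\{\top\}}$ is a harmless addition, though it establishes the ``only if'' direction rather than what you call the converse.
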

\begin{proof}
  We have $(a_1 \delta_\top, a_2 \delta_\top) \RR_{\Vt{} \unitT}$ if and only
  if for all $(h_1, h_2) \in \RR_{\unitT}^\perp$, $a_1 h_1 (\top) + 1
  \geq 2 a_2 h_2 (\top)$.  Writing $h_1$ and $h_2$ as above, the
  domain of variation of $(\alpha_1, \beta_1, \alpha_2, \beta_2)$ is
  the convex hull of the 10 points in (\ref{eq:RR:vertices}), and we
  must check that $a_1 \beta_1 + 1 \geq 2 a_2 \beta_2$ for all those
  $4$-tuples.  The domain of variation of the pairs $(\beta_1,
  \beta_2)$ alone is the convex hull of $(0, 0)$, $(0, 1/2)$, $(1,
  0)$, $(1, 1)$ (and $(1, 1/2)$, which is already a convex combination
  of the others).  By linearity, it is equivalent to check $a_1
  \beta_1 + 1 \geq 2 a_2 \beta_2$ for just those four values of
  $(\beta_1, \beta_2)$.  Therefore $(a_1 \delta_\top, a_2 \delta_\top)
  \RR_{\Vt{} \unitT}$ if and only if $1 \geq 0$, $1 \geq a_2$, $a_1 + 1
  \geq 0$, and $a_1 + 1 \geq 2 a_2$.  Since the first three are always
  true, only the last one remains.  \qed
%
%
\end{proof}

\begin{lemma}
  \label{lemma:RR:nu}
  Let $\nu_1, \nu_2 \in \Eval {\Vt{} \unitT}$.  If $(\nu_1, \nu_2) \in
  \RR_{\Vt\unitT}$ then $\nu_1 + \delta_\top \geq 2 \nu_2$.
\end{lemma}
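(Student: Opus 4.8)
The plan is to test the hypothesis $(\nu_1,\nu_2)\in\RR_{\Vt\unitT}$ against two carefully chosen diagonal pairs of functions, one that reads off the value of the valuations on the open set $\{\top\}$ and one that reads off their value on all of $\Sierp$. The key preliminary observation is that the desired inequality $\nu_1+\delta_\top\geq 2\nu_2$ between continuous valuations on $\Sierp$ is computed open-set-wise: it holds if and only if $(\nu_1+\delta_\top)(U)\geq (2\nu_2)(U)$ for every open $U$. Since $\Sierp$ has only the three open sets $\emptyset$, $\{\top\}$ and $\Sierp$, and since $(\nu_1+\delta_\top)(U)=\nu_1(U)+\delta_\top(U)$ with $\delta_\top(\{\top\})=\delta_\top(\Sierp)=1$, it therefore suffices to prove the two scalar inequalities $\nu_1(\{\top\})+1\geq 2\nu_2(\{\top\})$ and $\nu_1(\Sierp)+1\geq 2\nu_2(\Sierp)$ (the case $U=\emptyset$ being $0\geq 0$).

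First I would take $\vec h=(\chi_{\{\top\}},\chi_{\{\top\}})$. By the characterization of $\RR_{\unitT}^\perp$ recalled just before the lemma, this pair lies in $\RR_{\unitT}^\perp$ as soon as $\chi_{\{\top\}}(b)+1\geq 2\chi_{\{\top\}}(b)$ for every $b\in\Sierp$, which reads $1\geq 0$ at $\bot$ and $2\geq 2$ at $\top$. Feeding this pair into the definition of $\RR_{\Vt\unitT}$, and using the standard identity $\int_{x\in\Sierp}\chi_U(x)\,d\nu=\nu(U)$ for an open set $U$, gives $(\nu_1(\{\top\}),\nu_2(\{\top\}))\in\tr$, that is, $\nu_1(\{\top\})+1\geq 2\nu_2(\{\top\})$.

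Next I would take $\vec h=(\chi_\Sierp,\chi_\Sierp)$, i.e.\ the constant map $1$ in both components. This again lies in $\RR_{\unitT}^\perp$, since $1+1\geq 2$ at both points of $\Sierp$. The same reasoning, now with $U=\Sierp$, yields $(\nu_1(\Sierp),\nu_2(\Sierp))\in\tr$, i.e.\ $\nu_1(\Sierp)+1\geq 2\nu_2(\Sierp)$. Combining the two scalar inequalities with the open-set-wise reduction of the first paragraph delivers $\nu_1+\delta_\top\geq 2\nu_2$.

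I do not expect a genuine obstacle here: the whole argument is essentially bookkeeping, because the relation $\tr$ was defined precisely so that the diagonal test functions $\chi_{\{\top\}}$ and $\chi_\Sierp$ transport the membership $(\nu_1,\nu_2)\in\RR_{\Vt\unitT}$ into exactly the two inequalities one needs. The only points deserving a moment's care are that the order on valuations is pointwise on open sets (so that testing on $\{\top\}$ and $\Sierp$ suffices), and the elementary fact $\int\chi_U\,d\nu=\nu(U)$; both are standard.
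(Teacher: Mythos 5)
Your proof is correct and follows essentially the same route as the paper: the paper tests $(\nu_1,\nu_2)$ against the vertex pairs $(\alpha_1,\beta_1,\alpha_2,\beta_2)=(0,1,0,1)$ and $(1,1,1,1)$ of its polytope, which are exactly your diagonal pairs $(\chi_{\{\top\}},\chi_{\{\top\}})$ and $(\chi_\Sierp,\chi_\Sierp)$, and then concludes open-set-wise just as you do. Your direct verification that these pairs lie in $\RR_{\unitT}^\perp$ (rather than citing the vertex table) is a harmless, slightly more self-contained packaging of the same argument.
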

\begin{proof}
  For every $(h_1, h_2) \in \RR_{\unitT}^\perp$,
  $\int_{x \in \Sierp} h_1 (x) d\nu_1 + 1 \geq \int_{x \in \Sierp} h_2
  (x) d\nu_2$.  Considering the case where
  $h_1 \colon \bot \mapsto \alpha_1, \top \mapsto \beta_1$ and
  $h_2 \colon \bot \mapsto \alpha_2, \top \mapsto \beta_2$ are given
  by the data of the 5th row of (\ref{eq:RR:vertices}), we obtain
  $\nu_1 (\{\top\}) + 1 \geq 2 \nu_2 (\{\top\})$.  Considering the
  last row instead, we obtain
  $\nu_1 (\{\bot, \top\}) + 1 \geq 2 \nu_2 (\{\bot, \top\})$.  The
  inequality
  $\nu_1 (\emptyset) + \delta_\top (\emptyset) \geq 2 \nu_2
  (\emptyset)$ is obvious.  \qed
\end{proof}

\begin{lemma}
  \label{lemma:RR:h}
  Let $k$ be any Scott-continuous map from $\Val_{\leq 1} \Sierp$ to
  $[0, 1]$.  Then $(k (\frac 1 2 \_ + \frac 1 2 \delta_\top), k) \in \RR_{\Vt\unitT}^*$.
\end{lemma}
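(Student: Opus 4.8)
The plan is to unfold the definition of $\RR_{\Vt\unitT}^*$ and reduce the statement to a single monotonicity-driven inequality. Writing $h_1$ for the map $\nu \mapsto k(\frac 1 2 \nu + \frac 1 2 \delta_\top)$ and $h_2$ for $k$, I must verify two things: that $h_1$ is a legitimate Scott-continuous map from $\Val_{\leq 1} \Sierp$ to $[0,1]$, and that for every $(\nu_1, \nu_2) \in \RR_{\Vt\unitT}$ one has $(h_1(\nu_1), h_2(\nu_2)) \in \underline\tr$, i.e.\ $k(\frac 1 2 \nu_1 + \frac 1 2 \delta_\top) \geq k(\nu_2)$, since $\underline\tr = \{(a_1, a_2) \mid a_1 \geq a_2\}$.

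First I would dispatch the well-definedness of $h_1$. The affine map $\nu \mapsto \frac 1 2 \nu + \frac 1 2 \delta_\top$ sends $\Val_{\leq 1} \Sierp$ into itself, since its total mass is at most $\frac 1 2 + \frac 1 2 = 1$, and it is Scott-continuous because scaling and addition of valuations are computed pointwise on opens and are Scott-continuous in each argument. Composing with the Scott-continuous $k$ shows $h_1$ is Scott-continuous, so $(h_1, h_2)$ is an eligible pair of maps $\Val_{\leq 1} \Sierp \to [0,1]$.

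The core step is the inequality. Fix $(\nu_1, \nu_2) \in \RR_{\Vt\unitT}$. The key input is Lemma~\ref{lemma:RR:nu}, which gives $\nu_1 + \delta_\top \geq 2\nu_2$ in $\Val_{\leq 1} \Sierp$. As the order on valuations is pointwise and halving preserves pointwise inequalities, this rearranges to $\frac 1 2 \nu_1 + \frac 1 2 \delta_\top \geq \nu_2$. Applying the monotone map $k$ (monotonicity being a consequence of Scott-continuity) yields $k(\frac 1 2 \nu_1 + \frac 1 2 \delta_\top) \geq k(\nu_2)$, which is precisely $(h_1(\nu_1), h_2(\nu_2)) \in \underline\tr$. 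Since $(\nu_1, \nu_2)$ was an arbitrary element of $\RR_{\Vt\unitT}$, this establishes $(h_1, h_2) \in \RR_{\Vt\unitT}^*$.

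I do not expect a genuine obstacle here: all the difficulty has been front-loaded into Lemma~\ref{lemma:RR:nu}, whose proof selects the particular rows of the vertex table~(\ref{eq:RR:vertices}) needed to extract $\nu_1 + \delta_\top \geq 2\nu_2$. Given that lemma, the present statement is merely monotonicity of $k$ applied along the transformation $\nu \mapsto \frac 1 2 \nu + \frac 1 2 \delta_\top$; the only point deserving a line of care is checking that the inequality between valuations is an inequality in the dcpo order, i.e.\ holds pointwise on every open of $\Sierp$, so that the monotone $k$ may legitimately be applied to both sides.
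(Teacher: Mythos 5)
Your proof is correct and follows essentially the same route as the paper's: unfold the definition of $\RR_{\Vt\unitT}^*$, invoke Lemma~\ref{lemma:RR:nu} to get $\frac 1 2 \nu_1 + \frac 1 2 \delta_\top \geq \nu_2$, and apply monotonicity of the Scott-continuous $k$. The extra paragraph checking that $\nu \mapsto \frac 1 2 \nu + \frac 1 2 \delta_\top$ is a well-defined Scott-continuous self-map of $\Val_{\leq 1}\Sierp$ is a harmless addition that the paper leaves implicit.
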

\begin{proof}
  It suffices to verify that for all
  $(\nu_1, \nu_2) \in \RR_{\Vt\unitT}$,
  $(k (\frac 1 2 \nu_1 + \frac 1 2 \delta_\top), k (\nu_2))$ is in
  $\underline\tr$, namely that
  $k (\frac 1 2 \nu_1 + \frac 1 2 \delta_\top) \geq k (\nu_2)$.  By
  Lemma~\ref{lemma:RR:nu},
  $\frac 1 2 \nu_1 + \frac 1 2 \delta_\top \geq \nu_2$, and we
  conclude since $k$, being Scott-continuous, is monotonic.  \qed
\end{proof}

\begin{proposition}
  \label{prop:RR:test}
  Let:
  \begin{align*}
    M & = \lambda g . \force g (\Omega_{\Vt\unitT} \oplus
        \retkw \underline*), \\
    N & = \lambda g .
        \pto {(\force g (\Omega_{\Vt\unitT}))} {y_{\Vt\unitT}} {\produce (y_{\Vt\unitT} \oplus
        \retkw \underline*)}.
  \end{align*}
  where $g$ has type $\U{} (\Vt{} \unitT \to \F\Vt\unitT)$.

  In CBPV$(\Demon, \Nature)$ and also in
  CBPV$(\Demon, \Nature)+\pifzkw$,
  $M \precsim_{\U{} (\Vt{} \unitT \to \F\Vt\unitT) \to \F\Vt\unitT} N$, but
  $\Eval M \not\leq \Eval N$.
\end{proposition}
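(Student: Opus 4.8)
The plan is to combine a direct denotational computation (for $\Eval M \not\le \Eval N$) with the logical relation of Lemma~\ref{lemma:tr:basic} (for the inequality $M \precsim N$), exactly as in Proposition~\ref{prop:RR:por}, but now exploiting the specific numeric shapes of $\tr$ and $\underline\tr$ fixed in this subsection. First I would compute the two denotations. For a Scott-continuous $G \colon \Val_{\leq 1}\Sierp \to \Smyth^\top_\bot(\Val_{\leq 1}\Sierp)$, since $\Eval{\Omega_{\Vt\unitT}}$ is the zero valuation $0$ and $\Eval{\retkw\underline*} = \delta_\top$, the clauses for $\oplus$, $\pto{}{}{}$ and $\produce$ give
\[ \Eval M(G) = G(\half\delta_\top), \qquad \Eval N(G) = \extd f(G(0)), \]
where $f(V) = \eta^\Smyth(\half V + \half\delta_\top) = \upc(\half V + \half\delta_\top)$. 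To see $\Eval M \not\le \Eval N$ it then suffices to exhibit one argument separating them: take the threshold test $G$ with $G(\nu) = \upc\delta_\top$ when $\nu(\{\top\}) > \quart$ and $G(\nu) = \bot$ otherwise, which is Scott-continuous because $\nu \mapsto \nu(\{\top\})$ is. Then $\Eval M(G) = G(\half\delta_\top) = \upc\delta_\top$, whereas $G(0) = \bot$ forces $\Eval N(G) = \extd f(\bot) = \bot$ by strictness, and $\upc\delta_\top \not\le \bot$.

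For the contextual inequality I would prove $\Eval{M(P)} \le \Eval{N(P)}$ for every ground $P \colon \U(\Vt\unitT \to \F\Vt\unitT)$, then invoke Proposition~\ref{prop:fa:easy} and Proposition~\ref{prop:precsim:lambda}. Write $G = \Eval P$. Applying Lemma~\ref{lemma:tr:basic} with $I = \{1,2\}$ and $\mathcal J = \{\emptyset, I\}$ to the diagonal pair gives $(G, G) \in \RR_{\Vt\unitT \to \F\Vt\unitT}$; note that since $\mathcal J \subseteq \{\emptyset, I\}$ this holds for CBPV$(\Demon,\Nature)+\pifzkw$ as well, via item~2. The decisive instantiation is the pair $(\nu_1, \nu_2) = (0, \half\delta_\top)$: by Lemma~\ref{lemma:RR:delta} it lies in $\RR_{\Vt\unitT}$ precisely because $0 + 1 \ge 2\cdot\half$, whence $(G(0), G(\half\delta_\top)) \in \RR_{\F\Vt\unitT}$. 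Now fix an arbitrary Scott-continuous $k \colon \Val_{\leq 1}\Sierp \to [0,1]$ and set $k'(V) = k(\half V + \half\delta_\top)$; Lemma~\ref{lemma:RR:h} says exactly that $(k', k) \in \RR_{\Vt\unitT}^*$. Unfolding the definition of $\RR_{\F\Vt\unitT}$ with this test pair, and recalling that $\underline\tr$ is the order $a_1 \ge a_2$, yields $\extd{k'}(G(0)) \ge \extd k(G(\half\delta_\top))$.

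It remains to read this back as the desired order. By Proposition~\ref{prop:Qtop}, item~4, $\extd k \circ \extd f = \extd{(\extd k \circ f)}$, and since $\extd k(\upc x) = k(x)$ by item~2 we have $\extd k \circ f = k'$; hence $\extd k(\Eval{N(P)}) = \extd{k'}(G(0)) \ge \extd k(G(\half\delta_\top)) = \extd k(\Eval{M(P)})$. As this holds for \emph{every} Scott-continuous $k$, I would conclude $\Eval{M(P)} \le \Eval{N(P)}$ in $\Smyth^\top_\bot(\Val_{\leq 1}\Sierp)$ from the fact that the maps $\extd k$ are order-separating there: if $Q_1 \not\le Q_2$ then $Q_1 \neq \bot$, and either $Q_2 = \bot$ (use the constant map $k = 1$, giving $\extd k(Q_1) = 1 > 0$) or there is $x \in Q_2 \diff Q_1$, in which case an open $U \supseteq Q_1$ with $x \notin U$ and $k = \chi_U$ give $\extd k(Q_1) = 1 > 0 = \extd k(Q_2)$. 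Finally Proposition~\ref{prop:fa:easy} turns $\Eval{M(P)} \le \Eval{N(P)}$ into $M(P) \precsim_{\F\Vt\unitT} N(P)$, and Proposition~\ref{prop:precsim:lambda} lifts this over all ground $P$ to $M \precsim_{\U(\Vt\unitT \to \F\Vt\unitT) \to \F\Vt\unitT} N$.

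The routine parts are the denotational computation and the order-separation readback. The main subtlety, and the place where the chosen $\tr$, $\underline\tr$ really earn their keep, is setting up the logical-relation instance with the correct coordinates: the first coordinate must carry $G(0)$ tested by $k'$ and the second $G(\half\delta_\top)$ tested by $k$, so that the boundary equalities $0+1 = 2\cdot\half$ (in Lemma~\ref{lemma:RR:delta}) and the shift $k' = k(\half\,\_ + \half\delta_\top)$ (in Lemma~\ref{lemma:RR:h}) line up exactly with the definitions of $\RR_{\Vt\unitT}$ and $\RR_{\Vt\unitT}^*$. Getting either coordinate or either half-factor wrong breaks the numeric inequalities, so verifying these two membership conditions against the cited lemmas is the step I expect to require the most care.
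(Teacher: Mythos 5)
Your proof is correct, and its core coincides with the paper's: the same denotational computation, the same separating map (the threshold test at $1/4$, which is the paper's $[>b]$ with $b=1/4$), and the same logical-relation chain through Lemma~\ref{lemma:RR:delta}, Lemma~\ref{lemma:tr:basic} (item~2, justified since $\mathcal J=\{\emptyset,I\}$) and Lemma~\ref{lemma:RR:h}, culminating in $\extd k (\Eval {N(P)}) \geq \extd k (\Eval {M(P)})$ for every continuous $k$. Where you genuinely diverge is the readback. The paper does \emph{not} pass through $\Eval {M(P)} \leq \Eval {N(P)}$: it fixes an arbitrary ground context $C \colon \F\Vt\unitT \vdash \F\Vt\unitT$, uses the rank discipline to show $C$ is $[\_]$ followed only by elementary contexts of the form $[\pto {\_} {x_\tau} N]$, deduces via Proposition~\ref{prop:Qtop} that $\extd h \circ \Eval C = \extd k$ for some continuous $k$, and concludes $M(P) \precsim_{\F\Vt\unitT} N(P)$ directly by adequacy. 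You instead upgrade the family of inequalities to the stronger order statement $\Eval {M(P)} \leq \Eval {N(P)}$ by showing the maps $\extd k$ jointly order-separate $\Smyth^\top_\bot (\Val_{\leq 1}\Sierp)$; your two-case argument (strictness for $Q_2=\bot$, saturation of $Q_1$ to find $U \supseteq Q_1$ missing a point of $Q_2$) is sound. Your route avoids the context-structure analysis and yields a stronger intermediate fact; its small cost is that Proposition~\ref{prop:fa:easy} is stated only for value types, so you are tacitly using its extension to computation types via $\thunk$ and Corollary~\ref{corl:context=app:comp} --- an extension the paper itself uses without comment in the proof of Proposition~\ref{prop:RR:por}, so this is harmless but deserves a sentence. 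Either route is valid.
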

\begin{proof}
  Let $P$ be any ground CBPV$(\Demon, \Nature)$ or
  CBPV$(\Demon, \Nature)+\pifzkw$ term of type
  $\U{} (\Vt\unitT \to \F\Vt\unitT)$, and:
  \begin{align*}
    M(P) & = \force P (\Omega_{\Vt\unitT} \oplus \retkw \underline*) \\
    N(P) & = \pto {(\force P (\Omega_{\Vt\unitT}))} {y_{\Vt\unitT}} {\produce (y_{\Vt\unitT} \oplus
           \retkw \underline*)}.
  \end{align*}
  We have:
  \begin{align*}
    \Eval {M(P)}  & = \Eval P  (\frac 1 2 \delta_\top) \\
    \Eval {N(P)}  & = \extd g (\Eval P (0)),
  \end{align*}
  where $g (\nu) = \upc (\frac 1 2 \nu + \frac 1 2 \delta_\top)$.  By
  Lemma~\ref{lemma:RR:delta}, $(0, \frac 1 2 \delta_\top)$ is in
  $\RR_{\Vt\unitT}$.  By Lemma~\ref{lemma:tr:basic} (item~2),
  $(\Eval P, \Eval P)$ is in $\RR_{\Vt{} \unitT \to \F\Vt\unitT}$, so
  $(\Eval P (0), \Eval P (\frac 1 2 \delta_\top))$ is in
  $\RR_{\F\Vt\unitT}$.  Using Lemma~\ref{lemma:RR:h}, for every
  Scott-continuous map $k \colon \Eval {\Vt\unitT} \to [0, 1]$,
  $(k (\frac 1 2 \_ + \frac 1 2 \delta_\top), k) \in \RR_{\Vt\unitT}^*$,
  so
  $(\extd {(k (\frac 1 2 \_ + \frac 1 2 \delta_\top))} (\Eval P (0)),
  \extd k (\Eval P (\frac 1 2 \delta_\top)))$ is in $\underline\tr$.
  In other words,
  $\extd {(k (\frac 1 2 \_ + \frac 1 2 \delta_\top))} (\Eval P (0))
  \geq \extd k (\Eval P (\frac 1 2 \delta_\top))$.
    
  Since
  $g = \eta^\Smyth \circ (\nu \mapsto \frac 1 2 \nu + \frac 1 2
  \delta_\top)$, and using Proposition~\ref{prop:Qtop}, item~2,
  $\extd k \circ g = (k (\frac 1 2 \_ + \frac 1 2 \delta_\top))$.  It
  follows that $\extd k \circ \extd g = \extd {(\extd k \circ g)}$
  (using Proposition~\ref{prop:Qtop}, item~4)
  $=\extd {(k (\frac 1 2 \_ + \frac 1 2 \delta_\top))}$.  Hence our
  previous equality can be read, alternatively, as
  $\extd k (\extd g (\Eval P (0))) \geq \extd k (\Eval P (\frac 1 2
  \delta_\top))$.  In other words,
  \begin{align}
    \label{eq:k}
    \extd k (\Eval {N(P)}) & \geq \extd k (\Eval {M(P)})
  \end{align}
  for every Scott-continuous map
  $k \colon \Eval {\Vt\unitT} \to [0, 1]$.
  
  We claim that $M (P) \precsim_{\F\Vt\unitT} N(P)$.  To this end, we let
  $C$ be any ground evaluation context of type
  $\F\Vt\unitT \vdash \F\Vt\unitT$, and we aim to show that
  $\Prob (C \cdot M(P) \vp) \leq \Prob (C \cdot N(P) \vp)$.  By
  adequacy (Proposition~\ref{prop:adeq}) and
  Lemma~\ref{lemma:prob:disc:_}, this means showing that
  $\extd h (\Eval C (\Eval {M(P)})) \leq \extd h (\Eval C (\Eval
  {N(P)}))$, where $h (\nu) = \nu (\{\top\})$.

  Let us write $C$ as $E_0 E_1 E_2 \cdots E_n$, where
  $E_i \colon \anytype_{i+1} \vdash \anytype_i$,
  $\anytype_{n+1}=\anytype_0 = \F\Vt\unitT$.  All the types $\anytype_i$
  have rank $1$, namely, are computation types.  It follows that
  $E_0 = [\_]$, and that the elementary contexts $E_i$,
  $1\leq i\leq n$ are of the form $[\pto {\_} {x_\tau} N]$ or
  $[\_ N]$.  However, $\anytype_{n+1}$ is an $\F{}$-type (i.e., of the
  form $\F{} \tau$ for some value type $\tau$), and that implies that
  $E_n$ must be of the form $[\pto {\_} {x_\tau} N]$ and $\anytype_n$
  must be an $\F{}$-type again.  Then $E_{n-1}$ must again be of the form
  $[\pto {\_} {x_\tau} N]$ and $\anytype_{n-1}$ must be an $\F{}$-type,
  and so on: the elementary contexts $E_i$, $1\leq i\leq n$, are all
  of the form $[\pto {\_} {x_\tau} N]$, and $\anytype_i = \F{} \tau_i$
  for some value type $\tau_i$, with $\tau_{n+1} = \tau_1 = \Vt\unitT$.
  In particular, $\Eval {E_i} = \extd {f_i}$ for some Scott-continuous
  map $f_i \colon \Eval {\Vt{} \tau_{i+1}} \to \Eval {\Vt{} \tau_i}$.  It
  follows that
  $\Eval C = \extd {f_1} \circ \extd {f_2} \circ \cdots \circ \extd
  {f_n}$.  If $n \neq 0$, then applying Proposition~\ref{prop:Qtop},
  item~4, repeatedly, we obtain that $\Eval C = \extd f$ for some
  Scott-continuous map
  $f \colon \Eval {\Vt\unitT} \to \Eval {\F\Vt\unitT}$; applying it one
  more time, $\extd h \circ \Eval C = \extd {(\extd h \circ f)}$.  If
  $n=0$, then $\extd h \circ \Eval C = \extd h$.  In both cases,
  $\extd h \circ \Eval C$ is equal to $\extd k$ for some
  Scott-continuous map $k \colon \Eval {\Vt{} \unitT} \to [0, 1]$.

  By (\ref{eq:k}), $\extd h (\Eval C (\Eval {N(P)})) \geq \extd h
  (\Eval C (\Eval {M(P)}))$, and this is what we needed to show
  to establish $M (P) \precsim_{\F\Vt\unitT} N(P)$.

  Since $P$ is arbitrary, by Proposition~\ref{prop:precsim:lambda},
  $M \precsim_{\U{} (\Vt{} \unitT \to \F\Vt\unitT) \to \F\Vt\unitT} N$.

  For every $b \in (0, 1)$, let $[> b]$ be the map that sends every
  $\nu \in \Eval {\Vt\unitT}$ to $\upc \{\delta_\top\}$ if
  $\nu (\{\top\}) > b$, and to $\bot$ otherwise.  This is is easily
  seen to be Scott-continuous.  For $b < 1/2$ (e.g., $b=1/4$),
  \begin{align*}
    \Eval M ([> b]) & = [> b] (\frac 1 2 \delta_\top) = \upc
                      \{\delta_\top\} \\
    \Eval N ([> b]) & = \extd g ([> b] (0)) = \extd g (\bot) = \bot.
  \end{align*}
  In particular, $\Eval M ([> b]) \not\leq \Eval N ([> b])$, so $\Eval
  M \not\leq \Eval N$.  \qed
\end{proof}
The function $[> b]$ is, of course, the semantics of $\bigcirc_{> b}$.
As a consequence, it is not definable in PCBV$(\Demon, \Nature)$ and
even CBPV$(\Demon, \Nature)+\pifzkw$, at least for $b < 1/2$.  A
similar argument would show that it is not definable for any
$b \in (0, 1)$, replacing the definition of $\tr$ by
$\tr = \{(a_1, a_2) \in [0, 1]^2 \mid a a_1 + 1-a \geq a_2\}$, for any
dyadic number $a \in (0, 1)$.

\section{Full Abstraction}
\label{sec:full-abstraction}

Full abstraction for CBPV$(\Demon,\Nature)+\pifzkw+\bigcirc$ will
follow from a series of auxiliary results that show that the Scott
topology on various dcpos coincides with some other, simpler
topologies.  Before we make that precise, let us say that our goal is
that every type should be \emph{describable}, in the following sense.
For a Scott-open subset $U$ of $\Eval \anytype$, where $\anytype$ is a
type, recall that $\chi_U \in [\Eval \anytype \to \Sierp]$ is its
characteristic map.  We write $\tilde\chi_U$ for the map
$\Eval {[\produce \retkw \_]} \circ \chi_U$, which maps every
$x \in \Eval \anytype$ to $\{\delta_\top\}$ if $x \in U$, to $\bot$
otherwise.

\begin{definition}
  \label{defn:descr}
  An element of $\Eval \anytype$, for a type $\anytype$, is \emph{definable} if and
  only if it is equal to $\Eval M$ for some ground
  CBPV$(\Demon,\Nature)+\pifzkw+\bigcirc$ term $M \colon \anytype$.

  A Scott-open subset $U$ of $\Eval \tau$, for a value type $\tau$, is
  \emph{definable} if and only if $\tilde\chi_U = \Eval M$ for some
  ground CBPV$(\Demon,\Nature)+\pifzkw+\bigcirc$ term
  $M \colon \tau \to \F\Vt\unitT$.

  For a computation type $\underline\tau$, a Scott-open subset $U$ of
  $\Eval {\underline\tau}$ is \emph{definable} if and only if
  $\tilde\chi_U = \Eval M$ for some ground
  CBPV$(\Demon,\Nature)+\pifzkw+\bigcirc$ term
  $M \colon U \underline\tau \to \F\Vt\unitT$.

  A type $\anytype$ is \emph{describable} if and only if $\Eval \anytype$ has a
  basis of definable elements and the Scott topology on $\Eval \anytype$ has
  a subbase of definable open subsets.
\end{definition}



As a first, easy example of a describable type, we have:
\begin{lemma}
  \label{lemma:descr:unit}
  $\unitT$ is describable.
\end{lemma}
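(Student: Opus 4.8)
The plan is to unfold Definition~\ref{defn:descr} for $\anytype = \unitT$, using that $\Eval{\unitT} = \Sierp = \{\bot, \top\}$ is the two-element chain. Since $\Sierp$ is finite, hence algebraic with both elements compact, its whole carrier $\{\bot, \top\}$ is already a basis: for $x = \top$ the set $\{b \mid b \ll \top\}$ is all of $\{\bot,\top\}$ (both elements are way-below $\top$, as the only directed families with supremum $\top$ contain $\top$), directed with supremum $\top$; and for $x = \bot$ it is $\{\bot\}$, directed with supremum $\bot$. So it suffices to check that $\bot$ and $\top$ are definable, and then to exhibit a subbase of definable Scott-open subsets.

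For the elements: $\top = \Eval{\underline *}$, so $\top$ is definable. For $\bot$, I take the term $\rec x_{\unitT} . x_{\unitT}$, whose denotation is $\lfp (\mathrm{id}) = \bot$; hence $\bot$ is definable as well. Together these show that $\Eval{\unitT}$ has a basis of definable elements.

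For the topology, I first note that the only Scott-open subsets of $\Sierp$ are $\emptyset$, $\{\top\}$, and $\Sierp$. I then claim that $\mathcal S = \{\{\top\}\}$ is a subbase of definable opens. Indeed the finite intersections of members of $\mathcal S$ are $\{\top\}$ together with the empty intersection $\Sierp$, and these form a base, since every Scott-open is a union of them ($\emptyset$ being the empty union). It remains to show that $\{\top\}$ is definable, i.e.\ to produce a ground term $M \colon \unitT \to \F\Vt\unitT$ with $\Eval M = \tilde\chi_{\{\top\}}$, the map sending $\top$ to $\{\delta_\top\}$ and $\bot$ to $\bot$. I take $M = \lambda x_{\unitT} . (x_{\unitT} ; \produce \retkw \underline *)$. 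By the denotational clause for sequencing, $\Eval M$ sends $\top$ to $\Eval{\produce \retkw \underline *} = \eta^\Smyth (\delta_\top) = \{\delta_\top\}$, and sends $\bot$ to $\bot$ by strictness of the semantics of $;$, which is exactly $\tilde\chi_{\{\top\}}$.

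The content here is entirely routine; the only things to get right are the enumeration of the three Scott-open subsets of $\Sierp$ and the verification that the exhibited terms have the claimed denotations. There is no real obstacle at this type — this lemma mainly seeds the induction over types by which general describability is established, and the genuine difficulties will only surface at the higher types $\Vt\sigma$, $\F\sigma$, and $\sigma \to \underline\tau$.
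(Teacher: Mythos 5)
Your proof is correct and matches the paper's argument essentially verbatim: the same two terms $\underline *$ and $\Omega_{\unitT} = \rec x_{\unitT}.x_{\unitT}$ define the elements of $\Sierp$, and the same term $\lambda x_{\unitT} . (x_{\unitT}; \produce \retkw \underline *)$ defines the subbasic open $\{\top\}$. The extra verifications you supply (that the whole of $\Sierp$ is a basis, that $\{\{\top\}\}$ generates the Scott topology) are routine facts the paper leaves implicit.
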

\begin{proof}
  All the elements of $\Eval {\unitT} = \Sierp$ are definable, since
  $\Eval {\Omega_{\unitT}} = \bot$ and $\Eval {\underline *} = \top$.
  The function
  $P = \lambda x_{\unitT} . (x_{\unitT}; \produce \retkw \underline *)$
  defines the open subset $\{\top\}$, which is by itself a subbase of
  the Scott topology.  \qed
\end{proof}


\begin{lemma}
  \label{lemma:desc:int}
  $\intT$ is describable.
\end{lemma}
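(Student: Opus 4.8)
The plan is to verify the two requirements of Definition~\ref{defn:descr} directly for $\Eval{\intT} = \Z_\bot$, exploiting the fact that this is a flat (algebraic) dcpo whose Scott topology is almost discrete. First I would settle the basis of definable elements. Since $\Z_\bot$ is algebraic, the whole set $\Z_\bot$ is itself a basis: for $x=\bot$ the set of elements way-below $x$ is $\{\bot\}$, while for $x=n\in\Z$ it is the directed set $\{\bot,n\}$ (as $n$ is compact), whose supremum is $n$. Each such element is definable, since $\bot = \Eval{\Omega_{\intT}}$ and $n = \Eval{\underline n}$ for every $n\in\Z$. Hence $\Eval{\intT}$ has a basis of definable elements.

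Next I would describe the Scott topology and produce a definable subbase. Because the integers are pairwise incomparable and $\bot$ lies below all of them, the upward-closed subsets of $\Z_\bot$ are exactly the subsets of $\Z$ together with the whole space $\Z_\bot$, and each is automatically Scott-open (the only nontrivial directed suprema are of the form $\sup\{\bot,n\}=n$, and an upward-closed set containing $n$ already contains $n$). I propose the subbase $\mathcal S = \{\,\{n\}\mid n\in\Z\,\}\cup\{\Z_\bot\}$: its finite intersections are the singletons, the empty set, and $\Z_\bot$, and arbitrary unions of these recover every open subset of $\Z$ as $\bigcup_{n\in U}\{n\}$, along with $\emptyset$ and $\Z_\bot$ themselves, so $\mathcal S$ indeed generates the Scott topology.

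It then remains to check that every member of $\mathcal S$ is definable, i.e.\ realized by a ground term $M\colon\intT\to\F\Vt\unitT$ whose denotation is the corresponding $\tilde\chi_U$ (sending $U$ to $\eta^\Smyth(\delta_\top)$ and its complement to $\bot$). For the whole space I would take the constant map $\lambda x_{\intT}.\,\produce\retkw\underline *$, whose denotation sends every argument, including $\bot$, to $\eta^\Smyth(\delta_\top)$. For a singleton $\{n\}$ I would shift the argument so as to test for zero: writing $\pred^{n}$ for the $n$-fold iterate of $\pred$ when $n\ge 0$ and $\suc^{-n}$ for the $(-n)$-fold iterate of $\suc$ when $n<0$, consider
\[
  M_n \;=\; \lambda x_{\intT}.\,\ifz{(\pred^{n} x_{\intT})}{(\produce\retkw\underline *)}{\Omega_{\F\Vt\unitT}}.
\]
Since $\pred^{n}$ (resp.\ $\suc^{-n}$) maps $m$ to $m-n$ and $\bot$ to $\bot$, and since $\ifz$ is strict in its scrutinee, $\Eval{M_n}$ equals $\eta^\Smyth(\delta_\top)$ at $n$ and $\bot$ at every other point (both when $x=\bot$ and when $x\ne n$), which is exactly $\tilde\chi_{\{n\}}$. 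Hence $\{n\}$ is definable, and so every element of $\mathcal S$ is.

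I expect no genuine obstacle here: the lemma is essentially a routine unwinding of the denotational clauses in Figure~\ref{fig:sem}. The only points deserving a line of care are the bookkeeping for negative $n$ (using $\suc$ rather than $\pred$) and the observation that the $\ifz$-construct returns $\bot$ precisely off the singleton, so that $M_n$ realizes the intended characteristic map with bottom outside $\{n\}$.
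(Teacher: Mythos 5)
Your proof is correct and follows essentially the same route as the paper: all elements of $\Z_\bot$ are definable via $\Omega_{\intT}$ and the numerals, and the singletons $\{n\}$ form a subbase, each defined by shifting with iterated $\pred$/$\suc$ and testing with $\ifz$ (returning $\produce\retkw\underline*$ on zero and $\Omega_{\F\Vt\unitT}$ otherwise). The extra care you take with the basis verification and with adding $\Z_\bot$ to the subbase is harmless but not needed.
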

\begin{proof}
  All the elements of $\Eval {\intT} = \Z_\bot$ are definable:
  $\Eval {\Omega_{\intT}} = \bot$, $\Eval {\underline n} = n$.  A
  subbase of the Scott topology consists of the sets $\{n\}$,
  $n \in \Z$, and they are definable by
  $\lambda x_{\intT} . \ifz {\underbrace{\pred (\pred \cdots
      (\pred}_{n\text{ times}} M))} {(\produce \retkw \underline *)}
  {\Omega_{\F\Vt\unitT}}$ if $n \geq 0$, and by
  $\lambda x_{\intT} . \ifz {\underbrace{\suc (\suc \cdots
      (\suc}_{n\text{ times}} M))} {(\produce \retkw \underline *)}
  {\Omega_{\F\Vt\unitT}}$ otherwise.  \qed
\end{proof}

We now consider more complex types.  It will be useful to realize that
every describable type has a base, not just a subbase, of definable
open subsets.  Moreover this base, which is obtained as the collection
of finite intersections of subbasic open sets, is closed under finite
intersections.  We call \emph{strong base} any base that is closed
under finite intersections.

\begin{lemma}
  \label{lemma:descr:base}
  For every describable type $\anytype$, the Scott topology on $\Eval \anytype$ has
  a strong base of definable open subsets.
\end{lemma}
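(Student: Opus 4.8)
The plan is to take the subbase of definable opens provided by describability and close it under finite intersections, then show that each such finite intersection is itself definable, by sequencing the corresponding subbasic tester terms.

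First, since $\anytype$ is describable, fix a subbase $\mathcal S$ of the Scott topology on $\Eval\anytype$ consisting of definable open sets. By the definition of a subbase, the family $\mathcal B$ of all finite intersections of members of $\mathcal S$ is a base, and $\mathcal B$ is trivially closed under finite intersections, hence a strong base. So it suffices to show that each $U = U_1 \cap \cdots \cap U_k$ ($k \ge 0$, each $U_j \in \mathcal S$) is definable. I treat the value-type case $\anytype = \tau$; the computation-type case is identical with $\U\underline\tau$ in place of $\tau$.

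Second, I would build the defining term by sequencing the subbasic testers. Fix ground terms $M_j \colon \tau \to \F\Vt\unitT$ with $\Eval{M_j} = \tilde\chi_{U_j}$, fresh distinct variables $y_1,\dots,y_k$ of type $\Vt\unitT$, and a fresh $x_\tau$, and set
\[
  M = \lambda x_\tau .\ \pto{M_1 x_\tau}{y_1}{\pto{M_2 x_\tau}{y_2}{\cdots \pto{M_k x_\tau}{y_k}{\produce \retkw \underline *}}}
\]
(for $k=0$ this is $\lambda x_\tau.\ \produce\retkw\underline*$, which defines the whole space). Writing $T_j$ for the tail subterm starting at $M_j$, so that $T_{k+1} = \produce\retkw\underline*$ and $T_1$ is the body of $M$, I would prove by downward induction on $j$ that for each fixed $x \in \Eval\tau$, the value $\Eval{T_j}(\rho[x_\tau\mapsto x])$ equals $\eta^\Smyth(\delta_\top)$ when $x \in U_j\cap\cdots\cap U_k$ and $\bot$ otherwise. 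The base case $j=k+1$ is immediate since $\Eval{\produce\retkw\underline*} = \eta^\Smyth(\delta_\top)\neq\bot$.

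The core of that induction, and the main point to get right, is the evaluation step $\Eval{T_j} = \extd g (\Eval{M_j x_\tau})$, where $g$ is the map $V \mapsto \Eval{T_{j+1}}(\rho[x_\tau\mapsto x][y_j\mapsto V])$. Since $y_j$ is not free in $T_{j+1}$, the map $g$ is constant, with value the quantity supplied by the induction hypothesis. Now $\Eval{M_j x_\tau}$ at $x$ is $\tilde\chi_{U_j}(x)$, which is $\eta^\Smyth(\delta_\top)$ if $x \in U_j$ and $\bot$ otherwise. If $x \notin U_j$, strictness of $\extd g$ (Proposition~\ref{prop:Qtop}, item~2) gives $\extd g(\bot) = \bot$; if $x \in U_j$, then $\eta^\Smyth(\delta_\top) \ne \bot,\emptyset$, so $\extd g(\eta^\Smyth(\delta_\top)) = \bigwedge_{V \in \eta^\Smyth(\delta_\top)} g(V)$ (Proposition~\ref{prop:Qtop}, item~2) equals the constant value of $g$, i.e.\ the induction-hypothesis value for $T_{j+1}$. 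This case split threads the conjunction through the chain. Instantiating at $j=1$ yields $\Eval M = \tilde\chi_U$, so $U$ is definable. The only genuine subtlety is keeping track of strictness versus the nonempty-fibre behaviour of $\extd{(\cdot)}$, so that a single failing conjunct collapses the whole chain to $\bot$ while a satisfied conjunct merely passes through; the remainder is bookkeeping.
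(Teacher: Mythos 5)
Your proof is correct and follows essentially the same route as the paper: the paper also closes the subbase under finite intersections and defines the intersection by the nested term $(M_1 x_\sigma)\wedge\cdots\wedge(M_n x_\sigma)$, where $M\wedge N$ abbreviates exactly your $\pto{M}{y}{N}$ with $y$ fresh, and the semantic analysis (strictness of $\extd{(\cdot)}$ for a failing conjunct, infimum over a nonempty compact set for a passing one) is the same.
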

\begin{proof}
  For any two terms $M, N \colon \F\Vt\unitT$, let $M \wedge N$ be the
  term $\pto M {x_{\Vt\unitT}} N$, where $x_{\Vt\unitT}$ is a fresh
  variable.  For every environment $\rho$,
  $\Eval {M \wedge N} \rho = \extd h (\Eval M \rho)$ where
  $h (\nu) = \Eval N \rho [x_{\Vt\unitT} \mapsto \nu] = \Eval N \rho$
  (since $x_{\Vt\sigma}$ is not free in $N$).  Since $\extd h$ is
  strict, if $\Eval M \rho=\bot$, then
  $\Eval {M \wedge N} \rho = \bot$.  Otherwise, by
  Proposition~\ref{prop:Qtop}, item~2,
  $\extd h (\Eval M \rho) = \bigwedge_{\nu \in \Eval M \rho} h (\nu)$.
  If $\Eval M \rho \neq \emptyset$, in particular if
  $\Eval M = \{\delta_\top\}$, this is equal to $\Eval N \rho$.

  We write $M_1 \wedge \cdots \wedge M_n$ for
  $M_1 \wedge (M_2 \wedge \cdots (M_n \wedge \produce \retkw
  \underline*) \cdots)$.  This implements logical and, in the sense
  that if $\Eval {M_i} \rho$ is either equal to $\{\delta_\top\}$ or
  to $\bot$ for every $i$, its denotation in any environment $\rho$ is
  $\{\delta_\top\}$ if $\Eval {M_i} \rho = \{\delta_\top\}$ for every
  $i$, and is $\bot$ if $\Eval {M_i} \rho = \bot$ for some $i$.
  
  Given finitely many open subsets $U_1$, \ldots, $U_n$ defined by
  terms $M_1, \cdots, \allowbreak M_n \colon \sigma \to \F\Vt\unitT$
  respectively (where $\sigma=\anytype$ if $\anytype$ is a value type,
  $\sigma = \U\anytype$ if $\anytype$ is a computation type), the term
  $\lambda x_\sigma . (M_1 x_\sigma) \wedge \cdots \wedge (M_n
  x_\sigma)$ then defines the intersection $U_1 \cap \cdots \cap U_n$.
  \qed
\end{proof}

\subsection{Product types}
\label{sec:product-types}

\begin{lemma}
  \label{lemma:subbase:prod}
  Let $X$, $Y$ be two continuous dcpos.  Let $B_X$ be a basis of $X$,
  $B_Y$ be a basis of $Y$, $\mathcal S_X$ be a subbase of the Scott
  topology on $X$, $\mathcal S_Y$ be a subbase of the Scott topology
  on $Y$.  Then:
  \begin{itemize}
  \item The set $B_{X \times Y} = B_X \times B_Y$ is a basis of $X
    \times Y$.
  \item The set
    $\mathcal S_{X \times Y} = \{U \times V \mid U \in \mathcal S_X, V
    \in \mathcal S_Y\}$ is a subbase of the Scott topology on
    $X \times Y$.
  \end{itemize}
\end{lemma}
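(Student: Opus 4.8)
The plan is to reduce both bullets to a single observation: the way-below relation on a product of dcpos is computed componentwise. Concretely, I would first prove that for $(a,b), (x,y) \in X \times Y$ we have $(a,b) \ll (x,y)$ if and only if $a \ll x$ in $X$ and $b \ll y$ in $Y$. For the forward direction, given a directed family $(z_i)_{i \in I}$ in $X$ with $x \leq \sup_i z_i$, I would form the family $(z_i, y)_{i \in I}$, which is directed (the second coordinate is constant) with supremum $(\sup_i z_i, y) \geq (x,y)$; then $(a,b) \ll (x,y)$ yields some $i$ with $(a,b) \leq (z_i, y)$, whence $a \leq z_i$, so $a \ll x$, and symmetrically $b \ll y$. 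For the reverse direction, if $a \ll x$, $b \ll y$ and $(x,y) \leq \sup_i (z_i, w_i) = (\sup_i z_i, \sup_i w_i)$ for a directed family, I pick $i_1$ with $a \leq z_{i_1}$ and $i_2$ with $b \leq w_{i_2}$ (the projections are directed), and take a common upper index $i$ to get $(a,b) \leq (z_i, w_i)$.

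For the first bullet, fix $(x,y)$ and set $D_x = \{a \in B_X \mid a \ll x\}$, $D_y = \{b \in B_Y \mid b \ll y\}$. By the componentwise characterization, $\{(a,b) \in B_X \times B_Y \mid (a,b) \ll (x,y)\} = D_x \times D_y$. Since $B_X$, $B_Y$ are bases, $D_x$, $D_y$ are directed with suprema $x$, $y$, so $D_x \times D_y$ is directed (componentwise) with supremum $(x,y)$. Hence $B_X \times B_Y$ is a basis, and in particular $X \times Y$ is a continuous dcpo.

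For the second bullet, I would first note the elementary identity $(U_1 \times V_1) \cap (U_2 \times V_2) = (U_1 \cap U_2) \times (V_1 \cap V_2)$, so that the finite intersections of members of $\mathcal S_{X \times Y}$ are exactly the sets $U \times V$ where $U$ is a finite intersection of members of $\mathcal S_X$ and $V$ a finite intersection of members of $\mathcal S_Y$; that is, $U$ ranges over a base $\mathcal B_X$ of the Scott topology on $X$ and $V$ over a base $\mathcal B_Y$ on $Y$. It then remains to show $\{U \times V \mid U \in \mathcal B_X, V \in \mathcal B_Y\}$ is a base of the Scott topology on $X \times Y$. Since $X \times Y$ is continuous with basis $B_X \times B_Y$ (by the first bullet), the sets $\uuarrow (a,b)$, $(a,b) \in B_X \times B_Y$, form a base of its Scott topology, and by the componentwise way-below relation $\uuarrow (a,b) = \uuarrow a \times \uuarrow b$. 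So given a Scott-open $W$ and a point $(x,y) \in W$, I would choose $a \in B_X$, $b \in B_Y$ with $(x,y) \in \uuarrow a \times \uuarrow b \subseteq W$; then, $\uuarrow a$ and $\uuarrow b$ being Scott-open and $\mathcal B_X$, $\mathcal B_Y$ being bases, I pick $U \in \mathcal B_X$ with $x \in U \subseteq \uuarrow a$ and $V \in \mathcal B_Y$ with $y \in V \subseteq \uuarrow b$, giving $(x,y) \in U \times V \subseteq W$ and exhibiting $W$ as the required union.

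The only delicate point is the forward direction of the componentwise way-below characterization, where the constant-second-coordinate family $(z_i, y)_i$ must be used; everything else is bookkeeping. I expect no serious obstacle, and in particular I avoid having to prove separately that the Scott topology on $X \times Y$ coincides with the product topology: the basis $B_X \times B_Y$ together with the equality $\uuarrow(a,b) = \uuarrow a \times \uuarrow b$ delivers a base of products directly.
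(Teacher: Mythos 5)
Your proof is correct and rests on the same key fact the paper invokes, namely the componentwise characterization of way-below giving $\uuarrow(a,b) = \uuarrow a \times \uuarrow b$; the paper merely states this in one line (deducing that the Scott topology on the product is the product topology) and leaves the first bullet as standard, whereas you work out all the details, including the refinement of open rectangles by the given bases. No gaps.
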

\begin{proof}
  The second part follows from the fact that the Scott topology on a
  product of continuous dcpos is the product topology, because it is
  generated by sets of the form
  $\uuarrow (x, y) = \uuarrow x \times \uuarrow y$.  (This is not true
  of non-continuous dcpos.)  \qed
\end{proof}

\begin{proposition}
  \label{prop:descr:prod}
  For any two describable value types $\sigma$ and $\tau$, $\sigma
  \times \tau$ is describable.
\end{proposition}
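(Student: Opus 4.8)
The plan is to verify directly the two requirements of Definition~\ref{defn:descr} for the value type $\sigma \times \tau$---namely that $\Eval{\sigma\times\tau}$ has a basis of definable elements and that its Scott topology has a subbase of definable open subsets---by transporting the corresponding data for $\sigma$ and $\tau$ through the product constructions of Lemma~\ref{lemma:subbase:prod}.

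First I would handle the basis of definable elements. Since $\sigma$ and $\tau$ are describable, $\Eval\sigma$ has a basis $B_\sigma$ of definable elements and $\Eval\tau$ has a basis $B_\tau$ of definable elements. By the first item of Lemma~\ref{lemma:subbase:prod}, $B_\sigma \times B_\tau$ is a basis of $\Eval{\sigma\times\tau} = \Eval\sigma\times\Eval\tau$, so it suffices to check that each pair $(a,b)$ with $a\in B_\sigma$ and $b\in B_\tau$ is definable. Writing $a=\Eval A$ and $b=\Eval B$ for ground terms $A\colon\sigma$ and $B\colon\tau$, the ground term $\langle A,B\rangle\colon\sigma\times\tau$ satisfies $\Eval{\langle A,B\rangle}=(\Eval A,\Eval B)=(a,b)$ by the semantic clause for pairs. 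This step is routine.

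Next I would produce the subbase of definable opens. Let $\mathcal S_\sigma$, $\mathcal S_\tau$ be subbases of definable opens of $\Eval\sigma$, $\Eval\tau$. By the second item of Lemma~\ref{lemma:subbase:prod}, the family $\{U\times V\mid U\in\mathcal S_\sigma,\ V\in\mathcal S_\tau\}$ is a subbase of the Scott topology on $\Eval{\sigma\times\tau}$, so it is enough to show that each $U\times V$ is definable. Fixing ground terms $M_U\colon\sigma\to\F\Vt\unitT$ and $M_V\colon\tau\to\F\Vt\unitT$ with $\Eval{M_U}=\tilde\chi_U$ and $\Eval{M_V}=\tilde\chi_V$, I would set
\[
  M = \lambda x_{\sigma\times\tau} . \bigl(M_U(\pi_1 x_{\sigma\times\tau})\bigr)\wedge\bigl(M_V(\pi_2 x_{\sigma\times\tau})\bigr),
\]
where $M_1\wedge M_2=\pto{M_1}{x_{\Vt\unitT}}{M_2}$ is the conjunction operator on terms of type $\F\Vt\unitT$ introduced in the proof of Lemma~\ref{lemma:descr:base}. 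On an input $(a,b)$, the subterm $M_U(\pi_1\,\cdot)$ evaluates to $\{\delta_\top\}$ exactly when $a\in U$ and to $\bot$ otherwise, and similarly $M_V(\pi_2\,\cdot)$ with $b$ and $V$; hence their conjunction evaluates to $\{\delta_\top\}$ precisely when $a\in U$ and $b\in V$, i.e.\ when $(a,b)\in U\times V$, and to $\bot$ otherwise. Thus $\Eval M=\tilde\chi_{U\times V}$, so $U\times V$ is definable, and combining the two parts gives describability of $\sigma\times\tau$.

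The one place needing care---and the only step I expect to present any genuine obstacle---is checking that this conjunction actually computes $\tilde\chi_{U\times V}$. The operator $\wedge$ behaves like logical ``and'' only because each argument evaluates to $\{\delta_\top\}$ or to $\bot$, never to $\emptyset$, on which $\extd{h}$ would return the top element $\top$ instead of propagating $\bot$. I would therefore record explicitly that the maps $\tilde\chi_U$ and $\tilde\chi_V$ take only the two values $\{\delta_\top\}$ and $\bot$, so that the left argument of each $\pto{\cdot}{x_{\Vt\unitT}}{\cdot}$ is never $\emptyset$ and the evaluation of $M$ reduces exactly as in the proof of Lemma~\ref{lemma:descr:base}. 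Everything else is a direct appeal to the product-space facts of Lemma~\ref{lemma:subbase:prod} and to the definitions.
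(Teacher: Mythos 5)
Your proof is correct and follows essentially the same route as the paper: pairs $\langle A,B\rangle$ for the definable basis elements, and $\lambda x.\,M_U(\pi_1 x)\wedge M_V(\pi_2 x)$ with the conjunction from Lemma~\ref{lemma:descr:base} for the definable subbasic opens $U\times V$, both justified by Lemma~\ref{lemma:subbase:prod}. Your explicit check that the arguments of $\wedge$ only take the values $\{\delta_\top\}$ and $\bot$ (never $\emptyset$) is a point the paper leaves implicit, and is a welcome addition.
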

\begin{proof}
  We use Lemma~\ref{lemma:subbase:prod} with $X = \Eval \sigma$, $Y =
  \Eval \tau$.   $B_X$ (resp., $B_Y$) is the basis of
  definable elements of $\Eval \sigma$ (resp.,
  $\Eval {\tau}$).  $\mathcal B_X$ is the base of
  definable open subsets at type $\sigma$, obtained by
  Lemma~\ref{lemma:descr:base}, and similarly for $\mathcal B_Y$.
  The elements of $B_{X \times Y}$ are definable as $\langle M, N
  \rangle$, where $\Eval M \in B_X$, $\Eval N \in B_Y$, and the
  elements $U \times V$ of $\mathcal B_{X \times Y}$ are definable as
  $\lambda z_{\sigma \times \tau} . M (\pi_1 z) \land N (\pi_2 z)$,
  where $U$ is defined by $M$ and $V$ is defined by $N$, and where
  $\wedge$ was defined in the course of the proof of
  Lemma~\ref{lemma:descr:base}.  \qed
\end{proof}

\subsection{Function types}
\label{sec:arrow-types}

Semantically, at function types, the key result will be the following
Proposition~\ref{prop:subbase:fun}, which in particular says that the
Scott topology on $[X \to Y]$ coincides with the topology of pointwise
convergence, under certain assumptions.

A standard basis of $[X \to Y]$ is given by the \emph{step functions}
$\sup_{i=1}^m U_i \searrow b_i$, where each $U_i$ is open in $X$, each
$b_i$ is in $Y$, and $U \searrow b$ denotes the map that maps every
element of $U$ to $b$, and all others to $\bot$.  We show that this
can be refined by requiring $U_i$ to be taken from some given strong
base $\mathcal B_X$ of the topology on $X$, and $b_i$ to be taken from
some basis $B_Y$ of $Y$.  We note that $\sup_{i=1}^m U_i \searrow b_i$
maps each point $x \in X$ to $\sup_{i \in I} b_i$, where
$I = \{i \mid 1\leq i\leq m, x \in U_i\}$.  In general,
$\sup_{i \in I} b_i$ will not be in $B_Y$.  To avoid this problem, we
require our step functions to be of a special form.
\begin{definition}
  \label{defn:special:step}
  Let $X$ be a topological space, $\mathcal B_X$ be a strong base of
  the topology of $X$, $Y$ be a continuous dcpo, and $B_Y$ be a basis
  of $Y$.  A \emph{$(\mathcal B_X, B_Y)$-step function} is any step
  function of the form
  $\sup_{I \subseteq \{1, \cdots, m\}} U_I \searrow y_I$ where:
  \begin{enumerate}
  \item each $U_I$ is in $\mathcal B_X$;
  \item each $y_I$ is in $B_Y$;
  \item $U_\emptyset = X$ and $U_I \cap U_J = U_{I \cup J}$ for all
    $I, J \subseteq \{1, \cdots, m\}$;
  \item for all $I \subseteq J$, $y_I \leq y_J$.
  \end{enumerate}
\end{definition}

We make a preliminary remark.
\begin{lemma}
  \label{lemma:rem:basis}
  Given a continuous dcpo $Z$, and a
  family $B \subseteq Z$, in order to show that $B$ is a basis of $Z$
  it is enough to show that for every $z \in Z$, every Scott-open
  neighborhood $W$ of $z$ contains a $d \in B$ such that $d \ll z$.
\end{lemma}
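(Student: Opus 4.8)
The plan is to verify directly the definition of a basis: for every $z \in Z$, the set $D_z = \{b \in B \mid b \ll z\}$ must be shown to be directed and to satisfy $\sup D_z = z$. The whole argument will rest on applying the hypothesis to cleverly chosen Scott-open neighborhoods of $z$, namely the up-sets $\uuarrow b$, which are Scott-open precisely because $Z$ is continuous.

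First I would establish non-emptiness and directedness of $D_z$. Non-emptiness is immediate by applying the hypothesis to the open neighborhood $W = Z$ of $z$: this produces some $d \in B$ with $d \ll z$, i.e.\ $d \in D_z$. For directedness, given $b_1, b_2 \in D_z$, I would take $W = \uuarrow b_1 \cap \uuarrow b_2$; since $b_1 \ll z$ and $b_2 \ll z$ we have $z \in W$, and $W$ is Scott-open, so the hypothesis yields $d \in B$ with $d \in W$ and $d \ll z$. Then $d \in D_z$, and $d \in \uuarrow b_1 \cap \uuarrow b_2$ gives $b_1 \ll d$ and $b_2 \ll d$, hence $b_1, b_2 \leq d$. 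Thus $D_z$ is directed.

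Next I would compute the supremum. Every $d \in D_z$ satisfies $d \ll z$, hence $d \leq z$, so $z$ is an upper bound of $D_z$ and $\sup D_z \leq z$. For the reverse inequality I would invoke the continuity of $Z$, which gives $z = \sup\{y \in Z \mid y \ll z\}$. Fixing any $y \ll z$, the set $\uuarrow y$ is a Scott-open neighborhood of $z$, so the hypothesis provides $d \in B$ with $y \ll d$ and $d \ll z$; then $d \in D_z$ and $y \leq d \leq \sup D_z$. Since this holds for every $y \ll z$, we obtain $z = \sup\{y \mid y \ll z\} \leq \sup D_z$, and therefore $\sup D_z = z$.

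I expect no serious obstacle here; the argument is a direct unwinding of the definitions of basis, way-below, and continuity. The only points requiring a little care are to secure directedness before forming $\sup D_z$ (so that the directed supremum is legitimately available in the dcpo) and to pick the right test neighborhoods, the up-sets $\uuarrow b$ and $\uuarrow y$, whose Scott-openness is exactly what continuity of $Z$ provides.
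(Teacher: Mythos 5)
Your proof is correct and follows essentially the same route as the paper: non-emptiness via $W=Z$, directedness via $W=\uuarrow b_1 \cap \uuarrow b_2$, and then $\sup D_z = z$. The only (immaterial) difference is in the last step, where the paper observes that every Scott-open neighborhood of $z$ contains an element of $D_z$ and concludes $z \leq \sup D_z$ via the specialization order, whereas you interpolate through $z = \sup\{y \mid y \ll z\}$ and test the neighborhoods $\uuarrow y$; both are valid one-line arguments.
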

\begin{proof}
  If so, then the family $B_z = \{d \in B \mid d \ll z\}$ is non-empty
  (take $W=Z$) and directed (for any two $d_1, d_2 \in B_z$, take
  $W = \uuarrow d_1 \cap \uuarrow d_2$), and $\sup B_z = z$ (for every
  open neighborhood $W$ of $z$, some element of $B_z$ is in $W$ so
  $\sup B_z \geq z$, and the converse inequality is obvious).  \qed
\end{proof}

A \emph{core-compact} topological space $X$ is one whose lattice of
open subsets is a continuous dcpo.  We write $\Subset$ for the
way-below relation on that lattice.  Every locally compact space is
core-compact, with $U \Subset V$ if and only if
$U \subseteq Q \subseteq V$ for some compact saturated set $Q$.
\begin{lemma}
  \label{lemma:step:fun}
  Let $X$ be a core-compact space, $\mathcal B_X$ be a strong base of
  the topology of $X$, $Y$ be a a continuous complete lattice, and
  $B_Y$ be a basis of $Y$.  Then $[X \to Y]$ is a continuous complete
  lattice, with a basis of $(\mathcal B_X, B_Y)$-step functions.
\end{lemma}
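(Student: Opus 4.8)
The plan is to verify the complete-lattice structure quickly and then devote the work to producing, for each $f$ and each Scott-open $W \ni f$, a $(\mathcal B_X,B_Y)$-step function $s \ll f$ lying in $W$; by Lemma~\ref{lemma:rem:basis} this shows the $(\mathcal B_X,B_Y)$-step functions form a basis, and in particular that $[X\to Y]$ is continuous. First I would observe that $[X\to Y]$ is a complete lattice: arbitrary suprema are computed pointwise (the pointwise supremum of Scott-continuous maps is Scott-continuous, since suprema commute), the constant map $\top$ is a top element, so all infima exist as well. Thus only the existence of the required basis remains.

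The technical heart is a way-below lemma for single-step functions. I claim that if $U\in\mathcal B_X$, $b\in B_Y$, and $U \Subset f^{-1}(\uuarrow{b''})$ for some $b''$ with $b \ll b''$, then $U\searrow b \ll f$. To see this, suppose $f \le \sup_j g_j$ for a directed family $(g_j)$; the sets $g_j^{-1}(\uuarrow b)$ form a directed family of opens, and for every $y \in f^{-1}(\uuarrow{b''})$ we have $b'' \ll f(y) \le \sup_j g_j(y)$, so $b'' \le g_j(y)$ for some $j$, whence $b \ll b'' \le g_j(y)$ and $y \in g_j^{-1}(\uuarrow b)$; therefore $f^{-1}(\uuarrow{b''}) \subseteq \bigcup_j g_j^{-1}(\uuarrow b)$. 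Since $U \Subset f^{-1}(\uuarrow{b''})$ in the continuous lattice of opens (here core-compactness of $X$ is used), a single $g_j$ already satisfies $U \subseteq g_j^{-1}(\uuarrow b)$, i.e.\ $g_j \ge U\searrow b$. Conversely, given any $x$ and any $b\in B_Y$ with $b \ll f(x)$, I interpolate $b \ll b'' \ll f(x)$ in $Y$, use core-compactness to find an open $V$ with $x \in V \Subset f^{-1}(\uuarrow{b''})$, and shrink it to some $U\in\mathcal B_X$ with $x\in U\subseteq V$; then $U\searrow b \ll f$ and $U\searrow b \le f$. Finite suprema of such single steps are again $\ll f$ and form a directed family whose pointwise supremum is $f$ (using continuity of $Y$ and the fact that $B_Y$ is a basis at each $f(x)$), so some finite supremum $s_0 = \sup_{k=1}^m U_k\searrow b_k$ with $U_k\in\mathcal B_X$, $b_k\in B_Y$ lies in $W$ and satisfies $s_0\ll f$.

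It remains to canonicalize $s_0$. Setting $U_I = \bigcap_{k\in I}U_k$ (so $U_I\in\mathcal B_X$ since $\mathcal B_X$ is a strong base, $U_\emptyset=X$, and $U_I\cap U_J=U_{I\cup J}$) and $z_I = \sup_{k\in I}b_k$ (so $z_I \le z_J$ for $I\subseteq J$), one checks pointwise that $s_0 = \sup_{I\subseteq\{1,\dots,m\}} U_I\searrow z_I$, so $(U_I,z_I)$ satisfies conditions~(1),(3),(4) of Definition~\ref{defn:special:step} but $z_I$ need not lie in $B_Y$. I would then approximate: consider the monotone tuples $(y_I)_I$ with $y_I\in B_Y$ and $y_I \ll z_I$, ordered pointwise. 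Such tuples form a directed family, proved by constructing a common upper bound by induction on $|I|$: at stage $I$ the finitely many elements $\{y_I,y_I'\}\cup\{y_J'' : J\subsetneq I\}$ are all way-below $z_I$ (since $y_J''\ll z_J\le z_I$), so directedness of $\{y\in B_Y : y\ll z_I\}$ supplies a basis element $y_I''\ll z_I$ above them, preserving monotonicity. The same inductive construction shows the $I$-th components reach $\sup\{y\in B_Y : y\ll z_I\}=z_I$, so the associated step functions $\sup_I U_I\searrow y_I$ form a directed family with supremum $s_0$. Since $W$ is Scott-open and $s_0\in W$, some such $s = \sup_I U_I\searrow y_I$ lies in $W$; it is a genuine $(\mathcal B_X,B_Y)$-step function, and from $s \le s_0 \ll f$ we get $s \ll f$, completing the argument via Lemma~\ref{lemma:rem:basis}.

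The main obstacle is this last canonicalization step: replacing the joined values $z_I$ by honest basis elements $y_I\in B_Y$ while simultaneously preserving monotonicity~(4) and keeping the resulting step function large enough to remain in $W$. Everything else — the single-step way-below computation and the pointwise density of step functions — is routine, but the compatible monotone selection of approximants, and the verification that the canonical step functions form a directed family converging to $s_0$, is where the care is needed.
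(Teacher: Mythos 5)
Your proof is correct, and while it follows the same overall skeleton as the paper's (reduce to Lemma~\ref{lemma:rem:basis}, then place a step function with opens from $\mathcal B_X$ and values from $B_Y$ way-below $f$ and inside $W$, after normalizing to the $\sup_I U_I \searrow y_I$ form), it differs in two substantive ways. First, where the paper imports the existence of a basis of step functions from Proposition~2 of \cite{EEK:waybelow} and then spends a separate stage shrinking the opens to members of $\mathcal B_X$ via finite subunions, you prove the single-step way-below criterion ($U \Subset f^{-1}(\uuarrow b'')$ with $b \ll b''$ implies $U \searrow b \ll f$) directly from core-compactness, which produces step functions with basic opens and basis values in one stroke; this makes the argument self-contained at the cost of redoing a known computation. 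Second, for the delicate step of replacing the joined values $z_I = \sup_{k \in I} b_k$ (which need not lie in $B_Y$) by genuine basis elements while preserving monotonicity, the paper runs an explicit induction on downwards-closed families of $\pow(\{1,\dots,m\})$, replacing one value at a time and patching its supersets, with an appeal to Scott-openness of $W$ at each stage; you instead exhibit the whole directed family of monotone $B_Y$-valued tuples way-below $(z_I)_I$, check that its supremum is $(z_I)_I$, and invoke openness of $W$ once. The two canonicalizations are morally equivalent --- your directedness proof hides the same inductive construction of compatible approximants --- but your packaging localizes the use of $W$ and is, if anything, slightly easier to verify.
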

Note: one could replace ``continuous complete lattice'' by
``bc-domain'' here, and the proof would only be slightly more
complicated.

\begin{proof}
  We apply Lemma~\ref{lemma:rem:basis} to $Z = [X \to Y]$.  By
  Proposition~2 of \cite{EEK:waybelow}, $Z$ is a bounded complete
  continuous dcpo with a basis $B_0$ of step functions, and since it
  has a top, it is a continuous complete lattice.  Let $B_1$ be the
  family of step functions of the form $\sup_{i=1}^m V_i \searrow y_i$
  where each $V_i$ is in $\mathcal B_X$.  For every $f \in [X \to Y]$,
  and every Scott-open neighborhood $W$ of $f$, there is an element
  $\sup_{i=1}^m U_i \searrow y_i$ of $B_0$, way-below $f$, and in $W$.
  Let us write $U_i$ as a union $\bigcup_{j \in J_i} V_{ij}$ of
  elements of $\mathcal B_X$.  The family of maps
  $\sup_{i=1}^m (\bigcup_{j \in F_i} V_{ij}) \searrow y_i$, where
  $F_i$ ranges over the finite subsets of $J_i$ for each $i$, is
  directed (since an upper bound of
  $\sup_{i=1}^m (\bigcup_{j \in F_i} V_{ij}) \searrow y_i$ and of
  $\sup_{i=1}^m (\bigcup_{j \in F'_i} V_{ij}) \searrow y_i$ is
  $\sup_{i=1}^m (\bigcup_{j \in F_i \cup F'_i} V_{ij}) \searrow y_i$),
  and has $\sup_{i=1}^m U_i \searrow y_i$ as supremum (because for
  every $x \in X$, letting $I = \{i \mid x \in U_i\}$, there are
  indices $j_i \in J_i$ for each $i \in I$ such that $x \in V_{ij}$,
  hence $x \in F_i$ where $F_i = \{j_i\}$).  Hence
  $\sup_{i=1}^m (\bigcup_{j \in F_i} V_{ij}) \searrow y_i$ is in $W$
  for some finite subsets $F_i$ of $J_i$, $1\leq i\leq m$.  Now
  $\sup_{i=1}^m (\bigcup_{j \in F_i} V_{ij}) \searrow y_i$ is equal to
  $\sup_{i=1}^m \sup_{j \in F_i} V_{ij} \searrow y_i$, showing that it
  is in $B_1$.  Moreover,
  $\sup_{i=1}^m (\bigcup_{j \in F_i} V_{ij}) \searrow y_i \leq
  \sup_{i=1}^m U_i \searrow y_i \ll f$.  We can therefore apply our
  preliminary remark and conclude that $B_1$ is a basis of
  $[X \to Y]$.

  Given any element $\sup_{i=1}^m U_i \searrow y_i$ of $B_1$, we can
  write it as $\sup_{I \subseteq \{1, \cdots, m\}} U_I \searrow y_I$,
  where for each $I \subseteq \{1, \cdots, m\}$,
  $U_I = \bigcap_{i \in I} U_i$ and $y_I = \sup_{i \in I} y_i$.  (In
  case $Y$ were a bc-domain, the same argument would apply provided we
  only considered the subsets $I$ such that $U_I$ is non-empty.)  Note
  that: $(a)$ for all $I, J \subseteq \{1, \cdots, m\}$, $I \subseteq J$
  implies $y_I \leq y_J$.  Also: $(b)$ $U_\emptyset = X$, for all
  $I, J \subseteq \{1, \cdots, m\}$, $U_I \cap U_J = U_{I \cup J}$,
  and each $U_I$ is in $\mathcal B_X$ (because $\mathcal B_X$ is a
  strong base).

  Let $B_2$ be the family of maps
  $\sup_{I \subseteq \{1, \cdots, m\}} U_I \searrow y_I$, where $U_I$
  and $y_I$ satisfy conditions $(a)$ and $(b)$ and, additionally, each
  $y_I$ is in $B_Y$.  For every $f \in [X \to Y]$, and every
  Scott-open neighborhood $W$ of $f$, $W$ contains an element
  $g = \sup_{I \subseteq \{1, \cdots, m\}} U_I \searrow y_I$ of $B_1$
  satisfying conditions $(a)$ and $(b)$ and way-below $f$.

  Here is the idea of the rest of the proof.  Enumerating the subsets
  $I$ of $\{1, \cdots, m\}$ so that the cardinality of $I$ never goes
  down, starting from the empty set, we replace $y_I$ by an element
  $z_I$ such that $z_I \ll y_I$ and $z_I \in B_Y$; at each step, we
  also replace $y_J$ by $\sup (z_I, y_J)$ for all strict supersets $J$
  of $I$, so that $(a)$ still holds.  Since $B_Y$ is a basis, for
  $z_I$ large enough, the resulting function will be in $W$.  We can
  also require that $z_J \leq z_I$ for all $J \subsetneq I$, since all
  those elements $z_J$ have been chosen in previous steps so that
  $z_J \ll y_J$.  At the end of the enumeration, we obtain a function
  $h = \sup_{I \subseteq \{1, \cdots, m\}} U_I \searrow z_I$ of $B_1$
  satisfying conditions $(a)$ and $(b)$, in $W$, below $g$ hence
  way-below $f$, and such that $z_I \in B_Y$ and $z_I \ll y_I$ for
  every $I \in \{1, \cdots, m\}$.  In particular, that element $h$ is
  in $B_2$.  Let us now prove that formally.

  We claim that: $(*)$ for every downwards-closed family $\mathcal I$
  of $\pow (\{1, \cdots, m\})$ (downwards-closed with respect to
  inclusion), there is an element $h$ of the form
  $\sup_{I \subseteq \{1, \cdots, m\}} U_I \searrow z_I$ in $B_1$
  satisfying conditions $(a)$ and $(b)$, lying in $W$, such that
  $z_I \leq y_I$ for every $I \subseteq \{1, \cdots, m\}$ (in
  particular, $h \leq g$), and such that $z_I \in B_Y$ and
  $z_I \ll y_I$ for every $I \in \mathcal I$.  This is proved by
  induction on the cardinality of $\mathcal I$.  This is vacuous if
  $\mathcal I$ is empty.  Hence consider a non-empty downwards-closed
  family $\mathcal I$ of $\pow (\{1, \cdots, m\})$, let $I_0$ be a
  maximal element of $\mathcal I$, and let
  $\mathcal I' = \mathcal I \diff \{I_0\}$.  Notice that $\mathcal I'$
  is again downwards-closed.  Hence, by induction hypothesis, there an
  element $h = \sup_{I \subseteq \{1, \cdots, m\}} U_I \searrow z_I$
  of $B_1$ satisfying conditions $(a)$ and $(b)$, in $W$, such that
  $z_I \leq y_I$ for every $I \subseteq \{1, \cdots, m\}$, and such
  that $z_I \in B_Y$ and $z_I \ll y_I$ for every $I \in \mathcal I'$.
  Since $B_Y$ is a basis, we can write $y_{I_0}$ as the supremum of a
  directed family ${(y'_j)}_{j \in J}$ of elements of $B_I$.  For each
  $j \in J$, let
  $h[y'_j] = \sup (\sup_{I \subseteq \{1, \cdots, m\}, I \neq I_0} U_I
  \searrow z_I, (U_{I_0} \searrow y'_j))$.  One checks easily that
  ${(h [y'_j])}_{j \in J}$ is a directed family whose supremum is
  above $h$.  Hence $h [y'_j]$ is in $W$ for some $j \in J$.  For
  every $I \subsetneq I_0$, $z_I \ll y_I \leq y_{I_0}$ (using $(a)$),
  so there is a $j_I \in J$ such that $z_I \leq y'_{j_I}$.  By
  directedness, we can assume without loss of generality that $j$ and
  all the indices $j_I$, $I \subsetneq I_0$, are equal.  (Otherwise
  replace them by some $k \in J$ such that $y'_j \leq y'_k$ and
  $y'_{j_I} \leq y'_k$ for every $I \subsetneq I_0$.)  For every
  $I \subseteq \{1, \cdots, m\}$, we define $z'_I$ as $z_I$ if $I$
  does not contain $I_0$, as $y'_j$ if $I=I_0$, and as
  $\sup (z_I, y'_j)$ if $I$ contains $I_0$ strictly.  We have:
  \begin{itemize}
  \item For all $I \subseteq J \subseteq \{1, \cdots, m\}$,
    $z'_I \leq z'_J$.  The key case is when $J=I_0$, which follows
    from the fact that $y'_j = z'_{I_0}$ was chosen larger than or
    equal to every $z_I$, $I \subsetneq I_0$, and that $z'_I = z_I$ in
    this case.  When $I=I_0$ instead, $z'_I = y'_j \leq y_{I_0} \leq
    y_J$ (by $(a)$).  The cases where $I$, $J$ are both different from
    $I_0$ are easy verifications.
  \item Hence the function
    $h' = \sup_{I \subseteq \{1, \cdots, m\}} U_I \searrow z'_I$
    satisfies $(a)$, and trivially $(b)$ as well.
  \item For every $I \subseteq \{1, \cdots, m\}$, $z'_I \leq y_I$.
    When $I=I_0$, this is because $z'_I = y'_j \leq y_{I_0}$.  When $I
    \supsetneq I_0$, $z'_I = \sup (z_I, y'_j) \leq \sup (y_I, y_{I_0})
    = y_I$, using $(a)$.  
  \item We have built $h'$ so that it is in $W$.
  \item For every $I \in \mathcal I$, $z'_I$ is in $B_Y$: when $I=I_0$,
    this is because $z'_{I_0} = y'_j$ is in $B_Y$; otherwise, since
    $I_0$ is maximal in $\mathcal I$, $I$ cannot contain $I_0$, so
    $z'_I=z_I$, which is in $B_Y$ because $I \in \mathcal I'$, using
    the induction hypothesis.
  \item For every $I \in \mathcal I$, $z'_I \ll y_I$: when $I=I_0$,
    this is because $z'_I = y'_j \ll y_{I_0}$; otherwise, $z'_i=z_I
    \ll y_I$ because $I \in \mathcal I'$, using the induction
    hypothesis.
  \end{itemize}
  This finishes to prove claim $(*)$.  Applying this claim to the case
  where $\mathcal I$ is the whole of $\pow (\{1, \cdots, m\})$, we
  obtain an element
  $h = \sup_{I \subseteq \{1, \cdots, m\}} U_I \searrow z_I$ of $B_1$
  satisfying conditions $(a)$ and $(b)$, in $W$, below $g$ hence way-below
  $f$, and such that $z_I \in B_Y$ and $z_I \ll y_I$ for every
  $I \in \{1, \cdots, m\}$.  In particular, that element $h$ is in
  $B_2$.  By our preliminary remark, $B_2$ is a basis of $[X \to Y]$.
  \qed
\end{proof}

\begin{proposition}
  \label{prop:subbase:fun}
  Let $X$ be a continuous dcpo and $Y$ be a bc-domain.  Let $B_X$ be a
  basis of $X$, $\mathcal B_X$ be a base of the Scott topology on $X$.
  Let $B_Y$ be a basis of $Y$, and $\mathcal S_Y$ be a subbase of the
  Scott topology on $Y$.  Then:
  \begin{itemize}
  \item The set $B_{[X \to Y]}$ of all $(\mathcal B_X, B_Y)$-step
    functions is a basis of $[X \to Y]$.
  \item The set $\mathcal S_{[X \to Y]}$ of all opens $[x \mapsto V]$,
    $x \in B_X$, $V \in \mathcal S_Y$, is a subbase of the Scott
    topology on $[X \to Y]$.  We write $[x \mapsto V]$ for the open
    subset $\{f \in [X \to Y] \mid f (x) \in V\}$.
  \end{itemize}
\end{proposition}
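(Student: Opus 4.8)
The plan is to prove the two bullet points in turn, leaning on Lemma~\ref{lemma:step:fun} for the basis claim and then deducing the subbase claim from it. First I would dispose of the basis statement. Since $X$ is a continuous dcpo it is core-compact, its Scott topology has $\mathcal B_X$ as a base, and one can pass to the \emph{strong} base generated by $\mathcal B_X$ (closing under finite intersections) without leaving the collection of Scott-opens; a bc-domain $Y$ with basis $B_Y$ is exactly the setting covered by the note after Lemma~\ref{lemma:step:fun}. Applying that lemma (in its bc-domain form) gives that $[X \to Y]$ is a continuous dcpo whose $(\mathcal B_X, B_Y)$-step functions form a basis. This is essentially a direct citation, so the first bullet is short.

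The substance is the second bullet: that the opens $[x \mapsto V]$ with $x \in B_X$, $V \in \mathcal S_Y$ form a \emph{subbase} of the Scott topology on $[X \to Y]$. I would prove this by showing that the finite intersections of such sets form a base, i.e.\ that every basic Scott-open $\uuarrow h$, for $h$ a $(\mathcal B_X, B_Y)$-step function from the first bullet, is a union of such finite intersections. Given $f \in \uuarrow h$, so $h \ll f$, the strategy is to read off from the step-function data $h = \sup_{I} U_I \searrow y_I$ a finite list of ``point-value'' constraints on $f$ that are (i) satisfied by $f$, (ii) each of the allowed subbasic form, and (iii) jointly force membership in $\uuarrow h$. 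Concretely, for each relevant $I$ I would pick a basis element $x_I \in B_X$ with $x_I \ll$ some point witnessing $U_I$ (using that $\mathcal B_X$ is a base and $B_X$ a basis of $X$), and for the value $y_I$ use that $\mathcal S_Y$ is a subbase of $Y$ to surround $y_I$ by finitely many subbasic opens $V \in \mathcal S_Y$ with $y_I \in V$ and $V \subseteq \uuarrow(\text{something} \ll y_I)$. The intersection $\bigcap [x_I \mapsto V_{I}]$ is then a neighbourhood of $f$ contained in $\uuarrow h$.

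The hard part will be step (iii): verifying that the finitely many pointwise constraints $g(x_I) \in V_I$ genuinely \emph{imply} $h \ll g$, rather than merely $h \le g$. The way-below relation in $[X \to Y]$ is governed by the step-function presentation, and the delicate point is matching the combinatorics of the index sets $I$ (with $U_\emptyset = X$, $U_I \cap U_J = U_{I\cup J}$, and $y_I \le y_J$ for $I \subseteq J$) against the chosen test points $x_I$ so that dominating $h$ at those finitely many points forces domination everywhere in the way-below sense. I expect to need the core-compactness of $X$ here, to replace each open $U_I$ by a point $x_I \in B_X$ lying in the interior of a compact saturated set witnessing $U_I' \Subset U_I$, and to use that $h$ way-below $f$ already provides the slack $h < f$ that the subbasic neighbourhoods can exploit. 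Once the implication ``finitely many pointwise tests $\Rightarrow h \ll g$'' is established, the union over all such finite-intersection neighbourhoods of points $f \in \uuarrow h$ recovers $\uuarrow h$, and since the sets $\uuarrow h$ form a base (first bullet), the proposed family is a subbase, as claimed.
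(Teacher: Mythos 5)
Your first bullet matches the paper's proof, which is just an appeal to Lemma~\ref{lemma:step:fun} (in its bc-domain variant). For the second bullet the paper does not argue at all: it cites Lemma~5.16 of \cite{JGL-mscs09}, which says that the sets $[x \mapsto V]$, $x \in X$, $V$ open in $Y$, form a subbase of the Scott topology of $[X \to Y]$ when $X$ is a continuous poset and $Y$ a bc-domain, and leaves implicit the routine restriction to $x \in B_X$, $V \in \mathcal S_Y$. You instead set out to reprove that fact, and your outline (refine each basic open $\uuarrow h$, $h$ a step function, by finite intersections of pointwise constraints) is the right one; but the step you yourself flag as ``the hard part'' is precisely where your concrete plan would fail, in two ways.

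First, one test point $x_I$ per piece $U_I$ cannot suffice: the constraint $g (x_I) \in V_I$ controls $g$ only on $\uparrow x_I$, whereas to force $U_I \searrow y_I \ll g$ one needs (a sufficient condition being $U_I \Subset g^{-1} (\uuarrow y_I)$) all of $U_I$ trapped in a fixed compact subset of $g^{-1} (\uuarrow y_I)$. The repair is to first arrange $U_I \Subset f^{-1} (\uuarrow y_I)$ for each $I$---this is not automatic from $h \ll f$, which only gives $h \leq f$ pointwise; one takes $h$ from the directed family of step functions with that property, as in \cite{EEK:waybelow}---and then to cover $U_I$ by \emph{finitely many} sets $\uuarrow x_{I,k}$ with $x_{I,k} \in B_X \cap f^{-1} (\uuarrow y_I)$. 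The constraints $y_I \ll g (x_{I,k})$ then yield $U_I \subseteq \bigcup_k \uparrow x_{I,k} \subseteq g^{-1} (\uuarrow y_I)$ with the middle set compact, hence $U_I \searrow y_I \ll g$, and a finite supremum of elements way below $g$ is way below $g$.

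Second, your handling of the $Y$-coordinate goes the wrong way: choosing $V \ni y_I$ with $V \subseteq \uuarrow z$ for some $z \ll y_I$ makes $g (x) \in V$ imply only $z \ll g (x)$, i.e., it bounds $g$ from below by a step function whose values $z_I$ sit strictly below $y_I$; that gives at best $h \leq g$, not $h \ll g$---exactly the failure mode you were worried about. What is needed is a finite intersection $\bigcap_j V_j$ of elements of $\mathcal S_Y$ with $f (x_{I,k}) \in \bigcap_j V_j \subseteq \uuarrow y_I$ (available since $\uuarrow y_I$ is open and contains $f (x_{I,k})$ after the arrangement above), so that membership forces $y_I \ll g (x_{I,k})$. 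With these two repairs your argument becomes a correct, self-contained proof of what the paper imports from \cite{JGL-mscs09}; as written, the central implication ``finitely many pointwise tests imply $h \ll g$'' is not established.
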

\begin{proof}
  The first part is Lemma~\ref{lemma:step:fun}.  The second part is
  based on Lemma~5.16 of \cite{JGL-mscs09}, which states that the
  subsets $[x \mapsto V]$, $x \in X$, $V$ open in $Y$, form a subbase
  of the topology of $[X \to Y]$, as soon as $X$ is a continuous poset
  and $Y$ is a bc-domain.  \qed
\end{proof}

We introduce the following abbreviations.
\begin{itemize}
\item For all $M \colon \intT$, $N, P \colon \underline\tau$, and for
  every $n \in \nat$, $\pif {(M \eq \underline n)} N P$ denotes
  $\pifz {\underbrace{\pred (\pred \cdots (\pred}_{n\text{ times}}
    M))} N P$.
\item Given terms $M \colon \intT$ and $N_1$, \ldots, $N_n$ of type
  $\underline\tau$,
  $\pswitchkw\; M \colon \underline 1 \mapsto N_1 \mid \cdots \mid
  \underline n \mapsto N_n$ abbreviates:
  \[
    \begin{array}{l}
      \pif {(M \eq \underline n)} {N_n} {(} \\
      \qquad \pif {(M \eq \underline {n-1})} {N_{n-1}} {(} \\
      \qquad\qquad \cdots \\
      \qquad \qquad \pif {(M \eq \underline 2)} {N_2} {(} \\
      \qquad \qquad \qquad \pif {(M \eq \underline 1)} {N_1} \\
      \qquad \qquad \qquad \qquad
      {\abort_{\underline\tau}}
      \quad))).
    \end{array}
  \]
  In particular, if $n=0$, this is equal to $\abort_{\underline\tau}$.
\item Given terms $M \colon \unitT$ and $N \colon \unitT$,
  $M \vee N \colon \unitT$ is the term defined as
  $\bigcirc_{> 1/2} (\pifz {(M; \underline 0)} {(\produce \retkw
    \underline *)} {(\produce \retkw N)})$.
\item Given a term $M \colon \unitT$, $n \in \nat$ and $i \in \nat$
  such that $1 \leq i \leq n$, $\tagc M i n$ is the term of type
  $\F{} \intT$ defined as
  $\pifz {(M; \underline 0)} {\abort_{\F{} \intT}} {(\produce i)}$.
\item Given terms $M_1$, \ldots, $M_n$ of type $\unitT$ and
  $N_1$, \ldots, $N_n$ of type $\underline\tau$,
  $\pcasekw\; M_1 \mapsto N_1 \mid \cdots \mid M_n \mapsto N_n$
  abbreviates:
  \[
    \begin{array}{l}
      (\tagc {M_1} 1 n \owedge \cdots \owedge \tagc {M_n} n n)) \\
      \qquad \pto \relax {y_{\intT}} {\pswitchkw\; y_{\intT} \colon \underline 1 \mapsto N_1 \mid \cdots \mid
  \underline n \mapsto N_n}.
    \end{array}
  \]
\end{itemize}

\begin{lemma}
  \label{lemma:pswitch}
  \begin{enumerate}
  \item $\Eval {\pif {(M \eq \underline n)} N P} \rho$ is equal to $\Eval
    N \rho$ if $\Eval M \rho = n$, to $\Eval P \rho$ if $\Eval M \rho
    \neq n, \bot$ and to $\Eval N \rho \wedge \Eval P \rho$ if $\Eval
    M \rho = \bot$;
  \item
    $\Eval {\pswitchkw\; M \colon \underline 1 \mapsto N_1 \mid \cdots
      \mid \underline n \mapsto N_n} \rho$ is equal to
    $\Eval {N_m} \rho$ if $\Eval M \rho$ is an element $m$ of
    $\{1, \cdots, n\}$, to
    $\bigwedge_{i \in \{1, \cdots, n\}} \Eval {N_i} \rho$ if
    $\Eval M \rho = \bot$, and to $\top$ otherwise.
  \item $\Eval {M \vee N} \rho = \sup (\Eval M \rho, \Eval N \rho)$.
  \item $\Eval {\tagc M i n} \rho$ is equal to $\emptyset$ if
    $\Eval M \rho=\top$, to $\{i\}$ if $\Eval M \rho = \bot$.
  \end{enumerate}
\end{lemma}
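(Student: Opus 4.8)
The plan is to prove all four items by direct computation, unfolding the denotational semantics of Figure~\ref{fig:sem} for each derived construct and reading off the clauses for $\pifzkw$, $\pred$, $;$, $\produce$, $\retkw$, $\abort$ and $\bigcirc_{>b}$. Beyond plain unfolding, the only facts I will invoke are: that binary infima in $\Smyth^\top_\bot(X)$ are unions while suprema are intersections (Proposition~\ref{prop:Qtop:1}), so that $\emptyset$ is the top element; and that $b \ll t$ in $[0,1]$ if and only if $b=0$ or $b<t$. I will also record once, by induction on the structure of $\underline\tau$, that the semantic clause for $\pifzkw$ displayed in Figure~\ref{fig:sem} for base type $\F{}\tau$ in fact holds verbatim at every computation type $\underline\tau$ (with $\wedge$ read as the infimum of $\Eval{\underline\tau}$), since application and binary infima are computed pointwise (Lemma~\ref{lemma:sound:aux}); this is what allows items~1 and~2 to range over arbitrary $\underline\tau$.

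For item~1, I would first establish by an easy induction on $n$, using the clause for $\pred$, that $\Eval{\pred^n M}\rho$ (that is, $\pred$ applied $n$ times) equals $\Eval M\rho - n$ when $\Eval M\rho \in \Z$ and equals $\bot$ when $\Eval M\rho = \bot$. Feeding this guard into the $\pifzkw$ clause yields exactly the three stated cases, since the guard is $0$ precisely when $\Eval M\rho = n$, a non-zero integer precisely when $\Eval M\rho \in \Z \diff \{n\}$, and $\bot$ precisely when $\Eval M\rho = \bot$. Item~2 then follows by induction on $n$, evaluating the outermost guard with item~1 and case-splitting on $\Eval M\rho$: when $\Eval M\rho = m \in \{1,\dots,n\}$, the layers guarded by indices above $m$ take their else-branch and the $m$-th layer returns $\Eval{N_m}\rho$; when $\Eval M\rho = \bot$, every layer contributes an infimum, so the value is $\bigwedge_{i} \Eval{N_i}\rho \wedge \Eval{\abort_{\underline\tau}}\rho = \bigwedge_i \Eval{N_i}\rho$ because $\Eval{\abort_{\underline\tau}}\rho$ is the top element $\top$; and when $\Eval M\rho$ is an integer outside $\{1,\dots,n\}$, every layer takes its else-branch down to $\abort_{\underline\tau}$, giving $\top$.

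For items~3 and~4 the crucial observation is that the guard $M;\underline 0$, with $M\colon\unitT$, denotes $0$ when $\Eval M\rho = \top$ and $\bot$ when $\Eval M\rho = \bot$, so the middle ($\neq 0, \bot$) case of $\pifzkw$ never arises. Item~4 is then immediate: unfolding gives $\emptyset$ when $\Eval M\rho = \top$, and $\Eval{\abort_{\F{}\intT}}\rho \wedge \Eval{\produce\underline i}\rho = \emptyset \wedge \{i\} = \{i\}$ when $\Eval M\rho = \bot$, using $\emptyset = \top$ in $\Smyth^\top_\bot(\Z_\bot)$ and $\Eval{\produce\underline i}\rho = \eta^\Smyth(i) = \{i\}$. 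For item~3, I would first compute the inner term $\pifz{(M;\underline 0)}{(\produce\retkw\underline*)}{(\produce\retkw N)}$: it denotes $\eta^\Smyth(\delta_\top)$ when $\Eval M\rho = \top$, and $\eta^\Smyth(\delta_\top) \wedge \eta^\Smyth(\delta_{\Eval N\rho}) = \upc\{\delta_\top, \delta_{\Eval N\rho}\}$ when $\Eval M\rho = \bot$. Substituting into the $\bigcirc_{>1/2}$ clause and using $\delta_\top(\{\top\}) = 1$, $\delta_\bot(\{\top\}) = 0$, together with $1/2 \ll 1$ but $1/2 \not\ll 0$, gives $\top$ in every case except $\Eval M\rho = \Eval N\rho = \bot$ (where $\delta_\bot$ witnesses failure of the threshold test), which is precisely $\sup(\Eval M\rho, \Eval N\rho)$ in $\Sierp$.

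There is no deep obstacle: the whole lemma is bookkeeping over the semantic clauses. The three points I would be careful to state rather than leave implicit are the computation of binary infima in $\Smyth^\top_\bot$ as unions with top element $\emptyset$, the pointwise validity of the $\pifzkw$ clause at higher computation types needed for items~1 and~2, and the characterisation of $\ll$ on $[0,1]$ that makes $\bigcirc_{>1/2}$ act as a strict $1/2$-threshold in item~3.
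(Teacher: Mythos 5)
Your proof is correct and follows essentially the same route as the paper's: items 1, 3 and 4 by direct unfolding of the semantic clauses (the paper simply declares these ``clear''), and item 2 by induction on $n$ using item 1 together with the fact that $\Eval{\abort_{\underline\tau}}\rho$ is the top element of $\Eval{\underline\tau}$. Your explicit remark that the $\pifzkw$ clause must first be propagated pointwise to arbitrary computation types is a detail the paper leaves implicit, but it does not change the argument.
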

\begin{proof}
  1, 3 and 4 are clear.  We prove item~2 by induction on $n$.  If
  $n=0$, then
  $\Eval {\pswitchkw\; M \colon \underline 1 \mapsto N_1 \mid \cdots
    \mid \underline n \mapsto N_n} \rho = \Eval
  {\abort_{\underline\tau}} \rho$, which is the top element $\top$ of
  $\Eval {\underline\tau}$, as an easy induction on $\underline\tau$
  shows.  Note that, in case $\Eval M \rho=\bot$, this is also equal
  to $\bigwedge_{i \in \{1, \cdots, n\}} \Eval {N_i} \rho$ since
  $n=0$.  If $n \geq 1$, then by item~1,
  $\Eval {\pswitchkw\; M \colon \underline 1 \mapsto N_1 \mid \cdots
    \mid \underline n \mapsto N_n} \rho$ is equal to
  $\Eval {N_n} \rho$ if $\Eval M \rho=n$, to
  $\Eval {\pswitchkw\; M \colon \underline 1 \mapsto N_1 \mid \cdots
    \mid \underline {n-1} \mapsto N_{n-1}} \rho$ if
  $\Eval M \rho \in \Z \diff \{n\}$, and to their infimum if
  $\Eval M \rho=\bot$.  We then use the induction hypothesis to
  conclude.  \qed
\end{proof}

\begin{lemma}
  \label{lemma:pto:fininf}
  Let $M \colon \F{} \sigma$ and $N \colon \underline\tau$.  Assume that
  $\Eval M \rho$ is of the form
  $\upc \{V_1, \cdots, \allowbreak V_k\}$.  Then
  $\Eval {\pto M {x_\sigma} N} \rho = \bigwedge_{i=1}^k \Eval N \rho
  [x_\sigma \mapsto V_i]$.
\end{lemma}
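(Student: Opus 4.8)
The plan is to unfold the denotational semantics of the $\kwfont{to}$ construct and then appeal to the explicit formula for $\extd{}$ supplied by Proposition~\ref{prop:Qtop}. I would write $f$ for the Scott-continuous map $V \in \Eval\sigma \mapsto \Eval N \rho[x_\sigma \mapsto V]$, which takes values in the continuous complete lattice $\Eval{\underline\tau}$. The clause for $\pto M {x_\sigma} N$ in Figure~\ref{fig:sem} gives $\Eval{\pto M {x_\sigma} N}\rho = \extd f (\Eval M \rho)$ when $N$ has an $\F$-type; for a general computation type $\underline\tau$ the same identity holds by the evident induction on $\underline\tau$, using that $\extd{}$ is computed pointwise on function types and that evaluation at a point commutes with the pointwise infima defining $\extd{}$. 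I would record this identity first.

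Next, since by hypothesis $\Eval M \rho = \upc\{V_1, \ldots, V_k\}$ is nonempty, it is in particular different from $\bot$, so Proposition~\ref{prop:Qtop}, item~2 applies and yields
\[
  \extd f (\Eval M \rho) = \bigwedge_{x \in \upc\{V_1, \ldots, V_k\}} f (x).
\]
It then remains to show that this infimum over the whole up-set collapses onto its finitely many generators, i.e.\ that $\bigwedge_{x \in \upc\{V_1, \ldots, V_k\}} f (x) = \bigwedge_{i=1}^k f (V_i)$.

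Both inequalities are short. The inequality $\leq$ holds because $\{V_1, \ldots, V_k\} \subseteq \upc\{V_1, \ldots, V_k\}$, so the infimum over the larger set is below the infimum over the subset. For the converse, I would use that $f$, being Scott-continuous, is monotonic: every $x \in \upc\{V_1, \ldots, V_k\}$ satisfies $V_i \leq x$ for some $i$, whence $f (V_i) \leq f (x)$ and therefore $\bigwedge_{i=1}^k f (V_i) \leq f (x)$; taking the infimum over all such $x$ gives $\bigwedge_{i=1}^k f (V_i) \leq \bigwedge_{x \in \upc\{V_1, \ldots, V_k\}} f (x)$. Combining this with the displayed equation and unfolding $f$ yields $\Eval{\pto M {x_\sigma} N}\rho = \bigwedge_{i=1}^k \Eval N \rho[x_\sigma \mapsto V_i]$, as required.

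There is no real obstacle here: the only mildly delicate point is recognizing that the infimum of a monotone map over a finitely generated saturated set is attained by the infimum over the generators, which is immediate from monotonicity. The preliminary identity $\Eval{\pto M {x_\sigma} N}\rho = \extd f (\Eval M \rho)$ for arbitrary computation types is the only part needing an inductive verification, and it is entirely routine.
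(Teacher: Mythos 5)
Your proof is correct and follows essentially the same route as the paper's: a structural induction on $\underline\tau$ that bottoms out in Proposition~\ref{prop:Qtop} at $\F{}$-types and passes through function types via pointwise infima. The one place where your organization makes the justification heavier than it needs to be is the preliminary identity $\Eval{\pto M {x_\sigma} N}\rho = \extd f(\Eval M \rho)$ for arbitrary computation types: in the function-type case of that induction you need to know that the infimum $\bigwedge_{x \in Q} f(x)$ taken in $[\Eval\lambda \to \Eval{\underline\tau'}]$ is computed pointwise, and Lemma~\ref{lemma:sound:aux}, item~2, only gives this for \emph{finite} infima. For a general compact saturated $Q$ the pointwise infimum of the continuous maps $f(x)$, $x \in Q$, is in fact continuous (this uses compactness of $Q$, by an argument in the style of the proof of Proposition~\ref{prop:Qtop:1}, item~2), but that fact is nowhere proved in the paper. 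The paper sidesteps the issue by inducting directly on the statement with the infimum already collapsed to $\bigwedge_{i=1}^k f(V_i)$, so that only finite pointwise infima ever arise; your monotonicity argument collapsing $\bigwedge_{x \in \upc\{V_1,\ldots,V_k\}} f(x)$ to $\bigwedge_{i=1}^k f(V_i)$ is exactly the needed ingredient, it just has to be invoked inside the induction step rather than only at the very end. With that reordering (or with the compactness argument supplied), your proof is complete; your base case, which uses Proposition~\ref{prop:Qtop}, item~2, plus monotonicity of $f$, is a harmless variant of the paper's, which uses items~2 and~3.
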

\begin{proof}
  By structural induction on $\underline\tau$.  Let $f$ be the map
  $V \in \Eval \sigma \mapsto \Eval N \rho [x_\sigma \mapsto V]$.  If
  $\underline\tau$ is of the form $\F{} \tau$, then:
  \begin{align*}
    \Eval {\pto M {x_\sigma} N} \rho
    & = \extd f (\upc \{V_1, \cdots, V_k\}) \\
    & = \bigwedge_{i=1}^k \extd f (\eta^\Smyth (V_i)) & \text{by
                                                        Proposition~\ref{prop:Qtop},
                                                        item~3} \\
    & = \bigwedge_{i=1}^k f (V_i),
  \end{align*}
  by Proposition~\ref{prop:Qtop}, item~2.

  If $\underline\tau$ is of the form $\lambda \to \underline\tau'$,
  then:
  \begin{align*}
    \Eval {\pto M {x_\sigma} N} \rho
    & = \Eval {\lambda y_\lambda . \pto M {x_\sigma} {N
      y_\lambda}} \rho \\
    & = (V \in \Eval \lambda \mapsto \bigwedge_{i=1}^k f (V_i) (V))
    & \text{by induction hypothesis} \\
    & = \bigwedge_{i=1}^k f (V_i).
  \end{align*}
  The last equality follows from the fact that finite infima of
  continuous functions are computed pointwise, by
  Lemma~\ref{lemma:sound:aux}, item~2.  \qed
\end{proof}

\begin{lemma}
  \label{lemma:pcase}
  $\Eval {\pcasekw\; M_1 \mapsto N_1 \mid \cdots \mid M_n \mapsto N_n}
  \rho$ is equal to $\bigwedge_{i \in I} \Eval {N_i} \rho$, where
  $I = \{i \in \{1, \cdots, n\} \mid \Eval {M_i} \rho \neq \top\}$.
\end{lemma}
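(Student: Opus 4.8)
The plan is to unfold the definition of $\pcasekw$ and compute the denotation in three stages. Writing $M$ for $\tagc {M_1} 1 n \owedge \cdots \owedge \tagc {M_n} n n$ and $P$ for $\pswitchkw\; y_{\intT} \colon \underline 1 \mapsto N_1 \mid \cdots \mid \underline n \mapsto N_n$, the term $\pcasekw\; M_1 \mapsto N_1 \mid \cdots \mid M_n \mapsto N_n$ is by definition $\pto M {y_{\intT}} P$, so the goal is to evaluate $\Eval {\pto M {y_{\intT}} P} \rho$.

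First I would compute $\Eval M \rho$. Since each $M_i$ has type $\unitT$, $\Eval {M_i} \rho$ is either $\top$ or $\bot$, and $i \in I$ precisely when $\Eval {M_i} \rho \neq \top$, i.e.\ $\Eval {M_i} \rho = \bot$. By Lemma~\ref{lemma:pswitch}, item~4, $\Eval {\tagc {M_i} i n} \rho$ equals $\emptyset$ when $i \notin I$ and $\{i\}$ when $i \in I$; in either case it is an element of $\Smyth^\top_\bot (\Z_\bot)$ distinct from $\bot$. The semantics of $\owedge$ is binary infimum in $\Smyth^\top_\bot (\Z_\bot)$, and (as already exploited in the proof of Proposition~\ref{prop:sound}, where $\wedge$ on non-$\bot$ elements was identified with $\cup$, using that suprema in $\Smyth^\top$ are intersections by Proposition~\ref{prop:Qtop:1}, item~1) the infimum of finitely many non-$\bot$ compact saturated sets is their union. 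Hence $\Eval M \rho = \bigcup_{i=1}^n \Eval {\tagc {M_i} i n} \rho = \upc \{i \mid i \in I\}$, which is $\emptyset$ when $I = \emptyset$. In particular $\Eval M \rho \neq \bot$ and is of the form $\upc \{V_1, \cdots, V_k\}$, with the $V_j$ ranging over the finitely many integers in $I$ and $k = |I|$.

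Next I would apply Lemma~\ref{lemma:pto:fininf}, taking $\sigma = \intT$ and $\underline\tau$ the common type of the $N_i$, to obtain $\Eval {\pto M {y_{\intT}} P} \rho = \bigwedge_{j \in I} \Eval P \rho [y_{\intT} \mapsto j]$. (When $I = \emptyset$ this is the empty meet, namely the top element $\top$ of $\Eval {\underline\tau}$, matching the degenerate $k=0$ instance of that lemma, for which $\Eval M \rho = \emptyset$ forces the value $\top$ by Proposition~\ref{prop:Qtop}, item~3.) For each $j \in I \subseteq \{1, \cdots, n\}$, Lemma~\ref{lemma:pswitch}, item~2, evaluates $\Eval P \rho [y_{\intT} \mapsto j]$: since $\Eval {y_{\intT}} \rho [y_{\intT} \mapsto j] = j$ lies in $\{1, \cdots, n\}$, it equals $\Eval {N_j} \rho [y_{\intT} \mapsto j]$, and because $y_{\intT}$ is fresh, hence not free in any $N_j$, this is just $\Eval {N_j} \rho$. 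Substituting back gives $\Eval {\pcasekw\; M_1 \mapsto N_1 \mid \cdots \mid M_n \mapsto N_n} \rho = \bigwedge_{j \in I} \Eval {N_j} \rho$, as claimed.

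The bulk of the argument is routine bookkeeping; the only points needing care are (a) recognising the binary infimum in $\Smyth^\top_\bot (\Z_\bot)$ as set union on non-$\bot$ elements and verifying that the meet of the tag denotations never collapses to $\bot$, which is exactly what makes Lemma~\ref{lemma:pto:fininf} applicable, and (b) the degenerate case $I = \emptyset$, where both sides reduce to the top element $\top$ through an empty meet. Neither is a real obstacle, so I expect the whole proof to be short.
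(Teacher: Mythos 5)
Your proof is correct and follows essentially the same route as the paper's: compute the denotation of the $\owedge$-combination of tags via Lemma~\ref{lemma:pswitch}, item~4 (identifying the infimum of non-$\bot$ elements of $\Smyth^\top_\bot(\Z_\bot)$ with union, giving $\upc I$), then apply Lemma~\ref{lemma:pto:fininf}, and finally Lemma~\ref{lemma:pswitch}, item~2. Your extra attention to the degenerate case $I=\emptyset$ and to why $\Eval M \rho \neq \bot$ is sound and only makes explicit what the paper leaves implicit.
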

\begin{proof}
  By Lemma~\ref{lemma:pswitch}, item~4,
  $\Eval {\tagc {M_1} 1 n \owedge \cdots \owedge \tagc {M_n} n n} \rho
  = \bigwedge_{i=1}^n \Eval {\tagc {M_i} i n} \rho = \bigcup_{i=1}^n
  \Eval {\tagc {M_i} i n} \rho = \bigcup_{i \in I} \{i\} = I = \upc
  I$.  By Lemma~\ref{lemma:pto:fininf}, it then follows that
  $\Eval {\pcasekw\; M_1 \mapsto N_1 \mid \cdots \mid M_n \mapsto N_n}
  \rho$ is equal to
  $\bigwedge_{i \in I} \Eval P \rho [y_{\intT} \mapsto i]$ where
  $P = \pswitchkw\; y_{\intT} \colon \underline 1 \mapsto N_1 \mid
  \cdots \mid \underline n \mapsto N_n$, and by
  Lemma~\ref{lemma:pswitch}, item~2, this is equal to
  $\bigwedge_{i \in I} \Eval {N_i} \rho$.  \qed
\end{proof}

It follows:
\begin{proposition}
  \label{prop:descr:to}
  For every describable value type $\sigma$, for every describable
  computation type $\underline\tau$, the type $\sigma \to
  \underline\tau$ is describable.
\end{proposition}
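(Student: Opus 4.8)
The plan is to instantiate Proposition~\ref{prop:subbase:fun} with $X = \Eval \sigma$ and $Y = \Eval {\underline\tau}$. Here $X$ is a pointed coherent continuous dcpo, hence a continuous dcpo, and $Y$ is a continuous complete lattice, hence a bc-domain, so the proposition applies. I take $B_X$ to be the basis of definable elements of $\Eval \sigma$ and $\mathcal B_X$ to be a \emph{strong} base of definable open subsets of $\Eval \sigma$ (the latter exists by Lemma~\ref{lemma:descr:base}), both provided by the describability of $\sigma$; likewise $B_Y$ is the basis of definable elements and $\mathcal S_Y$ a subbase of definable open subsets of $\Eval {\underline\tau}$, provided by the describability of $\underline\tau$. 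Proposition~\ref{prop:subbase:fun} then gives that the $(\mathcal B_X, B_Y)$-step functions form a basis of $[X \to Y] = \Eval {\sigma \to \underline\tau}$ and that the sets $[a \mapsto V]$, $a \in B_X$, $V \in \mathcal S_Y$, form a subbase of its Scott topology. It remains to check that both families consist of \emph{definable} elements, resp.\ open subsets.

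The subbasic opens are the easy half. Given a definable element $a = \Eval A$ with $A \colon \sigma$ ground, and a definable open $V$ defined by some $R \colon \U\underline\tau \to \F\Vt\unitT$ (so $\Eval R = \tilde\chi_V$), I set $W = \lambda g_{\U (\sigma \to \underline\tau)} . R (\thunk ((\force g)\, A))$. Since $\force$ and $\thunk$ are interpreted as identities, $\Eval W (g) = \tilde\chi_V (g (a))$, which is $\{\delta_\top\}$ if $g (a) \in V$ and $\bot$ otherwise; thus $\Eval W = \tilde\chi_{[a \mapsto V]}$, so $[a \mapsto V]$ is definable as an open of the computation type $\sigma \to \underline\tau$.

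The step functions are the main point, and where the operators $\pifzkw$ and $\bigcirc$ are really needed. Fix a $(\mathcal B_X, B_Y)$-step function $h = \sup_{I \subseteq \{1, \cdots, m\}} U_I \searrow y_I$, and write $U_i = U_{\{i\}}$. Conditions~(3)--(4) of Definition~\ref{defn:special:step} give $U_I = \bigcap_{i \in I} U_i$ and $y_I \le y_J$ for $I \subseteq J$, so for each input $x$ one has $x \in U_I$ iff $I \subseteq I_x$, where $I_x = \{i \mid x \in U_i\}$, and hence $h (x) = y_{I_x} = \bigwedge \{y_I \mid I \supseteq I_x\}$ by monotonicity of $I \mapsto y_I$. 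Now the condition $I \not\supseteq I_x$ says that $x \in U_i$ for some $i \notin I$, which is semidecidable: each $U_i$ is definable by some $P_i \colon \sigma \to \F\Vt\unitT$, and $\bigcirc_{> 1/2} (P_i\, x_\sigma) \colon \unitT$ evaluates to $\top$ exactly when $x \in U_i$, so $M_I := \bigvee_{i \notin I} \bigcirc_{> 1/2} (P_i\, x_\sigma)$ (the empty disjunction read as $\Omega_{\unitT}$) satisfies $\Eval {M_I} = \top$ iff $I \not\supseteq I_x$. Choosing ground terms $Q_I \colon \underline\tau$ with $\Eval {Q_I} = y_I$ (describability of $\underline\tau$), I claim the term
\[
  \lambda x_\sigma . \bigl(\pcasekw\; (M_I \mapsto Q_I)_{I \subseteq \{1, \cdots, m\}}\bigr)
\]
denotes $h$: by Lemma~\ref{lemma:pcase} its value at $x$ is $\bigwedge \{y_I \mid \Eval {M_I} \neq \top\} = \bigwedge \{y_I \mid I \supseteq I_x\} = y_{I_x} = h (x)$. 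This exhibits every basic element as definable.

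I expect the step-function construction to be the only delicate step; once the identity $h (x) = \bigwedge \{y_I \mid I \supseteq I_x\}$ is recognized, matching this infimum with the semantics of $\pcasekw$ (Lemma~\ref{lemma:pcase}), and realizing the semidecidable tests $I \not\supseteq I_x$ by means of $\vee$ and $\bigcirc$, is routine. Combining the two halves through Proposition~\ref{prop:subbase:fun} shows that $\Eval {\sigma \to \underline\tau}$ has a basis of definable elements and that its Scott topology has a subbase of definable open subsets, i.e.\ that $\sigma \to \underline\tau$ is describable.
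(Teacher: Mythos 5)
Your proposal is correct and follows essentially the same route as the paper: instantiate Proposition~\ref{prop:subbase:fun} with the definable bases from Lemma~\ref{lemma:descr:base}, define the subbasic opens $[a \mapsto V]$ by $\lambda g . R (\thunk ((\force g)\,A))$, and realize each $(\mathcal B_X, B_Y)$-step function via $\pcasekw$ applied to termination tests built from $\bigcirc_{>1/2}$ and $\vee$. The only (harmless) deviation is that your test $M_I$ ranges over the singleton opens $U_{\{i\}}$ with $i \notin I$, whereas the paper's $M_I^\perp$ ranges over all $U_J$ with $J \not\subseteq I$; since $U_J = \bigcap_{j \in J} U_j$, the two conditions are equivalent, and your version is marginally leaner.
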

\begin{proof}
  We use Proposition~\ref{prop:subbase:fun}, with $X = \Eval \sigma$,
  $Y = \Eval {\underline\tau}$.  $B_X$ (resp., $B_Y$) is the basis of
  definable elements of $\Eval \sigma$ (resp.,
  $\Eval {\underline\tau}$).  $\mathcal B_X$ is the base of definable
  open subsets at type $\sigma$, obtained thanks to
  Lemma~\ref{lemma:descr:base}, and $\mathcal S_Y$ is the subbase of
  definable open subsets at type $\underline\tau$.

  We first show that all the elements of $B_{[X \to Y]}$ are
  definable.  This will imply that the definable elements at type
  $\sigma \to \underline\tau$ form a basis of
  $\Eval {\sigma \to \underline\tau}$.  We recall that such an element
  is a $(\mathcal B_X, B_Y)$-step function
  $f = \sup_{I \subseteq \{1, \cdots, m\}} U_I \searrow y_I$.

  Let $U_I$ be defined by ground terms
  $M_I \colon \sigma \to \F\Vt\unitT$, i.e.,
  $\tilde\chi_{U_I} = \Eval {M_I}$, and let $y_I$ be defined by ground
  terms $N_I \colon \underline\tau$.  Let us pick a variable
  $x_\sigma$.  For every subset $I$ of $\{1, \cdots, m\}$, let
  $M_I^\perp (x_\sigma) = \bigvee_{J \subseteq \{1, \cdots, m\}, J
    \not\subseteq I} \bigcirc_{> 1/2} (M_J x_\sigma)$.  (If
  $I = \{1, \cdots, m\}$, the empty disjunction is $\Omega_{\unitT}$.)
  For every environment $\rho$, and letting $a = \rho (x_\sigma)$, by
  Lemma~\ref{lemma:pcase},
  $\Eval {\pcasekw\; \{M_I^\perp (x_\sigma) \mapsto N_I \mid I
    \subseteq \{1, \cdots, m\}\}} \rho$ is equal to the infimum of the
  values $\Eval {N_I} = y_I$ over the subsets $I$ of
  $\{1, \cdots, m\}$ such that
  $\Eval {M_I^\perp (x_\sigma)} \rho \neq \top$, i.e., such that
  $a \not\in \bigcup_{J \not\subseteq I} U_J$.

  Let $I_0$ be the set of indices $i$ between $1$ and $m$ such that
  $a \in U_{\{i\}}$.  For every $J \subseteq \{1, \cdots, m\}$,
  $a \in U_J$ if and only if for every $i \in J$, $a$ is in
  $U_{\{i\}}$, if and only if $J \subseteq I_0$.  For every
  $I \subseteq \{1, \cdots, m\}$,
  $a \not\in \bigcup_{J \not\subseteq I} U_J$ if and only if for every
  $J \not\subseteq I$, $a \not\in U_J$, if and only if for every
  $J \subseteq \{1, \cdots, m\}$, $a \in U_J$ implies $J \subseteq I$
  (by contraposition), if and only if for every
  $J \subseteq \{1, \cdots, m\}$, $J \subseteq I_0$ implies
  $J \subseteq I$, if and only if $I_0 \subseteq I$.  Therefore
  $\Eval {\pcasekw\; \{M_I^\perp (x_\sigma) \mapsto N_I \mid I
    \subseteq \{1, \cdots, m\}\}} \rho$ is equal to
  $\bigwedge_{I \supseteq I_0} y_I = y_{I_0} = f (a)$.  It follows
  that $f$ is definable as
  $\lambda x_\sigma . \pcasekw\; \{M_I^\perp (x_\sigma) \mapsto N_I
  \mid I \subseteq \{1, \cdots, m\}\}$.
  


  

  Second, we show that all the elements of $\mathcal S_{[X \to Y]}$
  are definable as ground terms of type
  $\U{} (\sigma \to \underline\tau) \to \F\Vt\unitT$.  Such an element is of
  the form $[x \mapsto V]$, where $x = \Eval M$ for some ground term
  $M \colon \sigma$, and $V = \Eval P$ for some ground term
  $P \colon \U\underline\tau \to \F\Vt\unitT$.  Then $[x \mapsto V]$ is
  definable as the ground term
  $\lambda f_{\U{} (\sigma \to \underline\tau)} . P (\thunk (\force f_{\U{}
    (\sigma \to \underline\tau)} M))$.  \qed
\end{proof}

\subsection{Valuation Types}
\label{sec:probability-types}

We have already mentioned in Section~\ref{sec:language} that, for
every continuous dcpo, $\Val_{\leq 1} X$ is a pointed continuous dcpo,
and that its Scott topology coincides with the weak upwards topology
\cite{AMJK:scs:prob}.  The latter has a subbase of open sets of the
form $[U > r]$, for every open subset $U$ of $X$ and
$r \in \Rp \diff \{0\}$, where
$[U > r] = \{\nu \in \Val_{\leq 1} X \mid \nu (U) > r\}$.  We can
restrict $r$ further so that $r < 1$, since otherwise $[U > r]$ is
empty.  Call a number \emph{dyadic} if and only if it is of the form
$a/2^k$, with $a, k \in \nat$.
\begin{proposition}
  \label{prop:V1:basis}
  Let $X$ be a pointed continuous dcpo.  Let $B_X$ be a basis of $X$,
  $\mathcal B_X$ be a base of the Scott topology on $X$.  Then:
  \begin{itemize}
  \item The set $B_{\Val_{\leq 1} X}$ of all simple probability
    valuations $\sum_{i=1}^n a_i \delta_{x_i}$, where each $a_i$ is a
    dyadic number in $[0, 1]$, $\sum_{i=1}^n a_i \leq 1$, and each
    $x_i$ is a point in $B_X$, is a basis of $\Val_{\leq 1} X$.
  \item The set $\mathcal S_{\Val_{\leq 1} X}$ of all opens $[U > r]$,
    where $U$ is an element of $\mathcal B_X$, and $r$ is a dyadic
    number in $(0, 1)$, is a subbase of the Scott topology on
    $\Val_{\leq 1} X$.
  \end{itemize}
\end{proposition}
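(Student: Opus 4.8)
The plan is to establish both items through the approximation criterion of Lemma~\ref{lemma:rem:basis} for the first (basis) part, and through the coincidence of the Scott and weak upwards topologies (\cite{AMJK:scs:prob}, already used in Section~\ref{sec:language}) for the second (subbase) part; in each case I refine a known basis, respectively subbase, to one built from $B_X$, from $\mathcal B_X$, and from dyadic numbers.

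For the first item, recall (Jones, \cite[Corollary~5.4]{Jones:proba}) that $\Val_{\leq 1} X$ is a continuous dcpo whose simple valuations $\sum_j b_j \delta_{y_j}$ (with $b_j \in [0,1]$ and $y_j \in X$ arbitrary) form a basis. To upgrade this to $B_{\Val_{\leq 1} X}$ I would invoke Lemma~\ref{lemma:rem:basis}: given $\nu$ and a Scott-open neighborhood $W \ni \nu$, first pick a simple valuation $s = \sum_j b_j \delta_{y_j} \ll \nu$ lying in $W$, then refine it twice. To refine the points, note that $y_j = \sup\{x \in B_X \mid x \ll y_j\}$, that $\delta_{(\cdot)}$ is Scott-continuous, and that addition and scalar multiplication on $\Val_{\leq 1} X$ are Scott-continuous; hence $s$ is the directed supremum of the valuations $\sum_j b_j \delta_{x_j}$ with $x_j \in B_X$, $x_j \ll y_j$, so one of them, $s' = \sum_j b_j \delta_{x_j}$, lies in the open set $W$, and $s' \leq s \ll \nu$ gives $s' \ll \nu$. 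To refine the coefficients, observe that $s'$ is the directed supremum of $\sum_j a_j \delta_{x_j}$ as the dyadic numbers $a_j \leq b_j$ increase to $b_j$, and that $\sum_j a_j \leq \sum_j b_j = s'(X) \leq \nu(X) \leq 1$, so these are legitimate elements of $B_{\Val_{\leq 1} X}$; again one of them, $g$, lies in $W$, and $g \leq s' \ll \nu$ yields $g \ll \nu$. This $g \in B_{\Val_{\leq 1} X}$ satisfies $g \ll \nu$ and $g \in W$, so Lemma~\ref{lemma:rem:basis} applies.

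For the second item, I would start from the fact that the Scott topology on $\Val_{\leq 1} X$ coincides with the weak upwards topology, whose subbase consists of the sets $[U > r]$ for $U$ open in $X$ and $r \in \Rp \cap (0,1)$; it then suffices to write each such $[U > r]$ as a union of elements of $\mathcal S_{\Val_{\leq 1} X}$. The key step writes $U$ as a \emph{directed} union of the base elements $V \in \mathcal B_X$ with $V \subseteq U$; by Scott-continuity of $\nu$ this gives $\nu(U) = \sup\{\nu(V) \mid V \in \mathcal B_X,\ V \subseteq U\}$, whence $[U > r] = \bigcup\{[V > r] \mid V \in \mathcal B_X,\ V \subseteq U\}$. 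A further trivial union $[V > r] = \bigcup\{[V > t] \mid t \text{ dyadic},\ t \geq r\}$ restricts the threshold to dyadic numbers, and combining the two yields $[U > r] = \bigcup\{[V > t] \mid V \in \mathcal B_X,\ V \subseteq U,\ t \text{ dyadic},\ r \leq t < 1\}$, a union of subbasic sets, so that the generated topology equals the Scott topology.

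The delicate point, and the one I expect to be the main obstacle, is the identity $\nu(U) = \sup\{\nu(V) \mid V \in \mathcal B_X,\ V \subseteq U\}$. Because valuations are only modular, not additive, the masses $\nu(V)$ of the base opens below $U$ do not in general combine to $\nu(U)$; indeed, a single element $V \subseteq U$ captures only $\max$, not the total mass, and a finite intersection of conditions $\nu(V_k) > t_k$ fails to force $\nu(U)$ large whenever the $V_k$ overlap (mass placed at a common upper bound of several base opens is counted once in $U$ but many times among the $V_k$). The clean way to avoid this defect is to ensure that the family $\{V \in \mathcal B_X \mid V \subseteq U\}$ is directed under inclusion, which holds as soon as $\mathcal B_X$ is closed under finite unions. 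This is exactly the situation of the applications: the base of definable opens for describable types is closed under finite unions through the operator $\vee$ (Lemma~\ref{lemma:pswitch}, item~3). I would therefore carry out the directed-union argument under this closure, flagging it as the hypothesis that renders the modularity defect harmless.
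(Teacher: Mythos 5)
Your argument is correct and follows the same overall skeleton as the paper's (Jones's simple valuations refined to basis points and dyadic coefficients for the first item; the weak upwards topology for the second), but it diverges at both technical pinch points, in each case instructively. For the basis part, the paper approximates a simple valuation $\nu=\sum_i a_i\delta_{x_i}$ by a directed family $D_\nu$ of valuations with dyadic coefficients and basis points, and then verifies \emph{directly} that each member of $D_\nu$ is way-below $\nu$ (this is the argument ranging over subsets $J$ of $\{1,\dots,n\}$ and the inequality $\sum_{i\in J}b_i\ll\nu(\bigcup_{i\in J}\uuarrow y_i)$), before invoking the transitivity of bases. You sidestep that verification entirely: your refined valuation $g$ satisfies $g\leq s'\leq s\ll\nu$, so $g\ll\nu$ comes for free, and Lemma~\ref{lemma:rem:basis} does the rest. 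That is a genuine simplification; the paper's extra work is needed only because it insists on exhibiting a directed family of approximants of the simple valuation itself, which your route via Lemma~\ref{lemma:rem:basis} does not require.

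For the subbase part, your worry about modularity is well founded, and the literal statement (with $U$ a single element of $\mathcal B_X$) is indeed false in general when base elements overlap: in the four-point dcpo $\bot<a,b<\top$ with base $\{\upc x\mid x\in X\}$, the valuation $\nu=\frac{1}{2}\delta_a+\frac{1}{2}\delta_b$ lies in $[\upc a\cup\upc b>3/4]$, yet every finite intersection of constraints $[\upc x>r]$ containing $\nu$ forces only $r<\frac{1}{2}$ on $\upc a$ and $\upc b$, hence also contains $(\frac{1}{2}+\epsilon)\delta_\top+(\frac{1}{2}-\epsilon)\delta_\bot$, which is not in $[\upc a\cup\upc b>3/4]$. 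The paper's own proof in fact decomposes $[U>r]$ into sets $[\bigcup_{i\in A}U_i>r_n]$ with $A$ finite, and the later application (Corollary~\ref{corl:descr:V}) likewise takes $U$ to be a finite union of base elements; so the intended reading is the one you adopt, with $U$ ranging over finite unions of elements of $\mathcal B_X$. Your formulation via directedness of $\{V\in\mathcal B_X\mid V\subseteq U\}$ and Scott-continuity of $\nu$ is an equivalent, cleaner packaging of the same fix, and, as you note, it is available in the intended application since finite unions of definable opens are definable using $\vee$.
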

\begin{proof}
  By a theorem of 
  Jones \cite[Theorem~5.2]{Jones:proba}, the simple subprobability
  valuations form a basis of $\Val_{\leq 1} X$.  For every simple
  subprobability valuation
  $\nu = \sum_{i=1}^n a_i \delta_{x_i}$,
  one easily checks that the collection
  $D_\nu$ of simple subprobability valuations $\sum_{i=1}^n b_i
  \delta_{y_i}$
  with $b_i$ dyadic and way-below $a_i$ in $[0, 1]$, and $y_i \in B_X$
  way-below $x_i$, is directed, and $\sup D_\nu = \nu$.

  We check that every element of $D_\nu$, as written above, is
  way-below $\nu$.
  For convenience, we let
  $\mu = \sum_{i=1}^n b_i \delta_{y_i}$.  Let ${(\nu_k)}_{k \in K}$ be
  a directed family in $\Val_{\leq 1} X$ with a supremum above $\nu$.
  We wish to show that there is a $k \in K$ such that for every open
  subset $U$ of $X$, $\mu (U) \leq \nu_k (U)$.  In order to do so, we
  show that for every subset $J$ of $\{1, \cdots, n\}$, there is an
  index $k = k_J \in K$ such that for every open subset $U$ of $X$
  such that $J = \{i \in \{1, \cdots, n\} \mid y_i \in U\}$,
  $\mu (U) \leq \nu_k (U)$.  By directedness, there is a $k \in K$
  such that $\nu_{k_J} \leq \nu_k$ for every such $J$, and this will
  show the claim.

  Henceforth, let us fix $J \subseteq \{1, \cdots, n\}$.
  We have $\sum_{i \in J} b_i \ll \sum_{i \in J} a_i$ (because
  $b_i \ll a_i$ for each $i$, and recalling that $b_i \ll a_i$ iff
  $b_i=0$ or $b_i < a_i$) $\leq \nu (\bigcup_{i \in J} \uuarrow y_i)$
  (because $y_i \ll x_i$ for each $i$), so there is a $k \in K$ such
  that
  $\sum_{i \in J} b_i \leq \nu_k (\bigcup_{i \in J} \uuarrow y_i)$.
  For every open subset $U$ with
  $J = \{i \in \{1, \cdots, n\} \mid y_i \in U\}$,
  $\mu (U) = \sum_{i \in J} b_i \leq \nu_k (\bigcup_{i \in J} \uuarrow
  y_i) \leq \nu_k (U)$, which finishes the proof.

  It is standard domain theory that given a dcpo $Z$, a point $z \in
  Z$ that is the supremum of a directed family ${(z_i)}_{i \in I}$,
  where $z_i$ is itself the supremum of a directed family $D_i$ of
  points way-below $z_i$, then $\bigcup_{i \in I} D_i$ is directed and
  has $z$ as supremum.  In our case $D_\nu$ is included in $B_{\Val_{\leq 1}
    X}$, showing that every continuous probability valuation is the
  supremum of a directed family of elements of $B_{\Val_{\leq 1} X}$.

  In order to show the second part of the proposition, we consider an
  arbitrary subbasic open set $[U > r]$ of the weak upwards (=Scott)
  topology, $U$ open in $X$, $r \in (0, 1)$.  We write $U$ as
  $\bigcup_{i \in I} U_i$, where each $U_i$ is in $\mathcal B_X$, and
  $r$ as the infimum of the numbers $r_n = \lceil 2^n r \rceil / 2^n$.
  Since $0 < r < 1$, $r_n$ is in $(0, 1)$ for $n$ large enough.  For
  every $\nu \in \Val_{\leq 1} X$, $\nu (U) > r$ if and only if for some $n$
  large enough $\nu (U) > r_n$, if and only if for some $n$ large
  enough and some finite subset $A$ of $I$,
  $\nu (\bigcup_{i \in I} U_i) > r_n$.  Hence
  $[U > r] = \bigcup_{A \text{ finite }\subseteq I, n / r_n < 1}
  [\bigcup_{i \in A} U_i> r_n]$, showing that $\mathcal S_{\Val_{\leq 1} X}$
  is a subbase of the weak upwards (=Scott) topology.  \qed
\end{proof}

\begin{corollary}
  \label{corl:descr:V}
  For every describable value type $\sigma$, the type $\Vt\sigma$ is
  describable.
\end{corollary}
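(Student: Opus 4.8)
The plan is to invoke Proposition~\ref{prop:V1:basis} with $X = \Eval\sigma$, taking $B_X$ to be the basis of definable elements of $\Eval\sigma$ and $\mathcal B_X$ the strong base of definable open subsets at type $\sigma$ furnished by Lemma~\ref{lemma:descr:base}. That proposition reduces describability of $\Vt\sigma$ to two definability claims: that every simple subprobability valuation $\sum_{i=1}^n a_i \delta_{x_i}$ (with each $a_i$ dyadic, $\sum_i a_i \le 1$, and $x_i \in B_X$) is the denotation of a ground term of type $\Vt\sigma$, and that every subbasic open $[U>r]$ (with $U \in \mathcal B_X$ and $r$ dyadic in $(0,1)$) is definable, i.e.\ $\tilde\chi_{[U>r]} = \Eval N$ for some ground $N \colon \Vt\sigma \to \F\Vt\unitT$.

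For the basis elements I would first choose a common denominator, writing $a_i = c_i/2^k$ so that $\sum_i c_i \le 2^k$. Each point $x_i$ is definable, say $x_i = \Eval{M_i}$, and then $\Eval{\retkw M_i} = \delta_{x_i}$. Building a balanced tree of $\oplus$'s of depth $k$ realises the average $\frac1{2^k}\sum_{\text{leaves}}(\cdot)$; assigning $c_i$ of the $2^k$ leaves to $\retkw M_i$ and the remaining $2^k - \sum_i c_i$ leaves to $\Omega_{\Vt\sigma}$ (whose denotation is the zero valuation, the least element of $\Val_{\leq1}(\Eval\sigma)$) yields exactly $\sum_i a_i\delta_{x_i}$, since $\Eval{M\oplus N}\rho = \frac12(\Eval M\rho + \Eval N\rho)$.

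The second claim is where the statistical termination tester $\bigcirc$ becomes essential. Let $M_U \colon \sigma \to \F\Vt\unitT$ be ground with $\tilde\chi_U = \Eval{M_U}$. I would first convert the membership test back to $\unitT$: the term $\bigcirc_{>1/2}(M_U\,y_\sigma)$ denotes $\top$ when $y_\sigma \in U$ and $\bot$ otherwise, since $\Eval{M_U\,y_\sigma}$ is $\eta^\Smyth(\delta_\top) \ne \bot$ or $\bot$ accordingly. Sequencing, the term $N_0 = (\bigcirc_{>1/2}(M_U\,y_\sigma)); \retkw\underline*$ of type $\Vt\unitT$ then denotes $\delta_\top$ on $U$ and the zero valuation off $U$, so $f(y)(\{\top\}) = \chi_U(y)$ where $f(y) = \Eval{N_0}\rho[y_\sigma\mapsto y]$. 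Writing $P = \dokw{y_\sigma \leftarrow x_{\Vt\sigma}}; N_0$, the defining formula for $f^\dagger$ gives $\Eval P(\{\top\}) = \int_{y}\chi_U(y)\,d\nu = \nu(U)$ when $x_{\Vt\sigma}$ denotes $\nu$. Since $\Eval{\produce P} = \eta^\Smyth(\Eval P) \ne\bot$ and the minimum of $\mu(\{\top\})$ over $\eta^\Smyth(\Eval P)$ is attained at $\Eval P$, namely $\nu(U)$, the operator $\bigcirc_{>r}(\produce P)$ denotes $\top$ iff $r \ll \nu(U)$, i.e.\ (as $r>0$) iff $\nu(U) > r$, i.e.\ iff $\nu \in [U>r]$. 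Finally $N = \lambda x_{\Vt\sigma}.\big(\bigcirc_{>r}(\produce P); \produce\retkw\underline*\big)$ has denotation $\tilde\chi_{[U>r]}$, establishing definability of the subbasic open.

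The main obstacle is this second part: finding a composition of primitives that both measures $\nu(U)$ (through the $\dokw$/integration machinery) and thresholds it against $r$. The delicate points are the two uses of $\bigcirc$---one to recover a Boolean from the $\F$-typed membership term $M_U\,y_\sigma$, the other to perform the actual statistical test---and checking that the way-below comparison $r \ll \nu(U)$ matches membership in $[U>r]$ exactly. This is precisely the place where $\bigcirc$ is indispensable, consistently with Proposition~\ref{prop:RR:test}.
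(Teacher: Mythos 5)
Your proposal is correct and follows essentially the same route as the paper: invoke Proposition~\ref{prop:V1:basis}, realise dyadic simple valuations by a balanced $\oplus$-tree padded with $\Omega_{\Vt\sigma}$, and define $[U>r]$ by Booleanizing membership with $\bigcirc_{>1/2}$, integrating via $\dokw$ to recover $\nu(U)$, and thresholding with $\bigcirc_{>r}$. The only cosmetic differences are that you use $(\bigcirc_{>1/2}(M_U\,y_\sigma));\retkw\underline*$ where the paper uses $\retkw$ applied to a $\unitT$-valued disjunction (both yield mass $\chi_U(y)$ on $\{\top\}$), and you explicitly add the final $\_;\produce\retkw\underline*$ wrapping to land in $\F\Vt\unitT$, which the paper leaves implicit.
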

\begin{proof}
  Let $X = \Eval \sigma$, $B_X$ be the basis of definable elements of
  $\Eval \sigma$, $\mathcal B_X$ be a strong base of definable
  open subsets at type $\sigma$ guaranteed by
  Lemma~\ref{lemma:descr:base}, and let us use
  Proposition~\ref{prop:V1:basis}.

  Although $\oplus$ is not associative, we can make sense of sums
  $M_1 \oplus M_2 \oplus \cdots \oplus M_{2^n}$ of $2^n$ terms of type
  $\Vt\sigma$: when $n=0$, this is just $M_1$, otherwise this is
  $(M_1 \oplus \cdots \oplus M_{2^{n-1}}) \oplus (M_{2^{n-1}+1} \oplus
  \cdots \oplus M_{2^n})$.  This way,
  $\Eval {M_1 \oplus M_2 \oplus \cdots \oplus M_{2^n}} \rho$ is simply
  equal to $\frac 1 {2^n} \sum_{i=1}^{2^n} \Eval {M_i} \rho$.
  
  For every element
  $\nu = \sum_{i=1}^n a_i \delta_{x_i}$
  in $B_{\Val_{\leq 1} X}$,
  we can write each $a_i$ ($1\leq i \leq n$) as $k_i /
  2^m$, where $k_i \in \nat$ and with the same
  $m$ for all values of $i$.
  Hence, and letting $k_0 = 2^m - \sum_{i=1}^n k_i$,
  $\nu$ can be written as a sum $\frac 1 {2^m} \sum_{i=1}^{2^m - k_0}
  \delta_{v_i} + \frac 1 {2^m} \sum_{i=2^m-k_0+1}^{2^m}
  0$, where each $v_i$ is in $B_X$.  Since
  $\sigma$ is describable, for each $i$ ($1\leq i\leq 2^m-k_0$)
  $v_i$ is equal to $\Eval {M_i}$ for some ground term $M_i \colon
  \sigma$ ($\bot$ is equal to $\Eval {\Omega_\sigma}$).  Also,
  $0$ is equal to $\Eval {\Omega_{\Vt\sigma}}$, so
  $\nu$ is definable as the sum of the $2^m - k_0$ terms $\retkw
  M_i$, plus $k_0$ terms $\Omega_{\Vt\sigma}$.

  Let $[U > r]$ be an element of $\mathcal S_{\Val_{\leq 1}
    X}$, where $U = \bigcup_{i=1}^m U_i$, $U_i \in \mathcal
  B_X$, and $r$ is a dyadic number in $(0, 1)$.  Each
  $U_i$ is definable, that is, $\tilde\chi_{U_i} = \Eval
  {M_i}$ for some ground term $M_i \colon \sigma \to
  \F\Vt\unitT$.  Let us fix a variable $x_\sigma$.  For each $i$, let $M'
  (x_\sigma) = \bigcirc_{> 1/2} (M_1 x_\sigma) \vee \cdots \vee
  \bigcirc_{> 1/2} (M_m x_\sigma)$: $\Eval {M' (x_\sigma)} \rho =
  \top$ if $\rho (x_\sigma) \in U$, $\bot$ otherwise.  Then $[U >
  r]$ is definable by the term $\lambda y_{\Vt\sigma} . \bigcirc_{> r}
  (\produce (\dokw {x_\sigma \leftarrow y_{\Vt\sigma}}; {\retkw M'
    (x_\sigma)}))$.  Indeed, letting $\nu = \rho (y_{\Vt\sigma})$,
  \begin{align*}
    \Eval {\dokw {x_\sigma \leftarrow y_{\Vt\sigma}}; {\retkw M' (x_\sigma)}} \rho (\{\top\})
    & = {(a \in \Eval \sigma \mapsto \Eval {\retkw M' (x_\sigma)} \rho [x_\sigma \mapsto
      a])}^\dagger (\nu) (\{\top\}) \\
    & = \int_{a \in \Eval \sigma} \Eval {\retkw M' (x_\sigma)} \rho [x_\sigma \mapsto
      a] (\{\top\}) d\nu \\
    & = \int_{a \in \Eval \sigma} \delta_{\Eval {M' (x_\sigma)} \rho [x_\sigma \mapsto
      a]} (\{\top\}) d\nu \\
    & = \int_{a \in \Eval \sigma} \chi_U (a) d\nu =
      \nu (U),
  \end{align*}
  so
  $\Eval {\bigcirc_{> r} (\produce (\dokw {x_\sigma \leftarrow
      y_{\Vt\sigma}}; {\retkw M'}))} \rho$ is equal to $\top$ if
  $r \ll \nu (U)$, $\bot$ otherwise.  \qed
\end{proof}

\subsection{$\F{}$ Types}
\label{sec:f-types}

The \emph{upper Vietoris topology} on $\Smyth^\top (X)$ (resp.,
$\Smyth^\top_\bot (X)$) has basic open sets
$\Box U = \{Q \in \Smyth^\top (X) \mid Q \subseteq U\}$, where $U$
ranges over the open subsets of $X$.  The operator $\Box$ commutes
with finite intersections and with directed suprema.  Moreover,
$\Box U$ is Scott-open if $X$ is well-filtered.
\begin{proposition}
  \label{prop:Q:basis}
  Let $X$ be a pointed, coherent, continuous dcpo.  Let $B_X$ be a
  basis of $X$, $\mathcal S_X$ be a subbase of the Scott topology on
  $X$.  Then:
  \begin{itemize}
  \item The set $B_{\Smyth^\top_\bot X}$ consisting of $\bot$ plus the
    compact saturated sets of the form $\upc \{x_1, \cdots, x_n\}$, $n
    \in \nat$, where each $x_i$ is in $B_X$, is a basis of
    $\Smyth^\top_\bot (X)$.
  \item The set $\mathcal S_{\Smyth^\top_\bot X}$ of all opens
    $\Box U$, where $U$ ranges over non-empty finite unions of
    elements of $\mathcal S_X$, plus the whole space
    $\Smyth^\top_\bot X$ itself, is a base of the Scott topology on
    $\Smyth^\top_\bot (X)$.
  \end{itemize}
\end{proposition}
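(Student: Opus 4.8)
The plan is to run both parts off two ingredients already in hand: the way-below description of $\Smyth^\top_\bot(X)$ from Proposition~\ref{prop:Qtop}, item~1, and the following approximation lemma, which I would prove first. \emph{Claim:} for every compact saturated $Q \subseteq X$ and every open $U \supseteq Q$ there is a finite $F \subseteq B_X$ with $Q \subseteq \bigcup_{x \in F} \uuarrow x \subseteq \upc F \subseteq U$. To see this, for each $q \in Q$ I use continuity of $X$ to write $q$ as the directed supremum of the basis elements way-below it; since $U$ is Scott-open and $q \in U$, some $x_q \in B_X$ has $x_q \ll q$ and $x_q \in U$, whence $q \in \uuarrow x_q$ and $\upc x_q \subseteq U$ (as $U$ is upward closed). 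The sets $\uuarrow x_q$ form an open cover of the compact set $Q$, so finitely many $x_{q_1}, \dots, x_{q_n}$ suffice; take $F = \{x_{q_1}, \dots, x_{q_n}\}$. Since $\bigcup_{x \in F}\uuarrow x$ is open, contained in $\upc F$, and (by continuity of $X$) equal to $\interior{\upc F}$, the claim also yields $\upc F \ll Q$ through Proposition~\ref{prop:Qtop}, item~1.

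\textbf{Part 1.} I would verify that $B_{\Smyth^\top_\bot X}$ is a basis via Lemma~\ref{lemma:rem:basis}. Given $Q \in \Smyth^\top_\bot(X)$ and a Scott-open neighborhood $W$ of $Q$, the case $Q = \bot$ is immediate since $\bot \in B_{\Smyth^\top_\bot X}$, $\bot \ll \bot$, and $\bot \in W$. If $Q \neq \bot$, continuity of $\Smyth^\top_\bot(X)$ lets me pick $Q_0 \ll Q$ with $Q_0 \in W$ and $Q_0 \neq \bot$ (the non-$\bot$ way-below approximants still form a directed family with supremum $Q$); then $Q \subseteq \interior{Q_0}$. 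Applying the approximation lemma with $U = \interior{Q_0}$ yields a finite $F \subseteq B_X$ with $\upc F \ll Q$ and $\upc F \subseteq \interior{Q_0} \subseteq Q_0$. As reverse inclusion is the order of $\Smyth^\top$, this gives $\upc F \geq Q_0$, and upward closure of $W$ gives $\upc F \in W$. Thus $\upc F \in B_{\Smyth^\top_\bot X}$ is way-below $Q$ and lies in $W$, as required.

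\textbf{Part 2.} First, each $\Box U$ is Scott-open: this is the cited fact for $\Smyth^\top(X)$ when $X$ is well-filtered, and it transfers to the lift because a directed family with non-$\bot$ supremum in $\Box U$ has its non-$\bot$ members forming a directed family in $\Smyth^\top(X)$ with the same supremum. To prove base-ness I take a Scott-open $W$ and $Q \in W$. If $Q = \bot$, then $W$ is the whole space (the only open neighborhood of the least element), which is in the family. If $Q \neq \bot$, Part~1 supplies $\upc F \ll Q$ in $W$; set $U_0 = \interior{\upc F} = \bigcup_{x \in F}\uuarrow x$. Every $Q' \in \Box U_0$ satisfies $Q' \subseteq U_0 \subseteq \upc F$, i.e. $Q' \geq \upc F$, so $Q' \in W$; hence $Q \in \Box U_0 \subseteq W$. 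It remains to replace $U_0$ by an open built from $\mathcal S_X$: since finite intersections of $\mathcal S_X$-elements form a base, I cover the compact set $Q$ by finitely many such basic opens $B_1, \dots, B_k$, each contained in $U_0$, and put $U = B_1 \cup \dots \cup B_k$. Then $Q \subseteq U \subseteq U_0$, so $\Box U \subseteq \Box U_0 \subseteq W$ with $Q \in \Box U$.

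The main obstacle is exactly this last refinement, because $\Box$ commutes with finite intersections but not with unions: covering the compact set $Q$ forces a \emph{union} of subbasic-generated opens, whereas reducing those opens to $\mathcal S_X$ forces \emph{intersections}. The honest generating family is therefore $\{\Box U\}$ with $U$ a finite union of finite intersections of elements of $\mathcal S_X$; using $\Box(S_1 \cap \dots \cap S_m) = \Box S_1 \cap \dots \cap \Box S_m$ these are precisely the finite intersections-and-unions of the subbasic sets $\Box S$, $S \in \mathcal S_X$. When $\mathcal S_X$ is taken closed under finite intersections, this is literally the family of $\Box U$ with $U$ a nonempty finite union of its elements, together with the whole space handling the neighborhood of $\bot$. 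Beyond this, the only care needed is to keep $U$ nonempty and to track the reverse-inclusion order throughout.
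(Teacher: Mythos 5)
Your Part~1 is essentially the paper's argument: both reduce to Lemma~\ref{lemma:rem:basis} and use the characterization of $\ll$ from Proposition~\ref{prop:Qtop}, item~1, to interpolate a finitely generated upper set $\upc F$ with $F \subseteq B_X$ finite between a non-$\bot$ way-below approximant $Q_0$ and $Q$; the paper picks the generators in $Q_0 \cap B_X$ while you pick them way-below points of $Q$ inside $\interior{Q_0}$, which comes to the same thing. For Part~2 you argue locally (a basic neighborhood of each $Q \in W$ inside $W$) where the paper argues algebraically (write $\mathcal U$ as a union of sets $\Box U$ and push $\Box$ through directed unions and finite intersections), but the substance is the same.

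The ``obstacle'' you flag at the end is genuine and is not a defect of your proof. Since $\Box$ does not commute with binary unions, covering the compact set $Q$ forces finite unions of \emph{finite intersections} of elements of $\mathcal S_X$, so for a bare subbase $\mathcal S_X$ the family $\mathcal S_{\Smyth^\top_\bot X}$ of the statement is only a subbase of the Scott topology, not a base. The paper's own proof stops at exactly the same point: it exhibits every Scott-open as a union of \emph{finite intersections} of sets $\Box V$ with $V$ a finite union of subbasic opens, which is the subbase property. Indeed the base claim fails for a general subbase (take $X = \Z_\bot \times \Z_\bot$ with the subbase of open rectangles $U \times \Z_\bot$, $\Z_\bot \times V$, and the Scott-open $\Box (\{0\} \times \{0\})$: it contains $\upc (0,0)$ but contains no $\Box U'$ with $U'$ a nonempty finite union of rectangles through $(0,0)$). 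Your remedy --- taking $\mathcal S_X$ closed under finite intersections --- is exactly what is available where the proposition is applied, since Lemma~\ref{lemma:descr:base} supplies a strong base of definable opens, and in any case only the subbase property is needed for describability (Definition~\ref{defn:descr}). One small slip in your closing remark: the sets $\Box U$ with $U$ a finite union of finite intersections of subbasic opens are \emph{not} ``precisely the finite intersections-and-unions of the sets $\Box S$'' --- again because $\Box$ fails to commute with unions --- but nothing in your argument depends on that sentence.
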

\begin{proof}
  By Proposition~\ref{prop:Qtop}, item~1, $Q \ll Q'$ if and only if
  $Q = \bot$ or $Q' \subseteq \interior Q$.  Now $\interior Q$ can be
  written as $\bigcup_{x \in Q \cap B_X} \uuarrow x$, and since $Q'$
  is compact, if $Q' \subseteq \interior Q$ then there are finitely
  many elements $x_1$, \ldots, $x_n$ of $Q \cap B_X$ such that
  $Q' \subseteq \bigcup_{i=1}^n \uuarrow x_i = \interior {\upc \{x_1,
    \cdots, x_n\}}$.  By Lemma~\ref{lemma:rem:basis}, this shows the
  first part.

  Let $\mathcal U$ be a Scott-open subset of $\Smyth^\top_\bot (X)$.
  If $\bot \in \mathcal U$, then $\mathcal U$ is the whole space,
  which is in $\mathcal S_{\Smyth^\top_\bot X}$.  Otherwise,
  $\mathcal U$ is a Scott-open subset of $\Smyth^\top (X)$.  By
  Proposition~\ref{prop:Qtop:1}, item~1, $\Smyth^\top (X)$ is a
  continuous complete lattice, so $\mathcal U$ is a union of sets of
  the form $\uuarrow Q$, where $Q$ ranges over the elements of
  $\mathcal U$ belonging to any given basis, and $\uuarrow$ is
  understood in $\Smyth^\top (X)$.  Using the first part, we can take
  those elements $Q$ of the form $\upc \{x_1, \cdots, x_n\}$, and then
  $\uuarrow Q = \{Q \in \Smyth^\top (X) \mid Q' \subseteq \interior
  {\upc \{x_1, \cdots, x_n\}}\} = \Box \interior {\upc \{x_1, \cdots,
    x_n\}}$.

  We can therefore write $\mathcal U$ as a union of sets $\Box U$, $U$
  open in $X$, and then we can write $U$ as a union of finite
  intersections (taken in $\Smyth^\top (X)$) of elements of
  $\mathcal S_X$, hence as a directed union of finite unions of finite
  intersections of elements of $\mathcal S_X$, hence (by
  distributivity) as a directed union of finite intersections of
  finite unions of elements of $\mathcal S_X$.  In $\Smyth^\top (X)$
  (not $\Smyth^\top_\bot (X)$), $\Box$ commutes with directed unions
  and finite intersections (this would not hold for the empty
  intersection in $\Smyth^\top_\bot (X)$).  The result follows.  \qed
\end{proof}

\begin{corollary}
  \label{corl:descr:F}
  For every describable value type $\sigma$, $\F{}\sigma$ is a
  describable computation type.
\end{corollary}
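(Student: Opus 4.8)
The plan is to instantiate Proposition~\ref{prop:Q:basis} with $X = \Eval\sigma$, which is a pointed, coherent, continuous dcpo because $\sigma$ is a value type. I would take $B_X$ to be the basis of definable elements of $\Eval\sigma$ and $\mathcal S_X$ to be the subbase of definable open subsets at type $\sigma$, both furnished by the hypothesis that $\sigma$ is describable. Proposition~\ref{prop:Q:basis} then yields a basis $B_{\Smyth^\top_\bot X}$ of $\Eval{\F{}\sigma} = \Smyth^\top_\bot(\Eval\sigma)$ and a base, hence a subbase, $\mathcal S_{\Smyth^\top_\bot X}$ of its Scott topology. It remains to show that every element of $B_{\Smyth^\top_\bot X}$ and every member of $\mathcal S_{\Smyth^\top_\bot X}$ is definable in CBPV$(\Demon,\Nature)+\pifzkw+\bigcirc$.

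For the basis elements, $\bot$ is definable by $\Omega_{\F{}\sigma}$, and the top element $\emptyset$ by $\abort_{\F{}\sigma}$, since $\Eval{\abort_{\F{}\sigma}} = \emptyset$. Given definable points $x_i = \Eval{M_i}$ with $M_i \colon \sigma$, the set $\upc\{x_1, \cdots, x_n\}$ (for $n \geq 1$) is definable by $\produce M_1 \owedge \cdots \owedge \produce M_n$: indeed $\Eval{\produce M_i} = \eta^\Smyth(x_i) = \upc x_i$, and binary $\owedge$ denotes $\wedge$ in $\Smyth^\top_\bot(\Eval\sigma)$, which on non-$\bot$ elements is union, so the denotation is $\upc x_1 \cup \cdots \cup \upc x_n = \upc\{x_1, \cdots, x_n\}$.

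For the open sets, the whole space is definable by the constant term $\lambda f_{\U\F{}\sigma} . \produce\retkw\underline*$. The interesting case is a basic open $\Box U$, with $U = V_1 \cup \cdots \cup V_k$ a non-empty finite union of definable opens $V_j = \{x \mid \Eval{M_{V_j}}(x) = \{\delta_\top\}\}$ at type $\sigma$. First I would build a membership test $N$ for $U$, of type $\F\Vt\unitT$ and with $x_\sigma$ free, namely $(\bigcirc_{> 1/2} (M_{V_1} x_\sigma) \vee \cdots \vee \bigcirc_{> 1/2} (M_{V_k} x_\sigma)); \produce\retkw\underline*$, whose denotation at $x$ is $\{\delta_\top\}$ if $x \in U$ and $\bot$ otherwise; the disjunction uses the derived operator $\vee$, and it is here that $\pifzkw$ and $\bigcirc$ are both needed. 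Then I claim $\Box U$ is defined by
\[
  \lambda f_{\U\F{}\sigma} . \bigl(\bigcirc_{> 1/2} (\pto {(\force f_{\U\F{}\sigma})} {x_\sigma} {N}); \produce \retkw \underline*\bigr).
\]
To verify this, fix $Q \in \Smyth^\top_\bot(\Eval\sigma)$ and evaluate $\pto {(\force f_{\U\F{}\sigma})} {x_\sigma} {N}$ at $Q$: by strictness it is $\bot$ when $Q = \bot$, and by Proposition~\ref{prop:Qtop}, item~2, it equals $\bigwedge_{x \in Q} \Eval{N}[x_\sigma \mapsto x]$ otherwise, which is $\{\delta_\top\}$ when $\emptyset \neq Q \subseteq U$, the top element $\emptyset$ when $Q = \emptyset$, and $\bot$ when $Q \not\subseteq U$. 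Applying $\bigcirc_{> 1/2}$ collapses both $\{\delta_\top\}$ and $\emptyset$ to $\top$ (vacuously in the latter case) while sending $\bot$ to $\bot$, so the whole term yields $\{\delta_\top\}$ exactly when $Q \neq \bot$ and $Q \subseteq U$, i.e. when $Q \in \Box U$, and $\bot$ otherwise. This is precisely $\tilde\chi_{\Box U}$.

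The main obstacle is exactly the behaviour at the top element $\emptyset$ of $\Smyth^\top$: the $\pto$ construction unavoidably maps $\emptyset$ to $\emptyset$, whereas $\tilde\chi_{\Box U}$ must map it to $\{\delta_\top\}$, since $\emptyset \subseteq U$. Plain sequencing cannot repair this, because every $\pto$ sends $\emptyset$ to the top; the statistical termination tester $\bigcirc_{> 1/2}$ is what lets us identify $\emptyset$ with $\{\delta_\top\}$ without also collapsing $\bot$, which is why the full language CBPV$(\Demon,\Nature)+\pifzkw+\bigcirc$ is required. The remaining points—that the $\Box U$ form a subbase, that $\Eval\sigma$ satisfies the hypotheses of Proposition~\ref{prop:Q:basis}, and the routine denotational computations above—are straightforward.
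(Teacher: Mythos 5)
Your proposal is correct and follows essentially the same route as the paper: instantiate Proposition~\ref{prop:Q:basis} at $X = \Eval\sigma$, define the basis elements $\upc\{x_1,\cdots,x_n\}$ by $\produce M_1 \owedge \cdots \owedge \produce M_n$ (with $\abort_{\F{}\sigma}$ for $n=0$ and $\Omega_{\F{}\sigma}$ for $\bot$), and define $\Box U$ by a $\kwfont{to}$-term whose body is a membership test for $U$ assembled from $\bigcirc_{>1/2}$ and $\vee$. The one place where you genuinely diverge is the extra outer $\bigcirc_{>1/2}(\cdots);\produce\retkw\underline*$ wrapper, and this is a real improvement: the paper's term $\lambda x_{\F{}\sigma}.\pto{x_{\F{}\sigma}}{y_\sigma}{M(y_\sigma)}$ denotes $\extd{(\tilde\chi_U)}$, which sends the top element $Q=\emptyset$ of $\Smyth^\top_\bot(\Eval\sigma)$ to the top element $\emptyset$ of $\Eval{\F{}\Vt{}\unitT}$ rather than to $\{\delta_\top\}$, so it does not literally satisfy Definition~\ref{defn:descr} at that one point (the paper's assertion that $\bigwedge_{a\in Q}\tilde\chi_U(a)=\{\delta_\top\}$ whenever $Q\subseteq U$ fails for $Q=\emptyset$). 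Your wrapper collapses both $\emptyset$ and $\{\delta_\top\}$ to $\top$ while preserving $\bot$, which yields $\tilde\chi_{\Box U}$ exactly; you are also more careful than the paper in writing the defining term with type $\U{}\F{}\sigma \to \F{}\Vt{}\unitT$ via $\force$, as Definition~\ref{defn:descr} requires for computation types. (The paper's discrepancy is harmless downstream, since $\extd h(\emptyset)=\extd h(\{\delta_\top\})=1$ in the full-abstraction argument, but your version is the cleaner one.)
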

\begin{proof}
  Let $X = \Eval \sigma$, $B_X$ be the basis of definable elements of
  $\Eval \sigma$, and $\mathcal S_X$ be the subbase of definable open
  subsets at type $\sigma$, and let us use
  Proposition~\ref{prop:Q:basis}.

  For every element $Q = \upc \{x_1, \cdots, x_n\}$ of
  $B_{\Smyth^\top_\bot X}$, where each $x_i$ is in $B_X$, hence
  $x_i = \Eval {M_i}$ for some ground term $M_i \colon \sigma$, the
  term $M = \produce M_1 \owedge \cdots \owedge \produce M_n$
  ($\abort_{\F{}\sigma}$ if $n=0$) defines $Q$, in the sense that
  $Q = \Eval M$.  The term $\Omega_{\F{}\sigma}$ defines $\bot$.

  We deal with the second part.  The whole space $\Eval {\F{}\sigma}$ is
  definable as an open set by the term
  $\lambda x_{\F{} \sigma} . \produce \retkw \underline*$.  We consider
  the other elements $\Box U$ of $\mathcal S_{\Smyth^\top_\bot X}$.
  Let us write $U$ as a finite union $U = \bigcup_{i=1}^m U_i$ of
  elements of $\mathcal S_X$, where $U_i$ is defined by
  $M_i \colon \sigma \to \F\Vt\unitT$ in the sense that
  $\tilde\chi_{U_i} = \Eval {M_i}$.  Then $\Box U$ is defined by
  $\lambda x_{\F{}\sigma} . \pto {x_{\F{}\sigma}} {y_\sigma} {M
    (y_\sigma)}$, where
  $M (y_\sigma) = (\bigcirc_{> 1/2} (M_1 y_\sigma) \vee \cdots \vee
  \bigcirc_{> 1/2} (M_m y_\sigma)); \produce \retkw \underline*$.
  Indeed, for every environment $\rho$, letting
  $Q = \rho (x_{\F{}\sigma})$, if $Q = \bot$ then
  $\Eval {\lambda x_{\F{}\sigma} . \pto {x_{\F{}\sigma}} {y_\sigma} {M
      (y_\sigma)}} (\bot)=\bot$, matching the fact that $Q$ is not in
  $\Box U$.  Otherwise, using the fact that
  $\Eval {\bigcirc_{> 1/2} (M_1 y_\sigma) \vee \cdots \vee \bigcirc_{>
      1/2} (M_m y_\sigma)} \rho'$ is equal to $\top$ if
  $\rho' (y_\sigma) \in U$ and to $\bot$ otherwise, for every
  environment $\rho'$, we obtain:
  \begin{align*}
    \Eval {\pto {x_{\F{}\sigma}} {y_\sigma} {M (y_\sigma)}} \rho
    & = \extd {(\tilde\chi_U)} (Q)
    = \bigwedge_{a \in Q} \tilde\chi_U (a),
  \end{align*}
  by Proposition~\ref{prop:Qtop}, item~2, and that is equal to
  $\{\delta_\top\}$ if $Q \subseteq U$, and to $\bot$ otherwise.  \qed
\end{proof}

\subsection{Full Abstraction}
\label{sec:full-abstraction-1}

By induction on types, using Lemma~\ref{lemma:descr:unit} ($\unitT$),
Lemma~\ref{lemma:desc:int} ($\intT$),
Proposition~\ref{prop:descr:prod} (product types),
Proposition~\ref{prop:descr:to} (function types),
Corollary~\ref{corl:descr:V} ($\Vt{}$ types), and
Corollary~\ref{corl:descr:F} ($\F{}$ types), every type is describable
(the case of $\U{}$ types is trivial).

\begin{theorem}[Full abstraction]
  \label{thm:fa}
  CBPV$(\Demon,\Nature)+\pifzkw+\bigcirc$ is inequationally fully
  abstract.  For every value type $\tau$, for every two ground
  CBPV$(\Demon,\Nature)+\pifzkw+\bigcirc$ terms $M, N \colon \tau$,
  the following are equivalent:
  \begin{enumerate}
  \item $M \precsim^{app}_\tau N$;
  \item $M \precsim_\tau N$;
  \item $\Eval M \leq \Eval N$.
  \end{enumerate}
\end{theorem}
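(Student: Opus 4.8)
The plan is to close the three-way equivalence by reusing results already in place and then supplying the single missing implication. The equivalence of (1) and (2) is exactly Theorem~\ref{thm:context=app}, which identifies $\precsim^{app}_\tau$ and $\precsim_\tau$ on ground terms of a value type $\tau$, and the implication (3)~$\Rightarrow$~(2) is Proposition~\ref{prop:fa:easy}. Hence the whole content of the theorem reduces to proving (2)~$\Rightarrow$~(3): that $M \precsim_\tau N$ forces $\Eval M \leq \Eval N$.

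For that step I would exploit the fact, announced in the paragraph preceding the theorem, that every type is describable, so that by Lemma~\ref{lemma:descr:base} the Scott topology on $\Eval\tau$ has a strong base of \emph{definable} open subsets. Since $\Eval\tau$ is a pointed coherent continuous dcpo equipped with its Scott topology, its specialization order is the original order; thus $\Eval M \leq \Eval N$ holds if and only if every Scott-open set containing $\Eval M$ also contains $\Eval N$, and because every open is a union of basic opens, it suffices to verify this containment for the definable opens $U$ of the base.

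So fix such a definable open $U$ with $\Eval M \in U$, witnessed by a ground term $M_U \colon \tau \to \F\Vt\unitT$ with $\tilde\chi_U = \Eval{M_U}$, and set $Q = M_U$. Then $\Eval{QM} = \tilde\chi_U(\Eval M)$, which is $\{\delta_\top\}$ when $\Eval M \in U$ and $\bot$ otherwise. By adequacy (Proposition~\ref{prop:adeq}) with $h(\nu) = \nu(\{\top\})$, one has $\Prob(QM\vp) = \extd h(\tilde\chi_U(\Eval M))$, which equals $1$ when $\Eval M \in U$ (using $\extd h(\{\delta_\top\}) = 1$) and $0$ otherwise (using $\extd h(\bot) = 0$); the same computation applies with $N$ in place of $M$. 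Now $M \precsim_\tau N$ gives $M \precsim^{app}_\tau N$ by Theorem~\ref{thm:context=app}, hence $\Prob(QM\vp) \leq \Prob(QN\vp)$. Since $\Eval M \in U$, the left-hand side is $1$, forcing $\Prob(QN\vp) = 1$, and therefore $\Eval N \in U$. As $U$ ranges over the definable base, every open containing $\Eval M$ contains $\Eval N$, whence $\Eval M \leq \Eval N$, completing (2)~$\Rightarrow$~(3).

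The main obstacle is not this final argument but the work upstream: essentially all the difficulty of full abstraction has been concentrated in the describability results (Lemmas~\ref{lemma:descr:unit}, \ref{lemma:desc:int}, Propositions~\ref{prop:descr:prod}, \ref{prop:descr:to}, and Corollaries~\ref{corl:descr:V}, \ref{corl:descr:F}), which produce enough definable opens to separate the points of each $\Eval\tau$. Granting those, the theorem itself is short; the only real care needed here is the topological bookkeeping --- recalling that the Scott order is the specialization order, reducing separation to a base, and computing the two test probabilities through adequacy.
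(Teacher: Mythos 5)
Your proposal is correct and follows essentially the same route as the paper: reduce to Theorem~\ref{thm:context=app} and Proposition~\ref{prop:fa:easy}, then use describability to produce a definable open $U$ with characteristic term $Q$ and compute $\Prob(QM\vp)$ and $\Prob(QN\vp)$ via adequacy; the paper phrases this contrapositively (choosing $U$ separating $\Eval M$ from $\Eval N$ when $\Eval M \not\leq \Eval N$) and works directly with the subbase of definable opens, whereas you quantify over the strong base from Lemma~\ref{lemma:descr:base}, but these are trivial rearrangements of the same argument.
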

\begin{proof}
  The equivalence between 1 and 2 is Theorem~\ref{thm:context=app}.
  Item~3 implies item~1 by Proposition~\ref{prop:fa:easy}.  In the
  converse direction, we assume that $\Eval M \not\leq \Eval N$ and we
  claim that there is a ground term $Q \colon \tau \to \F\Vt\unitT$ such
  that $\Prob (QM \vp) \not\leq \Prob (QN \vp)$.  Since $\leq$ is the
  specialization ordering of the Scott topology on $\Eval \tau$, and
  the latter has a subbase of definable elements, there is a ground
  term $Q \colon \tau \to \F\Vt\unitT$ such that $\Eval M \in U$ and
  $\Eval N \not\in U$, where $\tilde\chi_U = \Eval Q$.  Hence
  $\Eval {QM} = \tilde\chi_U (\Eval M) = \{\delta_\top\}$, while
  $\Eval {QN} = \bot$.  By adequacy (Proposition~\ref{prop:adeq}), and
  letting $h (\nu) = \nu (\{\top\})$,
  $\Prob (QM \vp) = \extd h (\Eval {QM}) = 1$, while
  $\Prob (QN \vp) = 0$.  \qed
\end{proof}

\section{Conclusion and Open Problems}
\label{sec:concl-open-probl}

We started from the question of using call-by-push-value as a way of
getting around our ignorance of the existence of a Cartesian-closed
category of continuous dcpos that would be closed under the
probabilistic powerdomain functor.  This led us to define a pretty
expressive call-by-push-value language with probabilistic choice and
demonic non-determinism.  We have gone so far as to show that it is
inequationally fully abstract, once extended with parallel if
$\pifzkw$ and statistical termination testers $\bigcirc$---and those
are required for that.

One should note that both are implementable: $\pifzkw$ by standard
dovetailing techniques, or more concretely by using threads, and
$\bigcirc_{> b} M$ by guessing and checking a derivation of
$[\_] \cdot M \vp b$, or more concretely by simulating all the
execution traces of $M$ and counting their probabilities.  The latter
can be done, concretely, by running $M$ under a hypervisor that forks
the process it emulates at each random binary choice $\oplus$: each
subprocess that terminates after having gone through $n$ random binary
choices contributes $1/2^n$ to a global counter, and the hypervisor
itself terminates when that counter exceeds $b$.

A few questions remain:

\begin{enumerate}
\item Is $\pifzkw$ definable in CBPV$(\Demon, \Nature)+\bigcirc$?  Is
  CBPV$(\Demon, \Nature)+\bigcirc$ fully abstract?   The results of
  Section~\ref{sec:need-parallel-if} fail to answer those questions.
\item We have defined languages with an $\abort_{\F{}\sigma}$ operator,
  and where computation types are interpreted as continuous lattices.
  Would bc-domains be enough, namely, can we do without an
  $\abort_{\F{}\sigma}$ operator and still obtain a full abstraction
  result?  Note that $\bigcirc_{> b}$ does not just estimate
  probabilities of termination, but also catches the exception raised
  by $\abort_{\F{}\sigma}$, hence serves more than one purpose.
\item Since the type $\U\F{} \tau \to \U\F{} \tau \to \F{} \tau$ is
  describable in CBPV$(\Demon,\Nature)+\bigcirc+\pifzkw$ for every
  value type $\tau$, the binary supremum map on $\Eval {\F{} \tau}$ is
  obtainable as a directed supremum of definable values.  Whereas
  $\owedge$ implements demonic non-determinism, binary suprema
  implement \emph{angelic} non-determinism.  Is binary supremum itself
  definable?
\end{enumerate}




\bibliographystyle{plain}

\newif\ifinlinebib
\inlinebibtrue
\ifinlinebib

\else
\bibliography{fa}
\fi







\end{document}
